\newtheorem{lemma}{Lemma}[section]
\newtheorem{theorem}[lemma]{Theorem}
 \newtheorem{definition}[lemma]{Definition}
\newtheorem{corollary}[lemma]{Corollary }
\newtheorem{remark}[lemma]{Remark}
\newcommand{\bra}[1]{\ensuremath{\langle#1|}}
\newcommand{\ket}[1]{\ensuremath{|#1\rangle}}
\newcommand{\mc}[1]{\ensuremath{\mathcal{#1}}}   
\newcommand{\mb}[1]{\ensuremath{\mathbbm{#1}}}   
\newcommand{\mr}[1]{\ensuremath{\mathrm{#1}}}    
\newcommand{\ms}[1]{\ensuremath{\mathsf{#1}}} 
\newcommand{\mt}[1]{\ensuremath{\mathtt{#1}}}
\newcommand{\abbr}[1]{\ensuremath{\mathtt{#1}}}  
\renewcommand{\bar}{\overline}
\newcommand{\X}{\ms{X}}
\newcommand{\Z}{\ms{Z}}
\newcommand{\sZ}{\ms{\tilde{Z}}}
\newcommand{\XX}{\ms{XX}}
\newcommand{\ZZ}{\ms{ZZ}}
\newcommand{\Sop}[1]{\mathrm{S}_{#1}}  
\numberwithin{equation}{section}
\newcommand{\ver}{{\ms{V}}}
\newcommand{\edge}{{\ms{E}}}
\newcommand{\Jxx}{J_{\ms{xx}}}
\newcommand{\Jzz}{J_{\ms{zz}}}
\newcommand{\eff}{{\ms{eff}}}
\newcommand{\Heff}{\mb{H}_1^{\ms{eff}}}
\newcommand{\Hinter}{\mb{H}_{\ms{inter\!-\!block}}}
\newcommand{\mis}{\ms{mis}}
\newcommand{\cl}{\ms{Clique}}
\newcolumntype{L}{>{\raggedright\arraybackslash}X}
\newcommand{\shz}[1]{\tilde{\sigma}^z_{#1}}
\newcommand{\Gdis}{\ensuremath{G_{\ms{dis}}}}
\newcommand{\Gshare}{\ensuremath{G_{\ms{share}}}}
\newcommand{\Gcnt}{\ensuremath{G_{\ms{contract}}}}
\newcommand{\x}[1]{\mt{x}(#1)}
\newcommand{\jxx}[1]{\mt{jxx}(#1)}
\newcommand{\bst}[1]{\mathbf{b}_{#1}}  
\newcommand{\SLE}{\mathcal{L}^{\text{ind}}}
\newcommand{\ELz}{E^{\ms{\tiny LM}}_0}
\newcommand{\EGz}{E^{\ms{\tiny GM}}_0}
\newcommand{\ESz}{E^{\ms{\tiny AS0}}_0}
\newcommand{\EGtrue}{E^{\ms{\tiny true}}_0}
\newcommand{\Jxxlift}{\Jxx^{\text{\tiny lift}}}
\newcommand{\Jxxsteer}{\Jxx^{\text{\tiny steer}}}
\newcommand{\Jxxsep}{\Jxx^{\text{\tiny sep}}}
\newcommand{\Jxxsink}{\Jxx^{\text{\tiny sink}}}
\newcommand{\Jzzinter}{\Jzz^{\text{\tiny inter}}}
\newcommand{\Jzzsteer}{\Jzz^{\text{\tiny steer}}}
\newcommand{\weff}{w^{\text{\tiny eff}}}
\newcommand{\LM}{\abbr{LM}}
\newcommand{\GM}{\abbr{GM}}
\newcommand{\MLIS}{\abbr{MLIS}}
\newcommand{\DMS}{\abbr{deg}-\abbr{MLIS}}
\newcommand{\MIC}{\abbr{dMIC}}
\newcommand{\MDC}{\abbr{dMDC}}
\newcommand{\GIC}{\abbr{GIC}}
\newcommand{\Ucombine}{U_{\ms{merge}}}
\newcommand{\DDD}{{\sc Dic-Dac-Doa{}}}
\newcommand{\Bc}{{\mathcal{B}}_{a}}
\newcommand{\Ba}{{\mathcal{B}}_{\ms{ang}}}
\newcommand{\PLE}{{\Pi^{\tiny{\ms{ind}}}}}
\newcommand{\Uc}{{\mathbf{U}^{\ms{clique}}}}
\def\final{1} 
\newcommand{\vnote}[1]{[{\small Vicky: \bf #1}]\marginpar{*}}
\newcommand{\sidecomment}[1]{\marginpar{\tiny #1}}
\newcommand{\vnote}[1]{}
\newcommand{\sidecomment}[1]{}
\begin{document}

\title{Beyond Stoquasticity: Structural Steering and Interference in Quantum Optimization}

\author{
  Vicky Choi\\
  Gladiolus Veritatis Consulting Co.\footnote{\url{https://www.vc-gladius.com}}
}

\maketitle       

\begin{abstract}
We present a theoretical analysis of the \DDD{} algorithm, a
non-stoquastic quantum algorithm for solving the Maximum Independent
Set (MIS) problem. The algorithm runs in polynomial time and achieves
exponential speedup over both transverse-field quantum annealing (TFQA)
and classical algorithms on a structured family of NP-hard MIS
instances, under assumptions supported by analytical and numerical
evidence.
The core of this speedup lies in the ability of the evolving ground
state to develop both positive and negative amplitudes, enabled by the
non-stoquastic $\XX$-driver. This sign structure permits quantum
interference that produces negative amplitudes in the computational
basis, allowing efficient evolution paths beyond the reach of
stoquastic algorithms, whose ground states remain strictly
non-negative.
In our analysis, the efficiency of the algorithm is measured by the 
presence or absence of an anti-crossing, rather than by spectral gap 
estimation as in traditional approaches. The key idea is to infer it
from the crossing behavior of bare energy levels of relevant 
subsystems associated with the degenerate local minima (\LM{}) and the global minimum (\GM{}).
The cliques of the critical \LM{}, responsible for the anti-crossing in TFQA, 
can be efficiently identified to form the $\XX$-driver graph.
Based on the clique structure of \LM{}, we construct a decomposition
of the Hilbert space into same-sign and opposite-sign sectors,
yielding a corresponding block-diagonal form of the Hamiltonian. 
We then show that the non-stoquastic $\XX$-driver induces a see-saw 
effect that shifts their bare energies, leading to analytically derived 
bounds on $\Jxx$ which support a two-stage annealing schedule that 
prevents anti-crossings throughout the evolution. The resulting speedup 
can be attributed to two mechanisms: in the first stage, energy-guided 
localization within the same-sign block steers the ground state smoothly 
into the $\GM$-supporting region, while in the second stage, 
the opposite-sign blocks are invoked and sign-generating quantum 
interference drives the evolution along an opposite-sign path.
Finally, our analysis produces scalable small-scale models, derived
from our structural reduction, that capture the essential dynamics of
the algorithm. These models provide a concrete opportunity for
verification of the quantum advantage mechanism on currently available
universal quantum computers.
\end{abstract}


\newpage
\tableofcontents
\newpage
\section{Introduction}
\label{sec:Intro}

A distinctive feature of quantum mechanics is that the wavefunction of a quantum state, expressed in a given basis, can have both positive and negative amplitudes---a characteristic with no classical counterpart, as classical probabilities must be non-negative.  
Motivated by the goal of harnessing this intrinsically quantum property,
we proposed the \DDD{} algorithm in~\cite{Choi2021,ChoiPatent} for the NP-hard Maximum Independent Set (MIS) problem.  
The acronym originally stood for \textit{Driver graph from Independent Cliques, Double Anti-Crossing, Diabatic quantum Optimization Annealing}.  
The algorithm modifies standard transverse-field quantum annealing (TFQA)~\cite{Farhi2000,Farhi2001,AQC-eq,AQC-Review} by adding a specially designed $\XX$-driver term, aiming to overcome small-gap anti-crossings. 
For completeness, we include the full algorithmic details and necessary refinements in this paper.

We now recall the system Hamiltonian for \DDD{}:
\[
\mb{H}(t) = \x{t} \mb{H}_{\ms{X}} + \jxx{t} \mb{H}_{\ms{XX}} + \mt{p}(t)\mb{H}_{\ms{problem}},
\]
where
\(
\mb{H}_{\ms{X}} = - \sum_{i} \sigma_i^x, 
\mb{H}_{\ms{XX}} = \sum_{(i,j) \in \edge(G_{\ms{driver}})} \sigma_i^x \sigma_j^x,
\)
and $\mb{H}_{\ms{problem}}$ is the MIS-Ising Hamiltonian defined in Equation~\eqref{eq:problem-Ham}.
The time-dependent parameter schedule $\jxx{t}$ depends on the $\XX$-coupling strength $\Jxx$.  
In particular, $\jxx{t} \equiv 0$ when $\Jxx = 0$,  
so the case $\Jxx = 0$ corresponds to TFQA, without the $\XX$-driver.
The system Hamiltonian is stoquastic (in the computational basis) if $\Jxx \le 0$, and non-stoquastic if $\Jxx >0$. 

The goal of our analysis is to demonstrate how and why an appropriately chosen \emph{non-stoquastic} (i.e., positive) coupling strength~$\Jxx$ enables exponential speedup with \DDD{}.
Specifically, we develop a theoretical framework for understanding the role of $\Jxx$ in shaping the Hamiltonian structure and guiding ground state evolution.  
We show that the algorithm succeeds in polynomial time by successively dissolving small-gap anti-crossings, one by one,  
for a structured class of input instances, called \GIC{} graphs.  
Despite their structure, we argue in Section~\ref{sec:MIS-hardness} that \GIC{} graphs exhibit classical hardness:  
solving MIS on such instances would require exponential time unless \( \mathrm{P} = \mathrm{NP} \).
Our analysis methods combine analytical techniques and numerical verification, with perturbative and effective Hamiltonian arguements
that can in principle be made rigorous with further work.


First, we introduce some necessary terminology.
Throughout this paper, all Hamiltonians considered are real and Hermitian.  
Accordingly, we restrict our attention to quantum states with real-valued amplitudes.  
That is, the phase of each component is either \( 0 \) (corresponding to a positive sign) or \( \pi \) (corresponding to a negative sign).
We now formalize the sign structure of quantum states, which plays a central role in our analysis:
\begin{mdframed}
\begin{definition}
Let \( |\psi\rangle = \sum_{x \in \mathcal{B}} \psi(x) |x\rangle \) be a quantum state with real amplitudes in a basis \( \mathcal{B} \).

\begin{itemize}
    \item \( |\psi\rangle \) is called a \textbf{same-sign state} if \( \psi(x) \geq 0 \) for all \( x \in \mathcal{B} \).  
    That is, all components are in phase (with relative phase \( 0 \)).

    \item \( |\psi\rangle \) is called an \textbf{opposite-sign state} if there exist \( x, x' \in \mathcal{B} \) such that \( \psi(x) > 0 \) and \( \psi(x') < 0 \).  
    In this case, some components are out of phase, differing by a relative phase of \( \pi \).
\end{itemize}
\end{definition}
\end{mdframed}
Unless stated otherwise, we take \( \mathcal{B} \) to be the computational basis when referring to same-sign or opposite-sign states.
More generally, the computational basis is assumed whenever the basis is unspecified.

Accordingly, we define the notions of same-sign and opposite-sign bases, sectors, and blocks:

\begin{definition}
An orthonormal basis consisting entirely of same-sign states is called a same-sign basis.  
A basis that includes at least one opposite-sign state is called an opposite-sign basis.
A subspace spanned by a same-sign basis is called a same-sign sector; otherwise, it is called an opposite-sign sector.
A submatrix (or block) of a Hamiltonian is called a same-sign block if it is expressed in a same-sign basis.  
Otherwise, it is referred to as an opposite-sign block.
\end{definition}

\paragraph{Example.}
The state \( \ket{+} = \tfrac{1}{\sqrt{2}}(\ket{0} + \ket{1}) \) is a same-sign state,  
while \( \ket{-} = \tfrac{1}{\sqrt{2}}(\ket{0} - \ket{1}) \) is an opposite-sign state.  
The computational basis \( \{ \ket{0}, \ket{1} \} \) is a same-sign basis of \( \mb{C}^2 \),  
and the Hadamard basis \( \{ \ket{+}, \ket{-} \} \) is an opposite-sign basis.

These definitions are closely related to the concept of stoquasticity, which constrains the sign structure of the ground state.
It is well-known that, by the Perron--Frobenius theorem, the ground state of a stoquastic Hamiltonian (i.e., one whose off-diagonal elements are non-positive in a given basis) is a same-sign state in that basis~\cite{AQC-eq,general-PF2}. 
In particular, when expressed in the computational basis, the ground state of a stoquastic Hamiltonian is necessarily a same-sign state.  
By contrast, the ground state of a non-stoquastic Hamiltonian may be either a same-sign or an opposite-sign state.
We refer to the former as \emph{Eventually Stoquastic} and the latter as \emph{Proper Non-stoquastic}, as introduced in \cite{Choi2021}.

This distinction is crucial and can be understood in terms of the \emph{admissible subspace}, defined as the subspace of Hilbert space that the instantaneous ground state of the system Hamiltonian can dynamically access.
The shape and sign structure of this subspace differ fundamentally depending on whether the Hamiltonian is stoquastic or properly non-stoquastic.
In the (eventually) stoquastic case, the admissible subspace remains confined to the same-sign sector of the computational basis, where all amplitudes are non-negative.
In the properly non-stoquastic case, this subspace expands to include the opposite-sign sector, allowing superpositions with both positive and negative amplitudes.

It is therefore clear that by ``de-signing'' the non-stoquastic Hamiltonian~\cite{de-sign}, one reduces the admissible subspace.
The larger the admissible subspace, the more possible evolution paths become accessible.
This raises the central question: can some of these new paths---particularly those involving smooth transitions rather than exponentially slow tunneling---be efficiently exploited?

We answer this question affirmatively in this work.
In particular, we show that
\begin{itemize}
\item The \( \XX \)-driver opens access to opposite-sign sectors, enlarging the admissible subspace to include opposite-sign states;
\item In this expanded subspace, \emph{sign-generating quantum interference}---interference that produces negative amplitudes in the computational basis---enables a \emph{smooth evolution path} that bypasses tunneling.
\end{itemize}

This is the essence of going beyond stoquasticity: the ability to exploit a larger sign-structured subspace, which enables new evolution paths that are dynamically inaccessible in the stoquastic regime.

To demonstrate this idea concretely, 
we now turn to the algorithmic framework and its guiding structure.  
First, let $\LM$ and $\GM$ denote a set of degenerate local minima and the global 
minimum of the problem Hamiltonian. We now introduce the notion of the 
associated bare energy, which plays a key role in our analysis. 
For a subset $A \subseteq \ver(G)$, let $\mb{H}_A(t)$ denote the restriction of the Hamiltonian 
to the subspace spanned by the subsets of $A$. We define the ground 
state energy $E_0^{A}(t)$ as the bare energy associated with $A$. 
In particular, $E_0^{\LM}(t)$ and $E_0^{\GM}(t)$ denote the bare energies 
associated with $\LM$ (where $A = \bigcup_{M \in \LM} M$) and $\GM$, respectively.
The full algorithm consists of two phases; see Table~\ref{alg-overview}.
Phase~I uses polynomial-time TFQA to identify an independent-clique (IC)
structure associated with the critical $\LM$, which is responsible for the
(right-most) anti-crossing with $\GM$ in the TFQA ground state evolution.
Informally, such an anti-crossing arises from the crossing of the bare
energies associated with $\LM$ and $\GM$, see Figure~\ref{fig:AC-intro}.
Phase~II constructs an \( \XX \)-driver---based on the identified IC---to lift this anti-crossing.
This process is repeated iteratively: at each iteration, the algorithm identifies the next critical \LM{}, 
constructs an additional driver term, added to the system Hamiltonian.
Provided that certain structural conditions are met, this iterative process can be applied effectively (see Figure~\ref{fig:iterative-dissolve} for a schematic depiction).
In this way, the algorithm progressively removes small-gap obstructions from the system. 
The main focus of our analysis is therefore on Phase~II applied to a single bipartite substructure---in particular, on how the \( \XX \)-driver, through an appropriate choice of the parameter \( \Jxx \), enables the key mechanisms that dissolve each anti-crossing \emph{without introducing new ones}.

\begin{figure}[!htbp]
  \subcaptionbox{}{\includegraphics[width=0.30\textwidth]{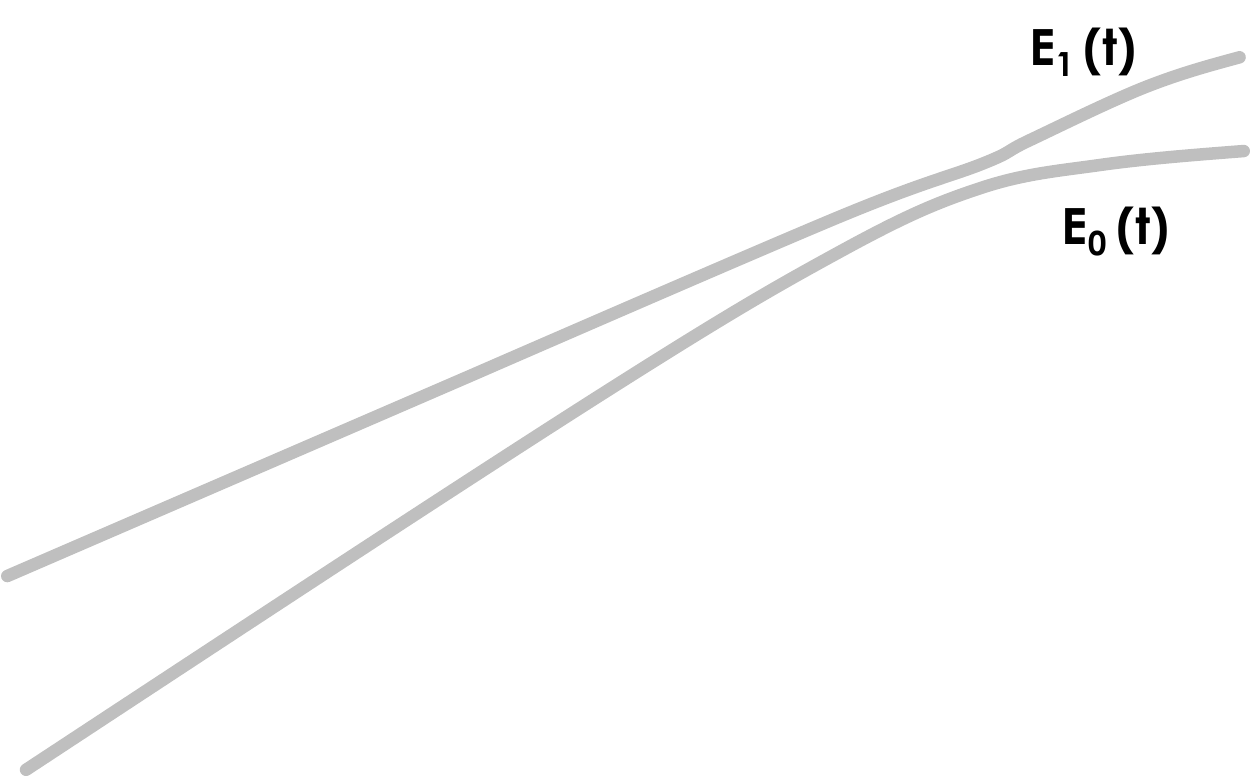}}
  \subcaptionbox{}{\includegraphics[width=0.32\textwidth]{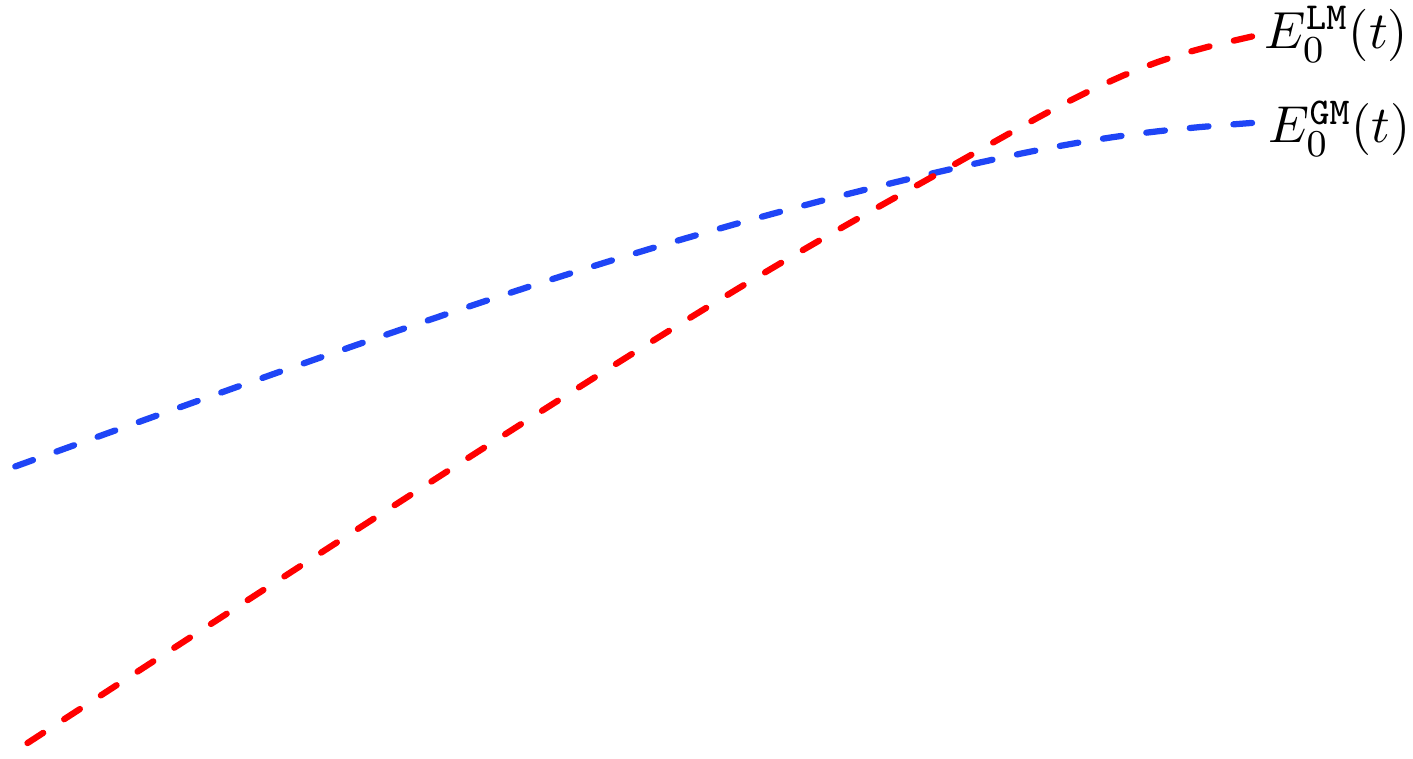}}
  \subcaptionbox{}{\includegraphics[width=0.34\textwidth]{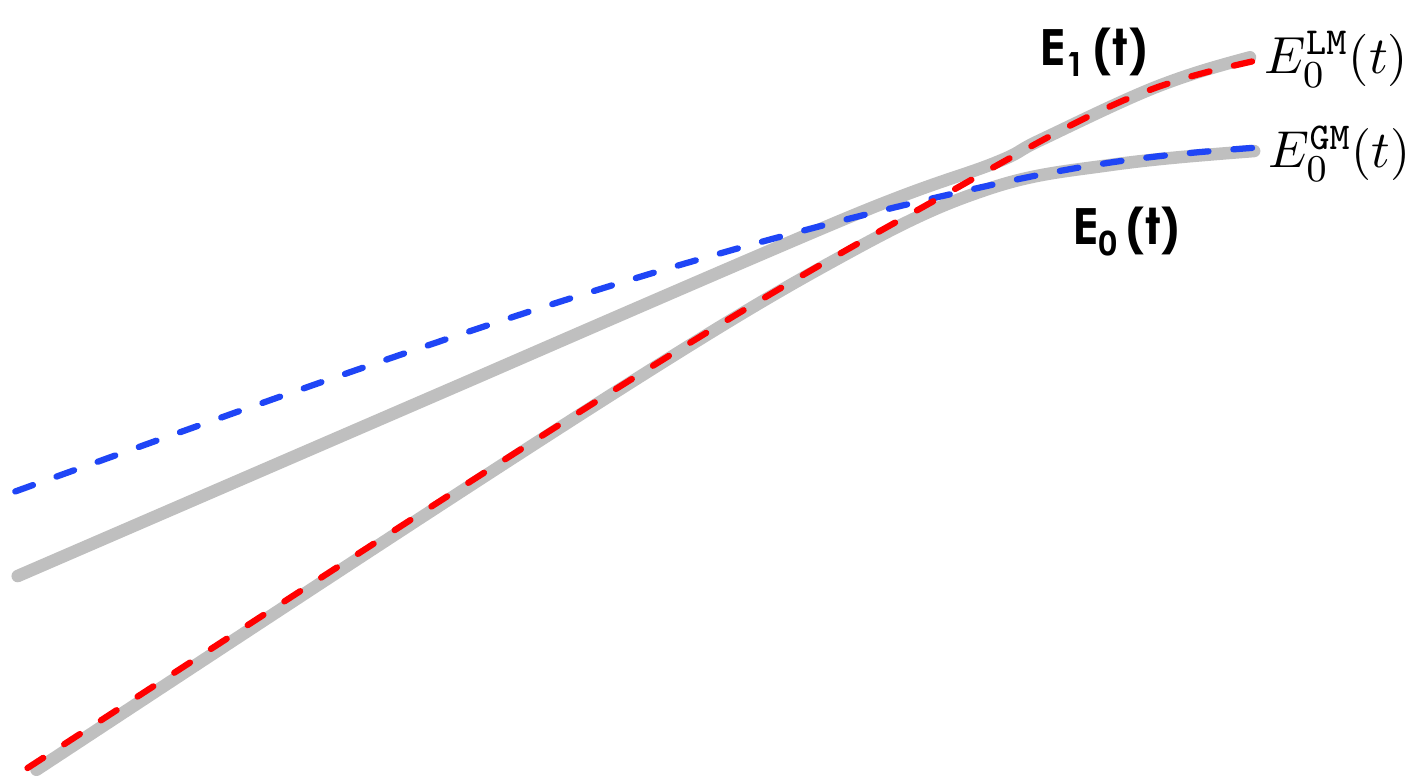}}
  \caption{
Illustration of an $(\LM,\GM)$-anti-crossing.  
(a) An anti-crossing between the lowest two levels $E_0(t)$ and $E_1(t)$.  
(b) Bare energies $E_0^{\LM}(t)$ and $E_0^{\GM}(t)$ cross.  
(c) Overlay showing that the anti-crossing originates from the bare crossing of $E_0^{\LM}(t)$ and $E_0^{\GM}(t)$.
}
\label{fig:AC-intro}
\end{figure}


\begin{figure}[!htbp]
  \centering
  \subcaptionbox{}{\includegraphics[width=0.32\textwidth]{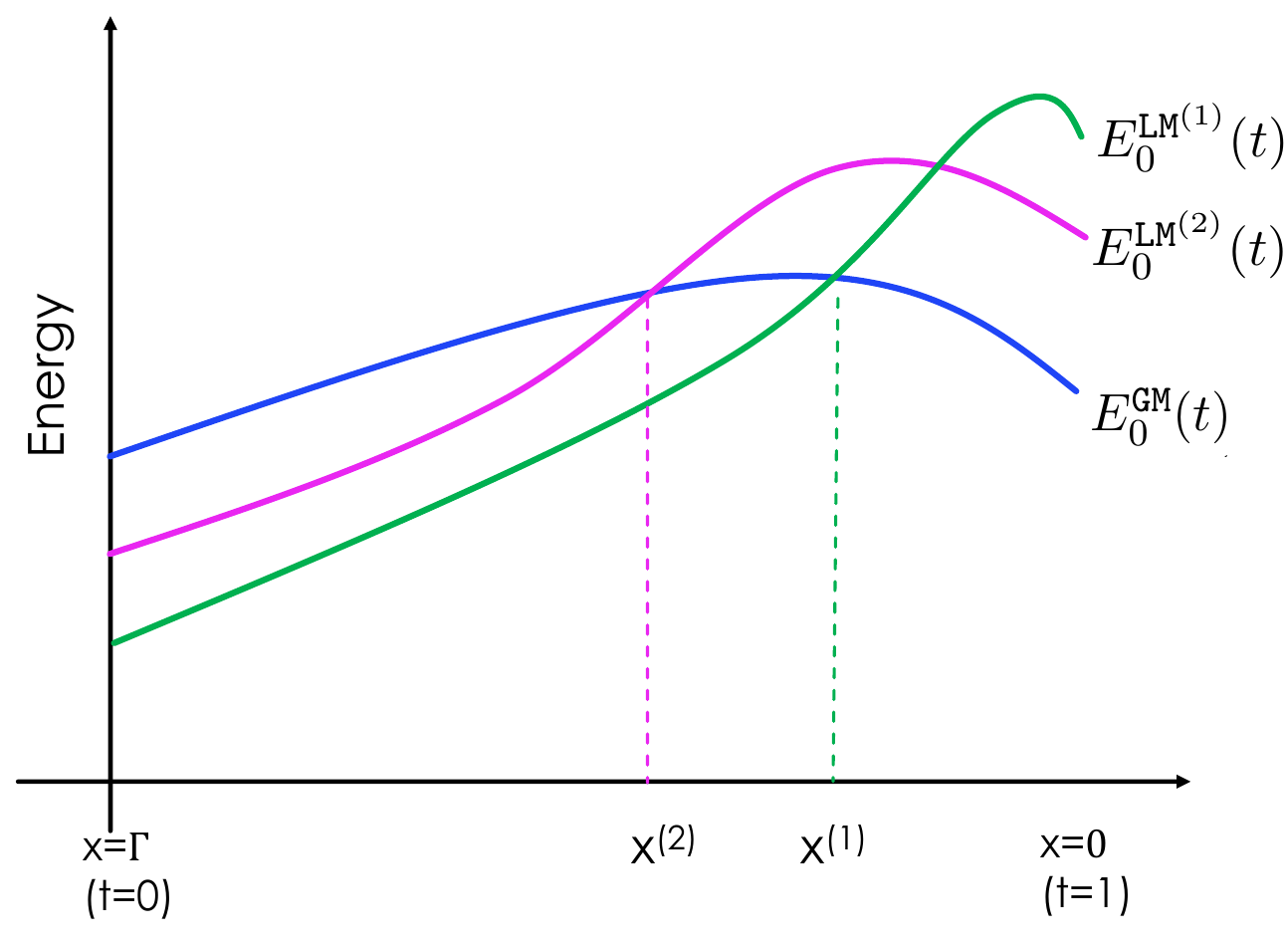}}
  \subcaptionbox{}{\includegraphics[width=0.32\textwidth]{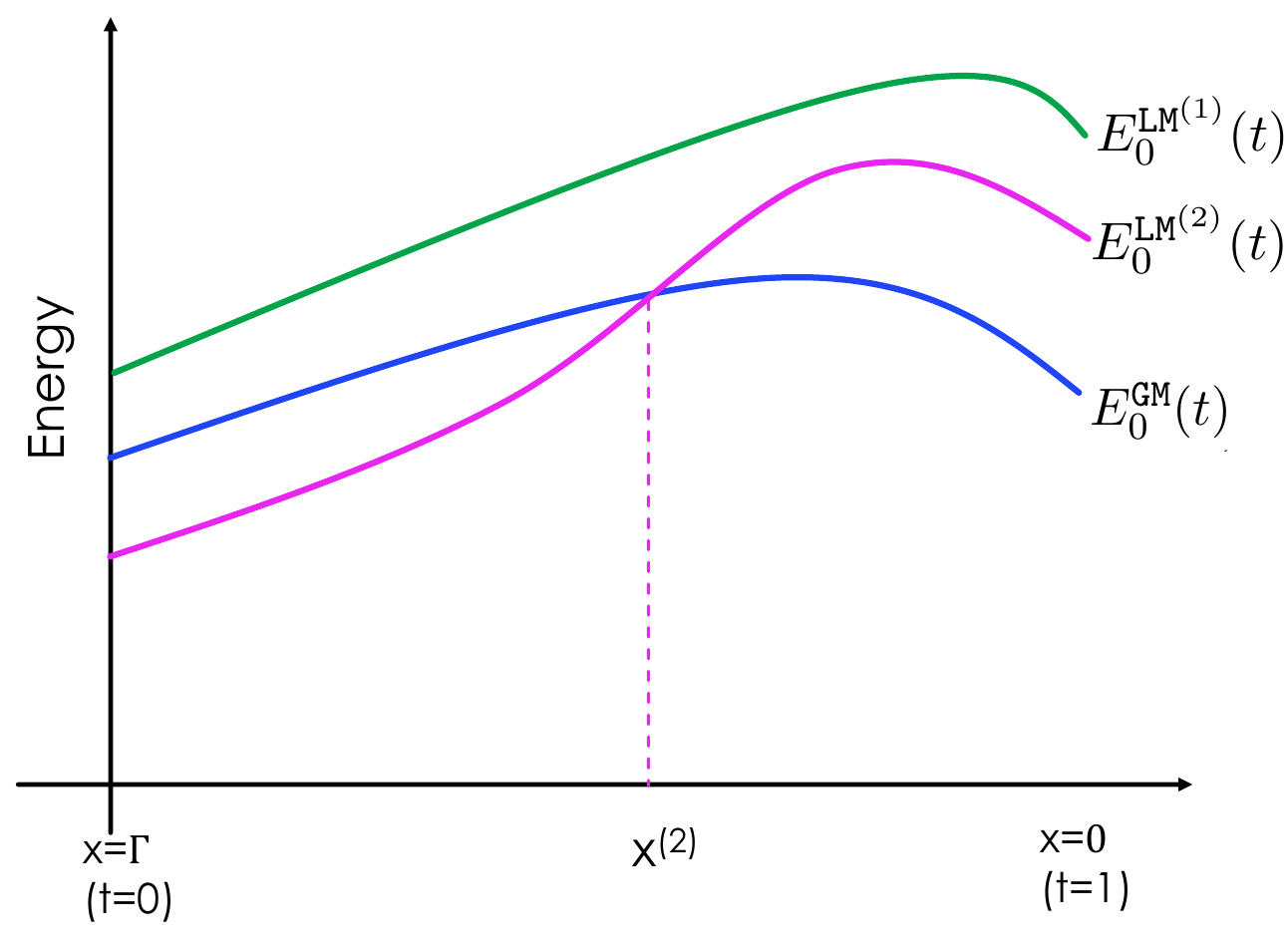}}
  \subcaptionbox{}{\includegraphics[width=0.32\textwidth]{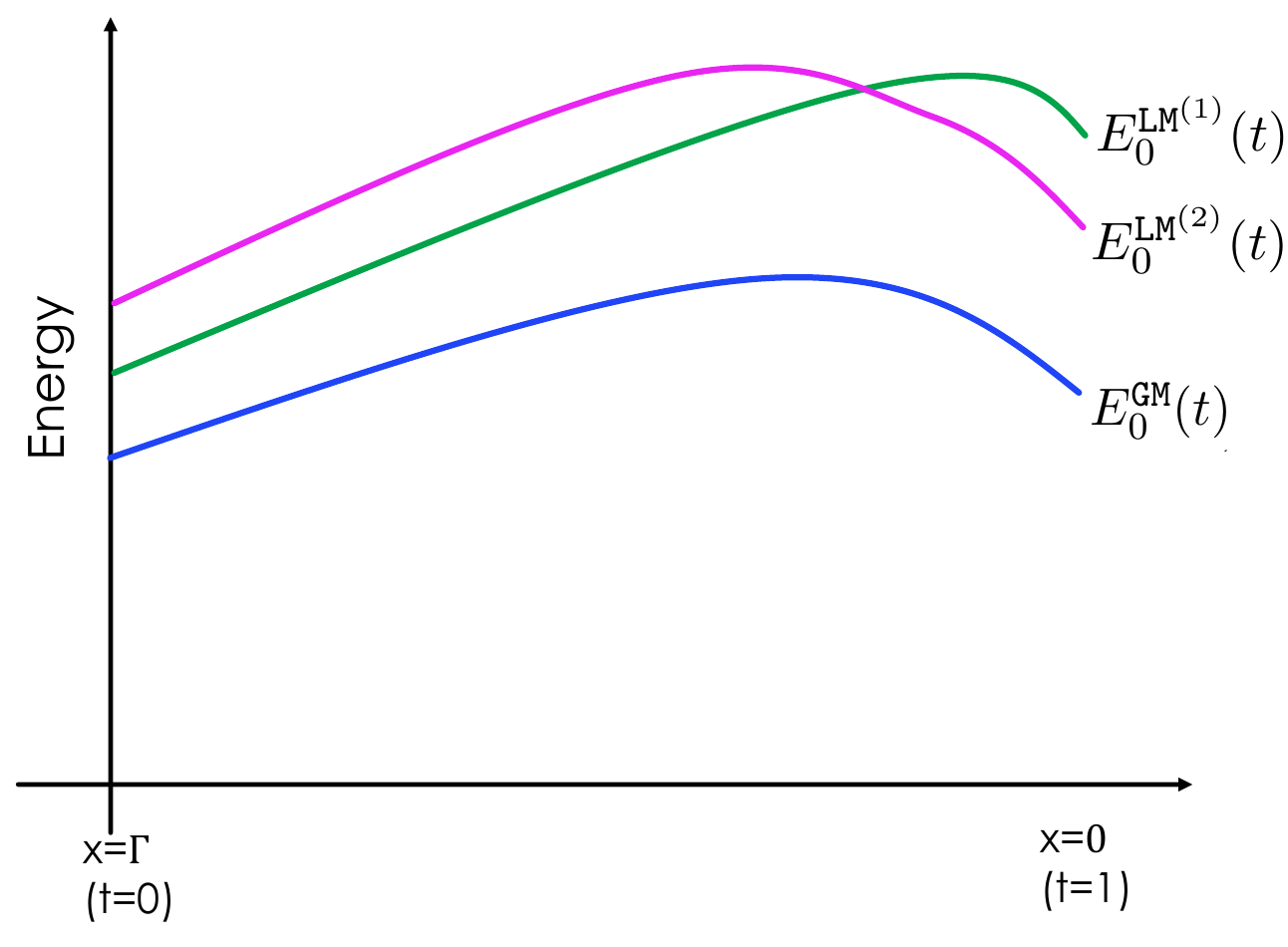}}
  \caption{
    Illustration of the iterative removal of anti-crossings by \DDD{}.  
    (a) Under TFQA, the energy associated with the global minimum $E_0^{\GM}(t)$ (blue) has bare crossings with the bare energies of two local minima, $E_0^{\LM^{(1)}}(t)$ and $E_0^{\LM^{(2)}}(t)$, at positions $x^{(1)}=\mt{x}(t_1)$ and $x^{(2)}=\mt{x}(t_2)$.  
    Each such bare crossing corresponds to an $(\LM^{(i)},\GM)$-anti-crossing in the system spectrum.  
    (b) After Phase~I of \DDD{}, the first crossing at $x^{(1)}$ is removed by lifting $E_0^{\LM^{(1)}}(t)$, while the second crossing at $x^{(2)}$ remains.  
    (c) Phase~II applies the same procedure to lift $E_0^{\LM^{(2)}}(t)$, thereby completing the removal of both anti-crossings.
  }
  \label{fig:iterative-dissolve}
\end{figure}

Informally,  the $\XX$-driver graph, which consists of a set of independent cliques, induces a natural angular momentum structure in the Hilbert space. This structure allows for a block decomposition of the Hamiltonian into disjoint \emph{same-sign} and \emph{opposite-sign} blocks. 
Importantly, in the signed basis induced by this decomposition, the off-diagonal terms involving non-stoquastic coupling strength \( \Jxx \) become diagonal.  
Consequently, \( \Jxx \) induces a \emph{see-saw} energy effect: it raises the energy associated with local minima in the same-sign block while lowering that of the opposite-sign blocks.  
This effect introduces a fundamental trade-off: if $\Jxx$ is too small, 
the local minima in the same-sign block are not lifted high enough 
(to remove the anti-crossing); if it is too large, the opposite-sign 
blocks drop too low, inducing new anti-crossings.
As a result, the success of the algorithmic design depends critically on choosing appropriate \( \Jxx \).
A central contribution of this work is the identification of four analytically derived bounds on \( \Jxx \).
In addition, we identify two upper bounds on $\Jzz$ that constrain the 
problem Hamiltonian; these, together with the $\Jxx$ bounds, define the 
full feasibility window for successful evolution.
We then attribute the quantum speedup to two key mechanisms:
(1) steering the evolving ground state away from the local-minima-supporting region and directly into the global-minimum-supporting region (a mechanism we later formalize as \emph{structural steering});
and  
(2) enabling sign-generating quantum interference between blocks of different signs, which produces negative amplitudes in the computational basis.
The detailed quantum interference mechanism will be illustrated in Section~\ref{sec:V3}.
\footnote{By quantum interference, we mean the superposition of components in a quantum system that leads to amplitude cancellation or enhancement of the same basis state.
It is inherently quantum in nature, as only quantum systems admit decomposition into opposite-sign bases with complex or signed amplitudes. However, the mere presence of such interference does not, by itself, imply quantum advantage.  
In our system, the observed speedup is attributed to a restricted form of quantum interference---namely, interference that produces negative amplitudes in the computational basis.
This form of sign-generating interference is not accessible to classical simulation or stoquastic dynamics, whose ground states remain confined to the non-negative cone.
}


To better understand the role of these mechanisms, we now return to the original motivation for introducing the \( \XX \)-driver---the formation of a double anti-crossing---and reinterpret its role in light of our current structural understanding.

\paragraph{Revisiting the Original Idea.}
We begin by revisiting the original motivation behind the \DDD{} algorithm, which was to convert the narrow anti-crossing in TFQA into a \emph{double anti-crossing} (double-AC) by introducing the $\XX$-driver. The idea was that this transformation would enable exponential speedup through diabatic annealing, via a ground-to-excited-to-ground state transition. Numerical evidence supporting this
idea---including the splitting of degenerate local minima and the formation of a double-AC---was presented in~\cite{Choi2021}.%
\footnote{More concretely, we constructed an $\XX$-driver graph \( G_{\ms{driver}} \) from the set of independent cliques corresponding to the critical degenerate local minima \( L \) identified via polynomial-time TFQA. By applying a sufficiently large $\XX$-coupling strength \( \Jxx \), we induced a splitting of the states in \( L \) into two subsets with opposite amplitudes, denoted \( L^+ \) and \( L^- \), ensuring that \( L \) remained in the instantaneous ground state when its energy was minimized. This resulted in two anti-crossings bridged by \( (L^+, L^-) \), forming what we called a double-AC. This behavior was confirmed numerically through exact diagonalization.}

While this observation guided the early design of the algorithm, our current structural understanding provides a more precise explanation.
The apparent double-AC actually arises from a pair of anti-crossings
between the ground-state energies of two blocks---one same-sign and one opposite-sign.\footnote{This was first
observed by J.~Kerman.}  
When these blocks are decoupled or only weakly coupled, the transition can be reinterpreted as effective adiabatic evolution confined to the same-sign block,
with the double-AC dynamically bypassed (see Section~\ref{sec:anti-crossings}).
In general, however, the coupling between the same-sign and opposite-sign blocks may not be weak,  
and transitions between blocks become possible,  
and the double-AC mechanism does not necessarily apply.
\footnote{For example, the first anti-crossing may remain a block-level transition, 
while the second---involving quantum interference---does not correspond to a small gap.}

We conclude that the double-AC---though visually striking and initially suggestive of significance---is not essential for achieving quantum advantage.  
What initially appeared as negative amplitude in the ground state can now be more precisely understood  
as a manifestation of sign-generating quantum interference between same-sign and opposite-sign blocks.
Still, it was the numerical observation of this phenomenon in our original weighted MIS example  
that led to the discovery of the Hamiltonian’s block structure  
and the mechanisms governing the evolution of the quantum ground state.
Even if the double-AC ultimately serves only as a by-product of a deeper structural principle,  
its appearance was a gateway to the framework we now develop.

\noindent\textbf{Remark.} Due to historical reasons, we retain the name \DDD{},  
even though the roles of {\sc Dac} and {\sc Doa} have evolved in light of our current understanding.  
Alternatively, one might consider renaming it to {\sc Dic-Dee-Daw}, where {\sc Dee-Daw} reflects the \emph{see-saw} effect.


In summary, \DDD{} is motivated by turning around the obstacle of traditional stoquastic quantum annealing (TFQA)---specifically,
the presence of an anti-crossing caused by the competition between the
energies associated with $\LM{}$ and $\GM$.
This anti-crossing obstacle has long been observed and explained using perturbative arguments in the small transverse-field regime (e.g.,~\cite{Amin-Choi,AKR}).  
In our companion paper~\cite{Choi-Limitation}, we provide an explicit structural explanation and an analytical proof  
for such an anti-crossing beyond the small transverse-field perturbative regime, for the class of structured input graphs.

In our analysis, the efficiency of the algorithm is measured by the 
presence or absence of an anti-crossing, rather than by spectral gap 
estimation as in traditional approaches (see e.g.~\cite{AQC-Review} 
and references therein). An anti-crossing can be rigorously defined 
through reduction to a $2 \times 2$ effective Hamiltonian, with the 
precise definition given in Section~\ref{sec:anti-crossings}. 
The key idea is that the unperturbed energies (the diagonal entries of the 
effective Hamiltonian) can be well-approximated by the bare energies of 
relevant subsystems: the local minima (\LM{}), whose cliques are efficiently 
identified in Phase~I and used to construct the $\XX$-driver graph, and the 
global minimum (\GM{}), which is assumed as a reference in the analysis.
This makes it possible to infer the 
presence or absence of an anti-crossing directly from the crossing 
behavior of bare energy levels, without explicitly constructing 
the effective two-level Hamiltonian.
In particular, our structural analysis consists of three main 
components, supported by both analytical and numerical evidence:

\begin{itemize}
  \item A structural decomposition of the Hilbert space induced by 
  the $\XX$-driver graph, separating same-sign and opposite-sign 
  blocks via angular momentum techniques. 

  \item Identification of a see-saw energy effect induced by the 
  non-stoquastic parameter $\Jxx$, which raises the energy associated with local 
  minima in the same-sign block while lowering that of opposite-sign 
  blocks. 

  \item Derivation of four analytical bounds on $\Jxx$ and design of 
  a two-stage annealing schedule that guarantees correct ground state 
  evolution without encountering an anti-crossing, confirmed by 
  numerical evidence. 
\end{itemize}


A by-product contribution of this work is the design of effective small-scale models, derived from our structural reduction, that capture the essential dynamics of the algorithm. These models provide a concrete opportunity for verification of the quantum advantage mechanism on currently available gate-model quantum devices through simulation.

The paper is organized into four Parts.

\textbf{Part~I: Structural Setup and Annealing Framework.}  
In Section~\ref{sec:MIS}, we review the MIS-Ising Hamiltonian and the role of the \( \Jzz \) coupling.
Section~\ref{sec:MIS-hardness} introduces the class of \GIC{} graphs for which we argue classical hardness, and describe the two
fundamental bipartite substructures formed by the critical degenerate local minima and the global minimum.
Section~\ref{sec:2-stage} introduces the revised annealing schedule used in Phase~II of the algorithm, and explains how the components \( \x{t} \), \( \jxx{t} \), and \( \mt{p}(t) \) vary across the evolution.

\textbf{Part~II: Analytical Tools and Block Decomposition.}  
Section~\ref{sec:anti-crossings} defines and clarifies the notion of anti-crossings,  
and presents the basic two-level matrix used throughout the analysis.  
Section~\ref{sec:single-clique} develops the core framework for analyzing a single clique,  
including the block decomposition of its low-energy subspace based on its angular momentum structure.
Section~\ref{sec:bare-sub} presents a closed-form analytical solution for the \emph{bare subsystem},  
which consists of a collection of independent cliques.

\textbf{Part~III: Two-Stage Dynamics and Full-System Application.}  
Section~\ref{sec:stage-0} analyzes the optional Stage~0 and derives the effective low-energy Hamiltonian that initiates Stage~1.  
Section~\ref{sec:stage-main} contains the main technical results of the paper, focusing on a single bipartite substructure. We analyze the decomposition of the Hamiltonian into same-sign and opposite-sign blocks, the role of \( \Jxx \) in steering and interference, and provide analytical bounds along with numerical illustrations.
In Section~\ref{sec:full-system}, we extend the analysis to the full system by iteratively applying the two-phase procedure to successive substructures.

\textbf{Part~IV: Conclusion.}  
We conclude in Section~\ref{sec:discussion} with a discussion of implications, limitations, and future directions.

\begin{table*}[ht]
\centering
\begin{mdframed}[linewidth=0.5pt, roundcorner=4pt]
\paragraph*{Algorithm Overview.}
The full algorithm consists of two phases:

\begin{itemize}
  \item \textbf{Phase I:} Extract an independent-clique (IC) structure using a polynomial-time TFQA process:
  \begin{enumerate}
    \item Use TFQA and polynomial annealing time to return the excited states involved in the anti-crossing.
    \item Extract a set of seeds (local minima states) from the resulting output.
    \item Apply a classical procedure to identify an independent-clique (IC) structure based on these seeds.
  \end{enumerate}

  \item \textbf{Phase II:} Define an $\XX$-driver graph using the IC structure and perform an annealing process with tunable \( \Jxx \):
  \begin{enumerate}
    \item Construct an $\XX$-driver graph \( G_{\ms{driver}} \) using the IC structure identified in Phase~I.
    \item Define a new time-dependent Hamiltonian \( \mb{H}(t) = \x{t} \mb{H}_{\ms{X}} + \jxx{t} \mb{H}_{\ms{XX}} + \mt{p}(t)\mb{H}_{\ms{problem}} \).
    \item For each feasible \( \Jxx \), evolve adiabatically according to a three-stage schedule:
    \begin{itemize}
      \item \textbf{Stage~0} (optional initialization);
      \item \textbf{Stage~1} (energy-guided localization);
      \item \textbf{Stage~2} (interference-driven transition).
    \end{itemize}
    \item Measure the final ground state to extract the optimal solution.
  \end{enumerate}
\end{itemize}
\end{mdframed}
\caption{Two-phase structure of the full \DDD{} algorithm. Phase~I uses stoquastic TFQA to extract seeds and identify an IC structure; Phase~II applies a non-stoquastic driver with tunable \( \Jxx \) to lift anti-crossings.}
\label{alg-overview}
\end{table*}

\section{MIS Problem and $\Jzz$ Coupling}
\label{sec:MIS}

The NP-hard weighted Maximum Independent Set (MIS) problem is formulated as follows:

\begin{mdframed}
\textbf{Input:} An undirected graph \( G = (\ver(G), \edge(G)) \) with \( N = |\ver(G)| \), where each vertex \( i \in \ver(G) = \{1, \ldots, N \} \) is assigned a positive rational weight \( w_i \).\\
\noindent
\textbf{Output:} A subset \( S \subseteq \ver(G) \) such that \( S \) is independent (i.e., for each \( i,j \in S \), \( (i,j) \not \in \edge(G) \)), and the total weight of \( S \), given by  
\(
\mt{w}(S) = \sum_{i \in S} w_i,
\) 
is maximized. We denote this optimal set as \( \mis(G) \).  
\end{mdframed}

For simplicity, we focus on the unweighted MIS problem, where all \( w_i =w=1 \). However, we retain \( w_i \) in our formulation to allow for generalization to the weighted case and for later analysis purpose. In the unweighted setting, \( \mis(G) \) corresponds to the largest independent set in \( G \).


\subsection{The Role of \( \Jzz \) in the MIS-Ising Hamiltonian}
\label{sec:mis-Ising}

As shown in~\cite{minor1}, the MIS problem can be encoded in the ground state of the MIS-Ising Hamiltonian:
\begin{equation}
  \label{eq:Ising}
\mb{H}_{\ms{MIS\!-Ising}}(G) = \sum_{i \in \ver(G)} (-w_i) \shz{i} + \sum_{(i,j) \in \edge(G)} J_{ij} \shz{i} \shz{j},
\end{equation}
where \( J_{ij} > \max\{w_i, w_j\} \) for all \( (i,j) \in \edge(G) \). This formulation slightly modifies that of~\cite{minor1,Choi2021}, replacing \( \sigma^z \) with the shifted-\( \sigma^z \) operator \( \shz{i} := \tfrac{I + \sigma^z_i}{2} \), whose eigenvalues are \( \{0, 1\} \), making the correspondence with the classical energy function direct:
\begin{equation}
  \label{eq:MIS-eng}
  \mathcal{E}(G) = \sum_{i \in \ver(G)} (-w_i) x_i + \sum_{(i,j) \in \edge(G)} J_{ij} x_i x_j,
\end{equation}
where \( x_i \in \{0,1\} \). The energy is minimized when no two adjacent vertices \( i, j \) satisfy \( x_i = x_j = 1 \), ensuring that the ground state corresponds to \( \mis(G) \).
For convenience, we refer to all \( J_{ij} \) as \( \Jzz \), although these couplings need not be uniform, even in the unweighted case. The only requirement is that \( \Jzz > \max\{w_i, w_j\} \) for each edge \( (i,j) \in \edge(G) \).

In the unweighted case (\( w_i = 1 \)), an independent set of size \( m \) has energy
\(
E_{\ms{ind}} = -m,
\)
while a dependent set has energy
\(
E_{\ms{dep}} = \Jzz \cdot (\# \text{edges}) - (\# \text{vertices}).
\)
Thus, large \( \Jzz \) creates a large energy separation between independent-set states and dependent-set states. 
In principle, \( \Jzz \) can be chosen arbitrarily large; however, excessively large values of \( \Jzz \) can be detrimental. 
With the use of the \( \XX \)-driver in \DDD{}, the dependent-set states can act as ``bridges'' between different groups of independent-set states to facilitate the smooth structural steering (to be elaborated in Section~\ref{sec:stage1}).
The appropriate choice of \( \Jzz \) is crucial for the success of the algorithm.

For \DDD{}, we use two different couplings:
\begin{itemize}
\item \( \Jzz^{\ms{clique}} \): assigned to edges within cliques of the driver graph;
\item \( \Jzz \): assigned to all other edges.
\end{itemize}
That is, the MIS–Ising problem Hamiltonian takes the form
\begin{align}
\mb{H}_{\ms{problem}} 
&= \sum_{i \in \ver(G)} (-w_i) \shz{i} 
+ \Jzz^{\ms{clique}} \sum_{(i,j) \in \edge(G_{\ms{driver}})} \shz{i} \shz{j} 
+ \Jzz \sum_{(i,j) \in \edge(G) \setminus \edge(G_{\ms{driver}})} \shz{i} \shz{j}.
\label{eq:problem-Ham}
\end{align}
The value of \( \Jzz^{\ms{clique}} \) is set sufficiently large to restrict the system to the clique low-energy subspace, 
while \( \Jzz \) must satisfy two upper bounds, \( \Jzzinter \) and \( \Jzzsteer \), 
to ensure the success of the algorithm. 
The precise role of these bounds will be analyzed in Section~\ref{sec:sub4}.

\section{Classical Hard Instances of MIS and the \GIC{} Instances}
\label{sec:MIS-hardness}

Recall that an independent set is \emph{maximal} if no larger independent set contains it.  
Each maximal independent set corresponds to a local minimum of the energy function in Eq.~\eqref{eq:MIS-eng}.  
A collection of maximal independent sets (\MLIS{}) all having the same size \( m \)  
corresponds to a set of \emph{degenerate} local minima with equal energy \( -m \).  
When the meaning is clear (i.e., they all have the same size), we refer to such a collection simply as a \DMS{},  
and its cardinality as the \emph{degeneracy}.  
In this work, we use the terms \emph{degenerate local minima} and \DMS{} interchangeably.

This section explores the relationship between classical hardness and the structure of \DMS{}, 
with a focus on how degeneracy influences computational complexity in both classical and quantum settings. 
In Section~\ref{sec:nece-hardness} we show that classical hardness requires exponentially many \MLIS{}, 
with at least one \DMS{} exhibiting exponential degeneracy. 
In Section~\ref{sec:struct-deg} we introduce a structured form of degeneracy, denoted by \MIC{}. 
Finally, in Section~\ref{sec:GIC} we define the class of \emph{Graphs with Independent Cliques} (\GIC{}), 
together with a reduction to fundamental bipartite structures---\Gdis{} and \Gshare{}---that capture both classical hardness and 
the structural features exploited by our algorithm.

\subsection{Necessary Classical Hardness Conditions}
\label{sec:nece-hardness}
The MIS problem was among the first shown NP-complete~\cite{GJ79}. A simple branching
algorithm~\cite{Tsukiyama1977,Johnson1988} without pruning (i.e., branch-without-bound)
enumerates all \MLIS{} in time \(O(N \cdot \#\MLIS)\), where $\#\MLIS$ denotes the number of sets in \MLIS{}.
Thus, MIS can be solved by identifying the largest among them. In the worst case,
$\#\MLIS = O(3^{N/3})$, though often significantly smaller---yet still exponential.

\begin{mdframed}
\noindent\textbf{Observation 1.}
A necessary condition for classical hardness is that the MIS instance contains exponentially many \MLIS{}.\\
\textbf{Observation 2.}
A necessary consequence is that at least one \DMS{} must exhibit \emph{exponential degeneracy}.
\end{mdframed}

Observation~1 follows directly from the enumeration algorithm above.  
Observation~2 follows from the pigeonhole principle:  
since an independent set can have at most \( N \) distinct sizes,  
at least one size class must contain exponentially many \MLIS{}.  
Otherwise, if all \DMS{} had at most polynomial degeneracy,  
the total number of \MLIS{} would be polynomial---contradicting Observation~1.

As a side note, it is worth emphasizing that classical algorithms for {\sc MIS}
outperform Grover's unstructured quantum search~\cite{Grover1996,Roland2002}, which requires \( O(2^{N/2}) \) time. 
As early as 1977, a simple and implementable algorithm achieved a runtime of \( O(2^{N/3}) \) via careful branching analysis~\cite{Tarjan1977}, and this exponent has been further improved over time using more sophisticated branch-and-bound techniques, see~\cite{MWIS2019} for references.
Since other optimization problems such as {\sc Exact-Cover} can be reduced to {\sc MIS} with the same problem size~\cite{Choi-NP-reduction}, it follows that unstructured adiabatic quantum optimization~\cite{UnstructuredAQO2024}---while perhaps of theoretical interest---offers no practical algorithmic advantage.

\subsection{\MIC{}: Structured Form of a \DMS{}}
\label{sec:struct-deg}
It is widely believed that a \DMS{} with exponential degeneracy causes a tunneling-induced anti-crossing in TFQA.  
We call such a \DMS{} \emph{critical}.
The exponentially many \MLIS{} in a critical \DMS{} must exhibit substantial vertex-sharing.  
This suggests a natural partitioning of the involved vertices into \( k \) cliques,  
where each maximal independent set includes exactly one vertex from each clique.  
However, efficiently identifying such a partition may not always be feasible,  
and multiple partitions may be needed to fully characterize a critical \DMS{}.

In its simplest structured form, a critical \DMS{} can be represented as a \MIC{}, 
namely a collection of mutually independent cliques whose vertices together generate exactly all the \MLIS{} in that \DMS{}.

\begin{definition}
A \MIC{} of size \( k \) consists of \( k \) \emph{independent cliques} (i.e., no edges exist between them),  
denoted as \( \text{Clique}(w_i, n_i) \),  
where \( w_i \) is the vertex weight and \( n_i \) is the clique size, for \( 1 \leq i \leq k \).  
Each maximal independent set in the corresponding \DMS{} is formed by selecting exactly one vertex from each clique.  
In this case, the degeneracy of the \MIC{} is given by
\(
\prod_{i=1}^{k} n_i.
\)
\end{definition}

Moreover, given any single maximal independent set as a seed,  
all \( k \) cliques in the \MIC{} can be identified in linear time due to the independence condition.  
For each vertex in the seed, the corresponding clique is obtained by collecting all vertices  
that are mutually adjacent to it and simultaneously non-adjacent to all other seed vertices.  
This procedure depends only on vertex adjacencies in the graph.

Under the assumption of the \MIC{} structure,  
we analytically establish in the companion paper~\cite{Choi-Limitation}  
that TFQA encounters a tunneling-induced anti-crossing with an exponentially small gap.  
(This generalizes the so-called perturbative crossing,  
which is defined only in the small transverse-field regime.)  
While such an $(\LM{},\GM{})$-anti-crossing is typically a bottleneck for reaching the global minimum $\GM$,  
here we turn it around and use it constructively to reach the local minima $\LM{}$ (the obstacle) instead.  
More specifically, a polynomial-time TFQA algorithm evolves adiabatically through this anti-crossing,  
tracking the energy associated with $\LM{}$ (blue) instead of transitioning to the global minimum $\GM{}$ (red),  
as illustrated in Figure~\ref{fig:AC}.
This outputs a maximal independent set configuration from $\LM$, which can be used to seed the IC-based driver construction in Phase~I
of the \DDD{} algorithm; see Table~\ref{alg-overview}.

\begin{remark}
When the cliques are \emph{dependent}---that is, edges may exist between them---we refer to the structure as an \MDC{},  
a \DMS{} formed by a set of dependent cliques.  
This represents a relaxed structure to which our analysis may be generalized.  
The main difficulty in this relaxation is that the cliques can no longer be unambiguously identified,  
unlike in the \MIC{} case. Further discussion is deferred to future work.
\end{remark}

\paragraph{Terminology and Abbreviations.}
For convenience, we summarize below the key abbreviations and their meanings.
These notions are closely related: a \DMS{} corresponds to degenerate local minima,  
while \MIC{} and \MDC{} describe structured forms of such degeneracy.

\vspace{0.5em}
\begin{center}
\begin{tabular}{ll}
\toprule
\textbf{Abbreviation} & \textbf{Meaning} \\
\midrule
\MLIS{} & Maximal independent sets \\
\DMS{}  & A collection of \MLIS{} of equal size, corresponding to degenerate local minima \\
\MIC{}  & A \DMS{} formed by a set of independent cliques \\
\MDC{}  & A \DMS{} formed by a set if dependent cliques \\
\bottomrule
\end{tabular}
\end{center}
\vspace{0.5em}

\subsection{\GIC{} and Its Hardness}
\label{sec:GIC}
We now define the class of structured problem instances assumed in this paper.  
A \emph{Graph with Independent Cliques} (\GIC{}) is a graph whose structure satisfies the following properties:
\begin{itemize}
\item Each critical \DMS{} in the graph assumes the structure of a \MIC{};
\item It has a unique (global) maximum independent set, denoted as \GM{}.
      which may be viewed as a \MIC{} consisting entirely of singleton cliques;
\item The \GM{} shares at least one vertex with at least one clique in some critical \MIC{}.
\end{itemize}

A critical \MIC{} contains exponentially many \MLIS{},  
and its partial overlap with the \GM{} obscures the global structure,  
making it difficult for classical algorithms to distinguish the true optimum from nearby local maxima.  
The classical hardness of \GIC{} instances stems from these same two features:  
(1) the exponential number of maximal independent sets in the \MIC{}, and  
(2) the partial vertex overlap between the global maximum and these local maxima,  
which creates ambiguity and hinders global optimization.

While we do not prove the classical hardness of \GIC{} in the formal complexity-theoretic sense  
(otherwise we would have proven \( \mathrm{P} = \mathrm{NP} \),
we note that the definition can be further strengthened if needed.  
For example, one may require that the graph contains at least two critical \MIC{}s,  
and that most vertices of the \GM{} belong to cliques in these \MIC{}s.  
In other words, \GIC{} instances are designed to be structured enough for rigorous analysis,  
yet rich enough to present classical difficulty.

\paragraph{Reduction to Fundamental Bipartite Structures.}
Our key argument is that the analysis of any general \GIC{} instance  
can be reduced to a sequence of simpler bipartite substructures,  
each formed by a critical \MIC{} and the global maximum (\GM{}).  

\begin{mdframed}
Recall that each \MIC{} corresponds to a set of degenerate local minima (\LM{}) in the MIS--Ising energy landscape. 
The terms \LM{} and \MIC{} are thus used interchangeably. 
We use \GM{} to refer both to the (global) maximum independent set and to its corresponding global minimum in the energy landscape. 
In what follows, we use \LM{} and \GM{} to denote the \MIC{} and \GM{} in each such bipartite substructure, respectively.
\end{mdframed}

\subsubsection{Bipartite Substructures: \Gdis{} and \Gshare{}}
\label{sec:graphs}

We consider two bipartite substructures:
\begin{itemize}
    \item \textbf{\Gdis}: The local minima (\LM{}) and the global minimum (\GM{}) are vertex-disjoint.
    \item \textbf{\Gshare}: The \GM{} shares exactly one vertex with each clique in the \LM{}.
\end{itemize}

We begin with the \emph{disjoint-structure graph} \( \Gdis = (V, E) \),  
in which the vertex set \( V \) is partitioned into left and right (disjoint) vertex sets,  
with the following structural properties:
\begin{itemize}
  \item The left component is defined by a set \( L = \{ C_1, \dots, C_{m_l} \} \) of \( m_l \) disjoint cliques,  
        each denoted \( C_i = \cl(w_i, n_i) \).  
        We let \( V_L = \bigcup_{i=1}^{m_l} C_i \) denote the full vertex set.
  \item The right component \( R \) consists of \( m_r \) independent vertices, each with weight \( w_r \). 
  \item Every vertex in \( V_L \) is adjacent to every vertex in \( R \).
\end{itemize}

In this paper we mainly focus on the unweighted MIS case,  
and assume uniform weights \( w_i = w_r = w \).  
Under the MIS--Ising mapping with these uniform weights,  
\( V_L \) corresponds to the degenerate local minima (\LM{}) with degeneracy \( \prod_{i=1}^{m_l} n_i \),  
while \( R \) defines the global minimum (\GM{}) with \( m_g = m_r \).
Each local minimum (in \LM{}) corresponds to a maximal independent set of size \( m_l \), and thus has energy \( -m_l \).  
The global minimum (\GM{}) has energy \( -m_g \).

We now define the \emph{shared-structure graph} \( \Gshare \),  
which differs from \( \Gdis \) in that each vertex in \( R \) is adjacent to all but one vertex in each clique of \( L \).  
This modification allows the \GM{} to include shared vertices from the cliques in \( L \),  
thereby introducing overlap with the \LM{}.
Structurally, \( L \) and \( R \) are defined exactly as in \( \Gdis \),  
but with the adjacency rule modified as above.  
Specifically, the global maximum \GM{} consists of one shared vertex from each clique \(C_i \in L\) together with all \(m_r\) independent vertices in \(R\), yielding a total size \(m_g = m_l + m_r\).

Figure~\ref{fig:dis-share-graph} illustrates both cases.

\begin{figure}[h]
  \centering
  \begin{subfigure}[b]{0.36\textwidth}
    \includegraphics[width=\textwidth]{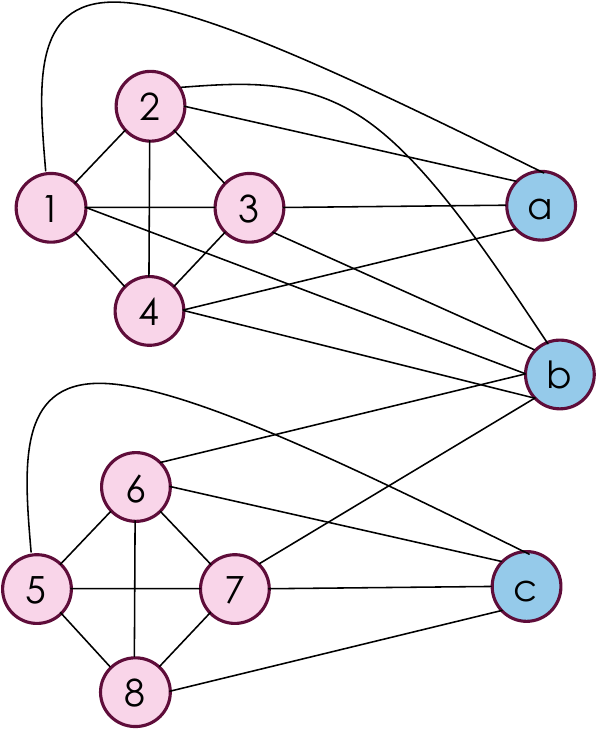}
    \caption{The \LM{} and \GM{} are vertex-disjoint.}
  \end{subfigure}
  \hfill
  \begin{subfigure}[b]{0.36\textwidth}
    \includegraphics[width=\textwidth]{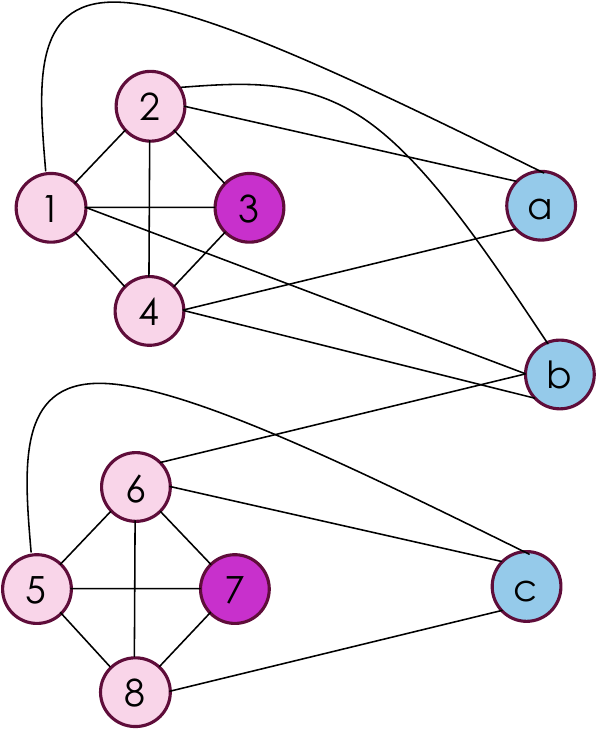}
    \caption{The \GM{} shares exactly one vertex with each clique in the \LM{}.}
  \end{subfigure}
\caption{
Example graphs illustrating the \Gdis{} and \Gshare{} structures.  
Recall that each \LM{} here is structurally a \MIC{}.  
(a) Disjoint-structure graph \Gdis: The set \( L \) consists of \( m_l = 2 \) disjoint cliques,  
each of size \( n_1 = n_2 = 4 \), with their vertices (pink) forming the local minima \LM{}.  
The set \( R \) (blue) consists of \( m_r = 3 \) independent vertices,  
forming the global minimum \GM{}.\\  
(b) Shared-structure graph \Gshare: The set \( L \) again consists of two cliques of size \( n_1 = n_2 = 4 \),  
with pink and purple vertices. The purple vertices (one per clique) are shared between \LM{} and \GM{}.  
The set \( R \) (blue) contains \( m_r = 3 \) independent vertices.  
The global minimum consists of all vertices in \( R \), together with the shared purple vertices in \( L \), giving \( m_g = 5 \).  
In both cases, edges between the pink vertices in \( L \) and all vertices in \( R \) are complete,  
though not all are shown for visual clarity.
}
\label{fig:dis-share-graph}
\end{figure}

For convenience, we write \( m := m_l \), dropping the subscript $l$ when no confusion arises.  
We assume \( \sum_{i=1}^{m} \sqrt{n_i} > m_g \), so that an anti-crossing is induced by the competition between \( \LM \) and \( \GM \).

These bipartite substructures represent two extreme regimes of connectivity and  
overlap between the critical \LM{} and the \GM{}.  
One can view \Gdis{} as a special case of \Gshare{} in which no vertices are shared between \LM{} and \GM{}.  
Note that \Gdis{} is not classically hard on its own  
(for example, one can remove all \LM{} vertices and directly identify the \GM{}).  
However, it may still arise as a substructure during a single iteration of the full algorithm,  
as discussed in Section~\ref{sec:full-system}.  
We consider \Gdis{} separately here for two reasons.  
First, in the \Gdis{} case, the relevant blocks of the Hamiltonian are decoupled,  
allowing the mechanisms of our algorithm to be simply illustrated.  
Second, it highlights the core distinction introduced by shared vertices:  
namely, that a single shared vertex can enable block coupling and necessitate  
an upper bound on \( \Jxx \) to prevent an interference-involved block-level anti-crossing.  
We note that for the TFQA case (\( \Jxx = 0 \)),  
we argue in the companion paper~\cite{Choi-Limitation} that it is sufficient  
to analyze only the disjoint-structure case (for the presence of the anti-crossing and the gap size).

\begin{remark}
Our reduction to bipartite substructures is not presented as a formal derivation,  
but rather as a structural perspective informed by effective Hamiltonian techniques.  
Each individual anti-crossing is primarily governed by the interaction  
between a set of critical local minima and the global minimum.  
This motivates a stepwise reduction to low-energy effective Hamiltonians  
that couple the \GM{} to one critical structure at a time---yielding a bipartite substructure,  
with connectivity and shared vertices bounded by the \Gshare{} worst case.
While a fully rigorous justification of this reduction is beyond the scope of this paper,  
our detailed analysis of the \Gshare{} instance,  
together with numerical examples involving multiple local minima in Section~\ref{sec:full-system},  
provides supporting evidence for the validity of this bipartite reduction perspective.
\end{remark}

 
\section{Revised \DDD{}: System Hamiltonian and Annealing Schedule}
\label{sec:2-stage}

In this section, we first recall the system Hamiltonian used in the original \DDD{} algorithm,  
then introduce the revised annealing schedule.  
The full evolution consists of an optional initialization phase (Stage~0),  
followed by the two-stage algorithmic core (Stage~1 and Stage~2).  

\subsection{Recall: System Hamiltonian of \DDD{}}
The system acts on a Hilbert space of \( N \) spin-\( \tfrac{1}{2} \) particles,  
one for each vertex of the problem graph $G$.  
The Hamiltonian is expressed in terms of the spin operators  
\( \Sop{\ms{x}} = \tfrac{1}{2} \sigma^x \) and \( \Sop{\ms{\tilde{z}}} = \shz{} \),  
where \( \shz{} \) denotes the shifted \( \sigma^z \) operator.
The system Hamiltonian is defined in terms of a time-dependent annealing schedule:
\[
\mb{H}(t) = \x{t} \mb{H}_{\ms{X}} + \jxx{t} \mb{H}_{\ms{XX}} + \mt{p}(t) \mb{H}_{\ms{problem}},
\]
where
\[
\mb{H}_{\ms{X}} = -\sum_{i} \sigma_i^x, \quad
\mb{H}_{\ms{XX}} = \sum_{(i,j) \in \edge(G_{\ms{driver}})} \sigma_i^x \sigma_j^x.
\]

The problem Hamiltonian \( \mb{H}_{\ms{problem}} \), introduced in Section~\ref{sec:MIS}, is recalled here for completeness:
\[
\mb{H}_{\ms{problem}} = \sum_{i \in \ver(G)} (-w_i)\, \shz{i}
+ \Jzz^{\ms{clique}} \sum_{(i,j) \in \edge(G_{\ms{driver}})} \shz{i} \shz{j}
+ \Jzz \sum_{(i,j) \in \edge(G) \setminus \edge(G_{\ms{driver}})} \shz{i} \shz{j}.
\]

The original annealing parameters evolve as:
\(
\x{t} = (1 - t)\Gamma, 
\jxx{t} = t(1 - t)\Jxx,
\mt{p}(t) = t.
\)
We revise \DDD{} by modifying its annealing schedule, specifically the functions  \( \x{t} \), \( \jxx{t} \), and \( \mt{p}(t) \).

\subsection{Revised Annealing Schedule: Stages 0, 1 and 2}

We describe the system using two Hamiltonians: \( \mb{H}_0(t) \), defined for \( t \in [0,1] \), corresponding to the optional Stage~0; and \( \mb{H}_1(t) \), also defined over \( t \in [0,1] \), governing the evolution during Stages~1 and~2. 

There are three distinct values for the transverse field: \( \Gamma_0 > \Gamma_1 > \Gamma_2 \).  
The three stages are distinguished by the values of these \( \Gamma \)'s: the transverse field is reduced from \( \Gamma_0 \) to \( \Gamma_1 \) during Stage~0, from \( \Gamma_1 \) to \( \Gamma_2 \) during Stage~1, and from \( \Gamma_2 \) to \( 0 \) during Stage~2.

\paragraph{Stage~0 (optional).}  
This stage begins in the uniform superposition state  
and evolves adiabatically toward the ground state of \( \mb{H}_0(1) = \mb{H}_1(0) \).

The Hamiltonian for Stage~0 is
\[
\mb{H}_0(t) = \x{t} \mb{H}_{\ms{X}} + \jxx{t} \mb{H}_{\ms{XX}} + \mt{p}(t) \mb{H}_{\ms{problem}}, \quad t \in [0,1],
\]
with
\(
\x{t} = (1 - t)(\Gamma_0 - \Gamma_1) + \Gamma_1,
\jxx{t} = t \Jxx,  
\mt{p}(t) = t.
\)
During this stage, the problem parameters \( w_i \), \( \Jzz^{\ms{clique}} \), and \( \Jzz \)  
are gradually ramped to their final values as \( \mt{p}(t) \) increases from \( 0 \) to \( 1 \),  
while \( \jxx{t} \) increases linearly to its target value \( \Jxx \).
If the ground state of \( \mb{H}_0(1) = \mb{H}_1(0) \) can be prepared directly---e.g., via quantum hardware initialization---  
then Stage~0 may be omitted.

\paragraph{Main Stages 1 and 2 (algorithmic core).}

The system Hamiltonian during the main two-stage evolution is
\[
\mb{H}_1(t) = \x{t} \mb{H}_{\ms{X}} + \jxx{t} \mb{H}_{\ms{XX}} + \mb{H}_{\ms{problem}}, \quad t \in [0,1].
\]
The annealing schedule divides into two stages, determined by a structure-dependent parameter \( \Gamma_2 \).
We define the transition time \( t_{\ms{sep}} := 1 - \frac{\Gamma_2}{\Gamma_1} \),  
so that Stage~1 corresponds to \( t \in [0, t_{\ms{sep}}] \), and Stage~2 to \( t \in [t_{\ms{sep}}, 1] \).
The transverse field is given by
\(
\x{t} = (1 - t)\Gamma_1,
\)
which decreases linearly from \( \Gamma_1 \) at \( t = 0 \) to \( \Gamma_2 \) at \( t = t_{\ms{sep}} \), and then to \( 0 \) at \( t = 1 \).
The XX-driver schedule is
\[
\jxx{t} = 
\begin{cases}
\Jxx & \text{for } t \in [0, t_{\ms{sep}}] \Leftrightarrow \mt{x} \in [\Gamma_2, \Gamma_1], \\
\alpha \x{t} & \text{for } t \in [t_{\ms{sep}}, 1] \Leftrightarrow \mt{x} \in [0, \Gamma_2],
\end{cases}
\]
where \( \Jxx = \alpha \Gamma_2 \). That is, \( \jxx{t} \) remains constant during Stage~1,  
and both \( \x{t} \) and \( \jxx{t} \) ramp linearly to zero during Stage~2.
The parameter \( \alpha \) controls the strength of the XX-coupling relative to the transverse field during Stage~2.  
Its value is critical 
and will be determined analytically in Section~\ref{sec:Jxx-bounds}.


\begin{remark}[Time-Dependent Quantities and Font Convention]
Note that \( \mt{jxx} \) can be reparametrized as a function of \( \mt{x} \)\footnote{Since \( \mt{x} = (1 - t)\Gamma_1 \) decreases monotonically from \( \Gamma_1 \) to 0,  
each value \( t \in [0,1] \) corresponds uniquely to a transverse field value \( \mt{x} \in [0, \Gamma_1] \).  
This one-to-one correspondence allows us to equivalently express the annealing schedule  
in terms of \( \mt{x} \) rather than \( t \), as indicated by the domain annotations.}.  
Accordingly, the Hamiltonian \( \mb{H}_1 \) can be viewed as a function of the transverse field \( \mt{x} \),  
which decreases monotonically with \( t \). This reparametrization is particularly useful when describing multiple iterations of the algorithm,  
in which each substructure evolves under a Hamiltonian governed by \( \mt{x} \).  
This dual viewpoint allows us to parametrize the system either explicitly in \( t \) or implicitly in terms of \( \mt{x} \).  
When there is no ambiguity, we often omit the parameter \( t \) and use text font (e.g., \( \mt{x}, \mt{jxx} \)) to indicate time dependence.
\end{remark}

\begin{figure}[!htbp]
  \centering
  \includegraphics[width=0.65\textwidth]{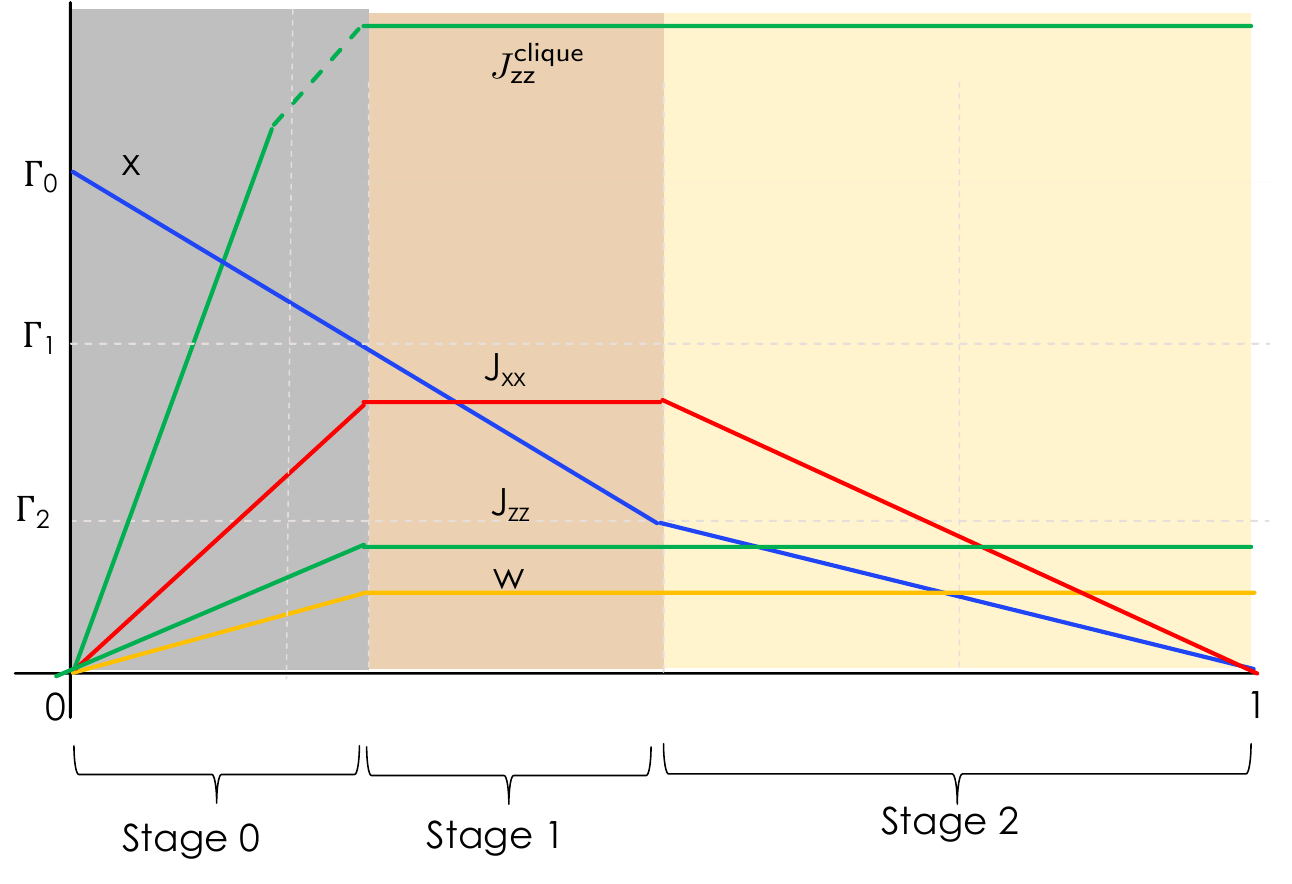}
  \caption{
  Annealing parameter schedule for the system Hamiltonian  
  \( \mb{H}(t) = \x{t} \mb{H}_{\ms{X}} + \jxx{t} \mb{H}_{\ms{XX}} + \mt{p}(t)\mb{H}_{\ms{problem}} \).
  Transverse field \( \x{t} \) (blue) begins at \( \Gamma_0 \),  
  decreases to \( \Gamma_1 \) in Stage~0, then to \( \Gamma_2 \) in Stage~1,  
  and reaches 0 at the end of Stage 2.
  The \( \XX \)-coupling strength \( \jxx{t} \) (red) increases linearly to \( \Jxx \) in Stage~0,  
  remains constant in Stage~1, and decreases to 0 in Stage~2.
  Vertex weights \( \mt{w}_i \) (orange) and $\ZZ$-couplings \( \Jzz, \Jzz^{\ms{clique}} \)  
  (green) ramp to final values in Stage~0 and remain fixed thereafter.
  }
  \label{fig:schedule}
\end{figure}

\subsubsection*{Parameter Values.}
The values of \( \Gamma_0 \), \( \Gamma_1 \), \( \Gamma_2 \), \( \Jxx \) $(\alpha)$, and \( \Jzz \)  
are critical to the success of the algorithm.  
In particular, suitable values of \( \Gamma\)s and \( \Jzz \) are chosen in relation to the analytical bounds on \( \Jxx \),  
ensuring that the effective Hamiltonian supports the desired localization and interference behavior.  
These parameter choices will be specified and justified in the subsequent analysis.

\begin{remark}
After Stage~0, the system evolves under the full Hamiltonian \( \mb{H}_1 \) defined above.  
However, our analysis focuses on an effective Hamiltonian \( \Heff \), Eq.~\eqref{eq:stage0-Heff},  
derived in Section~\ref{sec:stage-0}.  
This effective Hamiltonian captures the essential dynamics responsible for the algorithmic speedup  
and serves as the foundation for the analytical and numerical investigations presented in the following sections.
\end{remark}

\section{Anti-Crossing and the Two-Level Hamiltonian \( \mb{B}(w,x) \)}
\label{sec:anti-crossings}

The term \emph{anti-crossing} is sometimes used loosely, so we begin with a precise notion in the two-level case, then extend it to multi-level systems, and finally classify different types of anti-crossing.
We also introduce a canonical two-level Hamiltonian whose eigensystem will be used throughout our analysis. 

\subsection{Level Repulsion vs.~Anti-Crossing}
\label{subsec:level-repulsion}
We begin by distinguishing the concept of an \emph{anti-crossing} (also called \emph{avoided-crossing}) from level repulsion.

Consider a generic two-level Hamiltonian of the form
\[
\mb{H}(x) :=
\begin{bmatrix}
e_1(x) & v(x) \\
v(x) & e_2(x)
\end{bmatrix},
\]
where \( e_1(x) \), \( e_2(x) \), and \( v(x) \) are real-valued functions of a parameter \( x \). 

The eigenvalues of this Hamiltonian are
\(
\lambda_{\pm}(x) = \tfrac{e_1(x) + e_2(x)}{2} \pm \tfrac{1}{2} \sqrt{(e_1(x) - e_2(x))^2 + 4v(x)^2},
\)
and the energy gap between them is
\(
\Delta(x) := \lambda_+(x) - \lambda_-(x) = \sqrt{(e_1(x) - e_2(x))^2 + 4v(x)^2}.
\)
The off-diagonal term \( v(x) \) induces \emph{level repulsion}: if \( v(x) \neq 0 \), then the eigenvalues never cross, and the gap \( \Delta(x) \) remains strictly positive.
Thus, assuming the off-diagonal coupling \( v(x) \) is nonzero, level repulsion is always present.

\begin{definition}
We say that an \emph{anti-crossing} occurs when the two
unperturbed energy levels $e_1(x)$ and $e_2(x)$ cross, i.e.,
$e_1(x_*) = e_2(x_*)$ for some $x_*$, and the off-diagonal coupling
$v(x_*) \neq 0$. 
In this case the eigenvalue curves form an anti-crossing with gap
\(
\Delta_{\min} = 2|v(x_*)|.
\)
\end{definition}
The size of the anti-crossing gap depends on $|v(x_*)|$: stronger coupling leads to a larger gap, while weaker coupling results in a narrower one.

By contrast, if the two diagonal entries \( e_1(x) \) and \( e_2(x) \) remain well separated for all \( x \), then the system exhibits level repulsion but not an anti-crossing. Figure~\ref{fig:B2wx} illustrates an example of level repulsion without an anti-crossing. 

The eigenvectors of the two-level Hamiltonian are given by
\[
\ket{\lambda_{-}(x)} = \cos\theta(x) \ket{0} + \sin\theta(x) \ket{1}, \quad
\ket{\lambda_{+}(x)} = -\sin\theta(x) \ket{0} + \cos\theta(x) \ket{1},
\]
where the mixing angle \( \theta(x) \) satisfies
\(
\tan(2\theta(x)) = \tfrac{2v(x)}{e_1(x) - e_2(x)}.
\)
Thus, near the anti-crossing point \( x = x_* \), the eigenstates interpolate between the unperturbed basis states.

\begin{remark}
The trigonometric expression for eigenvectors in terms of the mixing angle \( \theta(x) \),
is equivalent to the rational-form representation
\[
\ket{\lambda_{-}(x)} = \tfrac{1}{\sqrt{1 + \gamma(x)^2}} \left(\gamma(x) \ket{0} +  \ket{1} \right),\quad
\ket{\lambda_{+}(x)} = \tfrac{1}{\sqrt{1 + \gamma(x)^2}} \left( \ket{0} -\gamma(x) \ket{1} \right),
\]
where the two parametrizations are related by
\(
\gamma(x) = \tfrac{1}{\tan\theta(x)}, \text{and} \quad \tan(2\theta(x)) = \tfrac{2v(x)}{e_1(x) - e_2(x)}.
\)
This rational-form expression is particularly useful for our analysis, as it aligns directly with the basic matrix form introduced below.
\end{remark}

\begin{remark}
  While the explicit forms of the eigenvectors are not directly used in this paper, they are included here for completion, and used in the companion paper~\cite{Choi-Limitation} for
  bounding the anti-crossing gap.
\end{remark}
Our earlier work~\cite{Choi2021,Choi2020}, including the development of the original \DDD{} algorithm, was motivated by investigating the structural characteristics of eigenstates around the anti-crossing.

\subsubsection{Anti-crossing in a Multi-level System}
In a multi-level system, the notion of an anti-crossing extends naturally
by restricting the Hamiltonian to the two-dimensional subspace spanned by
the pair of eigenstates whose unperturbed energies intersect.
This reduction yields a $2 \times 2$ effective Hamiltonian that captures the
essential structure of the anti-crossing, including both the energy gap and
the interpolating behavior of the eigenstates.
Thus, the same framework as in the two-level case applies.

With this perspective, we refine the definition of an $(L,R)$-anti-crossing
given in \cite{Choi2021}. Recall that $E_0^A(t)$ denotes the ground state
energy of the Hamiltonian $\mb{H}_A(t)$ projected to the subspace spanned by
the subsets of $A$.
For simplicity, we will refer to this subpace as the subspace defined by $A$.
\begin{definition}
  We say an anti-crossing is an $(L,R)$-anti-crossing at $t_*$ if there exist
  bare energies $E_0^L(t)$ and $E_0^R(t)$ such that:
  \begin{enumerate}
    \item $E_0^L(t)$ and $E_0^R(t)$ approximate the unperturbed energy levels
    of the effective $2 \times 2$ Hamiltonian describing the anti-crossing
    for $t \in [t_* - \delta, t_* + \delta]$ for some small $\delta>0$;  and
    \item $E_0^L(t)$ and $E_0^R(t)$ cross at $t_*$, i.e.\ $E_0^L(t_*) = E_0^R(t_*)$.
  \end{enumerate}
\end{definition}
See Figure~\ref{fig:AC} for an illustration.

\begin{figure}[!htbp]
  \centering
  \includegraphics[width=0.5\textwidth]{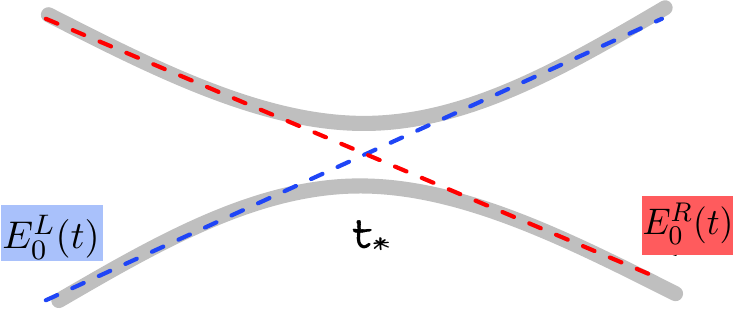}
  \caption{
Schematic of an $(L,R)$-anti-crossing.
Dashed lines: bare energies $E_0^L(t)$ and $E_0^R(t)$ crossing at $t_*$.
Solid gray curves: the two lowest eigenvalues of the Hamiltonian, showing
the avoided crossing that originates from this bare crossing.
  }
  \label{fig:AC}  
\end{figure}

In this sense, the anti-crossing is said to originate from the crossing of
$E_0^L(t)$ and $E_0^R(t)$. The unperturbed levels define the formal structure,
while the bare energies provide concrete analytic proxies that allow us to
identify and study the anti-crossing without explicitly constructing the
effective $2 \times 2$ Hamiltonian, which could be challenging.
In the companion paper~\cite{Choi-Limitation}, an effective $2 \times 2$ Hamiltonian
is explicitly constructed to derive a perturbative bound on the anti-crossing gap.


\subsubsection{Types of Anti-Crossing}
We classify anti-crossings according to whether the subspaces defined by $L$
and $R$, when embedded in the full Hilbert space, overlap or are disjoint.

\paragraph{Tunneling-Induced Anti-Crossing.}
If the embedded subspaces of $L$ and $R$ overlap, then $H_L$ and $H_R$
are submatrices of the same block in the Hilbert space decomposition.
In this case, off-diagonal terms within the block induce tunneling between
the two configurations. We refer to this as a
\emph{tunneling-induced anti-crossing}.

\paragraph{Block-Level Anti-Crossing.}
If the embedded subspaces of $L$ and $R$ are disjoint, then $H_L$ and $H_R$
belong to two distinct blocks of the Hamiltonian. In this case, the ground
states of the two blocks may become nearly degenerate, leading to a true
crossing if the blocks are decoupled, or to an avoided crossing if
there is weak but nonzero inter-block coupling.
We refer to this as a \emph{block-level anti-crossing}.

Figure~\ref{fig:AC} provides a generic schematic of an $(L,R)$-anti-crossing,
whether tunneling-induced or block-level. The interpretation depends on the
relation between the subspaces defined by $L$ and $R$. In the block-level
case, for example, the two levels correspond to the ground states of two
distinct blocks---the same-sign block $H_L$ (blue) and the opposite-sign
block $H_R$ (red). A weak inter-block coupling lifts their degeneracy and
produces an avoided crossing. When the resulting gap is small, the system
evolves almost as if the blocks were decoupled, remaining adiabatically in
the same-sign block and following the blue path.

The original double anti-crossing (DAC) observed in~\cite{Choi2021} is, in
fact, a \emph{double block-level anti-crossing}. Thus, instead of a diabatic transition---from ground to
excited to ground state---the evolution effectively remains confined to the
same-sign block, without any true inter-block transition. In this sense, the
block-level anti-crossing is dynamically bypassed.

There is also a more subtle variant, in which one of the two competing levels
is not an eigenstate of a single block, but a superposition of states from
different blocks. In this case, the coupling is not merely perturbative; the
anti-crossing reflects quantum interference between blocks. We refer to
this as an \emph{interference-involved block-level anti-crossing}, an example is
shown in Figure~\ref{fig:L2-fail}.

\subsection{The Basic Matrix \( \mb{B}(w,x) \): Eigenvalues and Eigenstates}
\label{subsec:basic-matrix}

We define the following effective two-level Hamiltonian, which will serve as a basic building block for our analysis throughout the paper:
\begin{equation}
\mb{B}(w, x) :=
\begin{bmatrix}
-w & -\tfrac{1}{2}x \\
-\tfrac{1}{2}x & 0
\end{bmatrix},
\label{eq:B-matrix}
\end{equation}
where \( w = w(t) \) and \( x = x(t) \) are real-valued parameters, typically derived from problem Hamiltonians and driver strengths. This is a special case of a spin-\( \tfrac{1}{2} \) system, with analytic eigenstructure.
The eigenvalues of $\mb{B}(w,x)$ are
\begin{equation}
\beta_k = -\tfrac{1}{2}\!\left(w + (-1)^k \sqrt{w^2 + x^2}\right), \quad k=0,1,
\label{eq:B-evals}
\end{equation}
with normalized eigenvectors
\begin{equation}
\ket{\beta_0} = \tfrac{1}{\sqrt{1+\gamma^2}} \bigl(\gamma\ket{0} + \ket{1}\bigr), 
\quad
\ket{\beta_1} = \tfrac{1}{\sqrt{1+\gamma^2}} \bigl(\ket{0} - \gamma\ket{1}\bigr),
\label{eq:B-evecs}
\end{equation}
where the mixing coefficient is
\(
\gamma = \tfrac{x}{w+\sqrt{w^2+x^2}}.
\)

\begin{remark}
For $w>0$, we use the standard basis
$\ket{0} = \bigl[1,0\bigr]^T$, $\ket{1} = \bigl[0,1\bigr]^T$.
For $w<0$, the roles flip:
$\ket{0}(w<0) = \ket{1}(w>0)$ and
$\ket{1}(w<0) = \ket{0}(w>0)$.
\end{remark}
Figure~\ref{fig:B2wx} visualizes the eigenspectrum and ground state behavior under variation of \( x \) and \( w \).


\begin{figure}[h!]
  \centering
$$
  \begin{array}{cc}
    \includegraphics[width=0.48\textwidth]{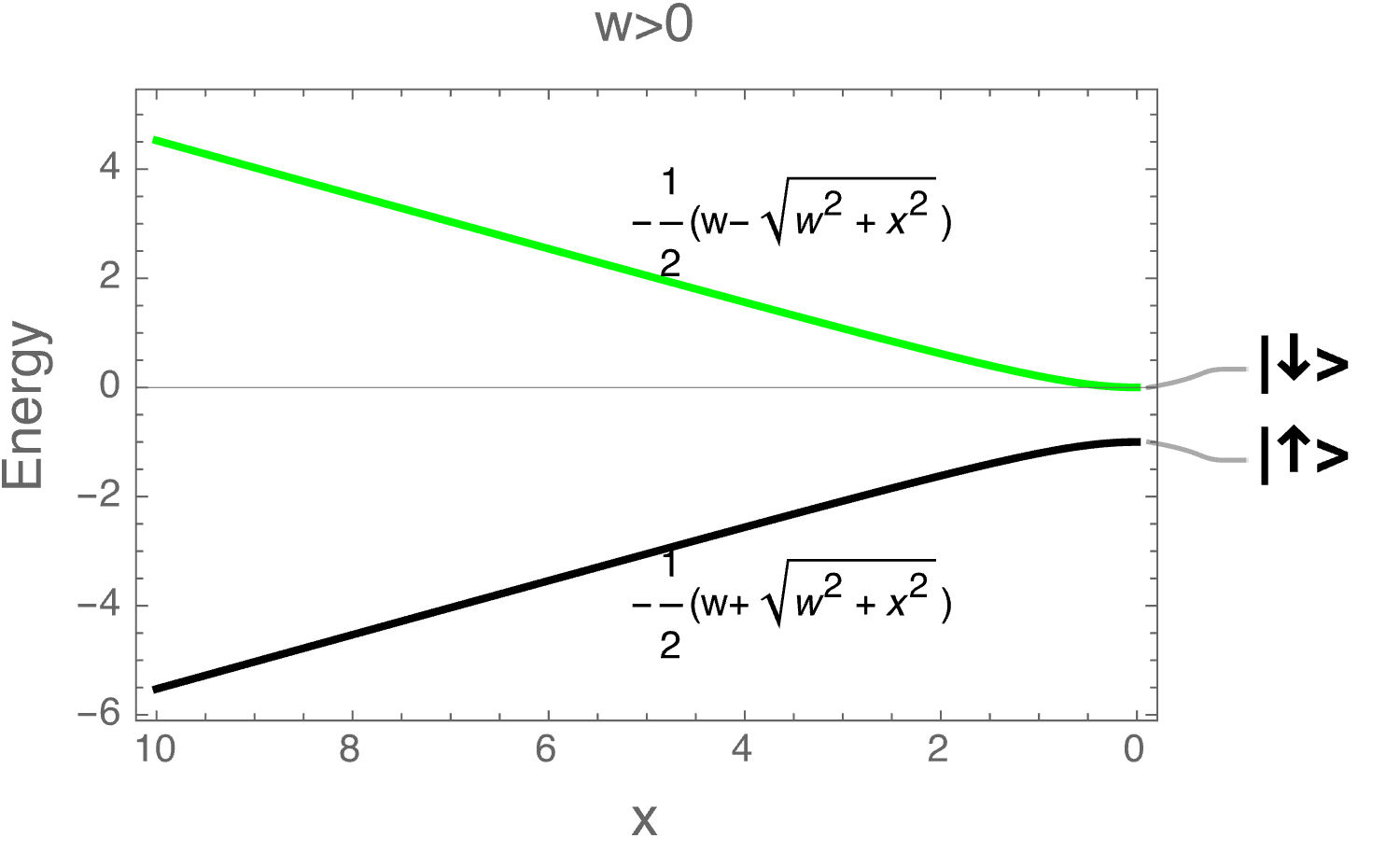} &
    \includegraphics[width=0.48\textwidth]{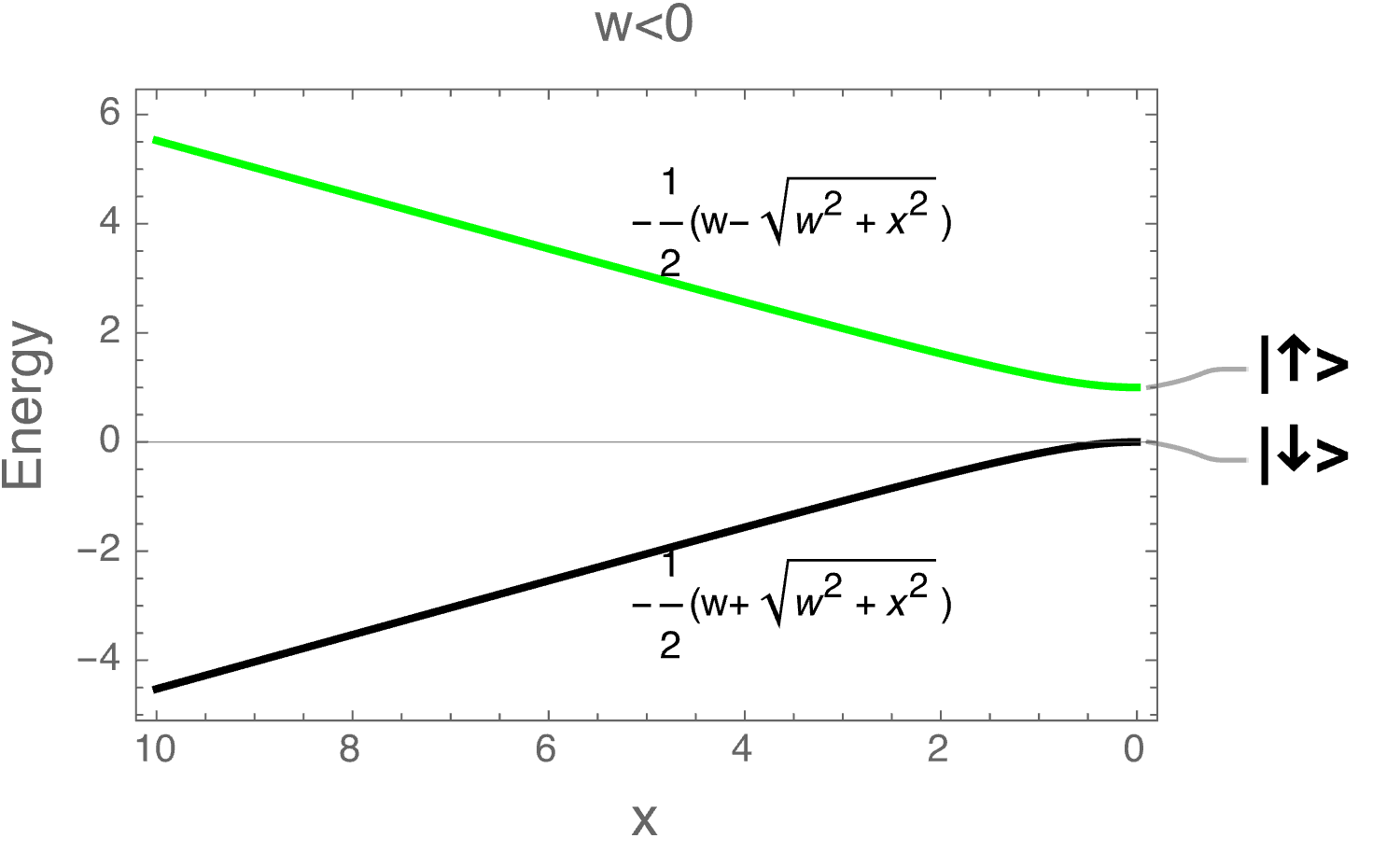} 
  \end{array}
$$
  \caption{Eigenvalues of the two-level Hamiltonian \( \mb{B}(w, x) \), where \( \beta_k = -\tfrac{1}{2} \left( w + (-1)^k \sqrt{w^2 + x^2} \right), \; k = 0, 1 \).
    The ground state energy (\( \beta_0 \)) is shown in black, and the first excited state energy (\( \beta_1 \)) is shown in blue.
    The energy gap widens as \( x \) increases---there is no anti-crossing in this case.
    The ground state is 
    \( \ket{\beta_0} = \tfrac{1}{\sqrt{1 + \gamma^2}} \left( \gamma \ket{0} + \ket{1} \right) \), with \( \gamma = \tfrac{x}{w + \sqrt{w^2 + x^2}} \).
    (a) \( w = 1 \); \quad (b) \( w = -1 \).
    For \( w > 0 \), we have \( \ket{0} = \ket{\uparrow} \) and \( \ket{1} = \ket{\downarrow} \) as in (a).  
Notice that the contents of \( \ket{0} \) and \( \ket{1} \) swap for \( w < 0 \) as in (b).
}
  \label{fig:B2wx}
\end{figure}

\section{A Single Clique: Decomposition and Low-Energy Subspace Analysis}
\label{sec:single-clique}

In this section, we focus on analyzing the Hilbert space associated with a single clique.  
We begin, in Section~\ref{sec:low-high}, by describing a natural decomposition of the Hilbert space into a low-energy subspace, consisting of independent-set states,  
and a high-energy subspace, consisting of dependent-set states.  
We describe the angular momentum structure and spectral properties of the low-energy subspace in Section~\ref{sec:ang-struct-clique}.

To fix notation, let \( G_c = (V, E) \) be a clique,  
where \( V = \{1, \dots, n_c\} \) is the set of vertices, and  
\( E = \{(i,j) : i < j, \; i, j \in V\} \) is the set of edges.  
We assume all vertices in the clique have the same weight \( w_c \),  
i.e., \( w_i = w_c \) for all \( i \in V \).
For clarity, we will omit the graph argument \( (G_c) \) in this section, with the understanding that all quantities refer to this single clique.

\subsection{Decomposition of the Clique Space: Low and High Energy Subspaces}
\label{sec:low-high}
The Hilbert space \( \mathcal{V}_c \) for a clique of size \( n_c \) consists of \( 2^{n_c} \) computational basis states, each corresponding to a binary string of length \( n_c \). Among these, only \( n_c + 1 \) basis states correspond to independent sets:
\[
\begin{array}{l}
\bst{0} = \texttt{00\ldots0} \\
\bst{1} = \texttt{10\ldots0} \\
\bst{2} = \texttt{01\ldots0} \\
\quad\vdots \\
\bst{n_c} = \texttt{00\ldots1}
\end{array}
\]
with 
\(
\mathbb{N}_{\text{ind}} = \left\{
\bst{0},
\bst{1},
\ldots,
\bst{n_c}
\right\},
\)
where each \( \bst{i} \) is a binary string of length \( n_c \) with a single 1 in position \( i \) (and 0s elsewhere), and \( \bst{0} \) is the all-zero string.

The energy associated with each singleton state is \( -w_c \), while the empty set has energy 0.  
In contrast, the energy of all other bit strings---which correspond to dependent sets---is at least \( (\Jzz^{\text{clique}} - 2w_c) \).  
By choosing \( \Jzz^{\text{clique}} \) sufficiently large, these dependent-set states become energetically inaccessible.
Hence, the Hilbert space admits a natural decomposition:
\begin{equation}
  \label{eq:clique-decomposition}
\mathcal{V}_c = \SLE \oplus \mathcal{H}^{\text{dep}},
\end{equation}
where \( \SLE \) is the low-energy subspace spanned by \( \mathbb{N}_{\text{ind}} \),  
and \( \mathcal{H}^{\text{dep}} \) is the high-energy subspace spanned by the dependent-set states.

The remainder of this section focuses on the analysis of the low-energy subspace \( \SLE \).

\subsection{Angular-Momentum Decomposition of the Low-Energy Subspace}
\label{sec:ang-struct-clique}

The low-energy subspace $\SLE$ of a clique can be naturally expressed in the
\emph{total angular momentum basis}.  
By changing to this basis, $\SLE$ decomposes into a single effective spin-$\tfrac{1}{2}$
(two-dimensional same-sign sector) together with $(n_c-1)$ spin-0
(one-dimensional opposite-sign) components.  
In this representation, the relevant spin operators take a block-diagonal form,
which induces a block-diagonal structure in the restricted Hamiltonian.  
This block-diagonal restricted Hamiltonian yields exact spectral formulas,
explains the $\Jxx$ see-saw effect between the same-sign and opposite-sign blocks,
and extends naturally to the case of coupled subcliques, a key step toward the global
block decomposition of shared-structure graphs.

\subsubsection{Low-Energy Subspace in Total Angular Momentum Basis $\Bc$}
\label{sec:total-angular-basis}
\begin{mdframed}
\begin{lemma}
  \label{single-clique-basis}
  The total angular momentum basis for \( \SLE \) consists of the states:
  \begin{align}
  \label{eq:Bc}
  (\Bc) \left\{
  \begin{array}{llll}
    \ket{s,-(s-1)}, &\ket{1,(s-1),-(s-1)},&\ldots&\ket{n_c-1,(s-1),-(s-1)}\\
    \ket{s, -s} & & &
  \end{array}
  \right.
  \end{align}
  where \( s = \tfrac{n_c}{2} \) is the total spin.

  Explicitly:
  \begin{itemize}
    \item \( \ket{s, -s} = \ket{\bst{0}} \), representing the empty set.
    \item \( \ket{s,-(s-1)} = \tfrac{1}{\sqrt{n_c}} \sum_{i=1}^{n_c} \ket{\bst{i}} \), a uniform superposition of all singletons with positive amplitudes.
    \item \( \ket{k, s-1, -(s-1)} \), for \( k = 1, \ldots, n_c - 1 \), consists of a superposition of singleton states with both positive and negative amplitudes.
  \end{itemize}
  Thus, \( \ket{s, -s} \) and \( \ket{s,-(s-1)} \) are same-sign states, while \( \ket{k, s-1, -(s-1)} \) are opposite-sign states.
\end{lemma}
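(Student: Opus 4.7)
The plan is to identify $\SLE$ with the two lowest $S^z$-eigenspaces of $n_c$ spin-$\tfrac{1}{2}$ particles and then apply the standard Clebsch--Gordan decomposition of $(\tfrac{1}{2})^{\otimes n_c}$. Under the convention $\shz{i}=(I+\sigma^z_i)/2$, the bit value $1$ corresponds to $\sigma^z=+1$, so the empty-set state $\bst{0}$ is the all-spin-down configuration with $S^z=-s$, and each singleton $\bst{i}$ has a single spin-up and thus $S^z=-(s-1)$. Hence $\SLE$ equals the direct sum of the $S^z=-s$ eigenspace (dimension $1$) and the $S^z=-(s-1)$ eigenspace (dimension $n_c$), totaling $n_c+1$, consistent with the dimension of $\SLE$ recorded in Section~\ref{sec:low-high}.

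Next, I would invoke the multiplicities in the decomposition $(\tfrac{1}{2})^{\otimes n_c}=\bigoplus_j d_{n_c,j}\,[j]$, where $d_{n_c,j}=\binom{n_c}{s-j}-\binom{n_c}{s-j-1}$. Among all total-spin sectors, only the spin-$s$ and spin-$(s-1)$ multiplets can contribute states at these two lowest $S^z$ levels: the single spin-$s$ copy ($d_{n_c,s}=1$) contributes one state at each of $S^z=-s$ and $S^z=-(s-1)$, while the $(n_c-1)$ spin-$(s-1)$ copies ($d_{n_c,s-1}=n_c-1$) each contribute one additional state at $S^z=-(s-1)$. This accounts for all $n_c+1$ independent-set states and justifies the labeling $\ket{s,-s}$, $\ket{s,-(s-1)}$, and $\ket{k,s-1,-(s-1)}$ for $k=1,\ldots,n_c-1$.

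Then I would make the identifications explicit. Since the $S^z=-s$ eigenspace is one-dimensional, $\ket{s,-s}=\ket{\bst{0}}$. Applying the total raising operator $S^+=\sum_i \sigma^+_i$ together with the standard formula $S^+\ket{s,-s}=\sqrt{n_c}\,\ket{s,-(s-1)}$, and using $S^+\ket{\bst{0}}=\sum_i \ket{\bst{i}}$, gives $\ket{s,-(s-1)}=\tfrac{1}{\sqrt{n_c}}\sum_{i=1}^{n_c}\ket{\bst{i}}$, a uniform positive-amplitude superposition and hence a same-sign state. The remaining $(n_c-1)$ states $\ket{k,s-1,-(s-1)}$ span the orthogonal complement of $\ket{s,-(s-1)}$ inside the $S^z=-(s-1)$ eigenspace; any nonzero vector in this complement, expanded over $\{\ket{\bst{i}}\}$, has amplitudes summing to zero and must therefore contain both positive and negative entries, i.e., it is an opposite-sign state.

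The main obstacle is bookkeeping rather than substance: keeping the convention relating $\shz{}$ to $\sigma^z$ consistent throughout, and selecting a concrete orthonormal basis for the $(n_c-1)$-dimensional spin-$(s-1)$ sector (for instance, a Jacobi-style family indexed by $k$, or any orthonormal basis of the hyperplane $\sum_i \psi_i = 0$ inside the span of the singletons). Beyond the multiplicity count and one application of $S^+$, no nontrivial calculation is required, and the sign-structure claims follow immediately from the orthogonality condition against the uniform superposition.
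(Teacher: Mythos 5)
Your proposal is correct and follows essentially the same route as the paper: both identify \( \SLE \) with the lowest-weight levels of the Clebsch--Gordan decomposition of \( (\tfrac{1}{2})^{\otimes n_c} \), pin down \( \ket{s,-s}=\ket{\bst{0}} \) and \( \ket{s,-(s-1)}=\tfrac{1}{\sqrt{n_c}}\sum_i\ket{\bst{i}} \), and characterize the remaining \( n_c-1 \) states as the spin-\((s-1)\) complement. Your execution is in places slightly tighter than the paper's---the explicit multiplicity count showing only the spin-\(s\) and spin-\((s-1)\) sectors reach \( S^z=-(s-1) \), and the observation that orthogonality to the uniform superposition forces real amplitudes to sum to zero and hence to carry both signs (the paper instead appeals to a negative Clebsch--Gordan coefficient)---but these are refinements of the same argument rather than a different proof.
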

\end{mdframed}

\begin{proof}
  The total Hilbert space of \( n_c \) spin-\(\tfrac{1}{2} \) particles decomposes into irreducible representations of total spin \( s = \tfrac{n_c}{2} \),
  followed by smaller spin sectors \( (s-1, s-2, \ldots) \). According to the Clebsch--Gordan decomposition of \( n_c \) spin-\( \tfrac{1}{2} \) particles, we have:
  \[
  \underbrace{\tfrac{1}{2} \otimes \tfrac{1}{2} \otimes \ldots \otimes \tfrac{1}{2}}_{n_c} = \tfrac{n_c}{2} \oplus
  \underbrace{\tfrac{n_c}{2} - 1 \oplus \ldots \oplus \tfrac{n_c}{2} - 1}_{n_c - 1} \oplus \ldots
  \]

This decomposition is a standard result in angular momentum theory and provides a natural organization of basis states according to their total spin quantum numbers. 
  The key observation is that the independent-set states reside in the lowest-weight subspace, corresponding to the smallest \( m_s \) values within these spin multiplets. Specifically, the independent-set states are spanned by the lowest-weight states \( (m_s = -s, -(s - 1)) \) in the spin-\( s \) multiplet and the last \( (m_s = -(s - 1)) \) states in spin-\( (s - 1) \) multiplets. This yields the following basis:
  \[
  (\Bc) \left\{
  \begin{array}{llll}
    \ket{s,-(s-1)}, &\ket{1,(s-1),-(s-1)},&\ldots&\ket{n_c-1,(s-1),-(s-1)}\\
    \ket{s, -s} & & &
  \end{array}
  \right.
  \]

  One can check that:
  \(
  \ket{s,-(s-1)} = \tfrac{1}{\sqrt{n_c}} \sum_{i=1}^{n_c} \ket{\bst{i}}, \quad
  \ket{s, -s} = \ket{\bst{0}}.
  \)

  For \( k = 1, \ldots, n_c - 1 \), let \( \ket{k, s-1, -(s-1)} \) denote the last (\( m_s = -(s - 1) \)) state of spin-\((s - 1)\). Since these states share the same \( m_s \) value, they must be orthogonal to \( \ket{s,-(s-1)} \):
  \(
  \ket{k, s-1, -(s-1)} = \sum_{i=1}^{n_c} a_{k_i} \ket{\bst{i}},
  \)
  where \( a_{k_i} \) are Clebsch--Gordan coefficients.

  For each \( k \geq 1 \), at least one coefficient \( a_{k,j} < 0 \) for some \( j \), indicating that the superposition contains both positive and negative amplitudes. Hence, these states are opposite-sign.
\end{proof}

\paragraph{Remark (Basis Reordering).}  
For convenience, we reorder the basis states in \( \Bc \) as follows:
\[
(\Bc')\quad \ket{s, -(s - 1)},\, \ket{s, -s},\, \ket{1, s - 1, -(s - 1)},\, \dots,\, \ket{n_c - 1, s - 1, -(s - 1)}.
\]
That is, we swap the order of the two same-sign basis states.
This ordering simplifies the representation of operators in the next steps.

\paragraph{Basis Transformation.}  
The transformation between the computational basis \( \{ \ket{\bst{i}} \} \) and the angular momentum basis $(\Bc')$ can be derived either from the Clebsch--Gordan coefficients or directly from the relationships established in the proof. Specifically:
\begin{itemize}
  \item The state \( \ket{s, -(s - 1)} \) is a uniform superposition over all singleton states \( \{ \ket{\bst{i}} \}_{i=1}^{n_c} \).
  \item The remaining states \( \ket{k, s - 1, -(s - 1)} \), for \( k = 1, \dots, n_c - 1 \), form an orthogonal complement to \( \ket{s, -(s - 1)} \) within the subspace spanned by \( \{ \ket{\bst{1}}, \dots, \ket{\bst{n_c}} \} \).
\end{itemize}

We denote the basis transformation matrix from the computational basis to the angular momentum basis by \( \Uc \).

\subsubsection{Spin Operators in the \(\Bc'\) Basis}
\label{sec:spin-operators}
Consider the spin operators on \( \mathcal{V}_c \):
\[
\Sop{\Z} = \tfrac{1}{2}\sum_{i=1}^{n_c} \sigma_i^z, \quad
\Sop{\sZ} = \sum_{i=1}^{n_c} \shz{i}, \quad
\Sop{\X} = \tfrac{1}{2}\sum_{i=1}^{n_c} \sigma_i^x, \quad
\Sop{\XX} = \tfrac{1}{4} \sum_{ij \in \edge(G_{\text{driver}})} \sigma_i^x \sigma_j^x,
\]
where \( G_{\text{driver}} = G_c \).

We project these operators onto the low-energy subspace \( \SLE \) using the projection operator \( \PLE \), and then transform them into the \( \Bc' \) basis via the basis transformation \( \Uc \). For any operator \( \mathrm{X} \), we use a bar to denote the operator:
\[
\bar{\mathrm{X}} = \Uc^{\dagger} \PLE \mathrm{X} \PLE \Uc.
\]

\begin{mdframed}
  \begin{theorem}
  \label{thm:Sop}
  The restricted operators in the \(\Bc'\) basis are block-diagonal and given explicitly by:
  \begin{align}
    \left\{
    \begin{array}{ll}
    & \bar{\Sop{\sZ}} = \shz{} \oplus 1 \oplus \cdots \oplus 1, \\[5pt]  
    & \bar{\Sop{\X}} =  \tfrac{\sqrt{n_c}}{2} \sigma^x \oplus 0 \oplus \cdots \oplus 0, \\[5pt]
      & \bar{\Sop{\XX}} = \left(\tfrac{n_c - 1}{4}\right)\shz{} \oplus \left(-\tfrac{1}{4}\right) \oplus \cdots \oplus \left(-\tfrac{1}{4}\right).
    \end{array}
    \right.
    \label{eq:transformed-operators}
  \end{align}
  where \( \shz{} \) and \( \sigma^x \) act on the effective spin-\( \tfrac{1}{2} \) (two-dimensional same-sign) subspace, while the scalars act on spin-0 (one-dimensional opposite-sign) subspaces.
\end{theorem}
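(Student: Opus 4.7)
The plan is to establish all three identities by direct computation in the computational basis, then lift to $\Bc'$ via Lemma~\ref{single-clique-basis}. For each operator I will act on the restricted basis $\mathbb{N}_{\text{ind}} = \{\ket{\bst{0}},\ket{\bst{1}},\ldots,\ket{\bst{n_c}}\}$, discard any output that leaves $\SLE$ (which is exactly what $\PLE$ does), and then use linearity together with the identities $\ket{s,-s} = \ket{\bst{0}}$, $\ket{s,-(s-1)} = \tfrac{1}{\sqrt{n_c}}\sum_i \ket{\bst{i}}$, and the orthogonality relation $\sum_i a_{k,i} = 0$ for each opposite-sign state $\ket{k, s-1,-(s-1)} = \sum_i a_{k,i}\ket{\bst{i}}$. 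Block-diagonality itself follows essentially for free: each of the three operators is invariant under the $S_{n_c}$ permutation of clique vertex labels (all weights are equal and $G_c$ is complete), so it must respect the split of $\SLE$ into its $S_{n_c}$-trivial part---spanned by $\ket{s,-s}$ and $\ket{s,-(s-1)}$---and its orthogonal complement; the absence of residual couplings between the distinct one-dimensional opposite-sign blocks will then be visible directly from the calculation.

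For $\bar{\Sop{\sZ}}$, the operator counts the number of $1$s in a computational-basis string, so $\Sop{\sZ}\ket{\bst{0}} = 0$ and $\Sop{\sZ}\ket{\bst{i}} = \ket{\bst{i}}$ for each singleton; every opposite-sign state, being a linear combination of singletons, is then an eigenvector of eigenvalue $1$, and on the same-sign block we obtain $\mathrm{diag}(1,0) = \shz{}$. For $\bar{\Sop{\X}}$, the point is that $\sigma^x_i \ket{\bst{k}}$ stays in $\SLE$ only when either $k=0$ (producing the singleton $\ket{\bst{i}}$) or $k \geq 1$ with $i=k$ (returning $\ket{\bst{0}}$); the remaining case $k \geq 1$, $i \neq k$ creates a two-element dependent set killed by $\PLE$. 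Hence $\PLE\Sop{\X}\ket{\bst{0}} = \tfrac{1}{2}\sum_i \ket{\bst{i}} = \tfrac{\sqrt{n_c}}{2}\ket{s,-(s-1)}$ and $\PLE\Sop{\X}\ket{\bst{k}} = \tfrac{1}{2}\ket{\bst{0}}$ for $k \geq 1$; linearity then gives $\PLE\Sop{\X}\ket{s,-(s-1)} = \tfrac{\sqrt{n_c}}{2}\ket{s,-s}$, while for each opposite-sign state the outcome is $\tfrac{1}{2}\bigl(\sum_i a_{k,i}\bigr)\ket{\bst{0}} = 0$, precisely matching $\tfrac{\sqrt{n_c}}{2}\sigma^x \oplus 0 \oplus \cdots \oplus 0$.

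For $\bar{\Sop{\XX}}$ the bookkeeping is slightly more delicate. For each edge $(i,j)$ of $G_c$, $\sigma^x_i\sigma^x_j$ flips both bits: on $\ket{\bst{0}}$ the result is always a two-element dependent set, so $\PLE\Sop{\XX}\ket{\bst{0}} = 0$; on a singleton $\ket{\bst{k}}$ the output survives $\PLE$ iff exactly one of $\{i,j\}$ equals $k$, in which case the new $1$ sits at the other position, so $\PLE\Sop{\XX}\ket{\bst{k}} = \tfrac{1}{4}\sum_{j \neq k}\ket{\bst{j}}$. Substituting into $\ket{s,-(s-1)}$ produces the eigenvalue $\tfrac{n_c-1}{4}$, and substituting into an opposite-sign state gives $\tfrac{1}{4}\sum_j\bigl(\sum_{i \neq j} a_{k,i}\bigr)\ket{\bst{j}} = -\tfrac{1}{4}\sum_j a_{k,j}\ket{\bst{j}}$ after $\sum_i a_{k,i} = 0$ is applied, yielding eigenvalue $-\tfrac{1}{4}$ on every opposite-sign block and completing $\tfrac{n_c-1}{4}\shz{} \oplus (-\tfrac{1}{4}) \oplus \cdots \oplus (-\tfrac{1}{4})$. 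The main obstacle is really only this $\Sop{\XX}$ bookkeeping---correctly distinguishing, for fixed $k$, which of the $\tfrac{n_c(n_c-1)}{2}$ edges produce surviving singletons under $\PLE$ versus three-element dependent sets, and invoking the orthogonality constraint $\sum_i a_{k,i} = 0$ to collapse the double sum into the clean eigenvalue $-\tfrac{1}{4}$; every other step reduces to routine linear algebra and the permutation symmetry of the clique.
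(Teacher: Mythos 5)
Your proposal is correct, but it takes a genuinely different route from the paper. The paper proves Theorem~\ref{thm:Sop} entirely within angular-momentum algebra: the $\bar{\Sop{\sZ}}$ entries are read off from the $m_s$ quantum numbers, $\bar{\Sop{\X}}$ from the ladder-operator matrix element $\braket{s,-(s-1)|\Sop{+}|s,-s}=\sqrt{n_c}$, and $\bar{\Sop{\XX}}$ indirectly, by computing $\Sop{\X}^2$ on the three types of states (using $\Sop{\pm}^2=0$ on the restricted subspace) and converting via the relation between $\Sop{\XX}$ and $\Sop{\X}^2$. You instead work directly in the computational basis: you track which single and double spin flips survive $\PLE$, and you need only the explicit forms $\ket{s,-s}=\ket{\bst{0}}$, $\ket{s,-(s-1)}=\tfrac{1}{\sqrt{n_c}}\sum_i\ket{\bst{i}}$, and the zero-sum condition $\sum_i a_{k,i}=0$ coming from orthogonality to the uniform superposition. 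Your edge bookkeeping for $\Sop{\XX}$ is right ($\PLE\Sop{\XX}\ket{\bst{0}}=0$, $\PLE\Sop{\XX}\ket{\bst{k}}=\tfrac14\sum_{j\neq k}\ket{\bst{j}}$), and it reproduces exactly the eigenvalues $\tfrac{n_c-1}{4}$, $0$, and $-\tfrac14$ in the theorem; your $S_{n_c}$-symmetry remark is a clean way to see block-diagonality (Schur's lemma even forces the opposite-sign sector to be scalar). What each approach buys: yours is elementary and self-contained, makes the mechanism behind the vanishing opposite-sign entries transparent, and incidentally sidesteps the identity $\Sop{\XX}=\tfrac12(4\Sop{\X}^2-n_c)$ invoked in the paper's proof, which as printed carries a normalization slip (with $\Sop{\XX}=\tfrac14\sum_{i<j}\sigma^x_i\sigma^x_j$ the correct relation is $\Sop{\XX}=\tfrac18(4\Sop{\X}^2-n_c)$, though the theorem's stated eigenvalues are the correct ones and agree with your computation); the paper's route, on the other hand, stays inside the angular-momentum formalism that is reused later (e.g., the Dicke/symmetric-subspace reductions), so it generalizes with less per-operator combinatorics.
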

\end{mdframed}

\begin{proof}
Recall the reordered basis \( \Bc' \) is:
\(\{
|s,-(s-1)\rangle,\quad |s,-s\rangle,\quad |1,s-1,-(s-1)\rangle,\dots,\ket{n_c - 1, s - 1, -(s - 1)}\}.
\)

In standard angular momentum theory, the eigenvalues of \( \Sop{\Z} \) for these states are:
\begin{itemize}
  \item \( \ket{s,-(s-1)} \): eigenvalue \( m_s = -(s-1) \)
  \item \( \ket{s,-s} \): eigenvalue \( -s \)
  \item \( \ket{k, s-1, -(s-1)} \): eigenvalue \( -(s-1) \) for all \( k \)
\end{itemize}

Thus:
\[
\bar{\Sop{\Z}} =
\begin{bmatrix}
-(s-1) & 0 & 0 & \cdots & 0 \\
0 & -s & 0 & \cdots & 0 \\
0 & 0 & -(s-1) & \cdots & 0 \\
\vdots & & \vdots & \ddots & \vdots \\
0 & \cdots & 0 & \cdots & -(s-1)
\end{bmatrix}
\]

and the shifted operator \( \bar{\Sop{\sZ}} \) becomes:
\begin{align*}
\bar{\Sop{\sZ}} =
\begin{bmatrix}
1 & 0 & 0 & \cdots & 0 \\
0 & 0 & 0 & \cdots & 0 \\
0 & 0 & 1 & \cdots & 0 \\
\vdots & & \vdots & \ddots & \vdots \\
0 & \cdots & 0 & \cdots & 1
\end{bmatrix}
= \shz{} \oplus 1 \oplus \cdots \oplus 1.
\end{align*}

To obtain \( \bar{\Sop{\X}} \), recall:
\(
\Sop{\X} = \tfrac{1}{2}(\Sop{+} + \Sop{-}),
\)
where \( \Sop{+} \) and \( \Sop{-} \) are the (raising and lowering) ladder operators.  
From angular momentum algebra:
\[
\braket{s,-(s-1)|\Sop{+}|s,-s} = \braket{s,-s|\Sop{-}|s,-(s-1)} = \sqrt{n_c},
\]
thus:
\(
\bar{\Sop{\X}} = \tfrac{\sqrt{n_c}}{2} \sigma^x \oplus 0 \oplus \cdots \oplus 0.
\)

Since \( \Sop{+}^2 = \Sop{-}^2 = 0 \) on \( \SLE \), it follows:
\(
\Sop{\X}^2 = \tfrac{1}{4}(\Sop{+}\Sop{-} + \Sop{-}\Sop{+}),
\)
and:
\begin{align*}
  \left\{
    \begin{array}{ll}
  & \Sop{\X}^2 \ket{s,-s} = \tfrac{n_c}{4} \ket{s,-s}, \\
  & \Sop{\X}^2 \ket{s,-(s-1)} = \tfrac{3n_c - 2}{4} \ket{s,-(s-1)}, \\
      & \Sop{\X}^2 \ket{s-1,-(s-1)} = \tfrac{n_c - 2}{4} \ket{s-1,-(s-1)}.
      \end{array}
    \right.
\end{align*}

Using:
\(
\Sop{\XX} = \tfrac{1}{2}(4 \Sop{\X}^2 - n_c),
\)
we conclude:
\(
\bar{\Sop{\XX}} =
\left(\tfrac{n_c - 1}{4}\right) \shz{} \oplus \left(-\tfrac{1}{4}\right) \oplus \cdots \oplus \left(-\tfrac{1}{4}\right).
\)
\end{proof}

\subsubsection{Decomposition into Spin-\(\tfrac{1}{2}\) and Spin-$0$ Components}
\label{sec:spin-structure}
The transformed operators given by Theorem~\ref{thm:Sop} offer a clear physical interpretation:
the low-energy subspace \( \SLE \) decomposes into a direct sum consisting of a single effective spin-\(\tfrac{1}{2}\) subsystem,
together with \( n_c - 1 \) (opposite-sign) spin-$0$ subsystems:
\begin{align}
  \SLE = \left[\tfrac{1}{2}\right]_{n_c} \oplus \underbrace{0 \oplus \cdots \oplus 0}_{n_c - 1}.
  \label{eq:single-clique}
\end{align}

The effective spin-\(\tfrac{1}{2}\) subsystem \( \left[\tfrac{1}{2}\right]_{n_c} \) is spanned by the two same-sign basis states:
\(
\left|\tfrac{1}{2}, -\tfrac{1}{2}\right\rangle = |s, -(s-1)\rangle, \quad 
\left|\tfrac{1}{2}, \tfrac{1}{2}\right\rangle = |s, -s\rangle.
\)
The spin-$0$ components correspond to the opposite-sign basis vectors:
\(
|k, s - 1, -(s - 1)\rangle,\) for \(k = 1, \dots, n_c - 1.
\)

Correspondingly, the Hamiltonian decomposes into a same-sign two-dimensional effective spin-\(\tfrac{1}{2}\) block and \( n_c - 1 \) opposite-sign spin-0 blocks.  
The system Hamiltonian \( \mb{H}_1 \), which governs the evolution during Stages~1 and~2, when restricted to the low-energy subspace \( \SLE \) of the clique \( G_c \), takes the form
\[
\bar{\mb{H}_1} 
= -\mt{x}\, \bar{\Sop{\X}}
+ \mt{jxx}\, \bar{\Sop{\XX}} 
- w_c\, \bar{\Sop{\sZ}}.
\]

Substituting the operator expressions from Eq.~\ref{eq:transformed-operators} in Theorem~\ref{thm:Sop}, we obtain
\begin{align}
\bar{\mb{H}_1} &= 
\left(
- \tfrac{\sqrt{n_c}}{2}\, \mt{x}\, \sigma^x
+ \left( - w_c + \tfrac{n_c - 1}{4} \mt{jxx} \right)\shz{}
\right)
\oplus
\underbrace{
\left[
- \left( w_c + \tfrac{1}{4} \mt{jxx} \right)
\right]
\oplus \cdots \oplus
\left[
- \left( w_c + \tfrac{1}{4} \mt{jxx} \right)
\right]
}_{n_c - 1}\\
&= \mb{B}(\mt{\weff_c}, \sqrt{n_c}\mt{x})
\oplus 
\underbrace{[-(w_c + \tfrac{1}{4} \mt{jxx})] 
\oplus \cdots 
\oplus [- (w_c + \tfrac{1}{4} \mt{jxx})]}_{n_c - 1}
\label{eq:Hsc}
\end{align}
where the effective weight is defined as
\(
\mt{\weff_c} = w_c - \tfrac{n_c - 1}{4} \mt{jxx}.
\)

An illustration of this basis transformation is provided in Figure~\ref{fig:M4},
which shows how the original product basis
is transformed into a direct-sum decomposition via total angular momentum.

\begin{figure}[h!]
  \centering
  \includegraphics[width=0.8\textwidth]{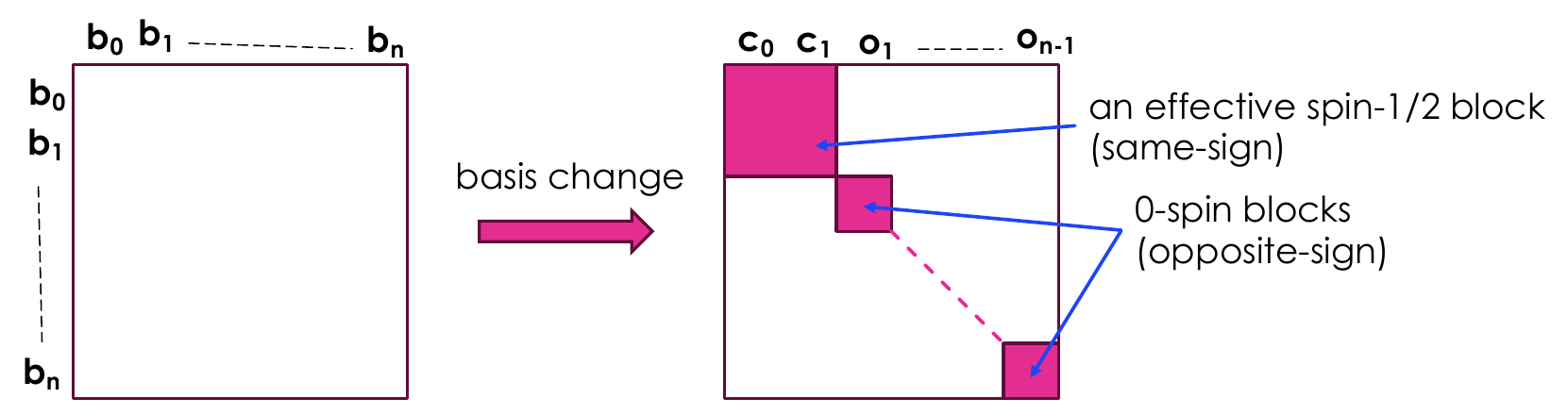}
  \caption{
    Basis transformation of \( \SLE \) from the product space  \( \{ \bst{0}, \bst{1}, \dots, \bst{n} \} \) to the direct-sum space (angular momentum basis).
    The same-sign states \( \{{\bf c_0, c_1} \} \) are the basis of the effective spin-\(\tfrac{1}{2}\) subspace, while the opposite-sign states \( \{ {\bf o_1, \dots, o_{n - 1} }\} \)
    are the bases of the spin-$0$ subspaces. The Hamiltonian matrix decomposes accordingly into a same-sign block and \( n - 1 \) opposite-sign blocks.
  }
  \label{fig:M4}
\end{figure}

\subsubsection{Full Spectrum and the $\Jxx$ See-Saw Effect}
\label{sec:see-saw-spectrum}
Since the eigensystem of $\mb{B}(w,x)$ is known analytically
(Eqs.~\eqref{eq:B-evals}, \eqref{eq:B-evecs}),
the full spectrum and eigenstates of the Hamiltonian $\bar{\mb{H}_1}$
for the single clique $G_c$ are also known analytically.
In particular, the eigenvalues of the same-sign block \( \mb{B}(\mt{\weff_c}, \sqrt{n_c} \mt{x}) \) are:
\begin{equation}
\beta_k = -\tfrac{1}{2}\!\left(\mt{\weff_c} + (-1)^k \sqrt{[\mt{\weff_c}]^2 + n_c[\mt{x}]^2}\right), \quad k=0,1.
\label{eq:beta-value}
\end{equation}
Note that $\beta_0$ and $\beta_1$ are time-dependent through $\mt{x}(t)$.

\paragraph{Analytic Spectrum and the See-Saw Effect.}
The spectrum of the reduced Hamiltonian \( \bar{\mb{H}_1} \) consists of three distinct eigenvalues:
\(
\left\{ \beta_0,\, \beta_1,\, \theta \right\},
\)
where \( \{\beta_0, \beta_1\} \) arise from the effective spin-\(\tfrac{1}{2}\) (same-sign) block and
\(
\theta = -\left( w_c + \tfrac{1}{4} \mt{jxx} \right)
\)
is the eigenvalue shared by the \( (n_c - 1) \) degenerate spin-0 (opposite-sign) blocks; see Figure~\ref{fig:see-saw}(a).

We now examine how this spectrum changes with the strength \( \Jxx \).  
Let \( \Jxx = \alpha \Gamma_2 \) and \( \mt{jxx} = \alpha \mt{x} \), with \( \alpha > 0 \). Then the effective weight
\(
\mt{\weff_c}(\mt{x}) = w_c - \tfrac{n_c - 1}{4} \mt{jxx}
\)
crosses zero at the critical point
\(
x_0 = \tfrac{4 w_c}{\alpha (n_c - 1)}.
\)
We thus have
\[
\begin{cases}
\mt{\weff_c}(\mt{x}) > 0 & \text{for } \mt{x}<x_0, \\
\mt{\weff_c}(\mt{x}) = 0 & \text{at } \mt{x} = x_0, \\
\mt{\weff_c}(\mt{x}) < 0 & \text{for } \mt{x} > x_0,
\end{cases}
\]
At this point, the sign of $\mt{\weff_c}$  changes, and the roles of \(\ket{0}\) and \(\ket{1}\) in the ground state eigenvector exchange.
For \( \mt{x} > x_0 \), the ground state energy of the same-sign block approximates
\(
\beta_0(\mt{x}) \approx -\tfrac{1}{2\alpha}\, \mt{x},
\)
so its slope with respect to \( \mt{x} \) is approximately \( -\tfrac{1}{2\alpha} \), independent of \( n_c \).
In contrast, when \( \Jxx = 0 \) (i.e., \( \alpha = 0 \)), the same energy behaves as
\(
\beta_0(\mt{x}) \approx -\tfrac{1}{2} \left(w_c + \sqrt{n_c} \mt{x} \right).
\)
As \( \Jxx \) increases, the magnitude of this slope decreases---from approximately \( \sqrt{n_c} \) down to \( \tfrac{1}{\alpha} \)---flattening the spectral curve; see Figure~\ref{fig:see-saw}(c).

\paragraph{Crossover point $x_c$.} The crossing point \( x_c \) at which \( \beta_0(x_c) = \theta(x_c) \) satisfies
\begin{equation}
  x_c = \tfrac{4 \alpha}{4 - \alpha^2}.
  \label{eq:crossing-point}
  \end{equation}

To ensure that \( x_c \le \Gamma_2 \), we must restrict \( \alpha \) to satisfy
\begin{equation}
 \alpha < \alpha_{\max}(\Gamma_2) := \tfrac{-2 + 2\sqrt{1 + \Gamma_2^2}}{\Gamma_2},
\label{eq:alpha-max}
\end{equation}

At \( \mt{x} = x_c \), the ground state switches from the same-sign subspace to the opposite-sign subspace.  
This transition is a key feature of the algorithm and motivates the design of the two-stage schedule.  
The behavior of this crossing is shown in Figure~\ref{fig:see-saw}(b), and its dependence on \( \alpha \) across the full schedule is illustrated in Figure~\ref{fig:2-stage}.
This interplay---the rising of the same-sign energies and simultaneous lowering of the opposite-sign energy as \( \Jxx \) increases---manifests as a \emph{see-saw effect}, shown in Figure~\ref{fig:see-saw}(d).

\begin{figure}[!htbp]
  \centering
  $$
  \begin{array}{cc}
    \includegraphics[width=0.5\textwidth]{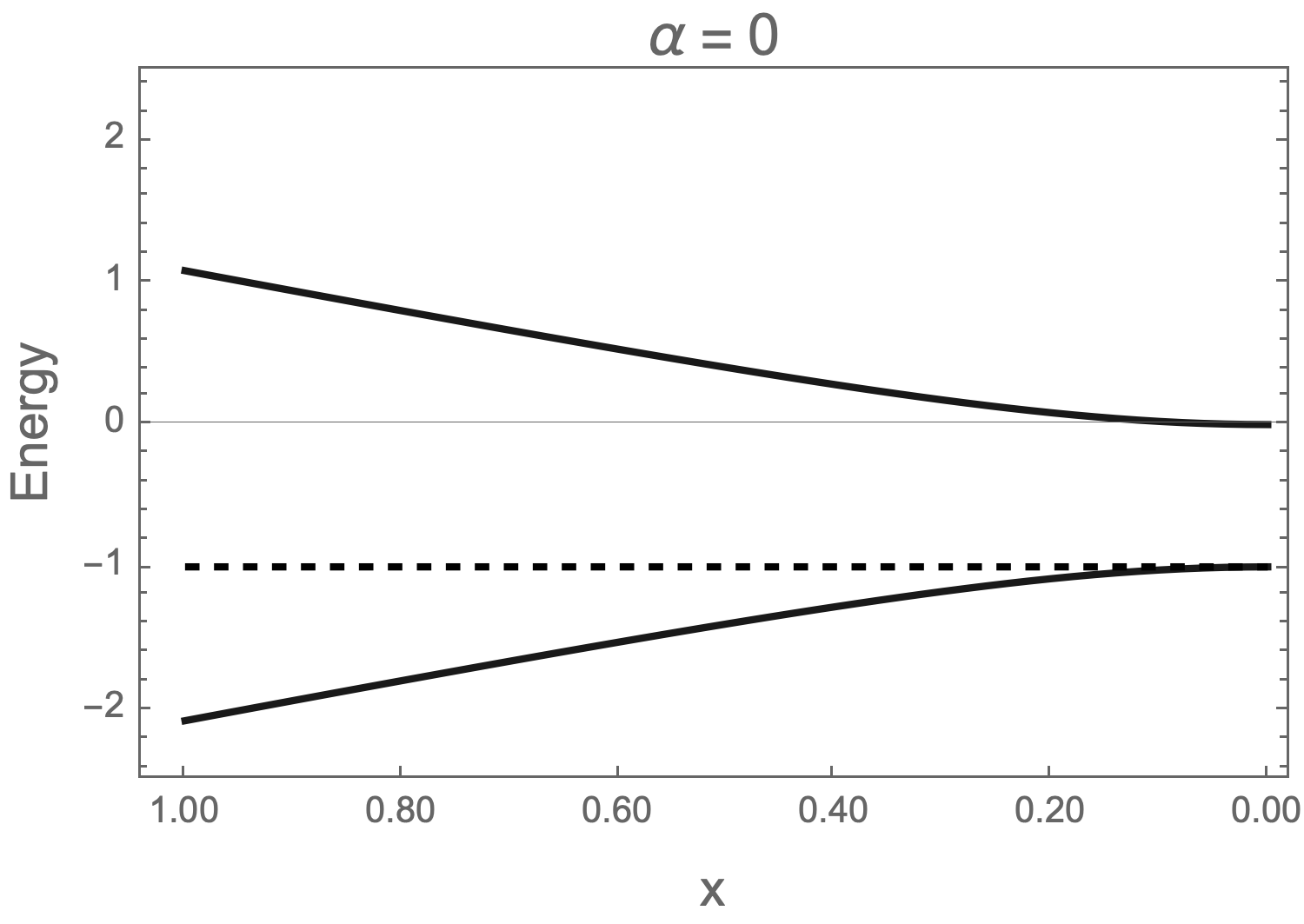} &
    \includegraphics[width=0.5\textwidth]{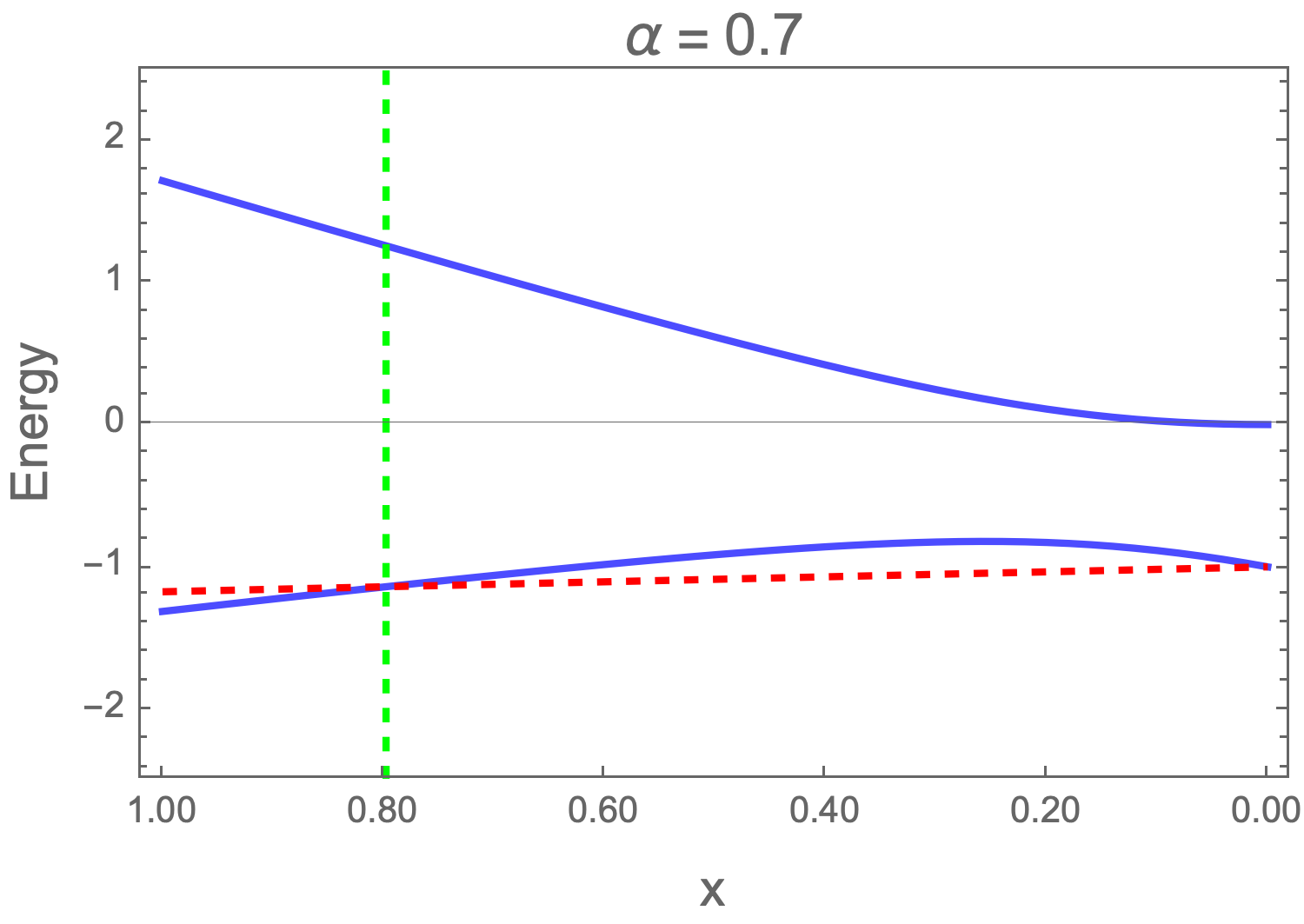} \\
    (a) & (b)\\
    \includegraphics[width=0.5\textwidth]{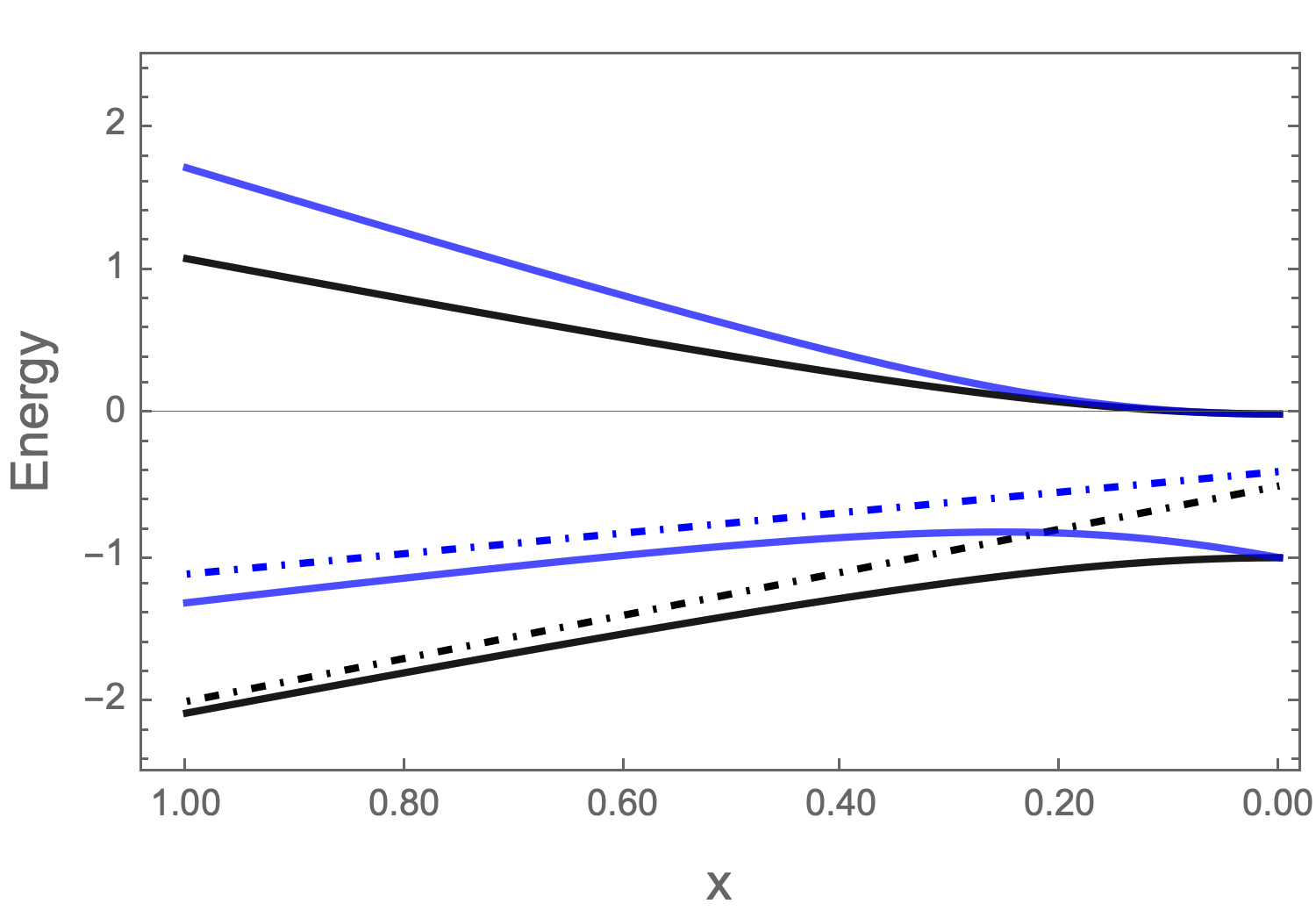} &
    \includegraphics[width=0.5\textwidth]{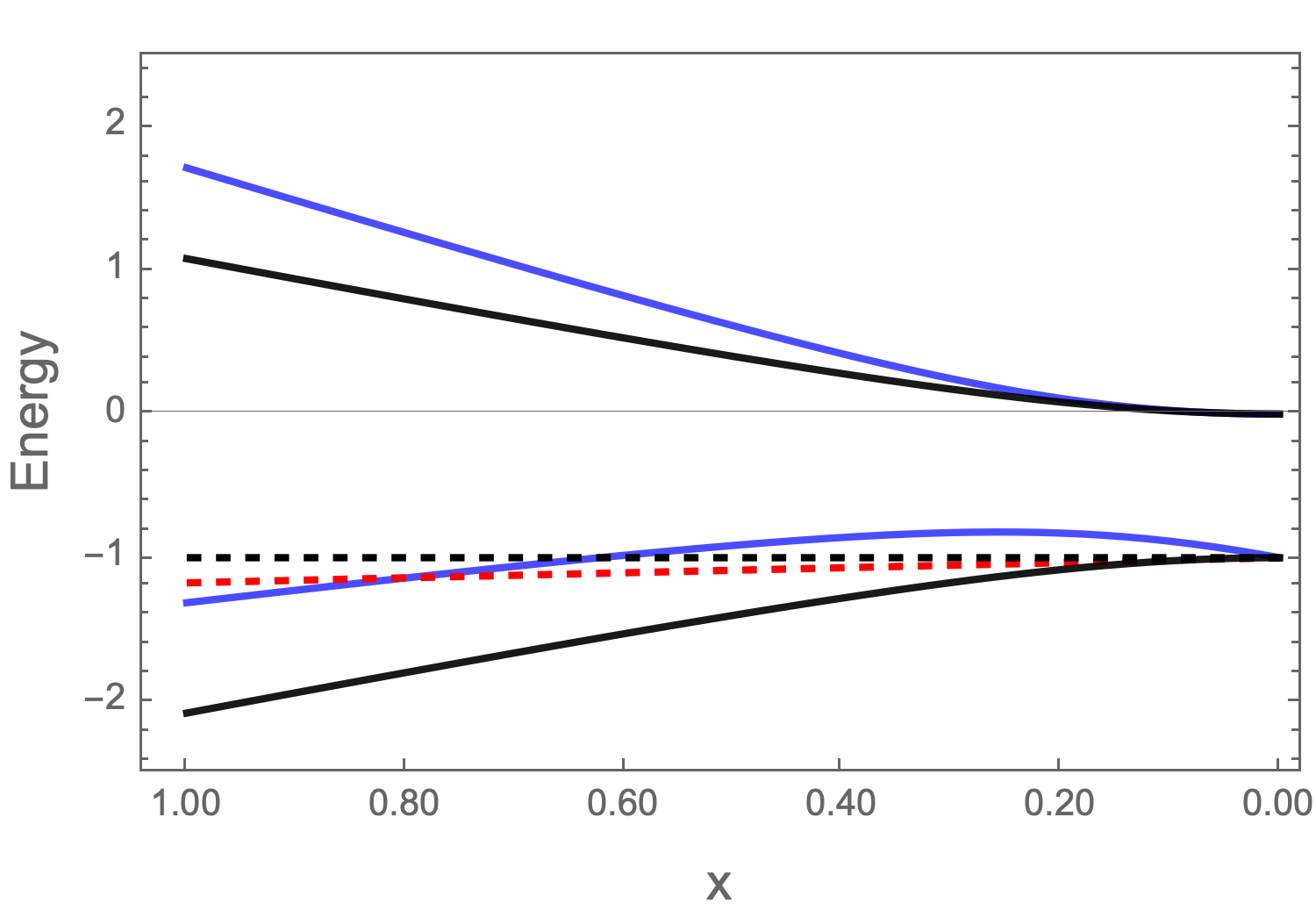} \\
    (c) & (d)    
  \end{array}
  $$
  \caption{Effect of $\Jxx$ on the eigenvalues of a single clique, where $\Jxx = \alpha \Gamma_2$.
We compare the case $\alpha = 0$ with $\alpha = 0.7$.\\
Each clique has three eigenvalues: $\{\beta_0(\mt{x}), \beta_1(\mt{x})\}$ from the effective spin-$\tfrac{1}{2}$ subspace, and $\theta(\mt{x})$ from the spin-0 state. The x-axis is the transverse field $\mt{x}$, which decreases from $\Gamma_2 = 1$ to $0$. Note: $\mt{jxx} = \alpha \mt{x}$. [$w_c = 1$, $n_c = 9$]\\
(a) \( \alpha = 0 \): \( \beta_0(\mt{x}) \), \( \beta_1(\mt{x}) \) in solid black; \( \theta(\mt{x}) \) in dashed black.\\
(b) \( \alpha = 0.7 \): \( \beta_0(\mt{x}) \), \( \beta_1(\mt{x}) \) in solid blue; \( \theta(\mt{x}) \) in dashed red.
The green dashed line marks the crossing point $x_c = \tfrac{4 \alpha}{4 - \alpha^2}$ where $\beta_0(x_c) = \theta(x_c)$.\\
(c) \textbf{Slope magnitude reduced:} as $\Jxx$ increases from $0$ to $0.7$, the same-sign energies $\beta_0$ and $\beta_1$ rise (black to blue).
The magnitude of the negative slope decreases; that is, the curve becomes less steep, flattening from approximately $-\sqrt{n_c}$ (black dotdashed) to $-\tfrac{1}{\alpha}$ (blue dotdashed).\\
(d) \textbf{See-saw effect:} increasing $\Jxx$ simultaneously raises the same-sign energies (black to blue) and lowers the opposite-sign energy $\theta(\mt{x})$ (black dashed to red dashed), in a see-saw-like fashion.
}
\label{fig:see-saw}
\end{figure}

\subsubsection{Transformation via Two Coupled Subcliques}
\label{sec:partial-coupling}
In this section, we reinterpret a single clique as two coupled subcliques and derive the transformation needed to express the partial clique $\ZZ$-coupling.

Consider a clique \( c \) composed of two subcliques \( a \) and \( b \), with \( n_c = n_a + n_b \).  
We apply the single-clique analysis from Eq.~\eqref{eq:single-clique} independently to each subclique, yielding:
\begin{align*}
  \mc{L}^a &= \left[\tfrac{1}{2}\right]_{n_a} \oplus \underbrace{0 \oplus \cdots \oplus 0}_{n_a - 1},\\
  \mc{L}^b &= \left[\tfrac{1}{2}\right]_{n_b} \oplus \underbrace{0 \oplus \cdots \oplus 0}_{n_b - 1}.
\end{align*}

The low-energy subspace \(\mc{L}^c\) of clique \( c \) is the restricted subspace within the tensor product of \(\mc{L}^a\) and \(\mc{L}^b\).  
Projecting onto the low-energy sector of two spin-\(\tfrac{1}{2}\) systems yields an effective spin-\(\tfrac{1}{2}\) for the full clique, along with an additional spin-0 component:
\[
\PLE\left(\left[\tfrac{1}{2}\right]_{n_a} \otimes \left[\tfrac{1}{2}\right]_{n_b}\right)\PLE = \left[\tfrac{1}{2}\right]_{n_c} \oplus 0.
\]

We now describe the corresponding Hamiltonians. For simplicity, we drop the time parameter \( (t) \). 

The Hamiltonians for the effective spin-\(\tfrac{1}{2}\) block are explicitly:
\[
\mathbb{B}_a = 
\begin{bmatrix}
 -\left(\mathtt{w} -\tfrac{n_a - 1}{4}\mathtt{jxx}\right)  & -\tfrac{\sqrt{n_a}}{2}\mathtt{x} \\[6pt]
 -\tfrac{\sqrt{n_a}}{2}\mathtt{x} & 0
\end{bmatrix}, \quad
\mathbb{B}_b = 
\begin{bmatrix}
 -\left(\mathtt{w} -\tfrac{n_b - 1}{4}\mathtt{jxx}\right) & -\tfrac{\sqrt{n_b}}{2}\mathtt{x} \\[6pt]
 -\tfrac{\sqrt{n_b}}{2}\mathtt{x} & 0
\end{bmatrix}.
\]

The combined low-energy Hamiltonian is:
\[
\mathbb{D}_c = \PLE \left(\mathbb{B}_a \otimes \mathbb{I}_b + \mathbb{I}_a \otimes \mathbb{B}_b + H_{\XX}\right)\PLE,
\]
which takes the explicit \(3 \times 3\) form:
\[
\mathbb{D}_c =
\begin{bmatrix}
 -\left(\mathtt{w} -\tfrac{n_a - 1}{4}\mathtt{jxx}\right)  & -\tfrac{\sqrt{n_a}}{2}\mathtt{x}  & \tfrac{\sqrt{n_a n_b}}{4}\mathtt{jxx} \\[6pt]
 -\tfrac{\sqrt{n_a}}{2}\mathtt{x} & 0 & -\tfrac{\sqrt{n_b}}{2}\mathtt{x} \\[6pt]
 \tfrac{\sqrt{n_a n_b}}{4}\mathtt{jxx} & -\tfrac{\sqrt{n_b}}{2}\mathtt{x} & -\left(\mathtt{w} -\tfrac{n_b - 1}{4}\mathtt{jxx}\right)
\end{bmatrix}.
\]

To recover the effective spin-\(\tfrac{1}{2}\) plus zero-spin structure, we apply the basis transformation:
\[
\Ucombine =
\begin{bmatrix}
 \sqrt{\tfrac{n_a}{n_c}} & 0 & -\sqrt{\tfrac{n_b}{n_c}} \\[6pt]
 0 & 1 & 0 \\[6pt]
 \sqrt{\tfrac{n_b}{n_c}} & 0 & \sqrt{\tfrac{n_a}{n_c}}
\end{bmatrix},
\]
where the columns correspond to the transformed basis states:
\[
\ket{1_c} = \sqrt{\tfrac{n_a}{n_c}}\ket{1_a0_b} + \sqrt{\tfrac{n_b}{n_c}}\ket{0_a1_b}, \quad
\ket{0_c} = \ket{0_a0_b}, \quad
\ket{\odot_q} = -\sqrt{\tfrac{n_b}{n_c}}\ket{1_a0_b} + \sqrt{\tfrac{n_a}{n_c}}\ket{0_a1_b}.
\]

In this basis, the Hamiltonian simplifies to:
\[
\Ucombine^\dagger \mathbb{D}_c \Ucombine =
\begin{bmatrix}
 -\left(\mathtt{w} -\tfrac{n_c - 1}{4}\mathtt{jxx}\right) & -\tfrac{\sqrt{n_c}}{2}\mathtt{x} & 0 \\[6pt]
 -\tfrac{\sqrt{n_c}}{2}\mathtt{x} & 0 & 0 \\[6pt]
 0 & 0 & -\left(\mathtt{w} +\tfrac{1}{4}\mathtt{jxx}\right)
\end{bmatrix},
\]
which clearly isolates the effective spin-\(\tfrac{1}{2}\) subsystem and the spin-0 subsystem, as described at the outset of this section.

\begin{remark}
If the vertex weights are not exactly the same, their difference will appear in the off-diagonal, weakly coupling the two blocks.
This is the case in our original example in \cite{Choi2021}, where the \LM{} is almost degenerate. 
The weight difference is so small (considered as almost-degenerate) that one can treat the blocks as if they were disjoint.
\end{remark}

\paragraph{Partial clique $\ZZ$-coupling.}

Suppose an external spin \(r\) $ZZ$-couples only to subclique \(a\) (and not to subclique \(b\)), described initially by the operator:
\(
\left(\sum_{i=1}^{n_a} \shz{i}\right)\shz{r}.
\)
Applying the single-clique result to subclique \(a\), this simplifies to:
\(
 \shz{a}\shz{r}.
\)

Now, under the transformation \(\Ucombine \), we explicitly obtain:
\[
\Ucombine^\dagger \shz{a} \Ucombine =
\begin{bmatrix}
\dfrac{n_a}{n_c} & 0 & -\dfrac{\sqrt{n_a n_b}}{n_c} \\[8pt]
0 & 0 & 0 \\[8pt]
-\dfrac{\sqrt{n_a n_b}}{n_c} & 0 & \dfrac{n_b}{n_c}
\end{bmatrix},
\]
demonstrating clearly the induced coupling between the same-sign two-dimensional subsystem and the zero-spin subsystem.
The coupling strength is explicitly given by:
\(
-\tfrac{\sqrt{n_a n_b}}{n_c}.
\)

In particular, we denote the transformed operator for \(n_a = n_c -1\) and \(n_b =1\) by
\begin{align}
  \mb{T} =
\begin{bmatrix}
\dfrac{n_c -1}{n_c} & 0 & -\dfrac{\sqrt{n_c -1}}{n_c} \\[6pt]
0 & 0 & 0 \\[6pt]
-\dfrac{\sqrt{n_c -1}}{n_c} & 0 & \dfrac{1}{n_c}
\end{bmatrix}
\label{eq:opT}
\end{align}
which will be used in our later analysis.


\section{Analytical Solution for the Bare Subsystem (\MIC{})}
\label{sec:bare-sub}
\label{sec:bare}

We define the \emph{bare subsystem} as the system restricted to a given \MIC{},  
that is, a subgraph consisting of independent cliques \( \{ \text{Clique}(w_i, n_i) \}_{i=1}^k \).
A \MIC{} defines either a critical local minima (\LM{}), or the global minimum (\GM{}) where each clique has size one.
It is a \emph{subsystem} because we consider only the subgraph induced by this \MIC{},  
and it is \emph{bare} because it is fully decoupled from the rest of the graph.

We refer to the bare subsystem induced by the set of cliques in \( L \) (generating the local minima \LM{})  
as the \emph{\( L \)-bare subsystem}, denoted by \( \mb{H}_L^{\ms{bare}} \). 
Similarly, the bare subsystem induced by the subset \( R \subseteq \GM \) is called the \emph{\( R \)-bare subsystem},  
denoted by \( \mb{H}_R^{\ms{bare}} \), and referred to as the \emph{GM-bare subsystem} when \( R = \GM \).
In our analysis in later sections, we will express the total Hamiltonian  
in terms of the \( L \)- and \( R \)-bare subsystems together with their coupling.
Throughout, we distinguish bare subsystem operators from their full-system counterparts 
by using the superscript \(^{\ms{bare}}\); the same symbol without the superscript 
refers to the corresponding operator in the full system.

The bare subsystem admits a fully analytical treatment due to its tensor-product structure across independent cliques.  Our analysis proceeds in three steps, and the resulting closed-form solutions will later be used to derive analytical bounds on \( \Jxx \).
\begin{itemize}
  \item First, we decompose the low-energy subspace of the bare subsystem into three types of sectors:  
  the \emph{same-sign sector} \( \mc{C}^{\ms{bare}} \), the \emph{all-spin-zero (AS0) opposite-sign sector} \( \mc{Q}^{\ms{bare}} \), and the \emph{intermediate opposite-sign sectors}.  
  We define corresponding Hamiltonians for each: \( \mb{H}_{\mc{C}^{\ms{bare}}} \), \( \mb{H}_{\mc{Q}^{\ms{bare}}} \), and \( \mb{H}_{\mc{W}^{\ms{bare}}} \), where \( \mc{W}^{\ms{bare}} \) denotes a typical intermediate sector.

  \item Second, we derive a closed-form solution for the same-sign block \( \mb{H}_{\mc{C}^{\ms{bare}}} \),  
  including explicit expressions for its eigenvalues and eigenstates.  
  In the case of uniform clique sizes, we perform a symmetric-subspace reduction based on angular momentum structure,  
  reducing the Hilbert space dimension from \( 2^m \) to \( m + 1 \).

  \item Third, the ground-state energies of the blocks are totally ordered.  
  Depending on the relative energy of the spin-\( \tfrac{1}{2} \) and spin-$0$ components, the ordering takes one of two forms:  
  either the same-sign sector has the lowest energy and the AS0 opposite-sign sector the highest,  
  or the order is reversed.
\end{itemize}

\subsection{Low-Energy Structure of the Bare Subsystem: Same-Sign and Opposite-Sign Sectors}

Let \( G_{\ms{mic}} \) consist of \( m \) independent cliques \( \ms{Clique}(w_i, n_i) \),  
where \( w_i \equiv 1 \) denotes the vertex weight and \( n_i \) the size of the \( i \)th clique.  
This graph admits \( \prod_{i=1}^{m} n_i \) degenerate maximal independent sets of size \( m \).

From the single-clique result (Eq.~\eqref{eq:single-clique}), the low-energy subspace of each clique decomposes as:
\[
\mc{L}_i = \left[\tfrac{1}{2}\right]_{n_i} \oplus \underbrace{0 \oplus \cdots \oplus 0}_{n_i - 1}.
\]

Hence, the total low-energy subspace \( \mc{L} \) of the bare subsystem is given by:
\[
\mc{L} = \bigotimes_{i=1}^{m} \mc{L}_i 
= \bigotimes_{i=1}^{m} \left( \left[\tfrac{1}{2}\right]_{n_i} \oplus \underbrace{0 \oplus \cdots \oplus 0}_{n_i - 1} \right)
= \bigoplus \left( \bigotimes_{i=1}^{m} \left[\tfrac{1}{2} \text{ or } 0 \right] \right),
\]
where the direct sum ranges over all tensor products selecting one spin-\( \tfrac{1}{2} \) and \( n_i - 1 \) spin-zeros from each clique.

This decomposition yields \( \prod_{i=1}^{m} n_i \) block subspaces. Among them, the block
\[
\mc{C}^{\ms{bare}} := \bigotimes_{i=1}^{m} \left[\tfrac{1}{2}\right]_{n_i}
\]
is composed entirely of spin-\( \tfrac{1}{2} \) factors and is referred to as the \emph{same-sign sector}.

At the opposite extreme, there are \( \prod_{i=1}^{m} (n_i - 1) \) blocks composed entirely of spin-$0$ factors:
\[
\mc{Q}^{\ms{bare}} := \bigotimes_{i=1}^{m} 0,
\]
referred to as the \emph{all-spin-zero (AS0) opposite-sign sector}.

The remaining opposite-sign blocks, each containing a mixture of spin-\( \tfrac{1}{2} \) and spin-$0$ components, are called the \emph{intermediate sectors}.
Let \( \mc{W}^{\ms{bare}} \) denote one such intermediate sector.

We denote the restriction of the Hamiltonian \( \mb{H}_1 \) to each sector by:
\[
\mb{H}_{\mc{C}^{\ms{bare}}} := \text{projection of } \mb{H}_1 \text{ to } \mc{C}^{\ms{bare}}, \quad
\mb{H}_{\mc{Q}^{\ms{bare}}} := \text{projection of } \mb{H}_1 \text{ to } \mc{Q}^{\ms{bare}}, \quad
\mb{H}_{\mc{W}^{\ms{bare}}} := \text{projection of } \mb{H}_1 \text{ to } \mc{W}^{\ms{bare}}. 
\]

The same-sign block Hamiltonian is given by:
\begin{align}
\mb{H}_{\mc{C}^{\ms{bare}}} = \sum_{i=1}^{m} \mb{B}_i \left( \mt{\weff_i},\, \sqrt{n_i}\, \mt{x} \right),
\label{eq:MIC-cl}
\end{align}
where
\(
\mt{\weff_i} = \left( w_i - \tfrac{n_i - 1}{4} \mt{jxx} \right),
\)
and each \( \mb{B}_i \) is a two-level block operator defined as:
\[
\mb{B}_i = \mb{I}_2 \otimes \cdots \otimes \underbrace{\mb{B}}_{\text{\( i \)th position}} \otimes \cdots \otimes \mb{I}_2,
\]
with \( \mb{B} \) being the two-level Hamiltonian from Eq.~\ref{eq:B-matrix}, and \( \mb{I}_2 \) the \( 2 \times 2 \) identity matrix.

The AS0 opposite-sign block Hamiltonian is:
\[
\mb{H}_{\mc{Q}^{\ms{bare}}} = \sum_{i=1}^{m} -\left( w_i + \tfrac{1}{4} \mt{jxx} \right),
\]
which evaluates to a single-value offset.

The intermediate block Hamiltonian \( \mb{H}_{\mc{W}^{\ms{bare}}} \) can be expressed in terms of a same-sign block Hamiltonian with fewer active \( \mb{B}_i \) terms, accompanied by an appropriate energy shift.

In the following, we present a closed-form solution for the same-sign block; analogous expressions for the intermediate blocks follow similarly.

\subsection{Closed-Form Solution of the Same-Sign Block \( \mb{H}_{\mc{C}^{\ms{bare}}} \)}
\label{sec:same-sign-block}

Since each \( \mb{B}_i \) in $\mb{H}_{\mc{C}^{\ms{bare}}}$ (in Eq.~\eqref{eq:MIC-cl}) 
acts on a disjoint spin-\(\tfrac{1}{2}\) subsystem, the full Hamiltonian \( \mb{H}_{\mc{C}^{\ms{bare}}} \) is frustration-free.
Its eigenstates are tensor products of the eigenstates of the individual \( \mb{B}_i \), and its eigenvalues are additive:
\[
\mb{H}_{\mc{C}^{\ms{bare}}} \ket{\psi_1} \otimes \cdots \otimes \ket{\psi_m}
= \left( \sum_{i=1}^{m} \beta_{k_i}^{(i)} \right) \ket{\psi_1} \otimes \cdots \otimes \ket{\psi_m},
\]
where \( \ket{\psi_i} \in \{ \ket{\beta_0^{(i)}},\, \ket{\beta_1^{(i)}} \} \) are the eigenstates of the \( i \)th two-level
block \( \mb{B}_i \), and \( \beta_k^{(i)} \) are the corresponding eigenvalues
(Eqs.~\eqref{eq:B-evals}, \eqref{eq:B-evecs}).

As a result, the full spectrum and eigensystem of the same-sign block \( \mb{H}_{\mc{C}^{\ms{bare}}} \) are exactly solvable.
\begin{mdframed}
\begin{theorem}[Exact Spectrum and Ground State of the Same-Sign Block]
\label{thm:same-sign-spectrum}
Let \( \mb{H}_{\mc{C}^{\ms{bare}}} \) be given as in  Eq.~\eqref{eq:MIC-cl}.
Then all eigenvalues and eigenstates of \( \mb{H}_{\mc{C}^{\ms{bare}}} \) are exactly solvable.  
Each eigenstate is indexed by a bit string \( z = z_1 z_2 \ldots z_m \in \{0,1\}^m \), with:
\begin{align}
  \left\{
  \begin{array}{ll}
  E_z &= \sum_{i=1}^{m} \beta_{{z_i}}^{(i)}, \\
  \ket{E_z} &= \bigotimes_{i=1}^{m} \ket{\beta_{{z_i}}^{(i)}},
  \end{array}
  \right.
\end{align}
where \( \beta_k^{(i)}\) and \( \ket{\beta_k^{(i)}} \), for \( k = 0, 1 \), are the eigenvalues and eigenvectors of the local two-level subsystem \( \mb{B}_i \), given by:
\begin{align*}
  \begin{cases}
\beta_k^{(i)} &= -\tfrac{1}{2} \left( \mt{\weff_i} + (-1)^k \sqrt{ \left[\mt{\weff_i}\right]^2 + n_i\, \mt{x}^2 } \right), \\
\ket{\beta_0^{(i)}} &= \tfrac{1}{\sqrt{1 + \gamma_i^2}} \left( \gamma_i\ket{0}_i +  \ket{1}_i \right), \quad
\ket{\beta_1^{(i)}} = \tfrac{1}{\sqrt{1 + \gamma_i^2}} \left( \ket{0}_i - \gamma_i\ket{1}_i \right),
\end{cases}
  \end{align*}
with mixing ratio
\(
\gamma_i = \tfrac{ \sqrt{n_i}\, \mt{x} }{ \mt{\weff_i} + \sqrt{ [\mt{\weff_i}]^2 + n_i\, \mt{x}^2 } }.
\)

In particular, the ground state corresponds to the all-zero bit string \( z = 0 \ldots 0 \), and is given by:
\begin{align*}
  \begin{cases}
\ket{E_0} &= \bigotimes_{i=1}^{m} \ket{\beta_0^{(i)}}, \\[4pt]
E_0 &= \sum_{i=1}^{m} \beta_0^{(i)}
= -\tfrac{1}{2} \sum_{i=1}^{m} \left( \mt{\weff_i} + \sqrt{ \left[\mt{\weff_i}\right]^2 + n_i\, \mt{x}^2 } \right).
  \end{cases}
  \end{align*}
\end{theorem}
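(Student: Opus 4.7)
The plan is to prove the theorem by exploiting the tensor-product structure of $\mb{H}_{\mc{C}^{\ms{bare}}}$ across the $m$ cliques, which reduces the full eigenvalue problem to $m$ decoupled two-level problems whose eigensystems are already known from Section~\ref{subsec:basic-matrix}.

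First, I would verify that the summands commute pairwise. By the definition of $\mb{B}_i$ given below Eq.~\eqref{eq:MIC-cl}, each $\mb{B}_i$ acts nontrivially only on the $i$th tensor factor of $\mc{C}^{\ms{bare}} = \bigotimes_{i=1}^m [\tfrac{1}{2}]_{n_i}$ and as the identity on every other factor. Hence $[\mb{B}_i,\mb{B}_j]=0$ for all $i\ne j$, so $\{\mb{B}_i\}_{i=1}^m$ is a commuting family and admits a simultaneous eigenbasis; this is the sense in which $\mb{H}_{\mc{C}^{\ms{bare}}}$ is frustration-free.

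Second, I would apply the closed-form eigensystem of the basic matrix $\mb{B}(w,x)$ stated in Eqs.~\eqref{eq:B-evals}--\eqref{eq:B-evecs} to each local factor, with parameters $w=\mt{\weff_i}$ and $x=\sqrt{n_i}\,\mt{x}$. This yields the single-factor eigenpairs $(\beta_k^{(i)},\ket{\beta_k^{(i)}})$ for $k\in\{0,1\}$, with the mixing ratio $\gamma_i$ exactly as in the theorem statement. Tensoring these eigenvectors across factors produces the joint eigenbasis indexed by bit strings $z=z_1\ldots z_m\in\{0,1\}^m$, and the corresponding eigenvalues are the sums $E_z=\sum_{i=1}^{m}\beta_{z_i}^{(i)}$ by additivity of eigenvalues for commuting tensor-factor operators. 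A dimension count then confirms that the $2^m$ resulting product states span all of $\mc{C}^{\ms{bare}}$, so this list exhausts the spectrum.

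Finally, for the ground-state identification, I would observe that $\beta_0^{(i)}\le\beta_1^{(i)}$ for every $i$, which follows immediately from Eq.~\eqref{eq:B-evals}: the square-root term is nonnegative, so the $k=0$ branch (minus the square root) is the smaller eigenvalue. Therefore $E_z=\sum_i\beta_{z_i}^{(i)}$ is minimized precisely at $z=0\ldots 0$, giving the stated ground state $\ket{E_0}=\bigotimes_{i=1}^m\ket{\beta_0^{(i)}}$ and energy $E_0=\sum_i\beta_0^{(i)}$. There is no substantive obstacle in this argument; the only subtlety worth flagging is the basis-swap remark following Eq.~\eqref{eq:B-evecs}, which must be respected when interpreting $\ket{0}_i$ and $\ket{1}_i$ in the regime $\mt{\weff_i}<0$, i.e., once $\mt{jxx}$ has grown past the single-clique see-saw threshold identified in Section~\ref{sec:see-saw-spectrum}.
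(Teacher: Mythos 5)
Your proposal is correct and follows essentially the same route as the paper: the paper likewise observes that the $\mb{B}_i$ act on disjoint spin-$\tfrac{1}{2}$ factors (hence the Hamiltonian is frustration-free), so eigenstates tensorize and eigenvalues add, with the local eigenpairs read off from the closed-form solution of $\mb{B}(w,x)$ in Eqs.~\eqref{eq:B-evals}--\eqref{eq:B-evecs} evaluated at $w=\mt{\weff_i}$, $x=\sqrt{n_i}\,\mt{x}$. Your added commutation check, $2^m$ dimension count, and the explicit $\beta_0^{(i)}\le\beta_1^{(i)}$ minimization (plus the basis-swap caveat for $\mt{\weff_i}<0$) are sound refinements of the same argument.
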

\end{mdframed}

\subsection{Uniform Clique Size: Symmetric-Subspace Reduction of the Same-Sign Block}
\label{sec:reduction-symmetric}
In the case when all cliques have the same size \( n_i = n_c \), one can further reduce the same-sign block. 
The same-sign block \( \mb{H}_{\mc{C}^{\ms{bare}}} \), defined on \( m \) effective spin-\( \tfrac{1}{2} \) subsystems, acts on a Hilbert space of dimension \( 2^m \).  
Using the same technique of transforming to the total angular momentum basis, this space decomposes into a direct sum of angular momentum sectors.
In particular, 
\[
\left[ \tfrac{1}{2} \right]_{n_c}^{\otimes m} =\underbrace{ \tfrac{1}{2} \otimes \cdots \otimes \tfrac{1}{2} }_{m}
= \tfrac{m}{2} \oplus \underbrace{ \left( \tfrac{m}{2} - 1 \right) \oplus \cdots \oplus \left( \tfrac{m}{2} - 1 \right) }_{m - 1} \oplus \cdots
\]

Let \( s = \tfrac{m}{2} \). The highest-spin sector \( s \) corresponds to the fully symmetric subspace.  
Its basis vectors \( \ket{s, m_s} \) are uniform superpositions over computational basis states with fixed Hamming weight, as illustrated in Table~\ref{t1}.
\begin{align}
  \boxed{
\begin{array}{l}
\ket{s, s} = \ket{11 \ldots 1} \quad \text{(Hamming weight \( m \))} \\[6pt]
\ket{s, s - 1} = \tfrac{1}{\sqrt{m}} \left( \ket{01\ldots1} + \cdots + \ket{11\ldots0} \right) \\[6pt]
\quad \vdots \\[6pt]
\ket{s, -(s - 1)} = \tfrac{1}{\sqrt{m}} \left( \ket{10\ldots0} + \cdots + \ket{00\ldots1} \right) \\[6pt]
\ket{s, -s} = \ket{00 \ldots 0} \quad \text{(Hamming weight 0)}
\end{array}
}
\label{t1}
\end{align}
Each \( \ket{s, m_s} \) is supported on all bitstrings of Hamming weight \( s + m_s \).  
We interpret \( 1 \) as spin-up (included vertex), and \( 0 \) as spin-down (excluded vertex).

\paragraph{Operators in the Symmetric Subspace.}
Within the spin-\( s = \tfrac{m}{2} \) subspace, the operators \( \Sop{\Z} \), \( \Sop{\sZ} \), and \( \Sop{\X} \) reduce to:
\begin{align}
\label{eq:CSZ}
\mb{C}\Sop{\Z}(m) &=
\begin{bmatrix}
s & 0 & \cdots & 0 \\
0 & s - 1 & \cdots & 0 \\
\vdots & \vdots & \ddots & \vdots \\
0 & 0 & \cdots & -s
\end{bmatrix}, \quad
\mb{C}\Sop{\sZ}(m) = \mb{C}\Sop{\Z}(m) + \tfrac{m}{2} \mathbb{I} =
\begin{bmatrix}
m & 0 & \cdots & 0 \\
0 & m - 1 & \cdots & 0 \\
\vdots & \vdots & \ddots & \vdots \\
0 & 0 & \cdots & 0
\end{bmatrix}.
\end{align}

Recall that the transverse operator \( \Sop{\X} = \tfrac{1}{2}(\Sop{+} + \Sop{-}) \) where
\(
\Sop{\pm} \ket{s, m_s} = \sqrt{s(s+1) - m_s(m_s \pm 1)}\, \ket{s, m_s \pm 1}.
\)
Letting \( m_s = s - a \), where 
$a$ counts the number of spin flips downward, for \( a = 0, 1, \ldots, 2s \), this gives:
\(
\braket{s, m_s | \Sop{\X} | s, m_s - 1} = \tfrac{1}{2} \sqrt{(m - a)(a + 1)},
\)
where \( m = 2s \) is the total number of spins.

Hence,
\begin{align}
\label{eq:CSX}
\mb{C}\Sop{\X}(m) =
\begin{bmatrix}
0 & \tfrac{\sqrt{m}}{2} & 0 & \cdots & 0 \\
\tfrac{\sqrt{m}}{2} & 0 & \tfrac{\sqrt{2(m - 1)}}{2} & \cdots & 0 \\
0 & \tfrac{\sqrt{2(m - 1)}}{2} & 0 & \cdots & 0 \\
\vdots & \vdots & \vdots & \ddots & \tfrac{\sqrt{m}}{2} \\
0 & 0 & 0 & \tfrac{\sqrt{m}}{2} & 0
\end{bmatrix}.
\end{align}

One can again use CG transformation to obtain the restricted same-sign block to the symmetric subspace, which is also known as spanned by \emph{Dicke states}
and can be obtained directly through the explicit transformation.

\paragraph{Explicit Transformation to the Symmetric Subspace.}

The symmetric subspace of \( m \) spin-\( \tfrac{1}{2} \) particles corresponds to the totally symmetric sector of the full Hilbert space \( (\mathbb{C}^2)^{\otimes m} \).  
This subspace has dimension \( m+1 \), and is spanned by the \emph{Dicke states}, which are uniform superpositions over computational basis states with fixed Hamming weight.

\paragraph{Dicke Basis.}  
For each \( k = 0, 1, \dotsc, m \), define the Dicke state:
\(
\ket{s, m_s = k - \tfrac{m}{2}} = \tfrac{1}{\sqrt{\binom{m}{k}}} \sum_{\substack{\vec{z} \in \{0,1\}^m \\ \text{Hamming weight} = k}} \ket{\vec{z}}.
\)
These form an orthonormal basis for the symmetric subspace, where \( s = \tfrac{m}{2} \) and \( m_s = -s, \dotsc, s \).

\paragraph{Transformation Matrix.}  
Let \( \mr{U}_{\ms{Dicke}} \in \mathbb{C}^{2^m \times (m+1)} \) denote the matrix whose columns are the normalized Dicke states (embedded in the full space).  
Then \( \mr{U}_{\ms{Dicke}}^\dagger \mr{U}_{\ms{Dicke}} = \mathbb{I}_{m+1} \), and the following operator reductions hold:
\begin{align*}
  \begin{cases}
  \mb{C}\Sop{\sZ}(m) &= \mr{U}_{\ms{Dicke}}^\dagger\, \Sop{\sZ}(m)\, \mr{U}_{\ms{Dicke}}, \\
  \mb{C}\Sop{\X}(m)  &= \mr{U}_{\ms{Dicke}}^\dagger\, \Sop{\X}(m)\, \mr{U}_{\ms{Dicke}}, \\
  \mb{H}_{\mc{C}^{\ms{bare}}}^{\ms{sym}} &= \mr{U}_{\ms{Dicke}}^\dagger\, \mb{H}_{\mc{C}^{\ms{bare}}}\, \mr{U}_{\ms{Dicke}} 
  = - \sqrt{n_c}\, \mt{x}\, \mb{C}\Sop{\X}(m) - \mt{\weff_c}\, \mb{C}\Sop{\sZ}(m).
  \end{cases}
  \end{align*}

\begin{theorem}[Reduction to the Symmetric Subspace via Dicke Basis]
\label{thm:explicit-symmetric-reduction}
Let \( \mb{H}_{\mc{C}^{\ms{bare}}} \in \mathbb{R}^{2^m \times 2^m} \) be the same-sign block Hamiltonian acting on \( m \) effective spin-\( \tfrac{1}{2} \) subsystems, each corresponding to a clique of size \( n_c \).  
Then its restriction to the symmetric subspace is given by:
\[
\mb{H}_{\mc{C}^{\ms{bare}}}^{\ms{sym}}
= - \sqrt{n_c}\, \mt{x}\, \mb{C}\Sop{\X}(m) - \mt{\weff_c}\, \mb{C}\Sop{\sZ}(m),
\]
where
\(
\mt{\weff_c} = \left( w_c - \tfrac{n_c - 1}{4} \mt{jxx} \right).
\)
\end{theorem}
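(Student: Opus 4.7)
The plan is to carry out the reduction in three steps: first express $\mb{H}_{\mc{C}^{\ms{bare}}}$ as a sum of single-site effective operators on the $m$ effective spin-$\tfrac{1}{2}$ subsystems, then invoke permutation symmetry to conclude that the symmetric (Dicke) subspace is invariant, and finally compute the matrix elements in the Dicke basis to match the claimed closed form.

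For the first step, I would unpack each block $\mb{B}_i(\mt{\weff_c}, \sqrt{n_c}\,\mt{x})$ using the definition of $\mb{B}(w,x)$ in Eq.~\eqref{eq:B-matrix}. A direct check gives $\mb{B}(w,x) = -w\,\shz{} - \tfrac{x}{2}\sigma^x$ on the effective two-level space, so that
\[
\mb{H}_{\mc{C}^{\ms{bare}}}
= -\mt{\weff_c}\sum_{i=1}^{m}\shz{i} \;-\; \tfrac{\sqrt{n_c}\,\mt{x}}{2}\sum_{i=1}^{m}\sigma_i^x
= -\mt{\weff_c}\,\Sop{\sZ}(m) \;-\; \sqrt{n_c}\,\mt{x}\,\Sop{\X}(m),
\]
using the conventions $\Sop{\sZ}(m)=\sum_i\shz{i}$ and $\Sop{\X}(m)=\tfrac{1}{2}\sum_i\sigma_i^x$ that are already in force for the $m$ effective qubits.

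For the second step, I would observe that both $\Sop{\sZ}(m)$ and $\Sop{\X}(m)$ are symmetric under the natural action of the permutation group $S_m$ on the effective qubits. Consequently $\mb{H}_{\mc{C}^{\ms{bare}}}$ commutes with every permutation operator, and therefore preserves every isotypic component of $S_m$. In particular it preserves the totally symmetric (highest-spin $s=m/2$) sector, which is exactly the $(m+1)$-dimensional Dicke subspace spanned by $\{\ket{s,m_s}\}_{m_s=-s}^{s}$. Thus the restriction $\mb{H}_{\mc{C}^{\ms{bare}}}^{\ms{sym}} = \mr{U}_{\ms{Dicke}}^\dagger\,\mb{H}_{\mc{C}^{\ms{bare}}}\,\mr{U}_{\ms{Dicke}}$ is well defined and equals $-\sqrt{n_c}\,\mt{x}\,(\mr{U}_{\ms{Dicke}}^\dagger\Sop{\X}(m)\mr{U}_{\ms{Dicke}}) - \mt{\weff_c}\,(\mr{U}_{\ms{Dicke}}^\dagger\Sop{\sZ}(m)\mr{U}_{\ms{Dicke}})$.

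For the third step, I would identify the two restricted operators with $\mb{C}\Sop{\sZ}(m)$ and $\mb{C}\Sop{\X}(m)$ as defined in Eqs.~\eqref{eq:CSZ}--\eqref{eq:CSX}. A Dicke state $\ket{s,m_s}$ with Hamming weight $k=s+m_s$ is an eigenstate of $\Sop{\sZ}(m)=\sum_i\shz{i}$ with eigenvalue $k$, which reproduces the diagonal form of $\mb{C}\Sop{\sZ}(m)$. For the transverse piece, writing $\Sop{\X}(m)=\tfrac{1}{2}(\Sop{+}+\Sop{-})$ and applying the standard ladder formula $\Sop{\pm}\ket{s,m_s}=\sqrt{s(s+1)-m_s(m_s\pm 1)}\,\ket{s,m_s\pm 1}$ yields exactly the tridiagonal entries listed in Eq.~\eqref{eq:CSX}; equivalently, one can verify these matrix elements combinatorially by counting how many bitstrings of Hamming weight $k$ are connected by a single bit-flip to bitstrings of weight $k\pm 1$, and normalizing by the Dicke prefactors $\sqrt{\binom{m}{k}}$. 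Putting the two reductions together gives the claimed expression for $\mb{H}_{\mc{C}^{\ms{bare}}}^{\ms{sym}}$.

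I do not anticipate a genuine obstacle: the argument is essentially a textbook symmetric-subspace reduction, and all nontrivial computations are already recorded in the theorem statements and formulas that immediately precede this one. The only point requiring mild care is bookkeeping: making sure that the effective-qubit convention (where each two-level factor came from a single clique of size $n_c$) is used consistently, so that the coefficient $\sqrt{n_c}\,\mt{x}$ coming from $\mb{B}_i(\mt{\weff_c},\sqrt{n_c}\,\mt{x})$ is attached to $\mb{C}\Sop{\X}(m)$ and not absorbed into it twice.
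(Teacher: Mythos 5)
Your proposal is correct and follows essentially the same route as the paper: rewrite each local block $\mb{B}_i(\mt{\weff_c},\sqrt{n_c}\,\mt{x})$ as $-\mt{\weff_c}\,\shz{i}-\tfrac{\sqrt{n_c}\,\mt{x}}{2}\sigma_i^x$ so that $\mb{H}_{\mc{C}^{\ms{bare}}}$ becomes a function of the collective operators, restrict to the spin-$\tfrac{m}{2}$ (Dicke) subspace via $\mr{U}_{\ms{Dicke}}$, and recover $\mb{C}\Sop{\sZ}(m)$ and $\mb{C}\Sop{\X}(m)$ from the Hamming-weight eigenvalues and the standard ladder-operator matrix elements, exactly as in Eqs.~\eqref{eq:CSZ}--\eqref{eq:CSX}. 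Your explicit permutation-invariance argument for why the symmetric subspace is invariant (so the compression is a true restriction) is a small addition the paper leaves implicit, but it does not change the substance of the argument.
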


As a direct consequence, the symmetric-subspace Hamiltonian \( \mb{H}_{\mc{C}^{\ms{bare}}}^{\ms{sym}} \) is a special tridiagonal matrix with reflective off-diagonal entries. This structure admits a closed-form solution for the eigensystem.

\begin{mdframed}
  \begin{corollary}
The eigensystem of the following tridiagonal matrix:
\begin{align}
\label{eq:M}
\mathcal{M}_{m}(w,x) =
\begin{bmatrix}
   -m w & -\tfrac{\sqrt{m}}{2} x & 0 & \cdots & 0 \\
   -\tfrac{\sqrt{m}}{2} x & -(m-1) w & -\tfrac{\sqrt{2(m-1)}}{2} x & \cdots & 0 \\
   0 & -\tfrac{\sqrt{2(m-1)}}{2} x & -(m-2) w & \cdots & 0 \\
   \vdots & \vdots & \vdots & \ddots & -\tfrac{\sqrt{m}}{2} x \\
   0 & 0 & 0 & -\tfrac{\sqrt{m}}{2} x & 0
\end{bmatrix}
\end{align}
is given in closed form by:
\(
\lambda_{m_s} = -s w + m_s \sqrt{w^2 + x^2}, \quad \text{where } s = \tfrac{m}{2},\ m_s = -s, -s+1, \ldots, s.
\)
In particular, the lowest eigenvalue occurs at \( m_s = -s \), yielding:
\(
\lambda_{\min} = -s \left( w + \sqrt{w^2 + x^2} \right).
\)
\end{corollary}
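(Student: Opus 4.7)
The plan is to recognize $\mathcal{M}_m(w,x)$ as nothing more than a linear combination of total spin operators restricted to the spin-$s$ irrep with $s = m/2$, at which point its eigenvalues read off immediately from SU(2) rotational symmetry. Comparing the diagonal of $\mathcal{M}_m(w,x)$ to Eq.~\eqref{eq:CSZ} and the off-diagonal to Eq.~\eqref{eq:CSX}, the first step is to identify
\[
\mathcal{M}_m(w,x) = -w\,\mb{C}\Sop{\sZ}(m) - x\,\mb{C}\Sop{\X}(m),
\]
which is precisely $\mb{H}_{\mc{C}^{\ms{bare}}}^{\ms{sym}}$ from Theorem~\ref{thm:explicit-symmetric-reduction} under the substitutions $\mt{\weff_c}\mapsto w$ and $\sqrt{n_c}\,\mt{x}\mapsto x$. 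Using the shift $\mb{C}\Sop{\sZ}(m) = \mb{C}\Sop{\Z}(m) + s\mathbb{I}$, I will rewrite this as
\[
\mathcal{M}_m(w,x) = -sw\,\mathbb{I} - \bigl(w\,\mb{C}\Sop{\Z}(m) + x\,\mb{C}\Sop{\X}(m)\bigr).
\]

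The second step is to observe that the bracketed piece is $\vec n\cdot\vec S$ for $\vec n = (x,0,w)$ acting inside the spin-$s$ irrep, and hence unitarily equivalent under an SU(2) rotation to $|\vec n|\,\mb{C}\Sop{\Z}(m) = \sqrt{w^2+x^2}\,\mb{C}\Sop{\Z}(m)$. Its spectrum is therefore $\{\sqrt{w^2+x^2}\,\mu : \mu = -s, -s+1, \ldots, s\}$, and relabeling $m_s = -\mu$ (still ranging over $-s,\ldots,s$) yields $\lambda_{m_s} = -sw + m_s\sqrt{w^2+x^2}$, with the minimum $-s(w + \sqrt{w^2+x^2})$ attained at $m_s = -s$, as claimed.

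As a self-contained cross-check that bypasses the SU(2) rotation, I can alternatively invoke Theorem~\ref{thm:same-sign-spectrum} directly in the uniform-clique setting: each eigenstate of $\mb{H}_{\mc{C}^{\ms{bare}}}$ is a tensor product indexed by a bitstring $z \in \{0,1\}^m$, and a short computation gives $E_z = k\beta_0 + (m-k)\beta_1 = -\tfrac{mw}{2} + (\tfrac{m}{2}-k)\sqrt{w^2+x^2}$ whenever $z$ has $k$ zeros. Symmetrizing the $\binom{m}{k}$ degenerate product states produces the unique Dicke-basis eigenvector in the symmetric subspace for each $m_s = \tfrac{m}{2}-k$, thereby accounting for all $m+1$ eigenvalues of the tridiagonal matrix and matching the formula. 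The main obstacle is not technical depth but sign and shift bookkeeping---between $\mb{C}\Sop{\Z}$ and $\mb{C}\Sop{\sZ}$, between Hamming weight and $m_s$, and in orienting $\vec n$---once those are tracked carefully, the result follows from either route.
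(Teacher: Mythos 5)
Your proposal is correct. The paper itself gives no explicit proof: the corollary is presented as a ``direct consequence'' of Theorem~\ref{thm:explicit-symmetric-reduction}, i.e.\ of the identification of \( \mathcal{M}_m(w,x) \) as the symmetric-subspace (Dicke-basis) restriction of the same-sign block \( \mb{H}_{\mc{C}^{\ms{bare}}} \), whose full \( 2^m \)-dimensional spectrum is already given in Theorem~\ref{thm:same-sign-spectrum}. Your cross-check route---observing that with uniform blocks \( E_z = k\beta_0 + (m-k)\beta_1 = -sw + (s-k)\sqrt{w^2+x^2} \) depends only on the number \( k \) of zeros, and symmetrizing the \( \binom{m}{k} \) degenerate product eigenstates to obtain one eigenvector per level inside the \((m+1)\)-dimensional symmetric subspace---is essentially that implicit derivation made explicit. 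Your primary route is genuinely more direct: writing \( \mathcal{M}_m(w,x) = -w\,\mb{C}\Sop{\sZ}(m) - x\,\mb{C}\Sop{\X}(m) = -sw\,\mathbb{I} - \bigl(w\,\mb{C}\Sop{\Z}(m) + x\,\mb{C}\Sop{\X}(m)\bigr) \) and diagonalizing \( \vec n \cdot \vec S \) with \( \vec n = (x,0,w) \) by an SU(2) rotation inside the single spin-\(s\) irrep gives the spectrum \( \sqrt{w^2+x^2}\,\{-s,\dots,s\} \) in one step, with no detour through the \( 2^m \)-dimensional product space; it buys a self-contained, basis-free explanation of why this particular reflective tridiagonal matrix is exactly solvable, while the tensor-product/symmetrization route additionally delivers the eigenvectors as rotated Dicke states and ties into the paper's block machinery. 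One small wording caution on the cross-check: the symmetrized eigenvectors are Dicke states in the rotated \( \{\ket{\beta_0},\ket{\beta_1}\} \) basis (the same local rotation applied to every factor, which therefore preserves the symmetric subspace), not the computational-basis Dicke states; with that noted, the distinct-eigenvalue counting of all \( m+1 \) levels goes through and both of your arguments are sound.
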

\end{mdframed}
To the best of our knowledge, this closed-form eigensystem for \( \mathcal{M}_m(w,x) \) has not been known for general \( m > 1 \).

Figure~\ref{fig:MIC-see-saw} illustrates the three eigenvalues of the same-sign block in the symmetric subspace \( \mb{H}_{\mc{C}^{\ms{bare}}}^{\ms{sym}} \), along with the AS0 energy level of \( \mb{H}_{\mc{Q}^{\ms{bare}}} \), under different values of \( \Jxx \).  
 The plots reveal a clear see-saw effect: as \( \Jxx \) increases, the same-sign energy levels are lifted upward,  
while the AS0 energy level is lowered.  

\begin{figure}[!htbp]
  \centering
  \begin{tabular}{cc}
    \includegraphics[width=0.48\textwidth]{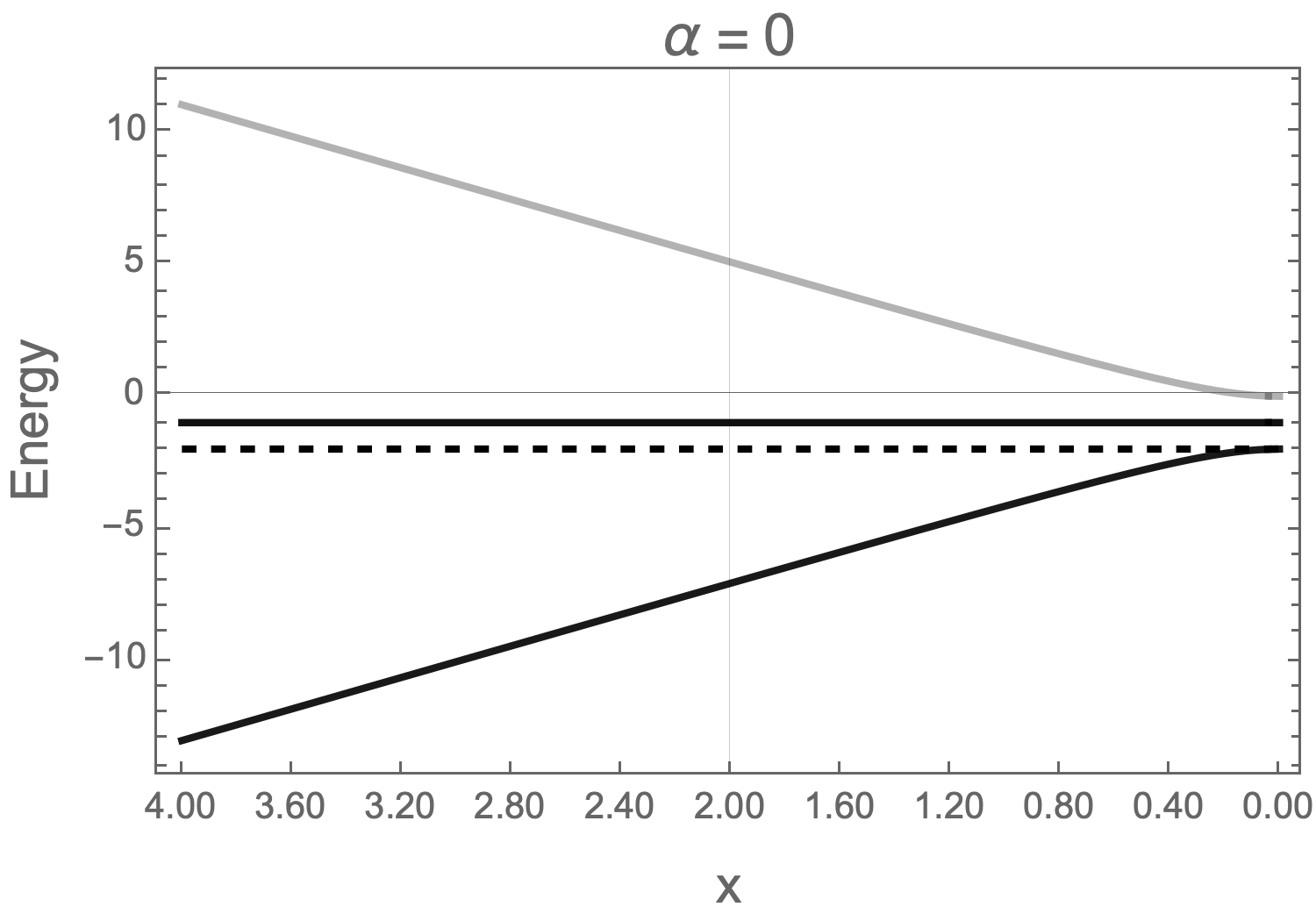} &
    \includegraphics[width=0.48\textwidth]{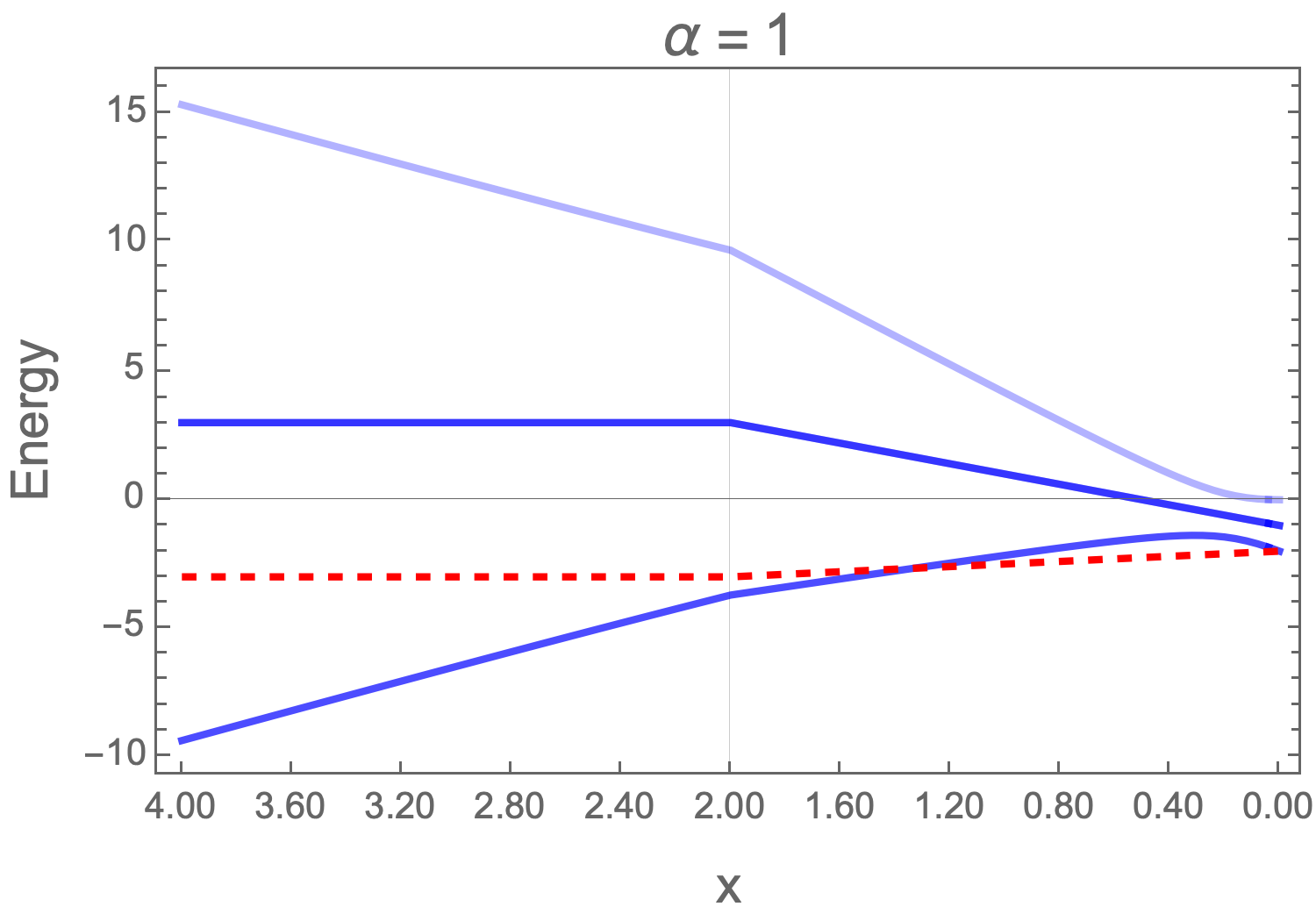}
  \end{tabular}
  \caption{
Energy spectrum of a bare subsystem (\MIC{}) under different values of \( \Jxx = \alpha \Gamma_2 \), with \( m = 2 \), \( n_c = 9 \).  
We set \( \Gamma_2 = m \) and \( \Gamma_1 = 2 \Gamma_2 \).  
The coupling schedule is piecewise-defined: \( \mt{jxx} = \Jxx \) for \( \mt{x} \ge \Gamma_2 \), and \( \mt{jxx} = \alpha \mt{x} \) for \( \mt{x} < \Gamma_2 \).\\
(a) \( \alpha = 0 \): three eigenvalues of the symmetric-subspace same-sign block \( \mb{H}_{\mc{C}^{\ms{bare}}}^{\ms{sym}} \) are shown in black;  
the energy of the AS0 block \( \mb{H}_{\mc{Q}^{\ms{bare}}} \) is shown as a black dashed line.\\
(b) \( \alpha = 1 \): same-sign eigenvalues are shown in blue; the AS0 energy is shown as a red dashed line.\\
The see-saw effect is evident: the same-sign energies (black) rise with \( \Jxx \), while the AS0 energy (black dashed) drops.
}
\label{fig:MIC-see-saw}
\end{figure}

\subsection{Block Energy Ordering}
\label{sec:block-ordering}
Since the ground state of each block is a tensor product of the eigenstates of the individual \( i \)-spins, and the corresponding eigenvalues are additive, the ground-state energy of each block is simply the sum of its local ground-state energies. Each block contains exactly \( m \) spins (either spin-\( \tfrac{1}{2} \) or spin-0), so the blocks can be directly compared based on their composition. As a result, the ground-state energies of the blocks are totally ordered.

If the ground-state energy of the spin-\( \tfrac{1}{2} \) component, denoted \( \beta_0^{(i)} \), is lower than that of the spin-0 component, denoted \( \theta^{(i)} \), then the same-sign block has the lowest energy and the AS0 opposite-sign block the highest. The ordering is reversed if \( \beta_0^{(i)} > \theta^{(i)} \). The relative ordering of \( \beta_0^{(i)} \) and \( \theta^{(i)} \) switches at the crossing point \( x_c = \tfrac{4\alpha}{4 - \alpha^2} \) (see Eq.~\eqref{eq:crossing-point} in Section~\ref{sec:single-clique}), as illustrated in Figure~\ref{fig:2-stage}.

\begin{figure}[!htbp]
  \centering
  $$
  \begin{array}{cc}
    \includegraphics[width=0.5\textwidth]{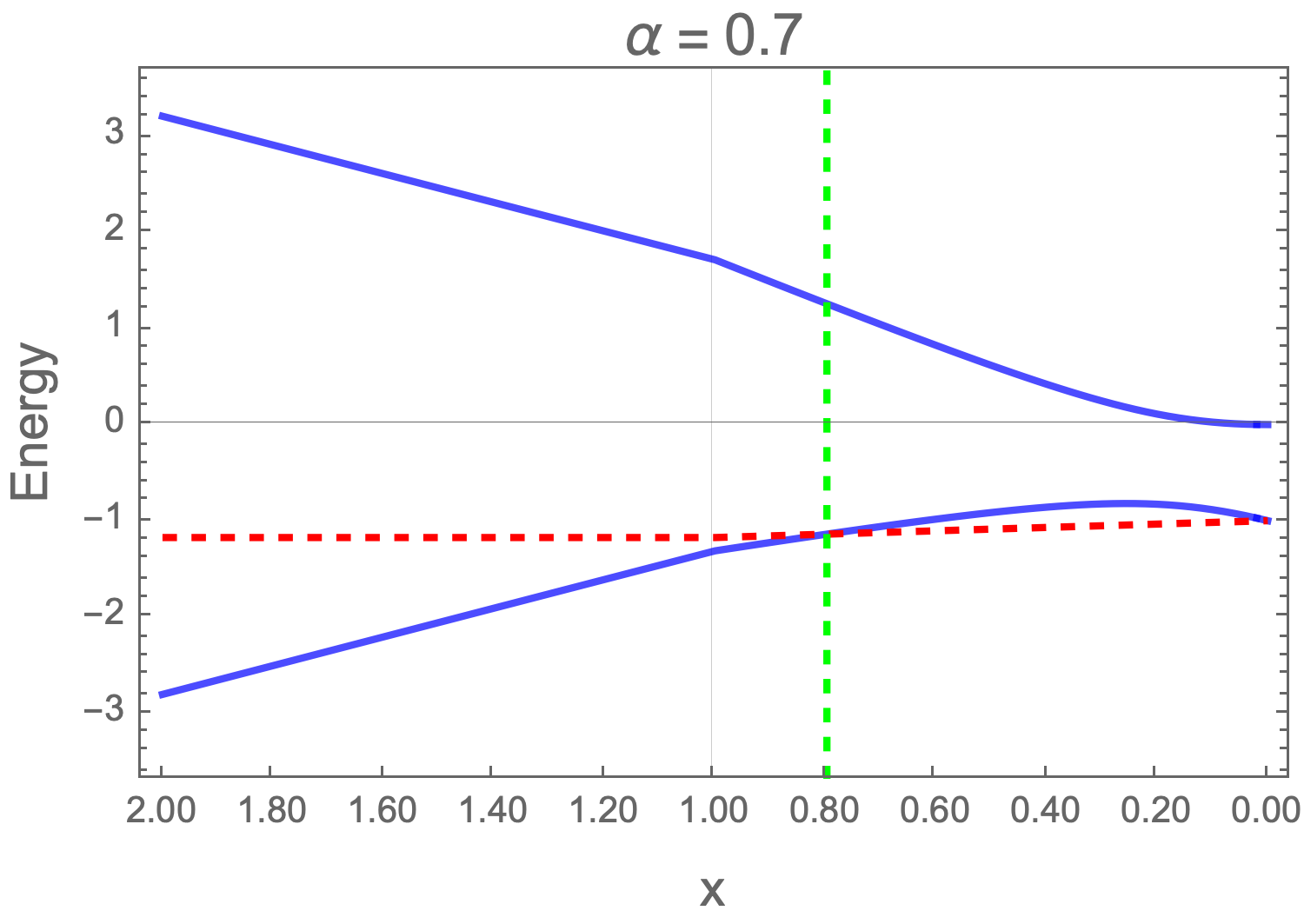} &
    \includegraphics[width=0.5\textwidth]{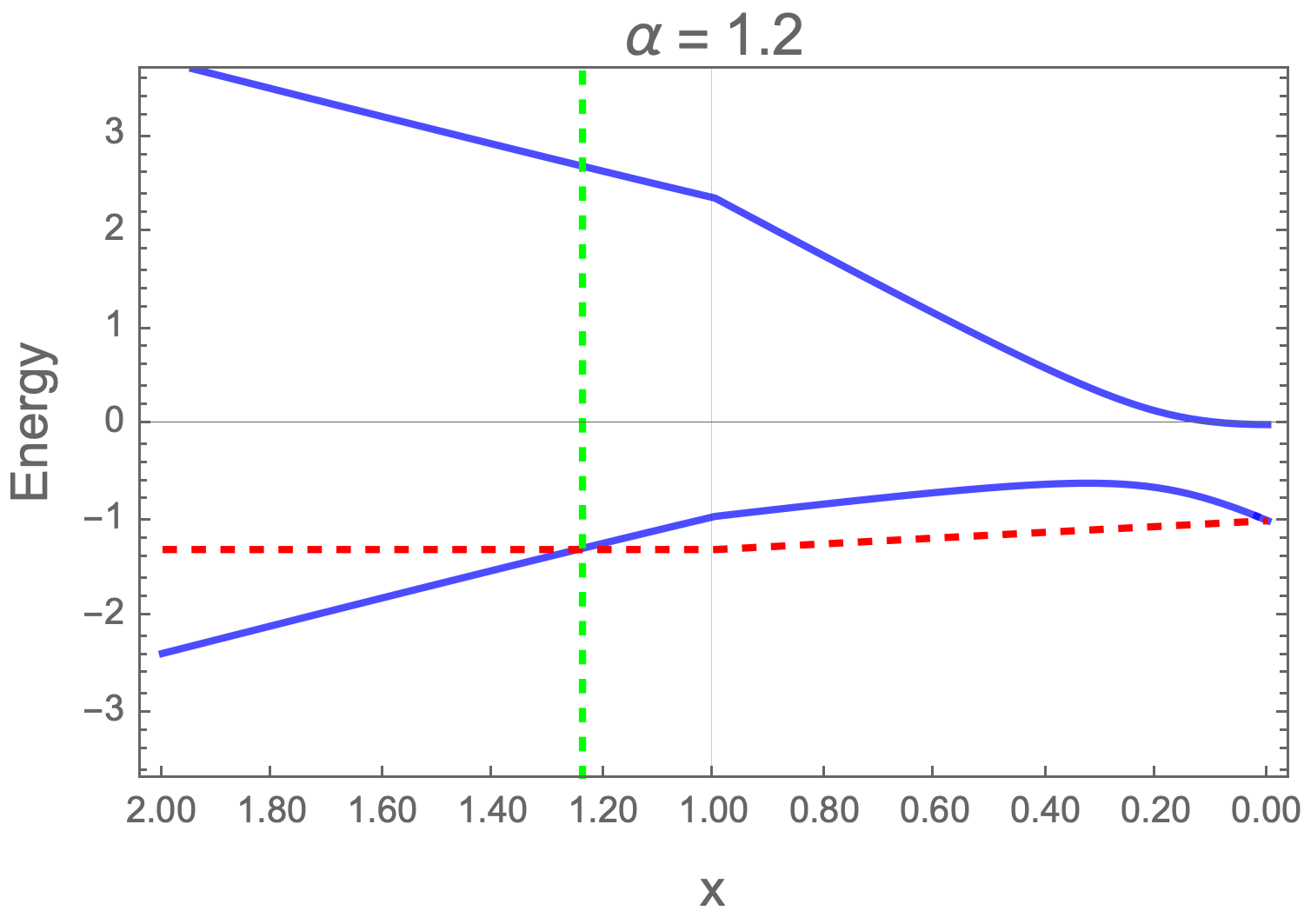}\\
    (a) & (b)
  \end{array}
  $$
  \caption{Two-stage eigenvalue evolution for a single clique under different values of \( \Jxx = \alpha \Gamma_2 \), with \( \Gamma_2 = 1 \), \( \Gamma_1 = 2 \), \( w_c = 1 \), and \( n_c = 9 \).  
  The coupling schedule is piecewise-defined: \( \mt{jxx} = \Jxx \) for \( \mt{x} \ge \Gamma_2 \), and \( \mt{jxx} = \alpha \mt{x} \) for \( \mt{x} < \Gamma_2 \).  
  (a) \( \alpha = 0.7 < \alpha_{\ms{max}}(\Gamma_2) \): the crossing occurs during Stage~2.  
  (b) \( \alpha = 1.2 > \alpha_{\ms{max}}(\Gamma_2) \): the crossing occurs during Stage~1.  
  Same-sign eigenvalues \( \beta_0 \), \( \beta_1 \) are shown as solid lines; the opposite-sign eigenvalue \( \theta \) is dashed.  
  The dashed green vertical line indicates the crossing point \( x_c = \tfrac{4\alpha}{4 - \alpha^2} \).
  }
  \label{fig:2-stage}
\end{figure}

Thus, to determine the overall block ordering, it suffices to compare the two extreme cases: the same-sign block \( \mb{H}_{\mc{C}^{\ms{bare}}} \) and the AS0 opposite-sign block \( \mb{H}_{\mc{Q}^{\ms{bare}}} \).  
In the case \( \Jxx = 0 \), i.e., \( \alpha = 0 \), we have \( \beta_0^{(i)} < \theta^{(i)} \) for all \( \mt{x} \in [0, \Gamma_1] \), so the same-sign block remains lowest in energy throughout the schedule.
In contrast, when $\alpha>0$, and in particular when $\alpha<\alpha_{\ms{max}}(\Gamma_2)$, the spectrum exhibits a transition at $x_c$: 
for $\mt{x}>x_c$, $\beta_0^{(i)}(\mt{x})<\theta^{(i)}(\mt{x})$; at $\mt{x}=x_c$, $\beta_0^{(i)}=\theta^{(i)}$; and for $\mt{x}<x_c$, $\beta_0^{(i)}>\theta^{(i)}$.
Thus, the energy ordering reverses after \( x_c \), and the AS0 opposite-sign block becomes the lowest in energy.  
This reversal is illustrated in Figure~\ref{fig:block-order}.

\begin{figure}[!htbp]
  \centering
  \begin{tabular}{cc}
    \includegraphics[width=0.45\textwidth]{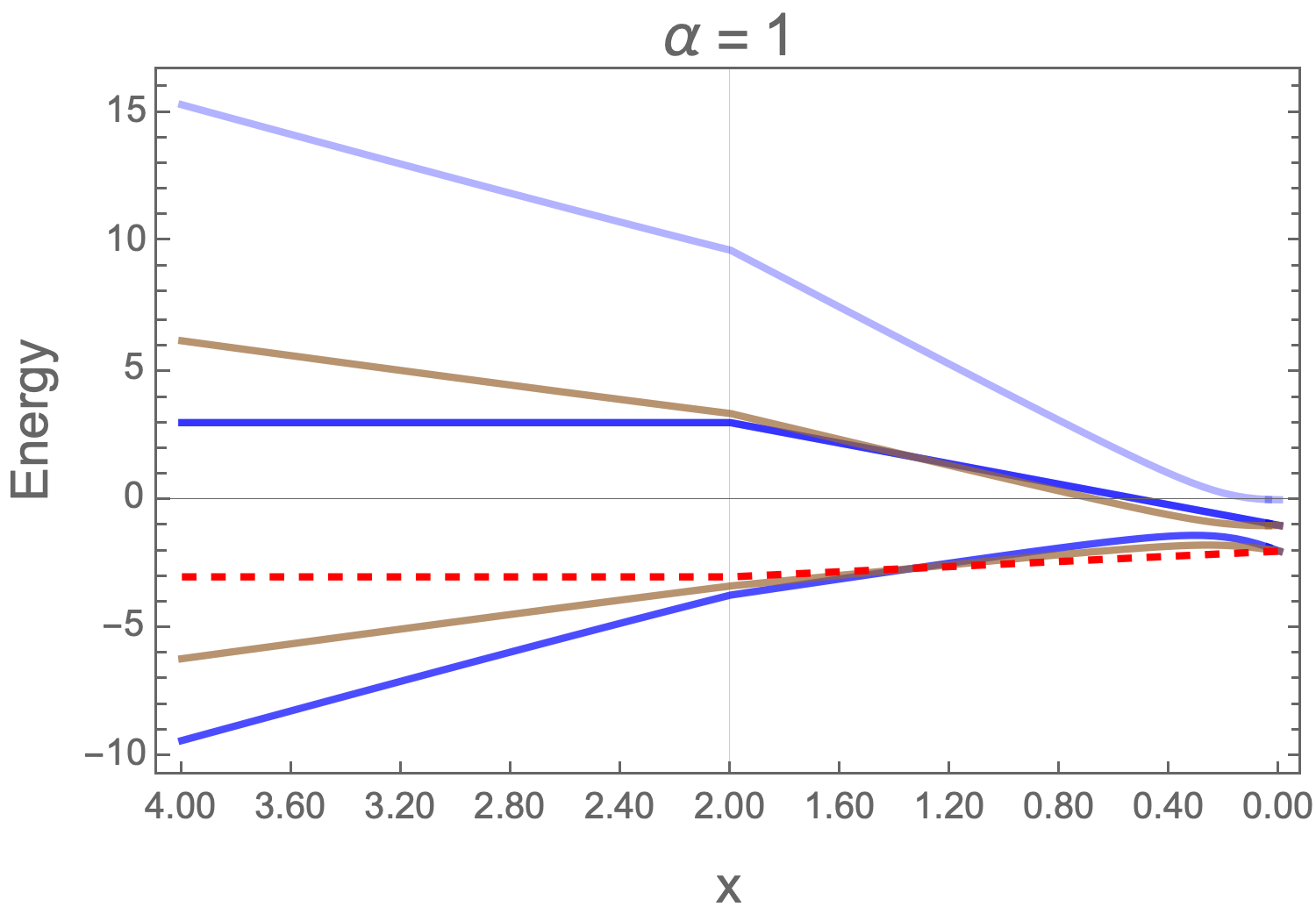} &
    \includegraphics[width=0.45\textwidth]{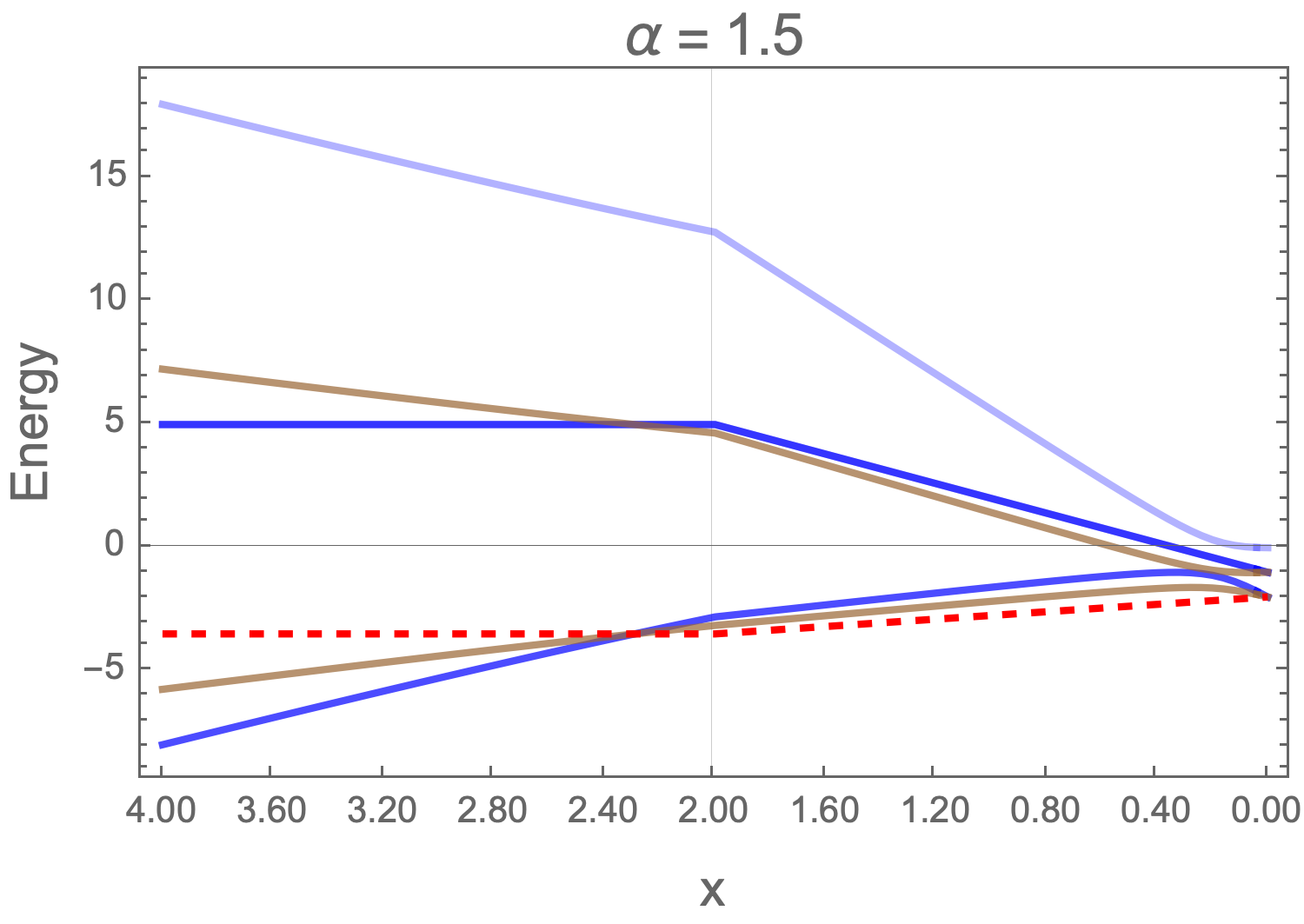}
  \end{tabular}
\caption{
Block energy ordering for \( m = 2 \), \( n_c = 9 \), with \( \Gamma_2 = m \).  
Same-sign block energy in blue; intermediate opposite-sign block in brown; AS0 block in red dashed.  
(a) \( \alpha = 1.0 < \alpha_{\ms{max}}(\Gamma_2) \): the crossing occurs during Stage~2, after which the AS0 block becomes lowest in energy.  
(b) \( \alpha = 1.5 > \alpha_{\ms{max}}(\Gamma_2) \): the crossing occurs during Stage~1, and the AS0 block becomes lowest earlier in the evolution.
}
\label{fig:block-order}
\end{figure}

\section{Analysis of Stage 0}
\label{sec:stage-0}

Recall that the Hamiltonian for Stage~0 is
\[
\mb{H}_0(t) = \x{t} \mb{H}_{\ms{X}} + \jxx{t} \mb{H}_{\ms{XX}} + \mt{p}(t)\mb{H}_{\ms{problem}}, \quad t \in [0,1],
\]
with
\(
\x{t} = (1 - t)(\Gamma_0 - \Gamma_1) + \Gamma_1, 
\jxx{t} = t \Jxx, 
\mt{p}(t) = t.
\)
During this phase, the problem parameters \( w_i \), \( \Jzz^{\ms{clique}} \), and \( \Jzz \)  
are gradually ramped to their final values as \( \mt{p}(t) \) increases from 0 to 1,  
while \( \jxx{t} \) increases linearly to its target value \( \Jxx \).

\begin{definition}
  Given a Hamiltonian \( \mb{H} \) acting on a Hilbert space \( \mc{V} \),  
and a subspace \( \mc{L} \subset \mc{V} \) with orthogonal projector \( \Pi_{\mc{L}} \),  
we define the \emph{restricted Hamiltonian} on \( \mc{L} \) as
\(
\mb{H}|_{\mc{L}} := \Pi_{\mc{L}} \mb{H} \Pi_{\mc{L}}.
\)
\end{definition}

The main goal of this section is to derive the effective Hamiltonian at the end of Stage~0.  
The foundational idea follows~\cite{KKR2006}, which presents a systematic approach to constrained Hamiltonians.  
The Projection Lemma in \cite{KKR2006} shows that when a large penalty is assigned to a subspace,  
the lowest eigenvalues of the full Hamiltonian are close to those of the restricted Hamiltonian.  
To improve this approximation,~\cite{KKR2006} further derives an effective Hamiltonian  
by incorporating perturbative corrections from the high-energy subspace.  
We adopt a similar approach: by assigning a large penalty \( \Jzz^{\ms{clique}} \) to states involving edges within a clique,  
we energetically suppress their contribution and restrict the evolution to a reduced subspace.  
We then apply perturbation theory to characterize the resulting effective Hamiltonian.

In the following, we first introduce a decomposition of the Hilbert space based on energy penalties from the problem Hamiltonian.  
We then derive the effective Hamiltonian and argue that the spectral gap remains large throughout Stage~0,  
justifying the adiabatic elimination of high-energy states.

\subsection{Decomposition of the Hilbert Space}
Recall the MIS-Ising problem Hamiltonian defined in Eq.~\ref{eq:problem-Ham}:
  \[
\mb{H}_{\ms{problem}} = \sum_{i \in \ver(G)} (-w_i) \shz{i} 
+  \Jzz^{\ms{clique}} 
\sum_{(i,j) \in \edge(G_{\ms{driver}})} \shz{i} \shz{j} 
+ \Jzz \sum_{(i,j) \in \edge(G) \setminus \edge(G_{\ms{driver}})} \shz{i} \shz{j}.
\]
The vertex set of the graph is \( \ver(G) = V_L \,\cup\, R \),  
where \(V_L\) is the union of all cliques in \(L\) and \(R\) is the set of vertices outside these cliques.  
The corresponding Hilbert space factors as
\(
\mc{V} = \mc{V}_L \otimes \mc{V}_R,
\)
where \(\mc{V}_L\) is the Hilbert space of all vertices in \(V_L\) and  
\(\mc{V}_R\) is that of the vertices in \(R\).

The parameter \( \Jzz^{\ms{clique}} \) penalizes states that include occupied edges within a clique.  
By setting \( \Jzz^{\ms{clique}} \) sufficiently large, the Hilbert space \(\mc{V}_L\) separates into a low-energy subspace  
(spanned by all independent-set states within $L$) and a high-energy subspace (spanned by all dependent-set states within $L$).

We define
\[
\mc{L}_{-} := (\text{low-energy subspace of }\mc{V}_L) \otimes \mc{V}_R, 
\quad
\mc{L}_{+} := (\text{high-energy subspace of }\mc{V}_L) \otimes \mc{V}_R.
\]
Here \(\mc{L}_{+}\) corresponds to high-energy states---each such state containing at least one intra-clique edge incurring the \( \Jzz^{\ms{clique}} \) penalty.

Let \( \Pi_{-} \) and \( \Pi_{+} \) denote the orthogonal projectors onto \( \mc{L}_{-} \) and \( \mc{L}_{+} \), respectively.  
With respect to this decomposition, the system Hamiltonian at the end of Stage~0 can be written in block form:
\[
  \mb{H}_0(1) \;=\;
  \begin{pmatrix}
    \mb{H}^{\ms{low}} & \mb{V} \\
    \mb{V}^\dagger & \mb{H}^{\ms{high}}
  \end{pmatrix},
  \]
  with
\(
{\mb{H}}^{\ms{low}} = \Pi_{-} \mb{H}_0(1) \Pi_{-},  \mb{H}^{\ms{high}}  = \Pi_{+} \mb{H}_0(1) \Pi_{+}.
\)
At the end of Stage~0, the full Hamiltonian takes the form
\[
\mb{H}_0(1) = 
\left(\Gamma_1 \mb{H}_{\ms{X}} + \Jxx \mb{H}_{\ms{XX}} + \mb{H}_{\ms{prob}}^{\ms{low}} \right)
+ \mb{H}_{\ms{prob}}^{\ms{high}},
\]
where
\(
\mb{H}_{\ms{problem}} = \mb{H}_{\ms{prob}}^{\ms{low}} + \mb{H}_{\ms{prob}}^{\ms{high}},
\)
with
\(
\mb{H}_{\ms{prob}}^{\ms{low}} = \Pi_{-} \mb{H}_{\ms{problem}} \Pi_{-}, \mb{H}_{\ms{prob}}^{\ms{high}}  = \Pi_{+} \mb{H}_{\ms{problem}} \Pi_{+}.
\)
By construction, \( \mb{H}_{\ms{prob}}^{\ms{high}}  \) annihilates the low-energy subspace:
\(
\mb{H}_{\ms{prob}}^{\ms{high}}  \Pi_{-} = \Pi_{-} \mb{H}_{\ms{prob}}^{\ms{high}}  = 0.
\)
Thus
$\mb{H}^{\ms{low}} = \Pi_{-} \mb{H}_0(1) \Pi_{-} =  \Pi_{-} (\mb{M}+ \mb{H}_{\ms{prob}}^{\ms{high}})  \Pi_{-} = \Pi_{-} \mb{M} \Pi_{-}$,
where $\mb{M}=\Gamma_1 \mb{H}_{\ms{X}} + \Jxx \mb{H}_{\ms{XX}} + \mb{H}_{\ms{prob}}^{\ms{low}}$.

In the following, we will show that by the end of Stage~0, if
$\Jzz^{\ms{clique}}$ is chosen sufficiently large, the Stage~1
evolution is governed exactly by the restricted Hamiltonian
$\Heff = \mb{H}^{\ms{low}} =\Pi_{-} \mb{M} \Pi_{-}$.

\subsection{Effective Hamiltonian at the End of Stage~0}

\begin{definition}
Let \( \lambda_{\ms{cutoff}} \in \mathbb{R} \) be a cutoff energy, and let \( \epsilon > 0 \).  
We say that \( \mb{H}_{\ms{eff}} \) is an \( \epsilon \)-close effective low-energy Hamiltonian of \( \mb{H} \)  
if for every eigenstate \( \mb{H} \ket{\psi} = \lambda \ket{\psi} \) with \( \lambda < \lambda_{\ms{cutoff}} \),  
there exists a state \( \ket{\psi'} \) such that \( \mb{H}_{\ms{eff}} \ket{\psi'} = \lambda' \ket{\psi'} \), and:
\begin{enumerate}
    \item \( |\lambda - \lambda'| = \mathcal{O}(\epsilon) \) \quad (energy closeness),
    \item \( \| \ket{\psi'} - \Pi_{-} \ket{\psi} \| = \mathcal{O}(\epsilon) \) \quad (state closeness).
\end{enumerate}
\label{def:epsilon-effective}
\end{definition}

\noindent
Intuitively, \( \mb{H}_{\ms{eff}} \) provides an approximation of \( \mb{H} \)  
that accurately captures the system's behavior within the low-energy subspace \( \mc{L}_{-} \).  
This ensures that both the spectrum and eigenstates of \( \mb{H} \) below \( \lambda_{\ms{cutoff}} \)  
are preserved up to an error of order \( \mathcal{O}(\epsilon) \).

The following Theorem~\ref{thm2} can be derived as a consequence of Theorem~6.2 in~\cite{KKR2006}.  
However, we provide a direct derivation based on projection properties,  
and explicitly establish the correspondence between eigenstates.

\begin{theorem}
\label{thm2}
Let $\mb{H} = \mb{M} + \mb{H}_{\ms{prob}}^{\ms{high}}$, where 
$\Pi_{-}\mb{H}_{\ms{prob}}^{\ms{high}} = \mb{H}_{\ms{prob}}^{\ms{high}}\Pi_{-} = 0$
and all eigenvalues of $\mb{H}_{\ms{prob}}^{\ms{high}}$ are at least 
$\lambda_{\ms{high}} > \lambda_{\ms{cutoff}}$.
Furthermore, to ensure the effective Hamiltonian approximation remains within \( \mathcal{O}(\epsilon) \),  
we impose the stronger condition:
\(
\lambda_{\ms{high}} \geq \tfrac{1}{\epsilon} \| \mb{M} \|^2.
\)
Then the projected Hamiltonian
\(
\mb{H}_{\ms{eff}} := \Pi_{-} \mb{M} \Pi_{-}
\)
is an \( \epsilon \)-close effective low-energy Hamiltonian of \( \mb{H} \), as defined in Definition~\ref{def:epsilon-effective}.
\end{theorem}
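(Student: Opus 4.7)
The plan is to apply a Schur complement (Feshbach projection) argument in the block decomposition of $\mc{V}$ induced by $\Pi_{-}$ and $\Pi_{+}$. Because $\mb{H}_{\ms{prob}}^{\ms{high}}$ annihilates $\mc{L}_{-}$ from both sides by hypothesis, it is supported entirely on $\mc{L}_{+}$, so writing $\mb{H}$ in block form gives the three building blocks $\mb{H}_{--}=\Pi_{-}\mb{M}\Pi_{-}=\mb{H}_{\ms{eff}}$, $\mb{V}=\Pi_{-}\mb{M}\Pi_{+}$ with $\|\mb{V}\|\le\|\mb{M}\|$, and $\mb{H}_{++}=\Pi_{+}\mb{M}\Pi_{+}+\mb{H}_{\ms{prob}}^{\ms{high}}|_{\mc{L}_{+}}$, which is bounded below by $\lambda_{\ms{high}}-\|\mb{M}\|$. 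With this bookkeeping, the theorem reduces to a standard perturbative elimination of the high-energy sector.

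The key steps are as follows. First I would take an eigenstate $\mb{H}\ket{\psi}=\lambda\ket{\psi}$ with $\lambda<\lambda_{\ms{cutoff}}$, decompose $\ket{\psi}=\ket{\psi_{-}}+\ket{\psi_{+}}$, and use the $+$ block equation to solve $\ket{\psi_{+}}=-(\mb{H}_{++}-\lambda)^{-1}\mb{V}^{\dagger}\ket{\psi_{-}}$. The resolvent is well-defined since $\mb{H}_{++}-\lambda\succeq (\lambda_{\ms{high}}-\|\mb{M}\|-\lambda_{\ms{cutoff}})\mb{I}$, yielding $\|\ket{\psi_{+}}\|\le\|\mb{M}\|/(\lambda_{\ms{high}}-\lambda_{\ms{cutoff}}-\|\mb{M}\|)\cdot\|\ket{\psi_{-}}\|$. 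Substituting back into the $-$ block equation produces the exact Feshbach identity $(\mb{H}_{\ms{eff}}-\mb{V}(\mb{H}_{++}-\lambda)^{-1}\mb{V}^{\dagger})\ket{\psi_{-}}=\lambda\ket{\psi_{-}}$, in which the Schur correction term has operator norm at most $\|\mb{M}\|^{2}/(\lambda_{\ms{high}}-\lambda_{\ms{cutoff}}-\|\mb{M}\|)=\mathcal{O}(\|\mb{M}\|^{2}/\lambda_{\ms{high}})$. The hypothesis $\lambda_{\ms{high}}\ge\|\mb{M}\|^{2}/\epsilon$ then forces this correction, and hence $\|\mb{H}_{\ms{eff}}\ket{\psi_{-}}-\lambda\ket{\psi_{-}}\|$, to be $\mathcal{O}(\epsilon)\cdot\|\ket{\psi_{-}}\|$. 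Since $\|\ket{\psi_{+}}\|=\mathcal{O}(\epsilon/\|\mb{M}\|)$, the vector $\ket{\psi_{-}}$ is nearly unit-normalized and equals $\Pi_{-}\ket{\psi}$ up to $\mathcal{O}(\epsilon)$. A final application of Weyl's inequality (or Bauer--Fike) to the Hermitian operator $\mb{H}_{\ms{eff}}$ against this approximate eigenpair produces an eigenvalue $\lambda'$ of $\mb{H}_{\ms{eff}}$ with $|\lambda-\lambda'|=\mathcal{O}(\epsilon)$, which establishes the energy-closeness condition of Definition~\ref{def:epsilon-effective}.

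The hard part is the state-closeness conclusion. Weyl's inequality alone controls only eigenvalues; producing an eigenstate $\ket{\psi'}$ of $\mb{H}_{\ms{eff}}$ with $\|\ket{\psi'}-\Pi_{-}\ket{\psi}\|=\mathcal{O}(\epsilon)$ in general requires some separation between $\lambda'$ and the remainder of the spectrum of $\mb{H}_{\ms{eff}}$, after which a Davis--Kahan $\sin\Theta$ argument delivers the bound. In the presence of (near-)degeneracies one should phrase the claim in terms of spectral projectors onto a small energy window rather than individual eigenvectors, exactly as in the proof of Theorem~6.2 of~\cite{KKR2006}; the $\mathcal{O}(\epsilon)$ norm bound on the Schur correction carries over unchanged, while the eigenstate correspondence is read off from the closeness of the corresponding projectors. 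This is the only nontrivial subtlety; everything else is a direct book-keeping exercise with the block equations and the resolvent bound above.
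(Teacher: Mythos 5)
Your proposal follows essentially the same route as the paper's proof: block-decompose with respect to $\Pi_{-},\Pi_{+}$, eliminate the high-energy component via the resolvent (Feshbach/Schur complement), and use $\lambda_{\ms{high}} \ge \tfrac{1}{\epsilon}\|\mb{M}\|^{2}$ to bound the correction to $\Pi_{-}\mb{M}\Pi_{-}$ by $\mathcal{O}(\epsilon)$; the only cosmetic difference is that the paper expands the Green's function as a Neumann series in $\mb{M}_{++}\mb{D}^{-1}$ with $\mb{D}=\lambda I-\mb{H}_{\ms{prob}}^{\ms{high}}$, whereas you bound $(\mb{H}_{++}-\lambda)^{-1}$ directly. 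Your handling of the final step is in fact more careful than the paper's, which simply sets $\ket{\psi'}:=\ket{\psi_{-}}$ even though $\ket{\psi_{-}}$ is only an approximate eigenstate of $\mb{H}_{\ms{eff}}$; the Weyl/Bauer--Fike step plus a Davis--Kahan or spectral-projector argument that you describe is what is needed to meet the eigenstate requirement of Definition~\ref{def:epsilon-effective} literally.
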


\begin{proof}
Let \( \mb{H} \ket{\psi} = \lambda \ket{\psi} \) be an eigenstate of \( \mb{H} \) with \( \lambda < \lambda_{\ms{cutoff}} \).  
We express \( \mb{H} \) in the block basis defined by \( \Pi_{-} \) and \( \Pi_{+} \):
\[
\mb{H} = 
\begin{pmatrix}
\mb{H}_{++} & \mb{H}_{+-} \\
\mb{H}_{-+} & \mb{H}_{--}
\end{pmatrix},
\quad \text{where } \mb{H}_{ab} := \Pi_a \mb{H} \Pi_b \text{ for } a,b \in \{+, -\}.
\]
Similarly, we decompose \( \mb{M} \) as
\[
\mb{M} = 
\begin{pmatrix}
\mb{M}_{++} & \mb{M}_{+-} \\
\mb{M}_{-+} & \mb{M}_{--}
\end{pmatrix}.
\]

Let \( \ket{\psi_{+}} := \Pi_{+} \ket{\psi} \) and \( \ket{\psi_{-}} := \Pi_{-} \ket{\psi} \).  
The eigenvalue equation becomes:
\[
\begin{pmatrix}
\mb{H}_{++} & \mb{H}_{+-} \\
\mb{H}_{-+} & \mb{H}_{--}
\end{pmatrix}
\begin{pmatrix}
\ket{\psi_{+}} \\
\ket{\psi_{-}}
\end{pmatrix}
= \lambda
\begin{pmatrix}
\ket{\psi_{+}} \\
\ket{\psi_{-}}
\end{pmatrix}.
\]

Solving for \( \ket{\psi_{+}} \), we obtain
\(
\ket{\psi_{+}} = \mb{G}(\lambda) \mb{H}_{+-} \ket{\psi_{-}},
\)
where the Green's function is defined as
\(
\mb{G}(\lambda) := (\lambda I - \mb{H}_{++})^{-1},
\)
which corresponds to \( G_{+}(z) \) in~\cite{KKR2006}.

Substituting back, we obtain an effective energy-dependent equation for \( \ket{\psi_{-}} \):
\(
\mb{H}_{\ms{eff}}^{\ms{exact}}(\lambda) \ket{\psi_{-}} = \lambda \ket{\psi_{-}},
\)
where the exact effective Hamiltonian is given by
\[
\mb{H}_{\ms{eff}}^{\ms{exact}}(\lambda) := \mb{H}_{--} + \mb{H}_{-+} \mb{G}(\lambda) \mb{H}_{+-}
= \mb{M}_{--} + \mb{M}_{-+} \mb{G}(\lambda) \mb{M}_{+-}.
\]

To approximate this, we expand \( \mb{G}(\lambda) \) using a resolvent expansion:
\[
\mb{G}(\lambda) = \left( \lambda I - \mb{H}_{\ms{prob}}^{\ms{high}}  - \mb{M}_{++} \right)^{-1} 
= \mb{D}^{-1} \left( I - \mb{M}_{++} \mb{D}^{-1} \right)^{-1},
\]
where \( \mb{D} := \lambda I - \mb{H}_{\ms{prob}}^{\ms{high}}  \), and hence \( \| \mb{D}^{-1} \| \leq 1 / (\lambda_{\ms{high}} - \lambda_{\ms{cutoff}}) \).

Expanding in powers of \( \mb{M}_{++} \mb{D}^{-1} \), we obtain:
\begin{align*}
\mb{H}_{\ms{eff}}^{\ms{exact}}(\lambda)
&= \mb{M}_{--} + \mb{M}_{-+} \mb{D}^{-1} \mb{M}_{+-} + \mb{M}_{-+} \mb{D}^{-1} \mb{M}_{++} \mb{D}^{-1} \mb{M}_{+-} + \ldots \\
&= \mb{M}_{--} + \mb{M}_{-+} \mb{D}^{-1} \mb{M}_{+-} + \mathcal{O}\left( \tfrac{\| \mb{M} \|^3}{(\lambda_{\ms{high}} - \lambda_{\ms{cutoff}})^2} \right).
\end{align*}

Under the condition \( \lambda_{\ms{high}} \ge \tfrac{1}{\epsilon} \| \mb{M} \|^2 \),  
the leading correction is bounded by \( \mathcal{O}(\epsilon) \), and we may set
\(
\mb{H}_{\ms{eff}} := \mb{M}_{--}
\)
as an \( \epsilon \)-close effective low-energy Hamiltonian of \( \mb{H} \).

Finally, letting $\ket{\psi'} := \ket{\psi_{-}}$, we see that both the energy 
and state closeness conditions in Definition~\ref{def:epsilon-effective} 
are satisfied up to $\mathcal{O}(\epsilon)$, completing the proof.
\end{proof}

\begin{remark}
  The structure of this proof closely resembles methods commonly associated
  with the Schrieffer--Wolff transformation,  
  a technique originally developed to perturbatively eliminate high-energy subspaces
  via a unitary block-diagonalization.  
  Although the present argument is based on resolvent expansion and projection---rather than an explicit unitary transformation---the resulting effective Hamiltonian is often referred to in the literature, particularly in physics,  
  as a Schrieffer--Wolff-type reduction; see, e.g.,~\cite{SWT-2011}.
\end{remark}

Applying Theorem~\ref{thm2}, we have
the effective low-energy Hamiltonian 
\begin{equation}
\mb{H}_0(1)^{\ms{eff}} = \Pi_{-} \mb{M} \Pi_{-}
= \Pi_{-} \left( \Gamma_1 \mb{H}_{\ms{X}} \right) \Pi_{-} + \Jxx \mb{H}_{\ms{XX}} + \mb{H}_{\ms{prob}}^{\ms{low}}.
\label{eq:stage0-Heff}
  \end{equation}
by setting
$\Jzz^{\ms{clique}} = \Omega(\| \mb{M} \|^2 / \epsilon).$
The ground state of $\mb{H}_0(1)^{\ms{eff}}$ is approximately the ground state of the transverse-field term $\Pi_{-} ( \Gamma_1 \mb{H}_{\ms{X}} ) \Pi_{-}$ as $\left\| \Jxx \mb{H}_{\ms{XX}} + \mb{H}_{\ms{prob}}^{\ms{low}} \right\| \ll \left\| \Pi_{-} ( \Gamma_1 \mb{H}_{\ms{X}} ) \Pi_{-} \right\|$ when $\Gamma_1 = O\left( N \Jzz + 1 + \tfrac{(n_c - 1)}{4} \Jxx \right)$.

\subsection{Spectral Gap Behavior in Stage~0}
The initial Hamiltonian \( \Gamma_0 \mb{H}_{\ms{X}} \) (with \( \Gamma_0 = 2\Gamma_1 \)) has a spectral gap of \( 2\Gamma_0 \).  
We argue that this spectral gap decreases smoothly from \( 2\Gamma_0 \) to \( 2\Gamma_1 \), without encountering any small-gap regime.

The initial ground state is the uniform superposition over all computational basis states,  
corresponding to the ground state of \( \mb{H}_{\ms{X}} \).  
As the system evolves, the problem Hamiltonian \( \mb{H}_{\ms{problem}}(t) \) and the XX-driver \( \mb{H}_{\ms{XX}} \)  
are gradually turned on, while the transverse field \( \mt{x}(t) \) is smoothly reduced.

Throughout Stage~0, the transverse field remains large, and the suppression of high-energy states in \( \mc{L}_+ \)  
helps ensure that the spectral gap decreases in a controlled manner, without abrupt transitions.  
This can be understood as follows:
\begin{enumerate}
    \item For small \( t \), the transverse field term \( \mt{x}(t) \mb{H}_{\ms{X}} \) dominates,  
    maintaining a large gap between the ground and first excited states.
    \item As \( \mb{H}_{\ms{XX}} \) gradually turns on, it remains small relative to the transverse field.
    \item At intermediate \( t \), the increasing energy penalty \( t \cdot \Jzz^{\ms{clique}} \)  
    progressively separates the \( \mc{L}_+ \) subspace from the low-energy dynamics.
    \item For large \( t \), the system is effectively projected into \( \mc{L}_- \),  
    and the spectral gap approaches \( 2\Gamma_1 \), determined by the final transverse field strength.
\end{enumerate}

Thus, the gap transitions from \( 2\Gamma_0 \) to \( 2\Gamma_1 \) smoothly,  
allowing the adiabatic evolution in Stage~0 to proceed efficiently without risk of gap closure.

\section{Main Analysis of the Two-Stage Dynamics on Bipartite Structures}
\label{sec:stage-main}

We now analyze the main two-stage dynamics of the system Hamiltonian for the two bipartite substructures introduced in Section~\ref{sec:GIC}:  
the \emph{disjoint-structure graph} \( \Gdis \) and the \emph{shared-structure graph} \( \Gshare \), as shown in Figure~\ref{fig:dis-share-graph}.

To simplify notation, we denote the effective Hamiltonian at the start of Stage~1 by
$\Heff(0) := \mb{H}_0(1)^{\ms{eff}}$.
From this point onward, the system evolves under the time-dependent Hamiltonian \( \Heff(t) \), which incorporates the Stage~1 and~2 annealing schedules.  
\emph{When there is no ambiguity, we refer to \( \Heff \) as the full Hamiltonian for Stages~1 and~2,} and all subsequent block decompositions---into same-sign and opposite-sign components---are understood to apply to \( \Heff \).

This section is organized as follows:
\begin{enumerate}
  \item In Section~\ref{sec:sub1}, we derive the block decomposition of the full Hamiltonian into same-sign and opposite-sign components  
  through local angular momentum transformations applied to each clique.
  
  \item In Section~\ref{sec:sub2}, we describe two ways of structurally decomposing the same-sign block Hamiltonian during Stage~1,  
  corresponding to the \( L \)-inner and \( R \)-inner block structures.

  \item In Section~\ref{sec:sub4}, we analyze the two-stage evolution and derive feasibility bounds on \( \Jxx \).  
  We show that, within the feasible regime, the system evolves to the global minimum without encountering an anti-crossing.
  
  \item In Section~\ref{sec:V3}, we illustrate the quantum interference effects using a minimal three-vertex conceptual model (V3),  
  which may also serve as a testbed for experimental validation.
\end{enumerate}

\subsection{Block Decomposition of the Full Hamiltonian: Same-Sign vs.~Opposite-Sign Blocks}
\label{sec:sub1}

We begin by expressing the Hamiltonian \( \Heff \)  
in the angular momentum basis induced by the \XX-driver graph.  
This basis is constructed locally within each clique using projection and transformation operators, as shown in Section~\ref{sec:single-clique}.  
The resulting global basis, denoted \( \Ba \), yields a block-diagonal decomposition of the full Hamiltonian  
into one \emph{same-sign block} and multiple \emph{opposite-sign blocks}.

The angular momentum basis \( \Ba \) is constructed by applying a local transformation within each clique \( C_i \),  
which maps the computational basis on \( C_i \) to an angular momentum basis \( {\Bc'}_i \).  
Because the total Hilbert space is a tensor product,  
these local transformations can be applied independently to each clique,  
with the identity acting on the remainder of the system.  
These local transformations are then combined to define a global transformation on the full system,  
preserving the tensor-product structure and yielding a basis constructed from the local angular momentum decompositions of the cliques defined by the \(\XX\)-driver graph.

The resulting block decomposition emerges hierarchically:  
each clique yields one same-sign sector (an effective spin-\( \tfrac{1}{2} \)) and several opposite-sign sectors (spin-0).  
At the next level, the set of cliques in \( L \)---those forming the \LM{}---induces a bare subsystem with  
the \emph{same-sign sector} \( \mc{C}^{\ms{bare}} \), the \emph{intermediate opposite-sign sectors} \( \mc{W}^{\ms{bare}} \),
and the \emph{all-spin-zero (AS0) opposite-sign sector} \( \mc{Q}^{\ms{bare}} \).
Finally, the full same-sign sector \( \mc{C} \) of the system is defined as
\[
\mc{C} = \mc{C}^{\ms{bare}} \otimes \left( \mathbb{C}^2 \right)^{\otimes m_r},
\]
where \( m_r = |R| \).  
Similarly, the full opposite-sign sectors \( \mc{W} \) and \( \mc{Q} \) are defined by
\[
\mc{W} = \mc{W}^{\ms{bare}} \otimes \left( \mathbb{C}^2 \right)^{\otimes m_r}, \quad
\mc{Q} = \mc{Q}^{\ms{bare}} \otimes \left( \mathbb{C}^2 \right)^{\otimes m_r}.
\]
Thus, $\mc{Q}$ is also referred to as the (AS0) opposite-sign sector of the full system.
The full Hamiltonian then decomposes into the same-sign block \( \mb{H}_{\mc{C}} \) and multiple opposite-sign blocks \( \mb{H}_{\mc{Q}}, \mb{H}_{\mc{W}} \),  
which are either decoupled (in the disjoint case) or coupled (in the shared case).

\paragraph{Notation for Operators in \( \Ba \).}
A bar denotes an operator expressed in the angular momentum basis \( \Ba \), e.g., 
\(\bar{\mb{H}_{\ms{X}}}\) for the transverse-field term, 
\(\bar{\mb{H}_{\ms{P}}}\) for the problem Hamiltonian, 
and \(\bar{\mb{H}_{\XX}}\) for the \(\XX\)-driver term.

The effective Hamiltonian in \( \Ba \) is given by:
\[
\bar{\Heff} := \mt{x} \bar{\mb{H}}_{\ms{X}} + \mt{jxx} \bar{\mb{H}_{\XX}} + \bar{\mb{H}_{\ms{P}}}.
\]


First, we state the resulting block structure in Theorem~\ref{thm:transformed-hamiltonian}, illustrated in Figure~\ref{fig:block-diag}.
We prove the theorem by deriving \( \bar{\Heff} \) for the disjoint-structure graph \( \Gdis \) in Section~\ref{sec:dis}.  
We then describe the modifications required for the shared-structure graph \( \Gshare \) in Section~\ref{sec:share},  
highlighting the critical differences introduced by shared connectivity.

\begin{mdframed}
\begin{theorem}[Block Structure of the Transformed Hamiltonian]
\label{thm:transformed-hamiltonian}
Under the angular momentum basis transformation,  
the effective Hamiltonian \( \bar{\Heff} \) becomes block-diagonal in the basis \( \Ba \):
\[
\bar{\Heff} = \mb{H}_{\mc{C}} \oplus \cdots \oplus \mb{H}_{\mc{Q}} + \Hinter
\]
where
$\Hinter \equiv 0$ in the disjoint case, while in the shared case
\[
\Hinter = \Jzz \sum_{(i,j) \in L \times R} \tfrac{\sqrt{n_i - 1}}{n_i} \mb{T}^{\ms{cq}}_i \shz{j},
\]
where the matrix $\mb{T}^{\ms{cq}}$ (as in Eq.~\eqref{eq:Tcq}) is a $3 \times 3$ off-diagonal operator 
that mixes same-sign and opposite-sign components.
The coupling strength in \( \Hinter \) depends on \( \Jzz \) and clique size \( n_i \),  
but not on the transverse field \( \mt{x} \).

In both the disjoint and shared cases, the Hamiltonians \( \mb{H}_{\mc{C}} \) and \( \mb{H}_{\mc{Q}} \) admit unified expressions:
\begin{align*}
\mb{H}_{\mc{C}} &=
\sum_{i \in L} \mb{B}_i \!\left( \mt{\weff_i},\, \sqrt{n_i}\,\mt{x} \right)
+ \sum_{j \in R} \mb{B}_j \!\left( w_j,\, \mt{x} \right)
+ \Jzz \sum_{(i,j) \in L \times R} f_i^{\ms{C}} \shz{i} \shz{j}, \\
\mb{H}_{\mc{Q}} &=
-\sum_{i \in L} \left( w_i + \tfrac{1}{4}\mt{jxx} \right)
+ \sum_{j \in R} \mb{B}_j \!\left( -w_j + \Jzz \sum_{i \in L} f_i^{\ms{Q}},\, \mt{x} \right),
\end{align*}
with effective coefficients
\[
\mt{\weff_i} = w_i - \tfrac{n_i - 1}{4}\mt{jxx}, \quad
f_i^{\ms{C}} = 
\begin{cases}
1 & \text{for } \Gdis, \\
\tfrac{n_i - 1}{n_i} & \text{for } \Gshare,
\end{cases} \qquad
f_i^{\ms{Q}} = 
\begin{cases}
1 & \text{for } \Gdis, \\
\tfrac{1}{n_i} & \text{for } \Gshare.
\end{cases}
\]

The initial ground state resides in $\mb{H}_{\mc{C}}$ at the beginning of Stage~1. 
\end{theorem}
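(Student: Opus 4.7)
My plan is to prove the theorem by applying the single-clique angular-momentum decomposition (Theorem~\ref{thm:Sop}) locally to each clique in $L$ and then assembling the contributions term by term. The global basis transformation is $\mb{U} := \bigotimes_{i \in L}\Uc_i \otimes \mathbb{I}_R$ acting on $\mc{L}_{-}$, where $\Uc_i$ is the Clebsch--Gordan operator that maps the computational basis on clique $C_i$ to the reordered angular-momentum basis of Section~\ref{sec:spin-operators}. Tensor products of the local spin-$\tfrac{1}{2}$ or spin-$0$ sectors on each clique with computational basis states on $R$ produce the basis $\Ba$, and grouping the resulting tensors by the number of spin-$0$ factors yields the blocks $\mc{C}$ (no spin-$0$), $\mc{W}$ (intermediate), and $\mc{Q}$ (all spin-$0$).

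Next I would transform each term of $\Heff$ separately. For every $j \in R$, combining the single-qubit transverse field with $-w_j\shz{j}$ gives the $2\times 2$ block $\mb{B}_j(w_j,\mt{x})$. For each clique $C_i$, Theorem~\ref{thm:Sop} together with the derivation of Eq.~\eqref{eq:Hsc} shows that the restriction of $\mb{H}_{\ms{X}}$, $\mb{H}_{\XX}$, and $-w_i\shz{i}$ to $C_i$ transforms into $\mb{B}_i(\mt{\weff_i},\sqrt{n_i}\mt{x})$ on the effective spin-$\tfrac{1}{2}$ sector and an offset $-(w_i+\tfrac{1}{4}\mt{jxx})$ on each spin-$0$ sector, while the intra-clique $\Jzz^{\ms{clique}}$ terms are annihilated on $\mc{L}_{-}$ by construction. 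These pieces reproduce the leading two sums in both $\mb{H}_{\mc{C}}$ and $\mb{H}_{\mc{Q}}$.

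The heart of the argument is the inter-cluster coupling $\Jzz\sum_{(i,j)}\shz{i}\shz{j}$. In the disjoint case every vertex of $C_i$ is adjacent to every $j\in R$, so the relevant clique operator is $\sum_{k\in C_i}\shz{k}$, which by Theorem~\ref{thm:Sop} restricts to $\shz{}\oplus 1\oplus\cdots\oplus 1$ in the local basis. The $\shz{}$ entry on the spin-$\tfrac{1}{2}$ sector yields the same-sign coupling with $f_i^{\ms{C}}=1$; the identities on the spin-$0$ sectors yield $f_i^{\ms{Q}}=1$, which is absorbed into the shifted weight inside $\mb{B}_j$; and no inter-block matrix element arises, so $\Hinter\equiv 0$. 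In the shared case only $n_i-1$ non-shared vertices of $C_i$ couple to each $j\in R$. Splitting $C_i$ into subcliques of sizes $n_i-1$ and $1$ as in Section~\ref{sec:partial-coupling}, the partial sum transforms, under $\Ucombine$, into the $3\times 3$ matrix $\mb{T}$ of Eq.~\eqref{eq:opT}, whose diagonal entries $\tfrac{n_i-1}{n_i}$ and $\tfrac{1}{n_i}$ give the claimed $f_i^{\ms{C}}$ and $f_i^{\ms{Q}}$, and whose off-diagonal entries $-\tfrac{\sqrt{n_i-1}}{n_i}$ mix the same-sign and spin-$0$ components of $C_i$; summing these off-diagonal contributions over $(i,j)\in L\times R$ is precisely $\Hinter$ with the claimed operator $\mb{T}^{\ms{cq}}_i$. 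The initial-state claim then follows from Section~\ref{sec:stage-0}: $\Heff(0)$ is dominated by $\Pi_{-}(\Gamma_1\mb{H}_{\ms{X}})\Pi_{-}$, whose action on the spin-$0$ sectors vanishes by Theorem~\ref{thm:Sop}, so its ground state in $\mc{L}_{-}$ lies in the same-sign sector $\mc{C}$ (concretely, the uniform superposition over independent-set configurations, which is manifestly same-sign in the computational basis).

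The main obstacle I anticipate is the bookkeeping in the shared case: because only a proper subset of each clique couples to $R$, Theorem~\ref{thm:Sop} cannot be applied directly, and one must instead pass through the subclique reduction $\Ucombine$ of Section~\ref{sec:partial-coupling} and verify that the sector-mixing stays confined within a single clique---so that $\Hinter$ truly decomposes as a sum of local three-dimensional operators, each tensored with a single $\shz{j}$ on $R$, rather than generating genuinely multi-clique inter-block mixing. Once this locality is established, linearity of the transformation over $(i,j)\in L\times R$ immediately gives the explicit form of $\Hinter$ in the statement.
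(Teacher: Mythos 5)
Your proposal is correct and follows essentially the same route as the paper's derivation in Sections~\ref{sec:dis} and~\ref{sec:share}: apply the local angular-momentum transformation of Theorem~\ref{thm:Sop} clique by clique, transform each Hamiltonian term separately (the paper phrases this as ``clique contraction''), and in the shared case pass through the subclique reduction $\Ucombine$ of Section~\ref{sec:partial-coupling} to obtain $\mb{T}$ and its decomposition into $\bigl(\tfrac{n_c-1}{n_c}\shz{}\bigr)\oplus\bigl(\tfrac{1}{n_c}\bigr)+\tfrac{\sqrt{n_c-1}}{n_c}\mb{T}^{\ms{cq}}$, which yields $f_i^{\ms{C}}$, $f_i^{\ms{Q}}$, and $\Hinter$. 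The locality concern you flag at the end is exactly what the paper resolves via the explicit $\Ucombine$ computation: the remaining $n_i-2$ spin-$0$ factors of the subclique of size $n_i-1$ cannot tensor with the excited state of the singleton within $\mc{L}_-$, so the mixing is confined to the three-dimensional span $\{\ket{1_c},\ket{0_c},\ket{\odot_q}\}$ and $\Hinter$ is indeed a sum of local $3\times 3$ operators, each tensored with a single $\shz{j}$.
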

\end{mdframed}

\begin{figure}[!htbp]
  \centering
  \includegraphics[width=0.62\linewidth]{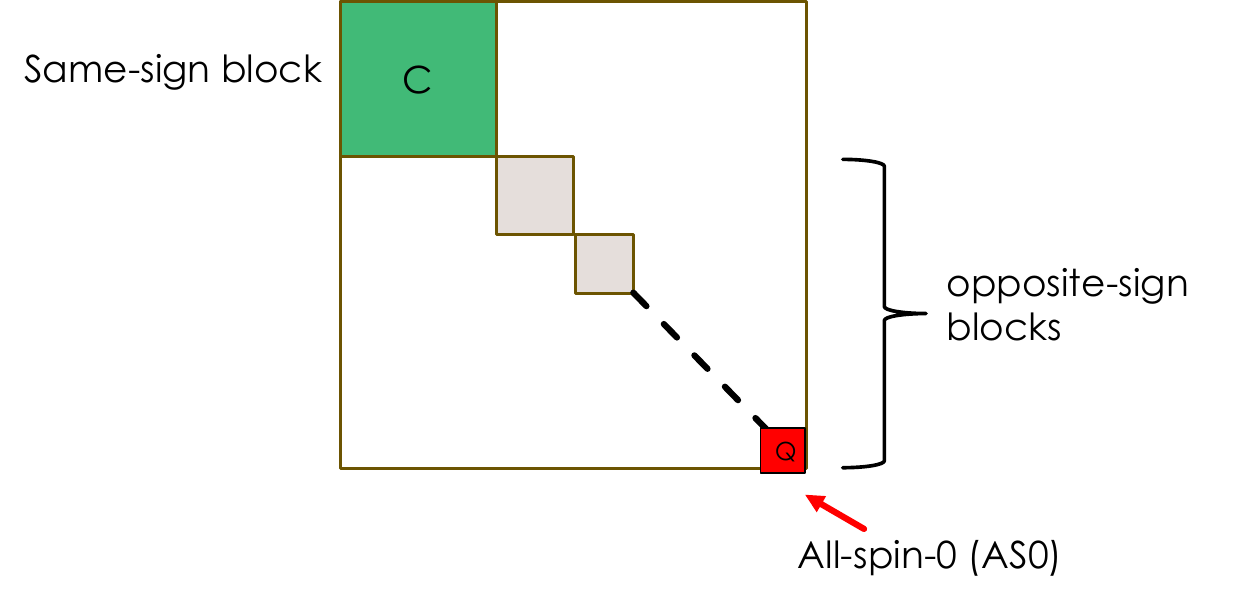}
  \caption{Block-diagonal structure of the transformed Hamiltonian $\bar{\Heff}$.
The same-sign block $\mb{H}_{\mc{C}}$ (green) and the AS0 block $\mb{H}_{\mc{Q}}$ (red)
are the two dominant blocks in the low-energy analysis. Intermediate
opposite-sign blocks are shown in neutral color. In the disjoint case
the blocks are decoupled, while in the shared case they are coupled via
the inter-block term $\Hinter$.}
  \label{fig:block-diag}
\end{figure}

The following corollary follows directly from Theorem~\ref{thm:transformed-hamiltonian}:
\begin{mdframed}
\begin{corollary}[Bare Subsystem Decomposition of \( \mb{H}_{\mc{C}} \)]
\label{cor:bare-decomposition}
The same-sign block Hamiltonian \( \mb{H}_{\mc{C}} \) decomposes
into two bare subsystems coupled via $\ZZ$-couplings:
\begin{align}
\mb{H}_{\mc{C}} = \mb{H}_L^{\ms{bare}} \otimes \mb{I}_R 
+ \mb{I}_L \otimes \mb{H}_R^{\ms{bare}} 
+ \mb{H}_{LR},
\label{eq:same-sign-bare}
\end{align}
where
\[
\mb{H}_L^{\ms{bare}} = \sum_{i \in L} \mb{B}_i^L(\mt{\weff},\, \sqrt{n_c}\,\mt{x}), 
\quad
\mb{H}_R^{\ms{bare}} = \sum_{j \in R} \mb{B}_j^R(w,\,\mt{x}),
\quad
\mb{H}_{LR} = \Jzz \sum_{(i,j) \in L \times R} f_i^{\ms{C}} \shz{i} \shz{j},
\]
with the effective weight \( \mt{\weff} = w - \tfrac{n_c - 1}{4}\,\mt{jxx} \).  
The superscripts in \( \mb{B}_i^L \) and \( \mb{B}_j^R \) indicate the subspace of the tensor product on which each two-level Hamiltonian acts, and are omitted when the context is clear.
\end{corollary}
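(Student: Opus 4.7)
The plan is to derive this corollary as a straightforward regrouping of the three summands in the expression for $\mb{H}_{\mc{C}}$ already provided in Theorem~\ref{thm:transformed-hamiltonian}. The key observation is that, in both the \Gdis{} and \Gshare{} cases, those three summands live on operationally disjoint structural pieces---the effective spin-$\tfrac{1}{2}$ factors associated with the $L$-cliques, the spin-$\tfrac{1}{2}$ factors of the $R$-vertices, and the $\ZZ$ cross terms between them---so the decomposition amounts to reading off which term lives where and packaging each piece as a bare subsystem in the sense of Section~\ref{sec:bare-sub}.

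First I would recall from Eq.~\eqref{eq:MIC-cl} that the $L$-bare subsystem is by definition $\mb{H}_L^{\ms{bare}} = \sum_{i \in L} \mb{B}_i(\mt{\weff_i}, \sqrt{n_i}\,\mt{x})$, with each $\mb{B}_i$ acting on the same-sign effective spin-$\tfrac{1}{2}$ sector of clique $C_i$ obtained via the angular-momentum transformation of Section~\ref{sec:single-clique}. Under the uniform weight and clique-size assumption imposed on the bipartite substructures in Section~\ref{sec:graphs} ($w_i = w$, $n_i = n_c$), the effective weight collapses to the common value $\mt{\weff} = w - \tfrac{n_c - 1}{4}\mt{jxx}$. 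Since each $\mb{B}_i$ is supported only on $L$-degrees of freedom, this piece of $\mb{H}_{\mc{C}}$ is exactly $\mb{H}_L^{\ms{bare}} \otimes \mb{I}_R$.

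Next I would identify the $R$-piece by specializing the bare-subsystem construction of Section~\ref{sec:bare-sub} to $\GM$ viewed as a collection of singleton cliques, in which case $n_j = 1$ forces $\sqrt{n_j} = 1$ and annihilates the $\tfrac{n_j - 1}{4}\mt{jxx}$ shift, leaving each local block as $\mb{B}_j(w, \mt{x})$. Summing over $j \in R$ gives $\mb{H}_R^{\ms{bare}}$, supported only on the $R$-factor, hence contributing $\mb{I}_L \otimes \mb{H}_R^{\ms{bare}}$. The remaining $\ZZ$ term in Theorem~\ref{thm:transformed-hamiltonian} is precisely $\mb{H}_{LR}$ by the stated definition, with the same coefficient $f_i^{\ms{C}}$ in each of the \Gdis{} and \Gshare{} cases.

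There is no real obstacle here: the only small bookkeeping point to check is that the $\shz{i}$ appearing in the cross term refers to the component of $\bar{\Sop{\sZ}}$ acting on the effective spin-$\tfrac{1}{2}$ factor of clique $C_i$, as identified in Theorem~\ref{thm:Sop}, so that the $\shz{i}\shz{j}$ operator indeed factors compatibly with the $\mc{L}_L \otimes \mc{L}_R$ tensor structure used to define $\mb{H}_L^{\ms{bare}} \otimes \mb{I}_R$ and $\mb{I}_L \otimes \mb{H}_R^{\ms{bare}}$. Once that is noted, the corollary follows by inspection from Theorem~\ref{thm:transformed-hamiltonian}.
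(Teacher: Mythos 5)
Your proposal is correct and matches the paper's intent: the paper provides no explicit proof, stating only that the corollary "follows directly from Theorem~\ref{thm:transformed-hamiltonian}", and your regrouping argument—identifying the first sum as $\mb{H}_L^{\ms{bare}} \otimes \mb{I}_R$, the second as $\mb{I}_L \otimes \mb{H}_R^{\ms{bare}}$, and the $\ZZ$ cross term as $\mb{H}_{LR}$, together with the uniform-weight/clique-size specialization that collapses $\weff_i$ to $\weff$—is exactly the intended reasoning. The bookkeeping remark about $\shz{i}$ acting on the effective spin-$\tfrac12$ factor of clique $C_i$ is a useful sanity check but not a gap.
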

\end{mdframed}

\subsubsection{The Disjoint-Structure Graph: Block-Diagonal Structure via Clique Contraction}
\label{sec:dis}

We now turn to the disjoint-structure graph \( \Gdis \),  
illustrated in Figure~\ref{fig:dis-share-graph} of Section~\ref{sec:GIC},  
which serves as the foundation for the block decomposition stated in Theorem~\ref{thm:transformed-hamiltonian}.  
The key idea is that the local angular momentum transformation within each clique  
induces an effective contraction of the graph structure,  
allowing us to derive the block Hamiltonians explicitly.

Recall that within the low-energy subspace of a clique of size \( n_c \),  
the transformed spin operators \( \bar{\Sop{\sZ}}, \bar{\Sop{\X}}, \bar{\Sop{\XX}} \),  
given in Equation~\eqref{eq:transformed-operators} of Theorem~\ref{thm:Sop},  
are block-diagonal in the angular momentum basis.

This local transformation naturally induces a \emph{clique contraction}:  
each clique, whose vertices interact identically with the rest of the graph,  
is collapsed into a single \emph{super-vertex}, carrying the effective operators described above.  
Under this contraction, the graph \( \Gdis \) becomes a bipartite graph \( \Gcnt \),  
in which each clique is replaced by a super-vertex,  
as illustrated in Figure~\ref{fig:contract}.  
Thus, \( L \) consists of the set of super-vertices.

\begin{figure}[h]
  \centering
  \includegraphics[width=0.7\textwidth]{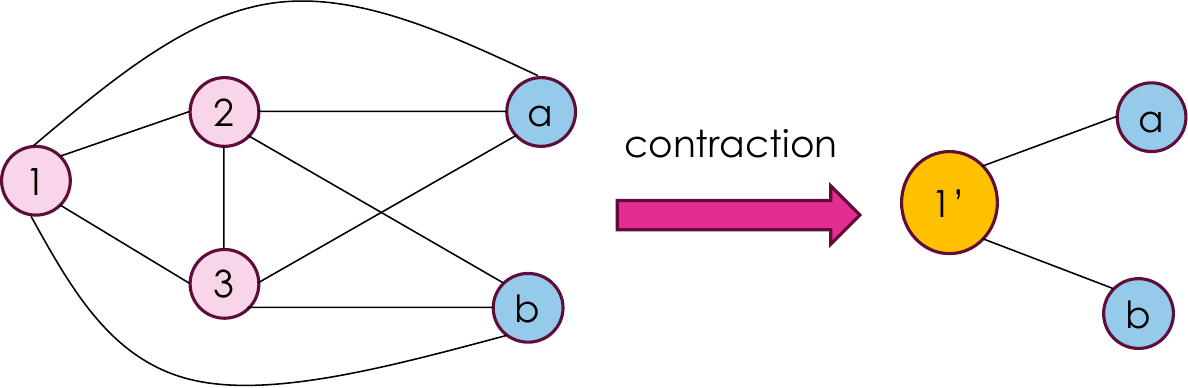} 
  \caption{Clique contraction:  
  A clique (\(\{1,2,3\}\)) whose vertices share identical external connectivity  
  is contracted into a single \emph{super-vertex} (\(1'\)).}
  \label{fig:contract}
\end{figure}

\paragraph{The Transformed Transverse-Field Term.}
For the transverse-field Hamiltonian \( \mb{H}_{\ms{X}} \),  
the transformed operator in \( \Ba \) takes the form:
\begin{align*}
\bar{\mb{H}}_{\ms{X}} &=
\sum_{i \in L} \left( \tfrac{\sqrt{n_i}}{2} \sigma^x_i \oplus 0 \oplus \cdots \oplus 0 \right)
+ \tfrac{1}{2} \sum_{i \in R} \sigma^x_i \\[4pt]
&=
\left( \sum_{i \in L} \tfrac{\sqrt{n_i}}{2} \sigma^x_i + \tfrac{1}{2} \sum_{i \in R} \sigma^x_i \right)
\oplus \cdots \oplus \left( \sum_{i \in L'} \tfrac{\sqrt{n_i}}{2} \sigma^x_i + \tfrac{1}{2} \sum_{i \in R} \sigma^x_i \right)
\oplus \cdots \oplus
\left( \sum_{i \in L} 0 + \tfrac{1}{2} \sum_{i \in R} \sigma^x_i \right),
\end{align*}
where the direct sum runs over the block decomposition induced by \( \Ba \), and the contributions are grouped as follows:
\begin{itemize}
  \item \( \bar{\mb{H}}_{\ms{X}}^{\mc{C}} := \sum_{i \in L} \tfrac{\sqrt{n_i}}{2} \sigma^x_i + \tfrac{1}{2} \sum_{i \in R} \sigma^x_i \)  
    corresponds to the same-sign block \( \mc{C} \).

  \item \( \bar{\mb{H}}_{\ms{X}}^{\mc{Q}} := \tfrac{1}{2} \sum_{i \in R} \sigma^x_i \)  
    corresponds to the AS0 opposite-sign block \( \mc{Q} \), which receives no transverse contribution from \( L \).

  \item \( \bar{\mb{H}}_{\ms{X}}^{\mc{W}} := \sum_{i \in L'} \tfrac{\sqrt{n_i}}{2} \sigma^x_i + \tfrac{1}{2} \sum_{i \in R} \sigma^x_i \)  
  corresponds to the intermediate opposite-sign blocks \( \mc{W} \),  
  where some cliques are in spin-0 sectors and others in spin-\( \tfrac{1}{2} \), denoted by the subset \( L' \subset L \).
\end{itemize}

\paragraph{The Transformed Problem Hamiltonian.}
The Ising interaction term,
\(
\sum_{(i,j) \in \edge(G)} \shz{i} \shz{j},
\)
is mapped under clique contraction to:
\(
\sum_{(i',j') \in \edge(G_{\ms{contract}})} \left( \shz{i'} \oplus 1 \oplus \cdots \oplus 1 \right) \shz{j'},
\)
where \( \edge(G_{\ms{contract}}) = L \times R \).
This yields a block structure:
\begin{align*}
\sum_{(i',j') \in \edge(G_{\ms{contract}})} \shz{i'} \shz{j'} 
\oplus \cdots \oplus 
m_l \sum_{j \in R} \shz{j},
\end{align*}
which highlights the distinct contributions of the same-sign and opposite-sign sectors.

The transformed problem Hamiltonian becomes:
\begin{align*}
\bar{\mb{H}}_{\ms{P}} &=
\sum_{i \in L} (-w_i) \left( \shz{i} \oplus 1 \oplus \cdots \oplus 1 \right)
+ \sum_{i \in R} (-w_i) \shz{i}
+ \Jzz \left(
\sum_{(i,j) \in \edge(G_{\ms{contract}})} \shz{i} \shz{j}
\oplus \cdots \oplus m_l \sum_{j \in R} \shz{j}
\right) \\[5pt]
&=
\left( \sum_{i \in L \cup R} (-w_i) \shz{i}
+ \Jzz \sum_{(i,j) \in \edge(G_{\ms{contract}})} \shz{i} \shz{j} \right)
\oplus \cdots \oplus
\left( \sum_{i \in L} (-w_i) + \sum_{i \in R} (-w_i + m_l \Jzz) \shz{i} \right).
\end{align*}

\noindent
Thus, the same-sign sector behaves as a standard Ising Hamiltonian on the contracted graph \( G_{\ms{contract}} \),  
while each opposite-sign block contributes an energy shift and modifies only the \( R \)-part of the spectrum.

\paragraph{The Transformed \( \XX \)-Driver Term.}
The \( \XX \)-driver modifies the diagonal energies of super-vertices:
\begin{align}
\bar{\mb{H}_{\XX}} =
\sum_{i \in L} \left( \tfrac{n_i - 1}{4} \right) \shz{i}
\oplus \left( -\tfrac{1}{4} \right) \oplus \cdots \oplus \left( -\tfrac{1}{4} \right).
\end{align}

\noindent
This shifts effective vertex-weights as follows:
\begin{itemize}
  \item In the same-sign block:  
  \[
  -w_i \longmapsto -\left( w_i - \tfrac{n_i - 1}{4} \mt{jxx} \right) \quad \text{(energy lifted)};
  \]
  \item In the AS0 opposite-sign block:  
  \[
  -w_i \longmapsto -\left( w_i + \tfrac{1}{4} \mt{jxx}  \right) \quad \text{(energy lowered)}.
  \]
\end{itemize}

\begin{remark}
In the disjoint-structure case, the opposite-sign blocks are, in principle, decoupled from the same-sign block and can be ignored.  
In contrast, in the shared-structure case, weak to moderate couplings emerge between the same-sign and opposite-sign sectors,  
necessitating their inclusion in the analysis.

Because of the block ordering in Section~\ref{sec:block-ordering}, the energy levels of the intermediate opposite-sign blocks are
sandwiched between those of the same-sign block and the AS0 block.
Therefore, for the purpose of analyzing the low-energy spectrum and the dynamical behavior near critical anti-crossings,  
it is sufficient to retain only the same-sign block and the AS0 block.  
All other opposite-sign blocks can be safely ignored.
\end{remark}

\paragraph{Same-Sign Block Hamiltonian.}  
We denote by \( \mb{H}_{\mc{C}}^{\ms{dis}} \) the effective Hamiltonian restricted to the same-sign sector \( \mc{C} \) for the disjoint-structure graph case:  
\begin{align*}
  \mb{H}_{\mc{C}}^{\ms{dis}} &=
  - \mt{x} \bar{\mb{H}}_{\ms{X}}^{\ms{C}} 
  + \bar{\mb{H}}_{\ms{Z}}^{\ms{C}}(\mt{jxx}) 
  + \Jzz \sum_{(i,j) \in \edge(G_{\ms{contract}})} \shz{i} \shz{j},
\end{align*}
where
  \[
  \bar{\mb{H}}_{\ms{Z}}^{\ms{C}}(\mt{jxx}) :=
  \sum_{i \in L} \left( -w_i + \tfrac{n_c - 1}{4} \mt{jxx} \right) \shz{i}
  + \sum_{j \in R} (-w_j) \shz{j}
  \]
collects all diagonal terms linear in \( \shz{i} \), including both the bare vertex weights and the \XX-induced energy shifts.

\paragraph{AS0 Block Hamiltonian.}  
We denote by \( \mb{H}_{\mc{Q}}^{\ms{dis}} \) the effective Hamiltonian restricted to the AS0 sector \( \mc{Q} \) for the disjoint-structure graph case:   
\begin{align*}
  \mb{H}_{\mc{Q}}^{\ms{dis}} &=
  - \mt{x} \bar{\mb{H}}_{\ms{X}}^{\ms{Q}} 
  + \bar{\mb{H}}_{\ms{Z}}^{\ms{Q}}(\mt{jxx}),
\end{align*}
where
\[
\bar{\mb{H}}_{\ms{X}}^{\ms{Q}} = \tfrac12 \sum_{j \in R} \sigma^x_j, 
\quad
\bar{\mb{H}}_{\ms{Z}}^{\ms{Q}}(\mt{jxx}) := \sum_{i \in L} \!\left( -w_i - \tfrac14 \mt{jxx} \right)
  + \sum_{j \in R} \!\left( -w_j + m_l \Jzz \right) \shz{j}.
\]

\noindent
The diagonal term \( \bar{\mb{H}}_{\ms{Z}}^{\ms{Q}} \) reflects a constant energy shift from the \XX-driver applied to \( L \),  
and a coupling-induced energy shift on the \( R \)-vertices, scaled by the number of cliques \( m_l \).  
Note that \( \mb{H}_{\mc{Q}}^{\ms{dis}} \) contains no coupling,  
as the spin-0 components contribute only a scalar constant to the interaction term. \( \mb{H}_{\mc{Q}}^{\ms{dis}} \) is analytically solvable.

\begin{remark}
The differences between the same-sign block $\mb{H}_{\mc{C}}^{\ms{dis}}$  and the AS0 block \( \mb{H}_{\mc{Q}}^{\ms{dis}} \) are:
  \begin{itemize}
    \item The effective transverse-field contribution from \( L \) vanishes:  
    \( \sqrt{n_i} \sigma^x \mapsto 0 \).
    
    \item The operator \( \shz{i} \) is replaced by the scalar \( 1 \)  
    (i.e., \( \shz{i} \mapsto 1 \)), contributing only constant energy shifts.

    \item The \XX-driver term, which raises the energy of each vertex \( i \in L \)  
    by \( \tfrac{n_i - 1}{4} \Jxx \) in the same-sign block,  
    instead \emph{lowers} the energy in the all-spin-zero block by \( \tfrac{1}{4} \Jxx \) per clique.
  \end{itemize}

\end{remark}

This completes the block-level analysis of the disjoint-structure case.  
We now turn to the shared-structure case, where weak inter-block couplings emerge and modify this picture.

\subsubsection{The Shared-Structure Graph: Modification to the Disjoint Case}
\label{sec:share}

The \emph{shared-structure graph} \( \Gshare \), shown in Figure~\ref{fig:dis-share-graph},  
differs from the disjoint-structure graph \( \Gdis \) primarily in its inter-subsystem \(\ZZ\)-couplings.

In the transformed Hamiltonian, the only modification appears in the \(\ZZ\)-interaction terms.  
Unlike the fully connected case in \( \Gdis \), only \( n_c - 1 \) out of \( n_c \) vertices in each clique of size \( n_c \)  
are coupled to each vertex \( j \in R \).  
As shown in Section~\ref{sec:partial-coupling}, this partial adjacency induces a modified coupling structure in the angular momentum basis.


The interaction between an external vertex and the internal basis of a clique of size \( n_c \) is represented by the matrix $\mb{T}$ in Eq.~\eqref{eq:opT}, which decomposes as
\(
\mb{T} = \left( \tfrac{n_c - 1}{n_c} \shz{} \right) \oplus \left( \tfrac{1}{n_c} \right) + \tfrac{\sqrt{n_c - 1}}{n_c} \mb{T}^{\ms{cq}},
\)
with
\begin{equation}
 \mb{T}^{\ms{cq}} =
\begin{bmatrix}
0 & 0 & -1 \\[0pt]
0 & 0 & 0 \\[0pt]
-1 & 0 & 0
\end{bmatrix},
\label{eq:Tcq}
\end{equation}
which mixes same-sign and opposite-sign components.
The coupling magnitude is
\(
\tfrac{\sqrt{n_c - 1}}{n_c} \;\approx\; \tfrac{1}{\sqrt{n_c + 1}},
\)
decreasing with clique size \( n_c \).

The resulting decomposition and sector-wise coupling structure are illustrated in Figure~\ref{fig:Couple-CQ}.
\begin{figure}[!htbp]
  \centering
  \includegraphics[width=0.7\textwidth]{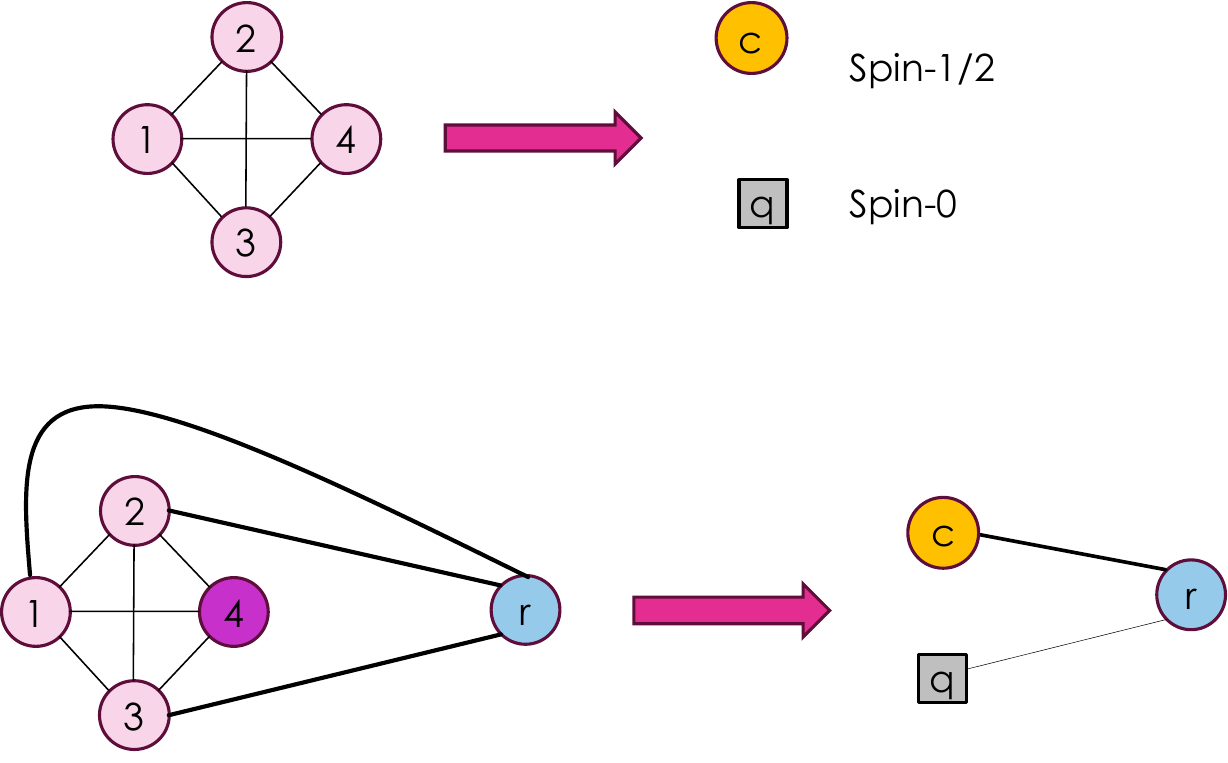} 
  \caption{
  Clique-to-vertex coupling under partial adjacency.  
  A clique (\(\{1,2,3,4\}\)) is transformed into an effective spin-\( \tfrac{1}{2} \) component labeled \(c \) (orange)  
  and a spin-0 component labeled \( q \) (gray square).  
  The external vertex \( r \) (blue) connects to all but one clique vertex (purple).
  The resulting \(\ZZ\)-coupling is split across sectors:  
  \( \tfrac{n_c - 1}{n_c} \Jzz \) between \( c \) and \( r \),  
  and \( \tfrac{1}{n_c} \Jzz \) between \( q \) and \( r \).  
  An additional off-diagonal coupling term of strength \( -\tfrac{\sqrt{n_c - 1}}{n_c} \Jzz \),  
  not shown in the figure, induces mixing between the same-sign and opposite-sign sectors.
  }
  \label{fig:Couple-CQ}
\end{figure}

\paragraph{Modification of \(\ZZ\) Terms.}

The Ising interaction term
\[
\sum_{(i,j) \in \edge(G)} \shz{i} \shz{j}
= \sum_{j \in R} \sum_{i \in L} \sum_{i_k=1}^{n_i -1} \shz{i_k} \shz{j}
\]
is modified under contraction due to the partial connectivity between \( R \) and cliques in \( L \).  
For each \( i \in L,\, j \in R \), the interaction becomes:
\[
\sum_{i_k=1}^{n_i -1} \shz{i_k} \shz{j}
\;\Longrightarrow\; 
\left( \mb{T}_i \oplus 1 \oplus \cdots \oplus 1 \right) \shz{j}
= \mb{T}_i \shz{j} \oplus \underbrace{ \shz{j} \oplus \cdots \oplus \shz{j} }_{n_i - 2 \text{ times}},
\]

Therefore, the full Ising interaction term is transformed into:
\begin{align*}
\sum_{j \in R} \sum_{i \in L} \sum_{i_k=1}^{n_i -1} \shz{i_k} \shz{j}
&= \sum_{j \in R} \sum_{i \in L} \mb{T}_i \shz{j}
\oplus \cdots \oplus m_l \sum_{j \in R} \shz{j}.
\end{align*}

The transformed problem Hamiltonian thus becomes:
\begin{align*}
\bar{\mb{H}}_{\ms{P}}^{\ms{share}} &=
\sum_{i \in L} (-w_i) \left( \shz{i} \oplus 1 \oplus \cdots \oplus 1 \right)
+ \sum_{j \in R} (-w_j) \shz{j}
+ \Jzz \left(
\sum_{j \in R} \sum_{i \in L} \left( \mb{T}_i \shz{j} \right)
\oplus \cdots \oplus m_l \sum_{j \in R} \shz{j}
\right) \\
&=
\left( \sum_{i \in L \cup R} (-w_i) \shz{i}
+ \Jzz \sum_{(i,j) \in L \times R} \mb{T}_i \shz{j} \right)
\oplus \cdots \oplus
\left( \sum_{i \in L} (-w_i) + \sum_{j \in R} (-w_j + m_l \Jzz) \shz{j} \right).
\end{align*}

\noindent
In summary, the interaction term \( \shz{i} \shz{j} \) is replaced by the effective operator \( \mb{T}_i \shz{j} \).

\begin{mdframed}[linewidth=0.5pt]
The full Hamiltonian for the shared-state case is thus:
\begin{align*}
\bar{\Heff} 
  &=
- \mt{x} \left( \bar{\mb{H}}_{\ms{X}}^{\mc{C}} \oplus \ldots  \oplus \bar{\mb{H}}_{\ms{X}}^{\mc{Q}} \right)
+ \left( \bar{\mb{H}}_{\ms{Z}}^{\mc{C}}(\mt{jxx}) \oplus \ldots \oplus \bar{\mb{H}}_{\ms{Z}}^{\mc{Q}}(\mt{jxx}) \right) 
+ \Jzz \sum_{(i,j) \in L \times R} \left( \tfrac{n_i - 1}{n_i} \shz{i} \oplus \tfrac{1}{n_i} + \mb{T}^{\ms{cq}}_i \right) \shz{j} \\
&= \left(\mb{H}_{\mc{C}}^{\ms{share}} \oplus \ldots \oplus \mb{H}_{\mc{Q}}^{\ms{share}} \right)
+ \Hinter
\end{align*}
with
\[
\begin{aligned}
\mb{H}_{\mc{C}}^{\ms{share}} &=
- \mt{x} \bar{\mb{H}}_{\ms{X}}^{\mc{C}} 
+ \bar{\mb{H}}_{\ms{Z}}^{\mc{C}}(\mt{jxx}) 
+   \Jzz \sum_{(i,j) \in \edge(G_{\ms{contract}})} \tfrac{n_i - 1}{n_i} \shz{i} \shz{j}, \\[4pt]
\mb{H}_{\mc{Q}}^{\ms{share}} &=
- \mt{x} \bar{\mb{H}}_{\ms{X}}^{\mc{Q}} 
+ \sum_{i \in L} \left( -w_i - \tfrac{1}{4} \mt{jxx} \right)
+ \sum_{j \in R} \left( -w_j +   \Jzz \sum_{i \in L}  \tfrac{1}{n_i} \right) \shz{j}, \\[4pt]
\Hinter &= \Jzz \sum_{(i,j) \in L \times R} \tfrac{\sqrt{n_i - 1}}{n_i} \mb{T}^{\ms{cq}}_i \shz{j}.
\end{aligned}
\]

The inter-block coupling magnitude depends on \( \Jzz \), \( \tfrac{1}{\sqrt{n_i + 1}} \), and \( m_l \) (size of $L$), but notably is independent of the time or the transverse field.
\end{mdframed}
 
\subsection{Inner Decompositions of the Same-Sign Block during Stage~1}
\label{sec:sub2}

This subsection develops two inner decompositions of the same-sign block \( \mb{H}_{\mc{C}} \) ($L$-inner and $R$-inner)\footnote{The decompositions in Section~\ref{sec:sub1} and Section~\ref{sec:sub2} are both structural,  
but in fundamentally different senses:  
the former is physical, reflecting a true partition of the Hilbert space into dynamically distinct sectors;  
the latter is combinatorial, reorganizing the matrix to reveal subsystem indexing, without affecting the underlying dynamics.}, illustrates them through a concrete example, and introduces the corresponding notions of $L$- and $R$-localization.

\subsubsection{Two Block Decompositions of \( \mb{H}_{\mc{C}} \): \( L \)-Inner vs.~\( R \)-Inner}
Physically, the same-sign block Hamiltonian \( \mb{H}_{\mc{C}} \) is symmetric under interchange of the \( L \) and \( R \) subsystems: permuting the tensor factors leaves the spectrum unchanged.  
Mathematically, this corresponds to a permutation similarity transformation: reordering the basis to place \( L \) before \( R \), or vice versa, yields a matrix with identical eigenvalues.
Combinatorially, this symmetry allows the matrix to be organized into a two-layer block structure with either subsystem---\( L \) or \( R \)---serving as the ``inner'' block.  
That is, \( \mb{H}_{\mc{C}} \) can be expressed either as a collection of \( L \)-blocks indexed by the states from \( R \), or as \( R \)-blocks indexed by the states from \( L \).

For illustration, we assume uniform clique size \( n_i = n_c \) for all \( i \).  
We apply Theorem~\ref{thm:explicit-symmetric-reduction} to restrict the same-sign block Hamiltonian
in Eq.~\eqref{eq:same-sign-bare} to the symmetric subspace:
\begin{align*}
  \mb{H}_{\mc{C}}^{\ms{sym}}
&= \mb{H}_L^{\ms{bare}} \otimes \mb{I}_R + \mb{I}_L \otimes \mb{H}_R^{\ms{bare}} + \mb{H}_{LR},
\end{align*}
where
\[
\mb{H}_L^{\ms{bare}} = - \sqrt{n_c}\,\mt{x}\,\mb{C}\Sop{\X}(m) - \weff\,\mb{C}\Sop{\sZ}(m),
\quad
\mb{H}_R^{\ms{bare}} = - \mt{x}\,\mb{C}\Sop{\X}(m_r) - w\,\mb{C}\Sop{\sZ}(m_r),
\]
and
\[
\mb{H}_{LR} = \Jzz^{\ms{C}}\, \mb{C}\Sop{\sZ}(m)\,\mb{C}\Sop{\sZ}(m_r), \quad \mbox{ with }
\Jzz^{\ms{C}} = \Jzz f_c^{\ms{C}} 
 = 
\begin{cases}
\Jzz & \text{for } \Gdis, \\
\tfrac{n_c - 1}{n_c} \Jzz & \text{for } \Gshare.
\end{cases}
\]
Here, \( \mb{C}\Sop{\sZ}(m) \) and \( \mb{C}\Sop{\X}(m) \) denote the collective Pauli operators restricted to the symmetric
subspace of \( m \) spins, as defined in Eqs.~\eqref{eq:CSZ} and~\eqref{eq:CSX}, and all \(^{\ms{bare}}\) operators are understood in this context to be restricted to the symmetric subspace.

We illustrate this dual block structure through a concrete example with \( m = 2 \) and \( m_r = 3 \) in the symmetric subspace, showing (i) the basis structure (Figure~\ref{fig:L-R-inner}) and (ii) the explicit matrix form.

\paragraph{Visual and Combinatorial Structure.}
Combinatorially, the symmetric subspace contains \( (m + 1)(m_r + 1) \) basis states.  
These states can be organized either as the \( L \)-inner decomposition  
(grouping by \( L \), with blocks indexed by spin-up count in \( R \)),  
or as the \( R \)-inner decomposition (grouping by \( R \), with blocks indexed by spin-up count in \( L \)).

See Figure~\ref{fig:L-R-inner} for an illustration.
\begin{figure}[!htbp]
  \centering
  \includegraphics[width=0.6\textwidth]{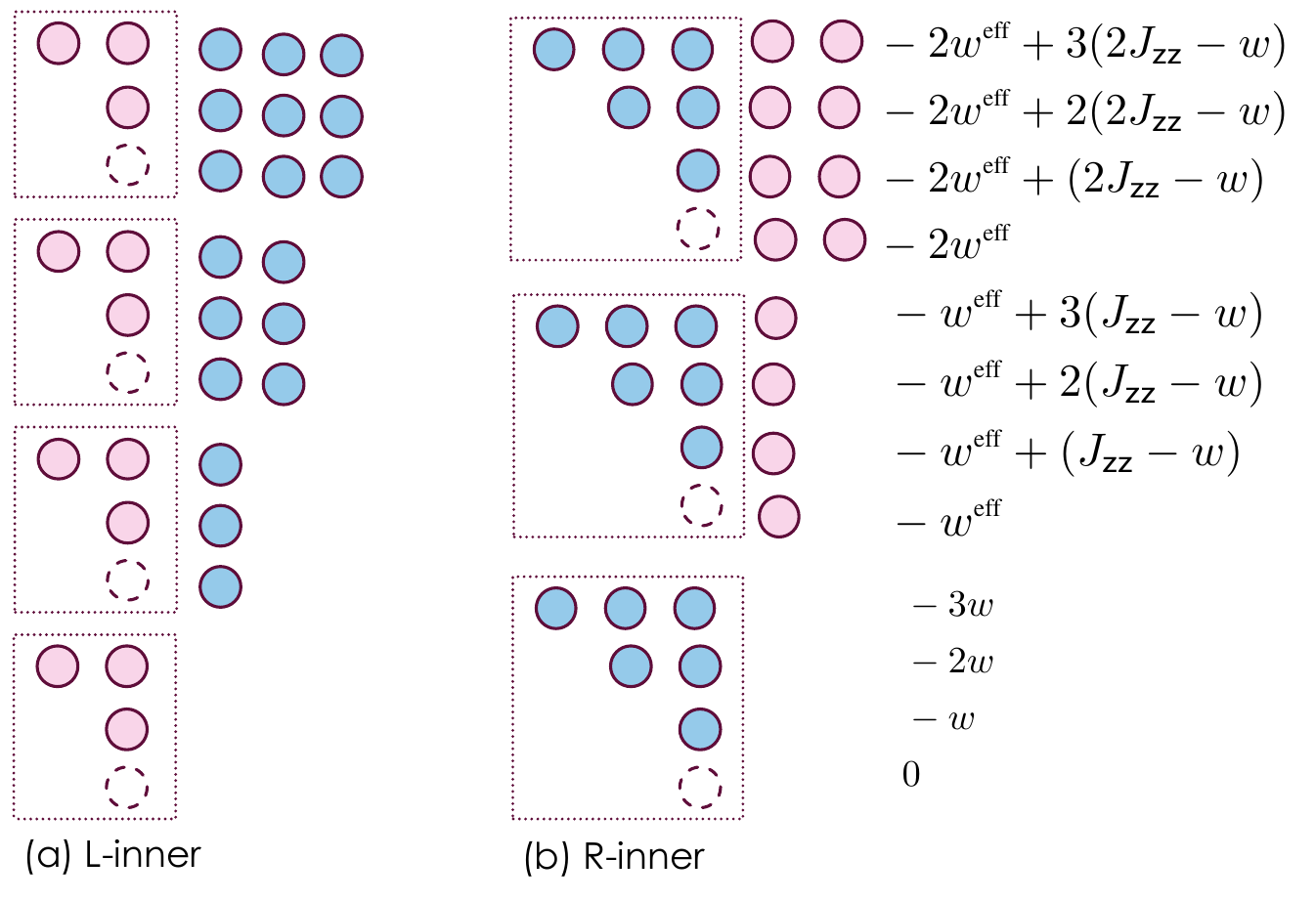}
\caption{
  Two possible orderings of the basis states in the symmetric subspace of the same-sign block \(
  \mb{H}_{\mc{C}}^{\ms{sym}} \), illustrated for \( m = 2 \), \( m_r = 3 \).  
(a) \emph{\( L \)-inner ordering}: basis states are grouped by the number of spin-up qubits in the \( R \) subsystem (blue),  
with each group forming a block over the \( L \) subsystem (pink).  
(b) \emph{\( R \)-inner ordering}: basis states are grouped by spin-up count in \( L \) (pink),  
with each group forming a block over \( R \) (blue).  
The dashed circle marks the empty-set basis state (no spin-ups).  
Each blue circle contributes energy \( -w \); each pink circle contributes \( -\weff \);  
each blue-pink pair incurs a coupling penalty of \( +\Jzz \).  
The total energy of each basis state is shown on the right. Energies within each block vary monotonically.  
To ensure that the block energies in (b) decrease monotonically from top to bottom, we require  
\( - m \weff > -\weff + m_r(\Jzz - w) \).  
This condition forms the basis for deriving the Steering lower bound \( \Jxxsteer \).
}
\label{fig:L-R-inner}
\end{figure}

\paragraph{Explicit Matrix Representation.}
We now present the explicit matrix form of \( \mb{H}_{\mc{C}}^{\ms{sym}} \) in the symmetric subspace,  
corresponding to the two decompositions shown in Figure~\ref{fig:L-R-inner}.  
In both cases, the matrix has dimension \( 12 \times 12 \),  
but the block layout differs depending on the ordering of basis states.

Each diagonal entry reflects the total energy of a basis state.
Off-diagonal entries arise from the transverse-field term, which connects basis states differing by a single spin flip.

\paragraph{\( L \)-inner block decomposition:}
There are \( m_r + 1 = 4 \) outer-layer blocks, each a \( (m+1) \times (m+1) \) matrix acting on the \( L \) subsystem.
The full matrix representation is denoted \( \mb{H}_{\mc{C}}^{{L\!-\!\ms{inner}}} =\)

\begin{center}
  \scalebox{0.55}{$
\left(
\begin{array}{ccc|ccc|ccc|ccc}
 6 \Jzz - 3w - 2\weff & -\frac{\sqrt{\text{nc}}\, x}{\sqrt{2}} & 0 & -\frac{1}{2} \sqrt{3}x & 0 & 0 & 0 & 0 & 0 & 0 & 0 & 0 \\
 -\frac{\sqrt{\text{nc}}\, x}{\sqrt{2}} & 3 \Jzz - 3w - \weff & -\frac{\sqrt{\text{nc}}\, x}{\sqrt{2}} & 0 & -\frac{1}{2} \sqrt{3}x & 0 & 0 & 0 & 0 & 0 & 0 & 0 \\
 0 & -\frac{\sqrt{\text{nc}}\, x}{\sqrt{2}} & -3w & 0 & 0 & -\frac{1}{2} \sqrt{3}x & 0 & 0 & 0 & 0 & 0 & 0 \\
 \hline
 -\frac{1}{2} \sqrt{3}x & 0 & 0 & 4 \Jzz - 2w - 2\weff & -\frac{\sqrt{\text{nc}}\, x}{\sqrt{2}} & 0 & -x & 0 & 0 & 0 & 0 & 0 \\
 0 & -\frac{1}{2} \sqrt{3}x & 0 & -\frac{\sqrt{\text{nc}}\, x}{\sqrt{2}} & 2 \Jzz - 2w - \weff & -\frac{\sqrt{\text{nc}}\, x}{\sqrt{2}} & 0 & -x & 0 & 0 & 0 & 0 \\
 0 & 0 & -\frac{1}{2} \sqrt{3}x & 0 & -\frac{\sqrt{\text{nc}}\, x}{\sqrt{2}} & -2w & 0 & 0 & -x & 0 & 0 & 0 \\
 \hline
 0 & 0 & 0 & -x & 0 & 0 & 2 \Jzz - w - 2\weff & -\frac{\sqrt{\text{nc}}\, x}{\sqrt{2}} & 0 & -\frac{1}{2} \sqrt{3}x & 0 & 0 \\
 0 & 0 & 0 & 0 & -x & 0 & -\frac{\sqrt{\text{nc}}\, x}{\sqrt{2}} & \Jzz - w - \weff & -\frac{\sqrt{\text{nc}}\, x}{\sqrt{2}} & 0 & -\frac{1}{2} \sqrt{3}x & 0 \\
 0 & 0 & 0 & 0 & 0 & -x & 0 & -\frac{\sqrt{\text{nc}}\, x}{\sqrt{2}} & -w & 0 & 0 & -\frac{1}{2} \sqrt{3}x \\
 \hline
 0 & 0 & 0 & 0 & 0 & 0 & -\frac{1}{2} \sqrt{3}x & 0 & 0 & -2\weff & -\frac{\sqrt{\text{nc}}\, x}{\sqrt{2}} & 0 \\
 0 & 0 & 0 & 0 & 0 & 0 & 0 & -\frac{1}{2} \sqrt{3}x & 0 & -\frac{\sqrt{\text{nc}}\, x}{\sqrt{2}} & -\weff & -\frac{\sqrt{\text{nc}}\, x}{\sqrt{2}} \\
 0 & 0 & 0 & 0 & 0 & 0 & 0 & 0 & -\frac{1}{2} \sqrt{3}x & 0 & -\frac{\sqrt{\text{nc}}\, x}{\sqrt{2}} & 0 \\
\end{array}
\right)
$}
\end{center}

\paragraph{\( R \)-inner block decomposition:} There are \( m + 1 = 3 \) outer-layer blocks, each a \( (m_r+1) \times (m_r+1) \) matrix
acting on the \( R \) subsystem.
The full matrix representation is denoted \( \mb{H}_{\mc{C}}^{{R\!-\!\ms{inner}}} =\)
\begin{center}
  \scalebox{0.55}{$
\left(
\begin{array}{cccc|cccc|cccc}
 6 \Jzz - 3w - 2\weff & -\frac{1}{2} \sqrt{3}x & 0 & 0 & -\frac{\sqrt{\text{nc}}\, x}{\sqrt{2}} & 0 & 0 & 0 & 0 & 0 & 0 & 0 \\
 -\frac{1}{2} \sqrt{3}x & 4 \Jzz - 2w - 2\weff & -x & 0 & 0 & -\frac{\sqrt{\text{nc}}\, x}{\sqrt{2}} & 0 & 0 & 0 & 0 & 0 & 0 \\
 0 & -x & 2 \Jzz - w - 2\weff & -\frac{1}{2} \sqrt{3}x & 0 & 0 & -\frac{\sqrt{\text{nc}}\, x}{\sqrt{2}} & 0 & 0 & 0 & 0 & 0 \\
 0 & 0 & -\frac{1}{2} \sqrt{3}x & -2\weff & 0 & 0 & 0 & -\frac{\sqrt{\text{nc}}\, x}{\sqrt{2}} & 0 & 0 & 0 & 0 \\
 \hline
 -\frac{\sqrt{\text{nc}}\, x}{\sqrt{2}} & 0 & 0 & 0 & 3 \Jzz - 3w - \weff & -\frac{1}{2} \sqrt{3}x & 0 & 0 & -\frac{\sqrt{\text{nc}}\, x}{\sqrt{2}} & 0 & 0 & 0 \\
 0 & -\frac{\sqrt{\text{nc}}\, x}{\sqrt{2}} & 0 & 0 & -\frac{1}{2} \sqrt{3}x & 2 \Jzz - 2w - \weff & -x & 0 & 0 & -\frac{\sqrt{\text{nc}}\, x}{\sqrt{2}} & 0 & 0 \\
 0 & 0 & -\frac{\sqrt{\text{nc}}\, x}{\sqrt{2}} & 0 & 0 & -x & \Jzz - w - \weff & -\frac{1}{2} \sqrt{3}x & 0 & 0 & -\frac{\sqrt{\text{nc}}\, x}{\sqrt{2}} & 0 \\
 0 & 0 & 0 & -\frac{\sqrt{\text{nc}}\, x}{\sqrt{2}} & 0 & 0 & -\frac{1}{2} \sqrt{3}x & -\weff & 0 & 0 & 0 & -\frac{\sqrt{\text{nc}}\, x}{\sqrt{2}} \\
 \hline
 0 & 0 & 0 & 0 & -\frac{\sqrt{\text{nc}}\, x}{\sqrt{2}} & 0 & 0 & 0 & -3w & -\frac{1}{2} \sqrt{3}x & 0 & 0 \\
 0 & 0 & 0 & 0 & 0 & -\frac{\sqrt{\text{nc}}\, x}{\sqrt{2}} & 0 & 0 & -\frac{1}{2} \sqrt{3}x & -2w & -x & 0 \\
 0 & 0 & 0 & 0 & 0 & 0 & -\frac{\sqrt{\text{nc}}\, x}{\sqrt{2}} & 0 & 0 & -x & -w & -\frac{1}{2} \sqrt{3}x \\
 0 & 0 & 0 & 0 & 0 & 0 & 0 & -\frac{\sqrt{\text{nc}}\, x}{\sqrt{2}} & 0 & 0 & -\frac{1}{2} \sqrt{3}x & 0 \\
\end{array}
\right)
$}
\end{center}

Notice that each inner-block Hamiltonian is a linear shift of a base Hamiltonian:
\begin{align}
\mb{H}_L^{(r)} &= \mb{H}_L^{(0)} + r \cdot \mb{S}_L, \quad
\mb{H}_R^{(l)} = \mb{H}_R^{(0)} + l \cdot \mb{S}_R,
\label{eq:inner-block-linear-shift}
\end{align}
for \( r \in \{0, 1, \dots, m_r\} \) and \( l \in \{0, 1, \dots, m\} \), where the shift matrices are given by:
\begin{align*}
\mb{S}_L =
\begin{bmatrix}
m \Jzz - w  & \cdots & 0 & 0 \\
\vdots  & \ddots & \vdots & \vdots \\
0  & \cdots & \Jzz - w & 0 \\
0  & \cdots & 0 & -w
\end{bmatrix}, \quad
\mb{S}_R =
\begin{bmatrix}
m_r \Jzz - \weff  & \cdots & 0 & 0 \\
\vdots  & \ddots & \vdots & \vdots \\
0  & \cdots & \Jzz - \weff & 0 \\
0  & \cdots & 0 & -\weff
\end{bmatrix}.
\end{align*}

\begin{remark}
The zero-indexed blocks correspond directly to the bare subsystems:
\[
\mb{H}_R^{(0)} = \ket{0}_L\bra{0} \otimes \mb{H}_R^{\ms{bare}}, 
\qquad
\mb{H}_L^{(0)} = \mb{H}_L^{\ms{bare}} \otimes \ket{0}_R\bra{0}.
\]
Thus there are two possible inner decompositions of \( \mb{H}_{\mc{C}} \):  
the \( L \)-inner blocks \( \mb{H}_L^{(r)} \) indexed by \( r \),  
and the \( R \)-inner blocks \( \mb{H}_R^{(l)} \) indexed by \( l \).  
For brevity, we will often refer to these simply as the \(L\)-blocks and \(R\)-blocks.
\end{remark}

\subsubsection{Definition of Structural Localization}
\label{sec:struct-localization}

We define a quantitative notion of localization
of the ground state of the same-sign block \( \mb{H}_{\mc{C}} \), based on the inner decompositions.
We begin with the \emph{\( R \)-localization} defined via the \( R \)-inner decomposition; the definition of \( L \)-localization is analogous.

Recall from Eq.~\eqref{eq:inner-block-linear-shift} that
\[
\mb{H}_R^{(l)} = \mb{H}_R^{(0)} + l \cdot \mb{S}_R,
\]
with the block energies ordered in decreasing order when \( -\weff= -w + \tfrac{n_c - 1}{4} \Jxx > 0 \) or \( \Jxx \) is sufficiently large.

\begin{definition}[Structural Localization]
Let \( \{P_R^{(l)}\} \) denote the projectors onto the \( \mb{H}_R^{(l)} \)-inner block, for \(l=0,\ldots,m\).
For parameters \( k \ll m \) and tolerance \( \epsilon > 0 \), we say the ground state 
\( \ket{\psi(t)} \) of \( \mb{H}_{\mc{C}}(t) \) is 
\emph{\( R \)-localized up to depth \( k \)} if the cumulative overlap
\[
\sum_{l=0}^{k-1} \| P_R^{(l)} \ket{\psi(t)} \|^2 \;\geq\; 1 - \epsilon.
\]
\end{definition}

\noindent
In words, the ground state amplitude is almost entirely concentrated within the lowest \( k \) 
\( R \)-blocks.

In Section~\ref{sec:stage1}, we will show that if \( \Jxx \) satisfies the \emph{Steering} lower bound (derived in Section~\ref{sec:Jxx-bounds}), the ground state is smoothly steered into the \( R \)-localized region by the end of Stage~1.  
In contrast, for \( \Jxx = 0 \),
the ground state exhibits \( L \)-localization with dominant weight on \( \mb{H}_L^{(0)} \).
This case is analyzed in detail in the companion paper~\cite{Choi-Limitation}.

\subsection{Two-Stage Evolution and Feasibility Bounds on $\Jxx$}
\label{sec:sub4}

Our goal is to show that, for an appropriate choice of \( \Jxx \) and \( \Jzz \),
the system evolves from the initial ground state of $\Heff$ to the global minimum (\GM{}) without encountering an anti-crossing.
The correctness of the algorithm critically depends on
four feasibility bounds on \( \Jxx \)(Table~\ref{tab:JxxBounds}),
together with two upper bounds on \( \Jzz \): \( \Jzzinter \) and \( \Jzzsteer \).

In Section~\ref{sec:Jxx-bounds}, we derive these bounds analytically, and show that for \(\Jzz \le \Jzzsteer\) 
there exists a nonempty feasible window for \( \Jxx \).
To give intuition before the detailed derivations, we briefly summarize how each bound 
contributes to maintaining a smooth two-stage evolution.
The Stage-Separation bound together with \(\Jzzinter\) ensure that Stage~1 is effectively confined to the same-sign block, 
allowing Stage~1 and Stage~2 to be analyzed separately; 
the Lifting bound ensures the original anti-crossing is removed; 
the Steering bound directs the system smoothly into the \(R\)-region (bypassing tunneling) during Stage~1,
and Stage~2 is secured by the Sinking bound, which prevents the emergence of a new anti-crossing when the lowest opposite-sign block participates. 
The analysis, supported by numerical results, of Stage~1 and Stage~2 is presented in Section~\ref{sec:stage1} and Section~\ref{sec:stage2}, respectively.

\subsubsection{Analytical Bounds on \(\Jxx\) and the Feasibility Window}
\label{sec:Jxx-bounds}
Recall that \(\Jxx\) produces a \emph{see-saw effect}:  
it raises the energy associated with \LM{} in the same-sign block while simultaneously lowering the energy in the opposite-sign blocks.  
This effect introduces a fundamental trade-off: if \( \Jxx \) is too small, the system fails to lift the local minima in the same-sign block high enough;  
if it is too large, it lowers the opposite-sign state too far, introducing new anti-crossings.  
As a result, the success of the algorithmic design depends critically on choosing an appropriate value of \( \Jxx \).
These considerations motivate four analytical feasibility bounds on \( \Jxx \)
 summarized in Table~\ref{tab:JxxBounds}.
\begin{table}[h!]
\centering
\small
\begin{tabular}{@{}lll@{}}
\toprule
\textbf{Name} & \textbf{Notation} & \textbf{Purpose} \\
\midrule
\textbf{(i) Stage-Separation Upper Bound} & \( \Jxxsep \) &
Ensure the ground state of the same-sign block is the lowest during Stage~1 \\

\textbf{(ii) Lifting Lower Bound} & \( \Jxxlift \) &
Lift \LM{} energy in the same-sign block high enough during Stage~2 \\

\textbf{(iii) Steering Lower Bound} & \( \Jxxsteer \) &
Ensure smooth transition into the \( R \)-localized region during Stage~1 \\

\textbf{(iv) Sinking Upper Bound} & \( \Jxxsink \) &
Prevent the AS0 opposite-sign block from dropping too low in Stage~2 \\
\bottomrule
\end{tabular}
\caption{Summary of the four feasibility bounds on \( \Jxx \). (Note: the Sinking Upper Bound is required only in the shared-structure case.)}
\label{tab:JxxBounds}
\end{table}

For simplicity, we assume uniform clique size, \( n_i = n_c \) for all \( i \),
and remark on the changes needed for each bound when clique sizes vary.
We also set \( w = 1 \) in the following analysis.

We derive the following explicit bounds (below) in terms of \( \Gamma_2 \), \( \Jzz \), \( m \), and the unknown quantities \( m_r \) and \( m_g \):
\begin{align}
\left\{
\begin{aligned}
\Jxxlift &=   2 \tfrac{m}{m_g} \Gamma_2 , \\
\Jxxsteer &= \tfrac{4}{n_c - 1}[m_r(\Jzz - 1) + 1], \\
\Jxxsep &=  2(\Gamma_2 - 1), \\
\Jxxsink &=  2(\Gamma_2 - 1) + \tfrac{2 m_r}{n_c} \Jzz.
\end{aligned}
\right.
\tag{Jxx-Bounds} \label{eq:Jxx-bounds}
\end{align}

These analytical bounds are approximate but conservative, derived from the bare (decoupled) block energies and perturbative shifts.
They suffice to demonstrate a nonempty feasible window for \( \Jxx \).  
In practice, additional values may succeed beyond the bounds and can be identified through numerical exploration or adaptive tuning.

\subsubsection*{Feasible \(\Jxx\) Window}
Notice that the lower bound \( \Jxxsteer \) depends on \( \Jzz \).  
If \( \Jzz \) is too large, the required \( \Jxx \) may exceed the upper bounds, resulting in an empty feasible window.  
To avoid this, we now show that the four bounds are compatible under a reasonable condition on \( \Jzz \), thereby ensuring a nonempty feasible window.

\begin{mdframed}[linewidth=1pt]
\begin{theorem}
\label{thm:feasible-window}
Assume uniform clique size \( n_i = n_c \) for all \( i \), and suppose \( m \geq 3 \) and \( m_r \geq 2 \).  
Use \( \Gamma_2 = m \).  
If
\(
\Jzz \le \Jzzsteer := 1 + \tfrac{(n_c - 1)(m - 1) - 2}{2m_r},
\)
then setting
\(
\Jxx = 2(m - 1),
\)
yields
\(
\max\{ \Jxxlift,\, \Jxxsteer \} \leq \Jxx \leq \min\{\Jxxsep, \Jxxsink\},
\)
where explicit formulas for all bounds are derived in equations~\eqref{eq:Jxx-bounds}.
\end{theorem}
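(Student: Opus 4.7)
The plan is to verify the four stated inequalities directly by substituting the prescribed parameter values, $\Gamma_2 = m$ and $\Jxx = 2(m-1)$, into the explicit formulas~\eqref{eq:Jxx-bounds} and then checking each comparison in turn. Since the formulas for $\Jxxlift$, $\Jxxsteer$, $\Jxxsep$, and $\Jxxsink$ are given (they are derived in the subsequent sections from the bare-subsystem analysis), the theorem reduces to elementary algebraic verification of a chain of inequalities, organized by which bound is being compared to $\Jxx$.

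I would proceed in four steps, handling the two easy bounds first. \textbf{Step 1 (Stage-Separation).} With $\Gamma_2 = m$, the formula gives $\Jxxsep = 2(m-1) = \Jxx$, so this bound is saturated by construction. \textbf{Step 2 (Sinking).} Since $\Jzz > 0$ and $m_r, n_c > 0$, the sinking bound satisfies $\Jxxsink = 2(m-1) + \tfrac{2 m_r}{n_c}\Jzz > 2(m-1) = \Jxx$, so $\Jxx \le \Jxxsink$ holds unconditionally. \textbf{Step 3 (Steering).} Using the hypothesis $\Jzz \le \Jzzsteer = 1 + \tfrac{(n_c-1)(m-1)-2}{2m_r}$, multiply through to get $m_r(\Jzz - 1) \le \tfrac{(n_c-1)(m-1)-2}{2}$, hence $m_r(\Jzz-1) + 1 \le \tfrac{(n_c-1)(m-1)}{2}$. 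Substituting into the formula for $\Jxxsteer$ yields $\Jxxsteer \le \tfrac{4}{n_c-1}\cdot\tfrac{(n_c-1)(m-1)}{2} = 2(m-1) = \Jxx$, which is exactly the required direction. Note that this is the step that determined the form of $\Jzzsteer$ in reverse: it is the largest $\Jzz$ for which the steering bound does not exceed the stage-separation bound.

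\textbf{Step 4 (Lifting).} This is the step I expect to be the main obstacle, because $\Jxxlift = 2\tfrac{m}{m_g}\Gamma_2$ depends on the size $m_g$ of the global minimum, which differs between $\Gdis$ and $\Gshare$. For the shared-structure case (which is the worst case subsuming $\Gdis$ as a substructure), $m_g = m + m_r$, and the inequality $\Jxx \ge \Jxxlift$ becomes $2(m-1) \ge \tfrac{2m^2}{m+m_r}$, i.e.\ $(m-1)(m+m_r) \ge m^2$, which simplifies to $m_r(m-1) \ge m$, equivalently $m_r \ge \tfrac{m}{m-1}$. Under the hypotheses $m \ge 3$ and $m_r \ge 2$, we have $\tfrac{m}{m-1} \le \tfrac{3}{2} \le m_r$, so the inequality holds. (The boundary case $m=3, m_r=2$ works because $\tfrac{3}{2} \le 2$, which is why both hypotheses are needed simultaneously.)

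Combining the four steps, the chain $\max\{\Jxxlift, \Jxxsteer\} \le 2(m-1) \le \min\{\Jxxsep, \Jxxsink\}$ holds, establishing the feasibility window. The only delicate part is the lifting inequality in Step~4: the numerical thresholds $m \ge 3$ and $m_r \ge 2$ are tight for the prescribed choice $\Jxx = 2(m-1)$ in the shared-structure regime, and relaxing either assumption would require a different choice of $\Jxx$ (or, equivalently, a different balance between $\Gamma_2$ and the lifting factor $\tfrac{m}{m_g}$). For completeness, I would also note that if one instead analyzes $\Gdis$ on its own (with $m_g = m_r$), the lifting step would require the stronger bound $m_r \ge \tfrac{m^2}{m-1}$, which is why the reduction of general \GIC{} instances to the \Gshare{} worst case (Remark at the end of Section~\ref{sec:graphs}) is what makes the stated mild hypotheses sufficient.
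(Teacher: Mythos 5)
Your proof is correct and follows the only natural route—direct substitution of $\Gamma_2 = m$ and $\Jxx = 2(m-1)$ into the four formulas of~\eqref{eq:Jxx-bounds}—which is exactly the verification the paper leaves implicit, including the reading $m_g = m + m_r$ for the \Gshare{} worst case that makes the lifting inequality reduce to $m_r \ge \tfrac{m}{m-1}$. The only minor quibble is the claim that $m \ge 3$ is tight: with $m_r \ge 2$ the lifting step would already hold (with equality) at $m=2$, so the hypotheses are sufficient but not all tight for that particular step.
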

\end{mdframed}
The choice \( \Jxx = 2(m - 1) \) is not unique; it simply serves as a constructive witness that the feasible window is nonempty whenever the condition on \( \Jzz \) holds.

\paragraph{Feasibility Condition on \(\Jzz\).}
The upper bound \( \Jzzsteer \) in the theorem depends on the unknown parameter \( m_r \).  
To ensure feasibility without knowing \( m_r \) precisely, we apply a conservative upper bound:
\(
m_r < \sqrt{n_c} m.
\)
Substituting into the definition of \( \Jzzsteer \), we obtain:
\(
\Jzz \le 1 + \tfrac{(n_c - 1)(m - 1) - 2}{2 \sqrt{n_c} m}.
\)
In practice, it suffices to enforce the relaxed condition:
\(
\Jzz \lesssim 1 + \tfrac{\sqrt{n_c} + 1}{2}.
\)

We now present the detailed derivations of the four analytical bounds on $\Jxx$.
Each bound is introduced by its guiding idea, followed by an explicit derivation of the corresponding formula.

\subsubsection*{(i) Stage-Separation Upper Bound and $\Jzzinter$}
The Stage-Separation requirement explains the rationale behind the two-stage design of the coupling schedule.
The objective is to ensure that the system's ground state starts and remains confined to the same-sign block during Stage~1,  
and involves opposite-sign blocks only during Stage~2.  
To achieve this, two conditions are needed:  
(a) the lowest energy level of the full system must lie within the same-sign block throughout Stage~1, and  
(b) the inter-block coupling between same-sign and opposite-sign blocks must remain weak enough to be neglected.

\paragraph{Condition (a): Crossover within Stage~2.}
This motivates a two-stage schedule in which the \XX-coupling \( \mt{jxx} \) is held constant during Stage~1  
and scaled linearly with the transverse field during Stage~2.  
Using the block energy ordering results of the $L$-bare subsystem from Section~\ref{sec:block-ordering} (Figure~\ref{fig:block-order}),  
we find that the crossover between the same-sign and opposite-sign blocks occurs at
\(
\mt{x}_c = \tfrac{4\alpha}{4 - \alpha^2}.
\)
This lies within Stage~2 (i.e., \( \mt{x}_c \le \Gamma_2 \)) if and only if \( \alpha \le \alpha_{\max}(\Gamma_2) \),  
where \( \alpha_{\max}(\Gamma_2) \) is defined in Eq.~\ref{eq:alpha-max}.  
For $\Gamma_2 \ge 2$, one has
$
\alpha_{\max}(\Gamma_2) \gtrsim \tfrac{2(\Gamma_2 - 1)}{\Gamma_2}.
$
This yields the Stage-Separation bound on the constant \XX-coupling during Stage~1:
\[
\Jxxsep = 2(\Gamma_2 - 1).
\]

\paragraph{Condition (b): Weak inter-block coupling.}
In addition, recall from Theorem~\ref{thm:transformed-hamiltonian} that the effective Hamiltonian decomposes as
\[
\bar{\Heff} = \mb{H}_{\mc{C}} \oplus \cdots \oplus \mb{H}_{\mc{Q}} + \Hinter,
\]
where $\Hinter = 0$ in the disjoint case, while in the shared case
\[
\Hinter = \Jzz \sum_{(i,j) \in L \times R} \tfrac{\sqrt{n_i - 1}}{n_i} \mb{T}^{\ms{cq}}_i \shz{j}.
\]
Here $\mb{T}^{\ms{cq}}$ (Eq.~\eqref{eq:Tcq}) is a $3 \times 3$ off-diagonal operator mixing same-sign and opposite-sign components.  
The strength of this coupling depends on \( \Jzz \) and clique size \( n_i \), but not on the transverse field \( \mt{x} \).

Since we take $\Gamma_2 = m$, and Stage~1 is defined by $\mt{x} \in [\Gamma_2, \Gamma_1]$, 
the inter-block coupling is negligible whenever \( \Jzz \ll m \).  
We summarize this condition by writing \( \Jzz \le \Jzzinter \).  
In particular, this requires that $\Jzz$ be chosen as a constant (independent of $m$), 
so that the condition continues to hold as the problem size grows.
In practice, since $\Jzzsteer \le \Jzzinter$ by construction, 
it suffices to impose $\Jzz \le \Jzzsteer$.

\paragraph{Conclusion.}
Together, the Stage-Separation bound \( \Jxxsep \) and the inter-block coupling bound \( \Jzzinter \)  
ensure that the low-energy dynamics of Stage~1 are effectively confined to the same-sign block.  
This joint condition justifies treating Stage~1 and Stage~2 as separate regimes in the analysis.

\subsubsection*{(ii) Lifting Lower Bound \( \Jxxlift \)}

\textbf{Idea:}  
This bound ensures that the energy associated with \LM{} in the same-sign block is raised high enough,
during Stage~2.
The key idea is to use exact analytical expressions for the bare energy levels, \( \ELz \) and \( \EGz \),  
derived in Section~\ref{sec:bare-sub}, to approximate the true energies associated with \LM{} and \GM{}.  
We then determine the condition under which \( \ELz \) remains strictly above \( \EGz \) throughout Stage~2.

This is a perturbative approximation.  
By second-order perturbation theory, the energy associated with \LM{} is shifted upward,  
while the energy associated with \GM{} is shifted downward.  
Therefore, if the bare energy levels do not cross, the true levels will not have an anti-cross either.

The resulting threshold defines the \emph{Lifting Lower Bound} \( \Jxxlift \):  
the minimum value of \( \Jxx \) required to ensure that the energy level associated with \LM{} in the same-sign block remains sufficiently lifted throughout Stage~2.  
This approximation is further supported numerically.

\noindent
\textbf{Derivation:}  
The (negative) slope magnitude of \( \ELz(\mt{x}) \) decreases from
\[
\tfrac{\sqrt{n_c}}{2} \cdot m \quad \text{(at } \Jxx = 0\text{)} \quad \text{to} \quad \tfrac{1}{\alpha} \cdot m \quad \text{(at } \Jxx = \alpha \Gamma_2\text{).}
\]
This is illustrated in Figure~\ref{fig:see-saw}(c), and notably, the slope bound is independent of clique size \( n_c \).
In contrast, the (negative) slope magnitude of \( \EGz(\mt{x}) \) remains approximately constant:
\(
\tfrac{m_g}{2}.
\)
To maintain \( \ELz(\mt{x}) > \EGz(\mt{x}) \) throughout Stage~2, it suffices to require 
\( \tfrac{m}{\alpha} < \tfrac{m_g}{2} \), i.e.\ \( \alpha > 2 \tfrac{m}{m_g} \).
Thus, the bound becomes:
\[
\Jxxlift = 2 \cdot \tfrac{m}{m_g} \cdot \Gamma_2.
\]

\subsubsection*{(iii) Steering Lower Bound \( \Jxxsteer \)}

\textbf{Idea:}  
The lower bound \( \Jxxsteer \) is defined as the minimal value of \( \Jxx \) such that  
the diagonal entries of the \( R \)-inner block decomposition Hamiltonian are  
strictly decreasing from top to bottom.  
This condition ensures that the system can smoothly localize into the lowest \( R \)-blocks.

The key idea is that the coupling strength \( \Jxx \) must be large enough  
so that any configuration involving vertices from \( L \) incurs a high energy penalty,  
thereby suppressing support on the \( L \)-blocks throughout Stage~1.
See Figure~\ref{fig:L-R-inner} for an illustration.

\noindent
\textbf{Derivation:}  
Specifically, we require the inequality
\(
-2\,\weff > -(\weff + m_r(1 - \Jzz)),
\)
where \(
\weff = 1 - \tfrac{n_c - 1}{4} \Jxx.
\) is the effective vertex weight.
Solving the inequality yields the lower bound:
\[
\Jxxsteer = \tfrac{4}{n_c - 1} \left[ m_r(\Jzz - 1) + 1 \right].
\]

\begin{remark}
Since \(\Jxxsteer\) depends on \( n_c \),  
a conservative generalization for non-uniform clique sizes is to replace \( n_c \) with \( \min_i n_i \).  
\end{remark}

\subsubsection*{(iv) Sinking Upper Bound \( \Jxxsink \)}

\textbf{Idea:}
This is analogous to the Lifting Lower Bound \( \Jxxlift \).
We want to make sure that the AS0 energy level is not dropped too low to form a new anti-crossing.
We use the bare (decoupled) energy levels \( \EGz \) (the ground state energy of \( \mb{H}_{\GM}^{\ms{bare}} \)) and \( \ESz \) (the ground state energy of \( \mb{H}_{\mc{Q}} \))  
to approximate the true energy levels.
We then impose an upper bound on \( \Jxx \) to ensure that \( \ESz > \EGz \) throughout Stage~2.

By the same perturbative argument applied when the blocks are coupled, while using the decoupled bare levels as reference points,  
the true energy level corresponding to \( \ESz \) is shifted upward, while the true energy level corresponding to \( \EGz \) is shifted downward.  
Therefore, as long as \( \ESz > \EGz \), no anti-crossing will be formed.
This approximation is further supported numerically.

\noindent
\textbf{Derivation:}  
We require that
\(
\ESz(\mt{x}) > \EGz(\mt{x}),
\) during Stage~2,
and in particular, we enforce this at the beginning of Stage~2, with \( x = \Gamma_2 \):
\begin{align*}
\ESz(\Gamma_2) &\approx -m\left(1 + \tfrac{\Jxx}{4}\right) - \tfrac{m_r}{2}\left(1 - \tfrac{m}{n_c} \Jzz + \Gamma_2 \right), \\
\EGz(\Gamma_2) &\approx -\tfrac{m + m_r}{2}(1 + \Gamma_2).
\end{align*}
Solving this inequality yields the upper bound:
\[
\Jxxsink = 2(\Gamma_2 - 1) + \tfrac{2 m_r}{n_c} \Jzz.
\]
\begin{remark}
Since \(\Jxxsink\) depends on \(n_c\), a conservative generalization for non-uniform clique sizes is to replace \(n_c\) with \(\max_i n_i\).
\end{remark}

Having established the four feasibility bounds on $\Jxx$
 and the condition for a nonempty window, we now turn to the detailed analysis of Stage~1 and Stage~2.
\subsubsection{Stage~1: Structural Steering}
\label{sec:stage1}

The purpose of Stage~1 is twofold:  
(1) to ensure that the evolution is confined to the same-sign block, and  
(2) to demonstrate successful \emph{structural steering} within this block.  
Confinement is guaranteed by the Stage-Separation bound \(\Jxxsep\) together with the requirement \(\Jzz \le \Jzzinter\) (in the shared case).

We define \emph{structural steering} as the mechanism by which the evolving ground state is directed smoothly into the \(R\)-localized region (the global-minimum-supporting region). 
In our design, steering is achieved by ordering the energies of the \(R\)-inner blocks of the same-sign Hamiltonian so that the ground state is guided directly into the \(R\)-localized region, rather than first becoming trapped in the \(L\)-region and only later tunneling into the \(R\)-region, as happens in the stoquastic annealing case.  
Formally, this mechanism is guaranteed by two feasibility bounds: the Lifting bound \(\Jxxlift\), which forces the ground state toward the lowest \(R\)-blocks; and the Steering bound \(\Jxxsteer\), which ensures that this localization proceeds smoothly.  
This analytically established behavior is further confirmed by numerical evidence.

\subsubsection*{Numerical Confirmation of Structural Steering}

We confirm structural steering numerically by tracking structural localization (as defined in Section~\ref{sec:struct-localization}), 
namely the cumulative projection weight of the ground state onto the lowest $R$-blocks.

Since this analysis involves only the same-sign block, and assuming uniform clique size, 
we may employ the symmetric-subspace reduction, which enables large-scale calculations.
These calculations determine the appropriate value of $k$, corresponding to the minimum number of $R$-blocks that capture nearly all of the ground-state weight.
In practice, when $m_r > m$, we typically find $k \approx 2$, 
whereas when $2 \le m_r \ll m$ (possible in the shared-structure case) \footnote{In this case, only the condition $\Jzz \le \Jzzinter$ is required (not $\Jzz \le \Jzzsteer$).}
 a slightly larger $k$ (upto $0.2m$) may be required.

For the disjoint case, Figure~\ref{fig:dis-steering}(a) shows smooth steering into the lowest $k=2$ blocks (with $m_r = m_g > m$). 
For contrast, Figure~\ref{fig:dis-steering}(b) shows the evolution when the steering condition is violated (with $\Jzz = 1000 \gg \Jzzsteer$, resulting in $\Jxx < \Jxxsteer$). 
Here the ground state first localizes in the $L$-region and only later tunnels into the $R$-region, 
highlighting the tunneling bottleneck that structural steering avoids.

Figure~\ref{fig:steering-m30} shows successful steering for $m = 30$ with small $m_r = 5$. 
By the end of Stage~1, more than $90\%$ of the ground state amplitude is concentrated within just six blocks.

\begin{figure}[!htbp]
\centering
\begin{subfigure}{0.45\textwidth}
  \includegraphics[width=\linewidth]{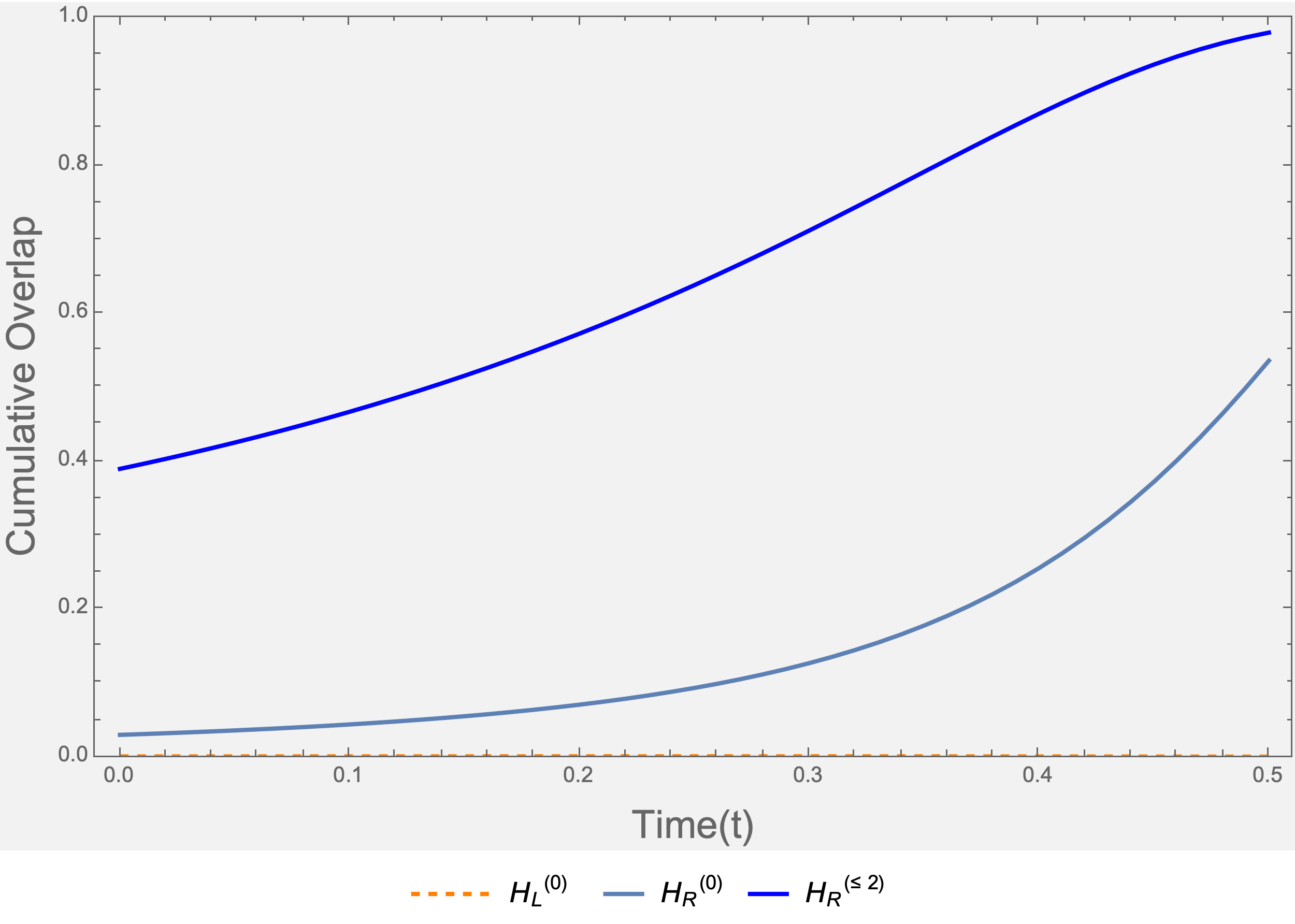}
  \caption{Successful structural steering with $\Jxx = 2(m-1)$, $\Jzz = 1 + \tfrac{\sqrt{n_c} + 1}{2}$.}
\end{subfigure}
\hspace{0.05\textwidth}
\begin{subfigure}{0.45\textwidth}
  \includegraphics[width=\linewidth]{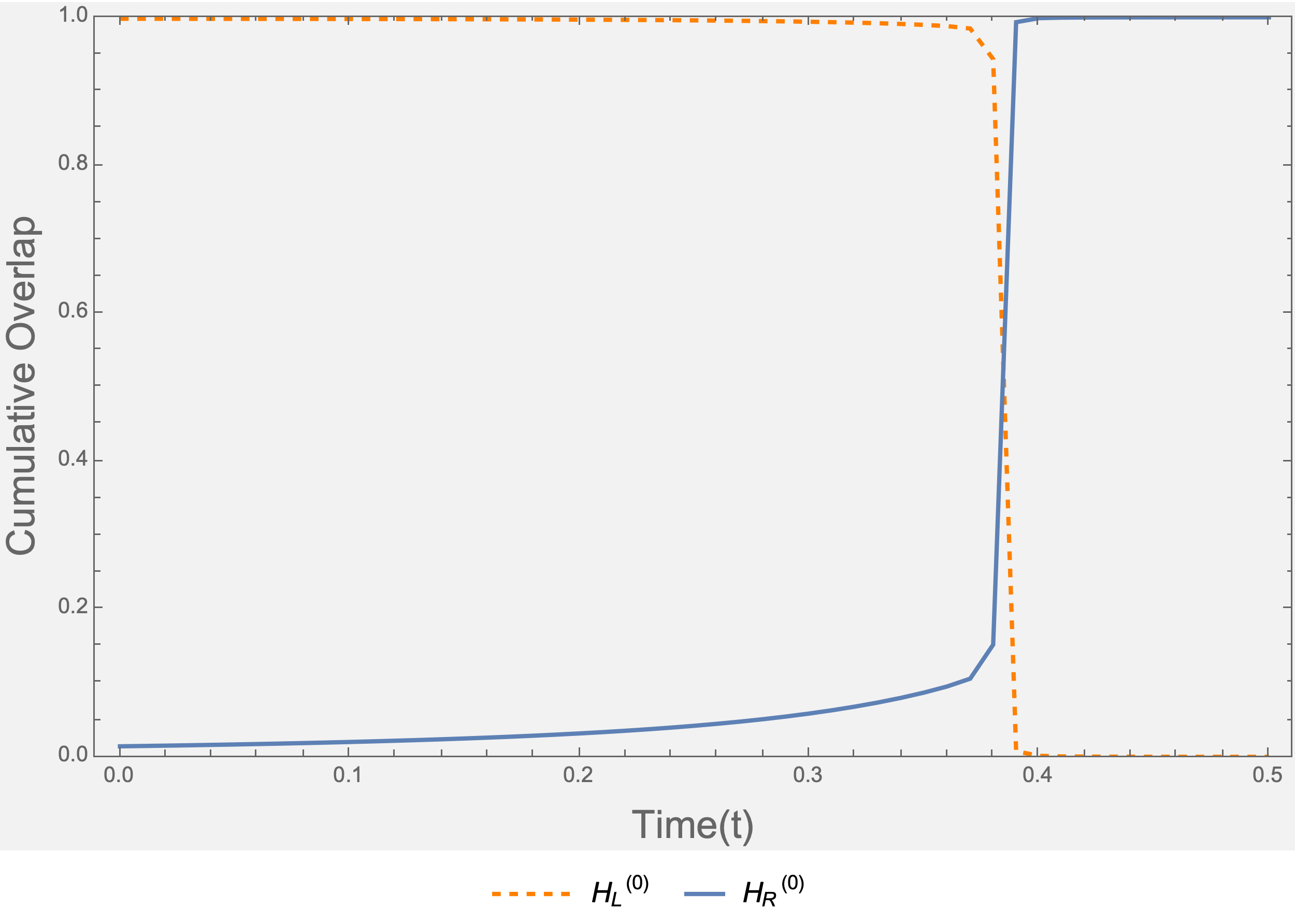}
  \caption{Failure of structural steering when $\Jzz = 1000 \gg \Jzzsteer$.}
\end{subfigure}
\caption{
Stage~1 evolution for a disjoint-structure graph with $m=10$, $m_g=m_r=15$, $n_c=9$. Here we use $\Gamma_1=4 \Gamma_2$ with $\Gamma_2=m$.
Orange (dashed) indicates the $L$-region block \(H_L^{(0)}\), gray the lowest $R$-block \(H_R^{(0)}\), and blue the cumulative overlap with the lowest two $R$-blocks \(H_R^{(\leq 2)}\). 
In (a), by the end of Stage~1 (\( t = 0.5 \)), more than 90\% of the ground state amplitude is concentrated within \(H_R^{(\leq 2)}\), demonstrating effective structural steering. 
In (b), $\Jzz = 1000 \gg \Jzzsteer$, and thus $\Jxx \ll \Jxxsteer$. The ground state remains localized in the $L$-region until $t \approx 0.38$, then abruptly tunnels into the $R$-region, indicating an exponentially small-gap tunneling-induced anti-crossing.}
\label{fig:dis-steering}
\end{figure}

\begin{figure}[!htbp]
  \centering
  \includegraphics[width=0.5\textwidth]{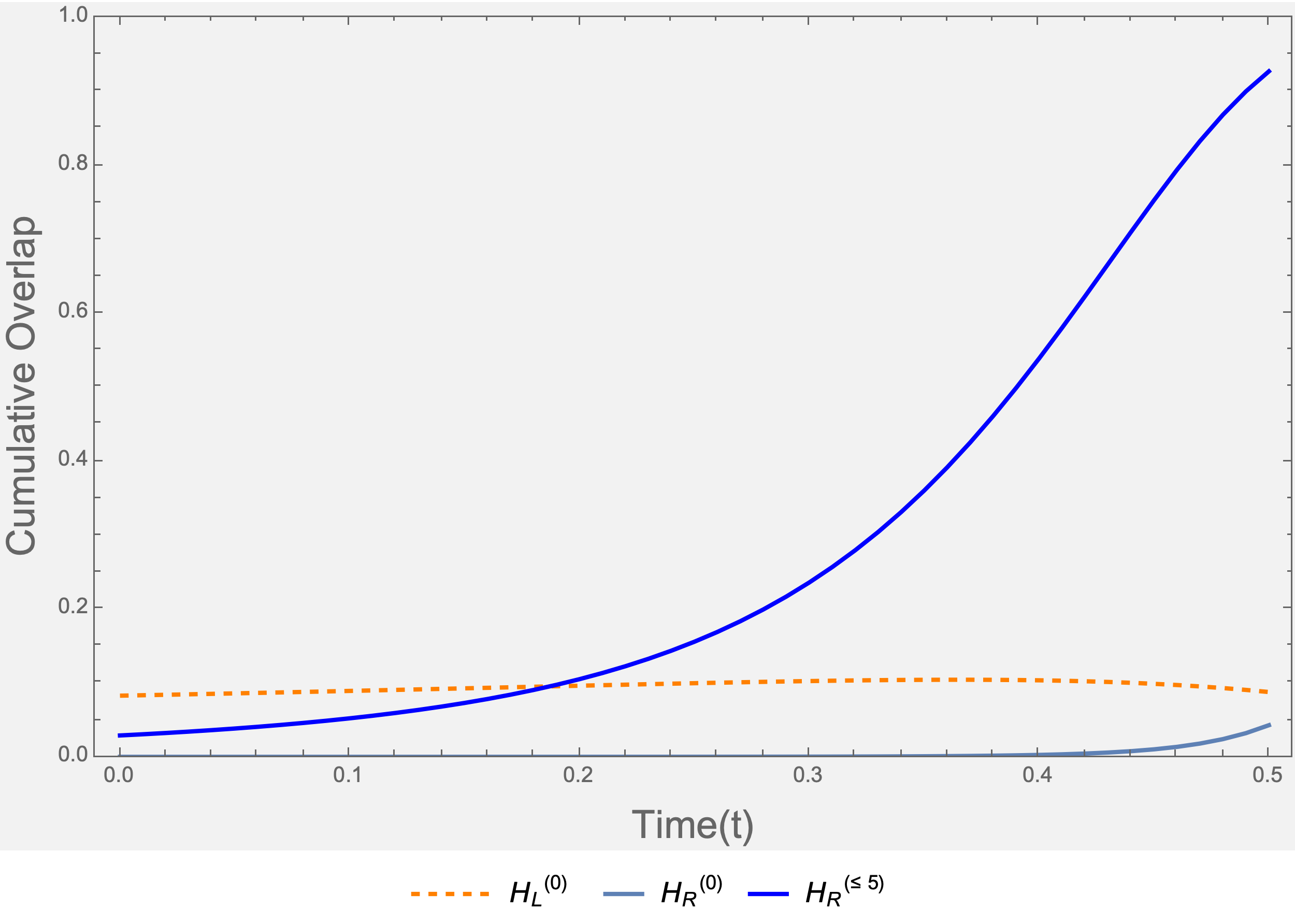}
  \caption{
    Stage~1 structural steering for a shared-structure graph with \( m = 30, m_r = 5, n_c = 9 \), and \( \Jxx = 2(m{-}1) \) and
    $\Jzz = 1 + \tfrac{\sqrt{n_c} + 1}{2}$. Here we use $\Gamma_1=4 \Gamma_2$ with $\Gamma_2=m$.
    The cumulative projection weight of the ground state onto the lowest energy-ordered blocks is shown as a function of time \(t\).
    Orange (dashed) indicates the $L$-region block \(H_L^{(0)}\), gray the lowest $R$-block \(H_R^{(0)}\), and blue the cumulative overlap with the lowest five $R$-blocks \(H_R^{(\leq 5)}\).
    By the end of Stage~1 (\( t = 0.5 \)), more than 90\% of the ground state amplitude is concentrated within \(H_R^{(\leq 5)}\), demonstrating effective structural steering even for small \( m_r \) relative to \( m \).
  }
  \label{fig:steering-m30}
\end{figure}

Overall, throughout Stage~1 the ground state evolves smoothly from a uniform superposition into a compressed state supported by the lowest few energy-ordered $R$-blocks, whenever $m_r \ge 2$. 
This continuous structural compression avoids abrupt transitions between disjoint regions of the Hilbert space, thereby preventing tunneling-induced anti-crossings and ensuring that the spectral gap remains large.
While a full proof of gap preservation is left for future work, these numerical results provide strong supporting evidence for the effectiveness and scalability of structural steering in Stage~1.

In summary, Stage~1 achieves smooth and scalable localization into the $R$-region without tunneling.

\subsubsection{Stage~2: Smooth Evolution to \GM{} via an Opposite-Sign Path}
\label{sec:stage2}

The purpose of Stage~2 is to show that the evolution proceeds from the \(R\)-localized region at the end of Stage~1 to \GM{} without the emergence of any new anti-crossing. 
We analyze this stage under two structural cases: the \emph{disjoint-structure graph} \( \Gdis \) and the \emph{shared-structure graph} \( \Gshare \).

The disjoint-structure case serves mainly for illustration and comparison: 
the evolution remains entirely within the same-sign block, and the dynamics proceed smoothly.
In the shared-structure case, the lowest opposite-sign block (AS0) participates in the evolution. 
By construction, the Sinking bound \(\Jxx \le \Jxxsink\) ensures that no new anti-crossing emerges, so the evolution remains smooth. 
Moreover, the dynamics can be explained by \emph{sign-generating interference}, which produces negative amplitudes in the computational basis of the ground state. 
Thus the system follows an opposite-sign path, a route not accessible in the stoquastic case.

\subsubsection*{Disjoint-Structure Case}

In this case, the evolution remains entirely within the same-sign block---more precisely, it proceeds from the \(R\)-localized region into the \( \mb{H}_R^{(0)} \) block during Stage~2.  
The spectral gap within \( \mb{H}_R^{(0)} \) is given analytically by \( \sqrt{1 + \mt{x}^2} \), 
and thus remains bounded below by 1 throughout Stage~2.  
Figure~\ref{fig:L1} compares the energy spectra of the same-sign block for \( \Jxx = 0 \) and \( \Jxx = 2(m{-}1) \).

Because the same-sign block admits a symmetric-subspace reduction, its dimension scales linearly with \(m\).  
This allows efficient numerical calculations on large instances.  
Figure~\ref{fig:L1-large} presents a large-scale example with \( m = 25 \) and \( m_g = 35 \), confirming that the algorithm continues to perform correctly at scale.

In this disjoint-structure case, there is no requirement for the Sinking bound.  
Large values of \( \Jxx \) are also admissible, in which case double block-level true crossings occur but are dynamically bypassed.  
See Figure~\ref{fig:L1-DAC}.

\begin{figure}[!htbp]
  \centering
  \begin{subfigure}[b]{0.48\textwidth}
    \includegraphics[width=\textwidth]{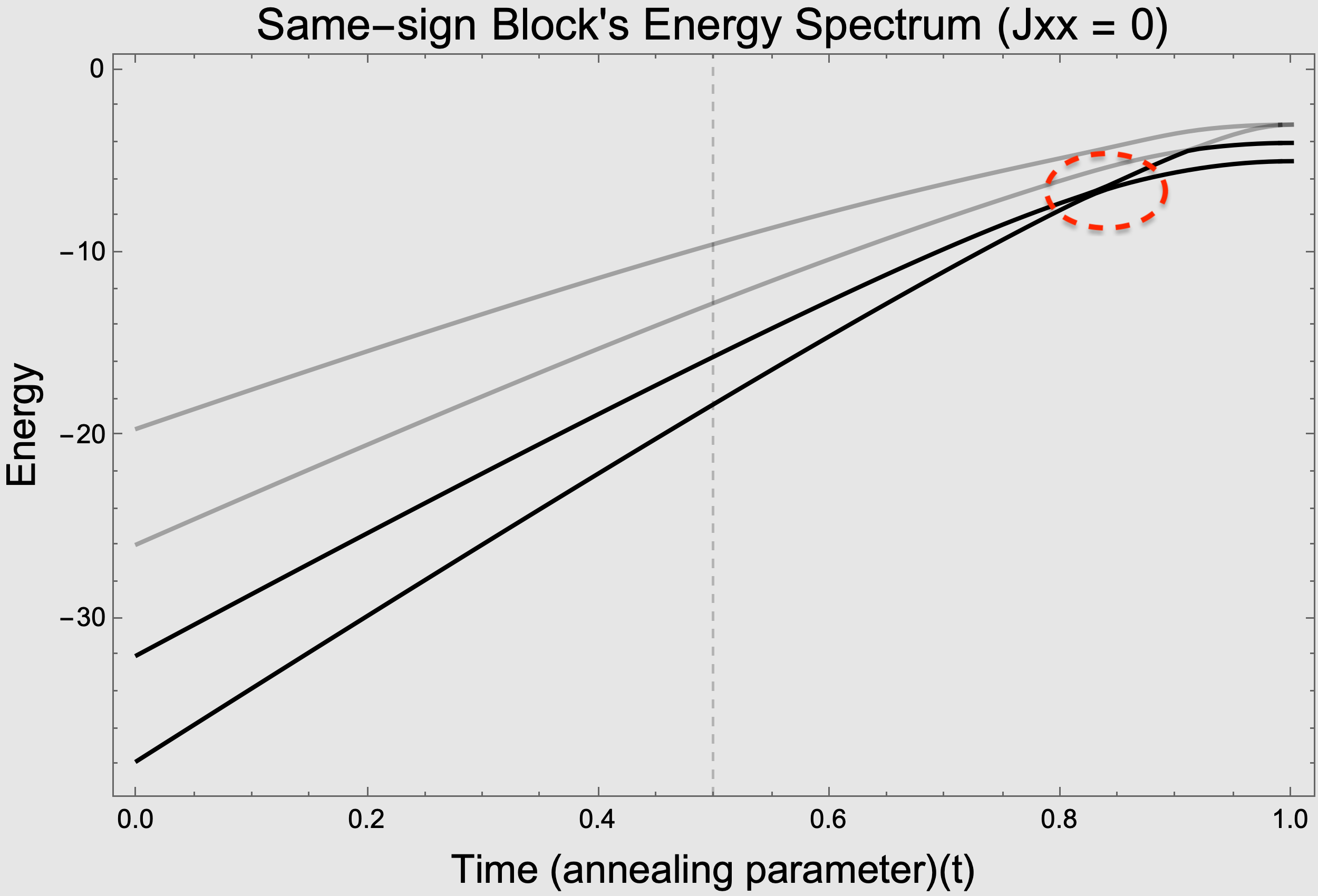}
    \caption{}
  \end{subfigure}
  \hfill
  \begin{subfigure}[b]{0.48\textwidth}
    \includegraphics[width=\textwidth]{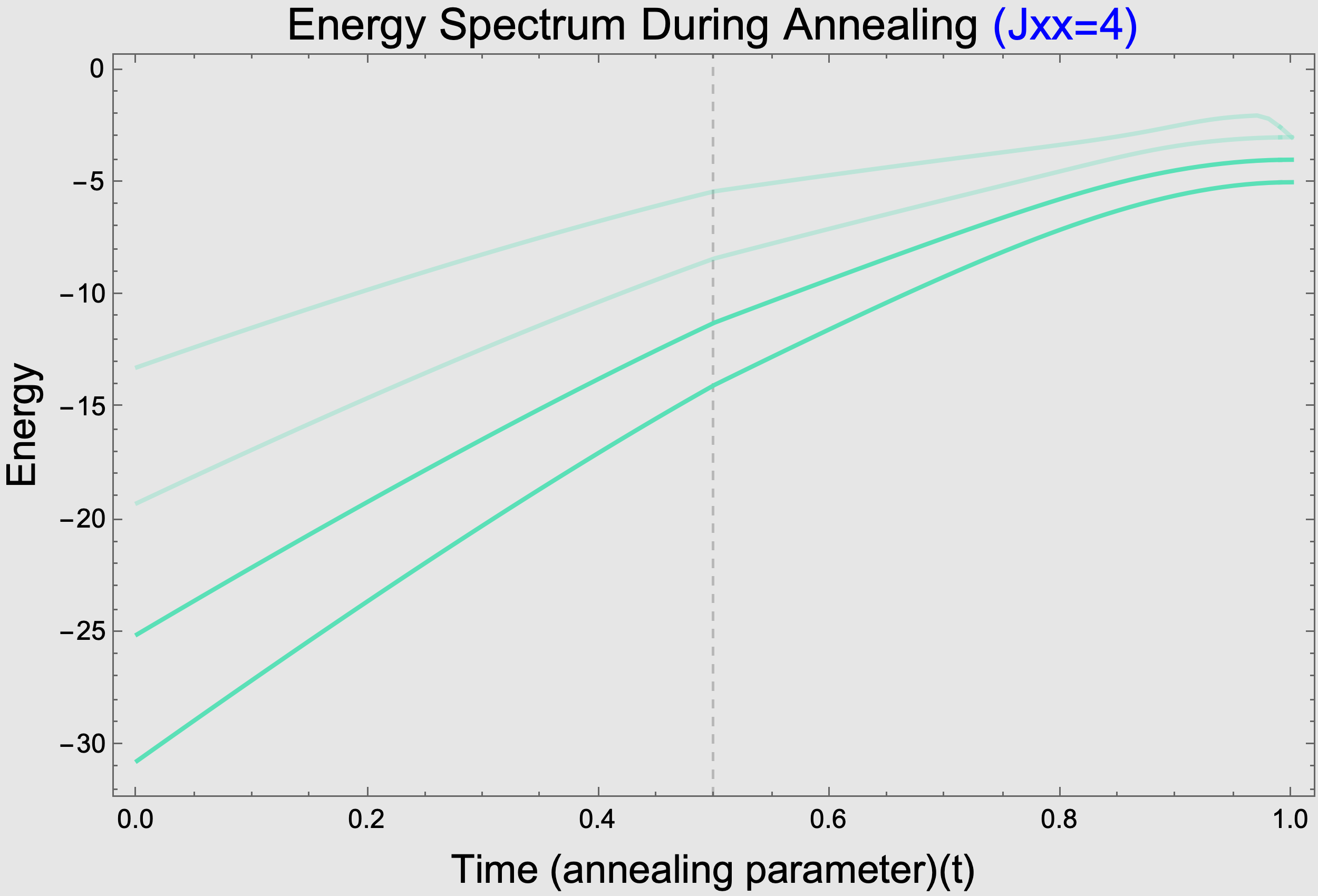}
    \caption{}
  \end{subfigure}

  \vspace{1em}

  \begin{subfigure}[b]{0.48\textwidth}
    \includegraphics[width=\textwidth]{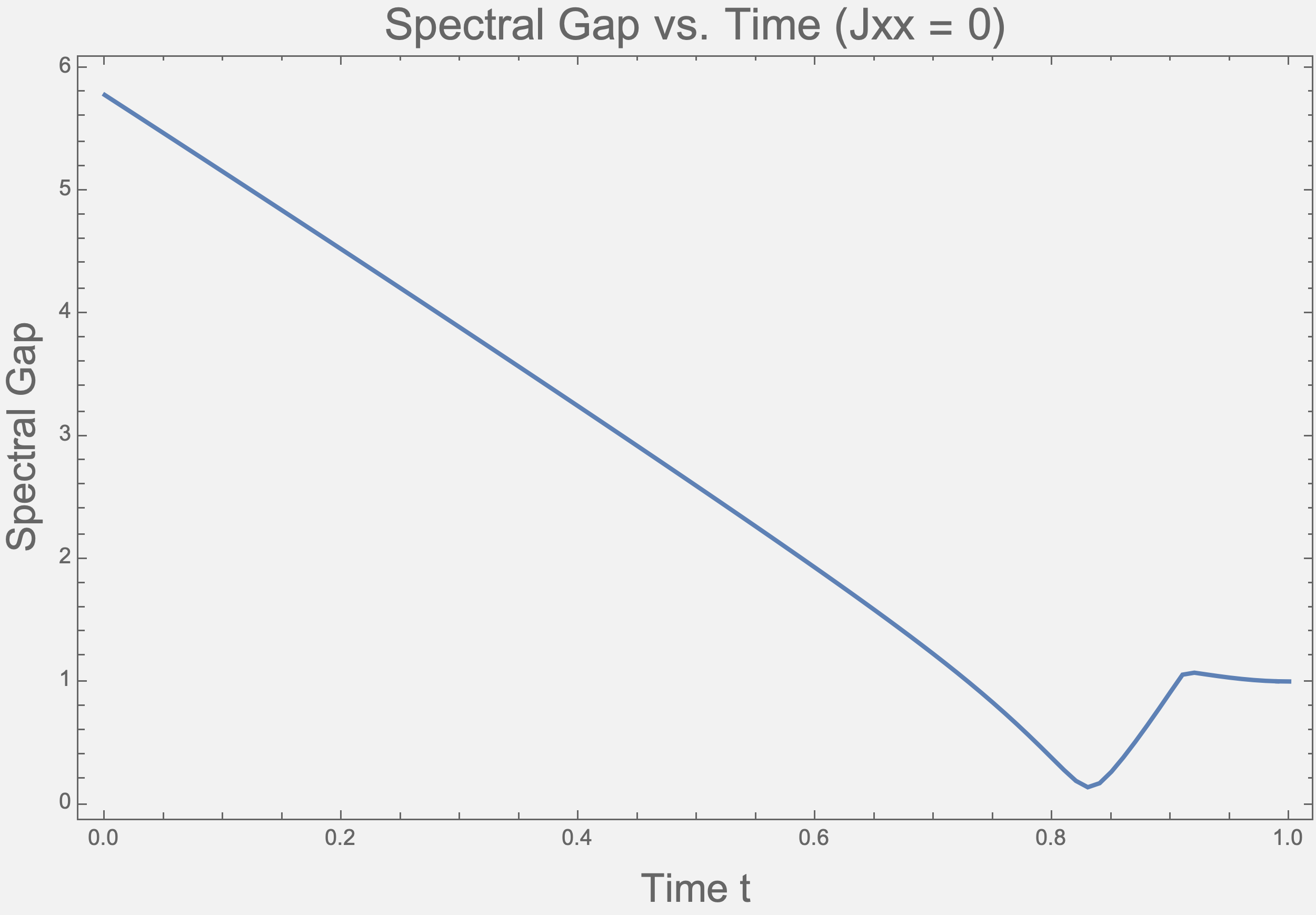}
    \caption{}
  \end{subfigure}
  \hfill
  \begin{subfigure}[b]{0.48\textwidth}
    \includegraphics[width=\textwidth]{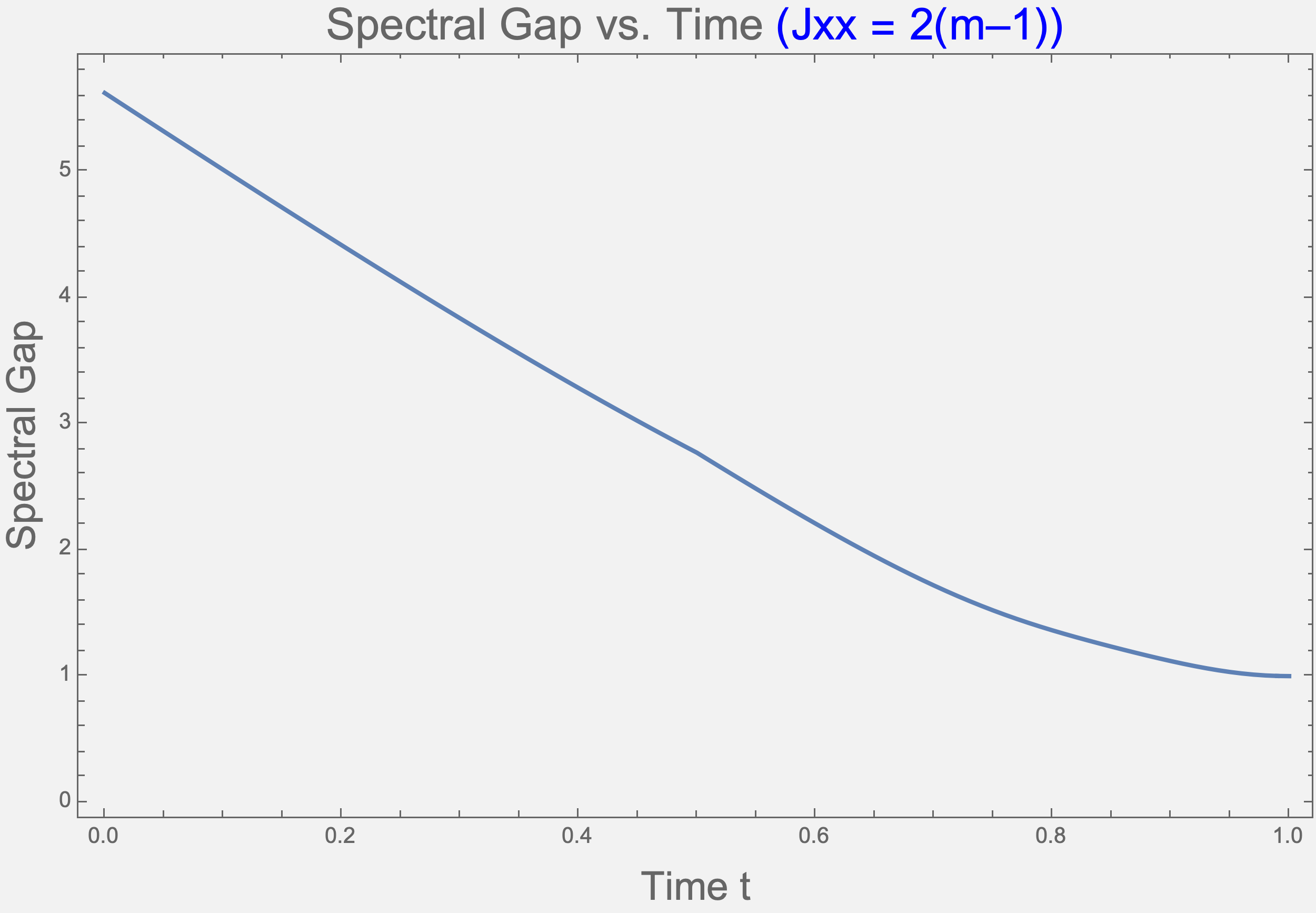}
    \caption{}
  \end{subfigure}
\caption{
Energy spectra for the disjoint-structure graph \(\Gdis\), with \( m = m_l = 3 \), \( m_g = 5 \), and clique size \( n_c = 9 \).\\
(a) TFQA spectrum showing a small-gap anti-crossing; the corresponding gap profile is shown in (c).\\
(b) \DDD{} spectrum with \(\XX\)-driver at \(\Jxx = 2(m{-}1) = 4\); the corresponding gap profile is shown in (d).
}
\label{fig:L1}
\end{figure}

\begin{figure}[!htbp]
  \centering
  \begin{subfigure}[b]{0.48\textwidth}
    \includegraphics[width=\textwidth]{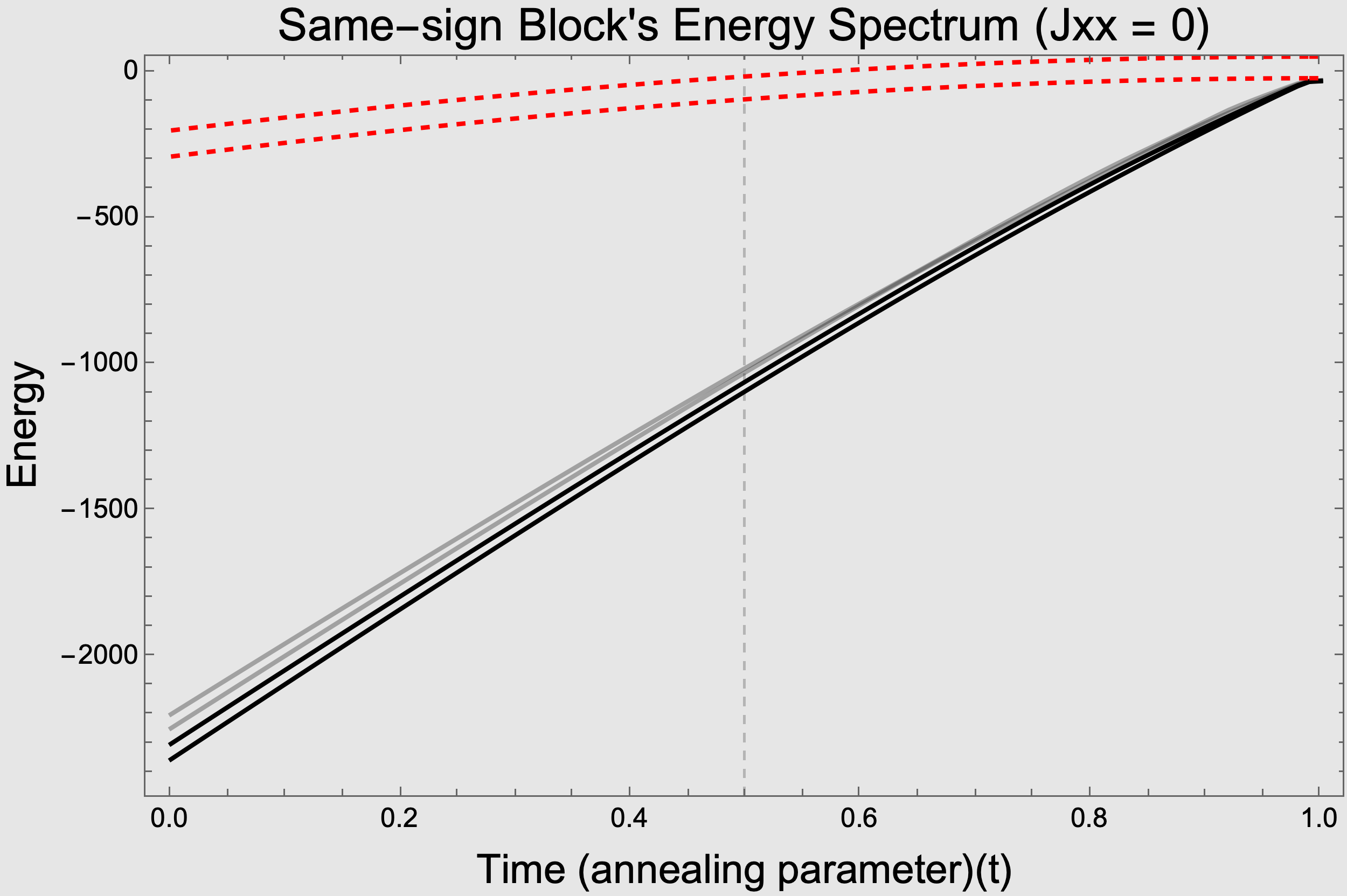}
    \caption{}
  \end{subfigure}
  \hfill
  \begin{subfigure}[b]{0.48\textwidth}
    \includegraphics[width=\textwidth]{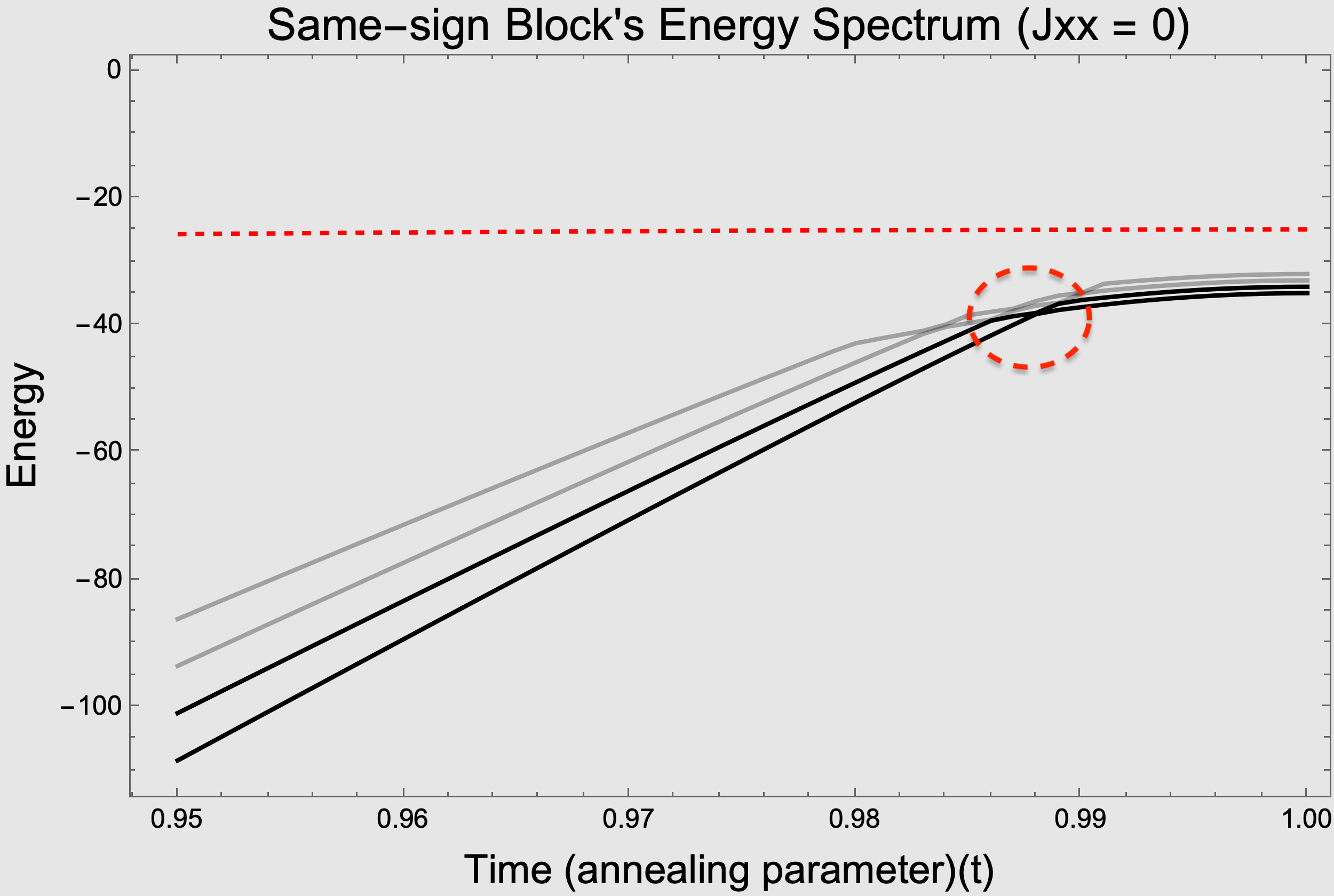}
    \caption{}
  \end{subfigure}

  \vspace{1em}

  \begin{subfigure}[b]{0.48\textwidth}
    \includegraphics[width=\textwidth]{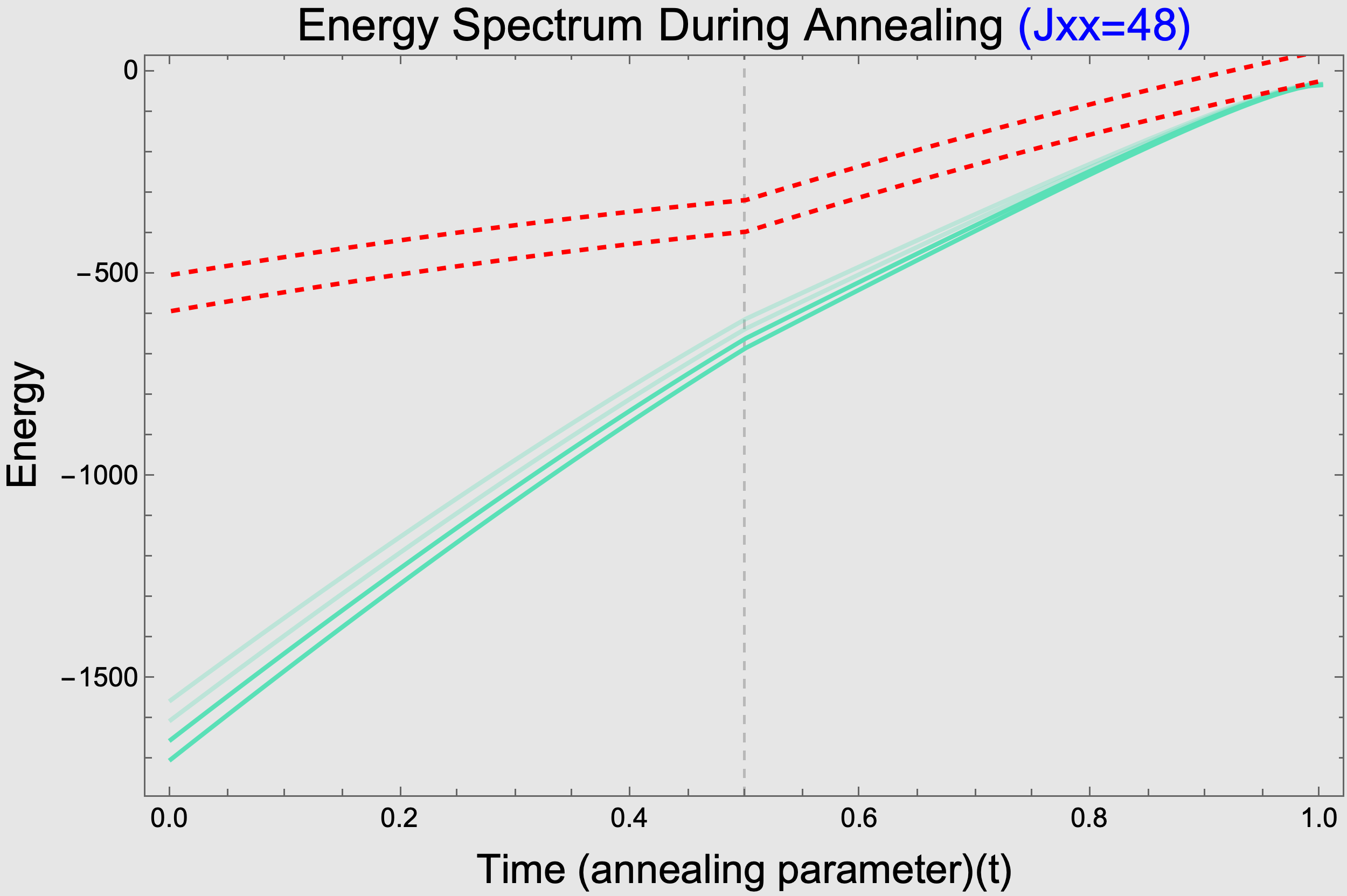}
    \caption{}
  \end{subfigure}
  \hfill
  \begin{subfigure}[b]{0.48\textwidth}
    \includegraphics[width=\textwidth]{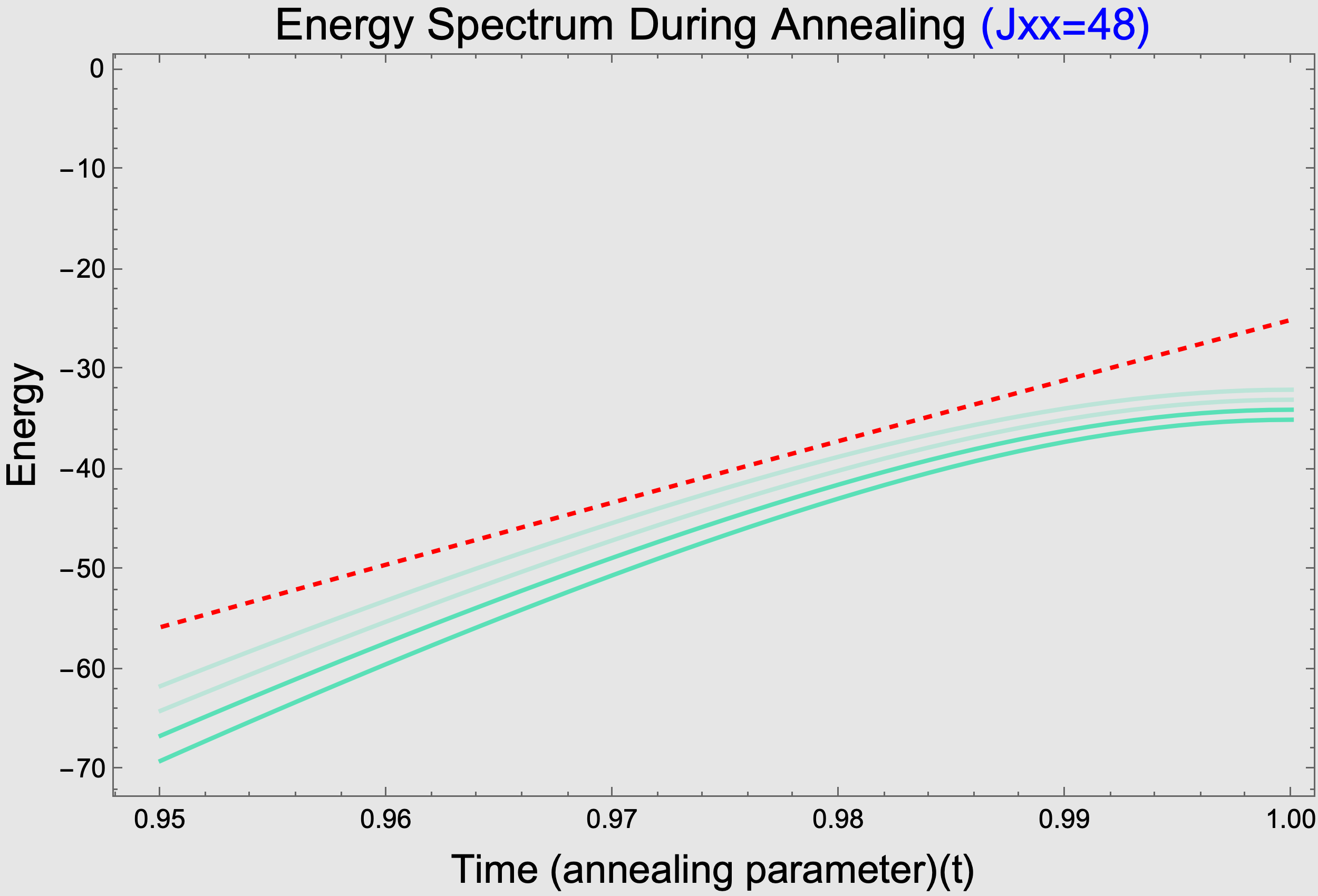}
    \caption{}
  \end{subfigure}
\caption{
Spectral decomposition for the disjoint-structure graph \(\Gdis\), with \( m = m_l = 25 \), \( m_g = 35 \), and clique size \( n_c = 9 \).\\
(a) Energy spectrum of the same-sign block under TFQA (\(\Jxx = 0\)), showing an anti-crossing near \( t \approx 0.985 \). \\
(b) Zoom of panel (a) over \( t \in [0.95, 1] \). \\
(c) Full spectrum under \DDD{} with \(\Jxx = 2(m{-}1) = 48\): the opposite-sign energies (dashed red) are lowered by the \(\XX\)-driver but remain above the same-sign ground state (green), confirming successful evolution with no anti-crossing. \\
(d) Zoom of panel (c) over \( t \in [0.95, 1] \).
}
\label{fig:L1-large}
\end{figure}

\begin{figure}[!htbp]
  \centering
  \begin{subfigure}[b]{0.48\textwidth}
    \includegraphics[width=\textwidth]{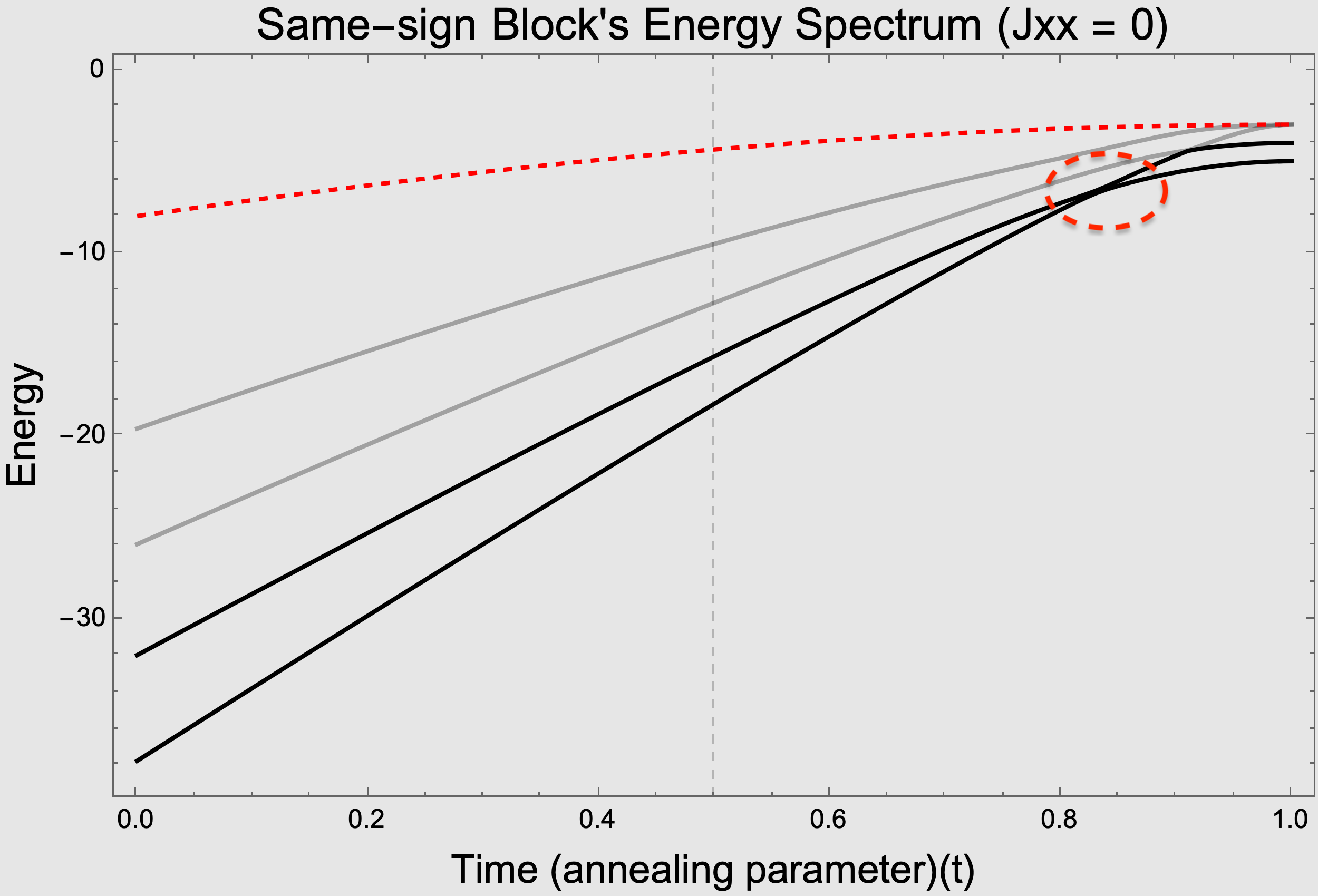}
    \caption{}
  \end{subfigure}
  \hfill
  \begin{subfigure}[b]{0.48\textwidth}
    \includegraphics[width=\textwidth]{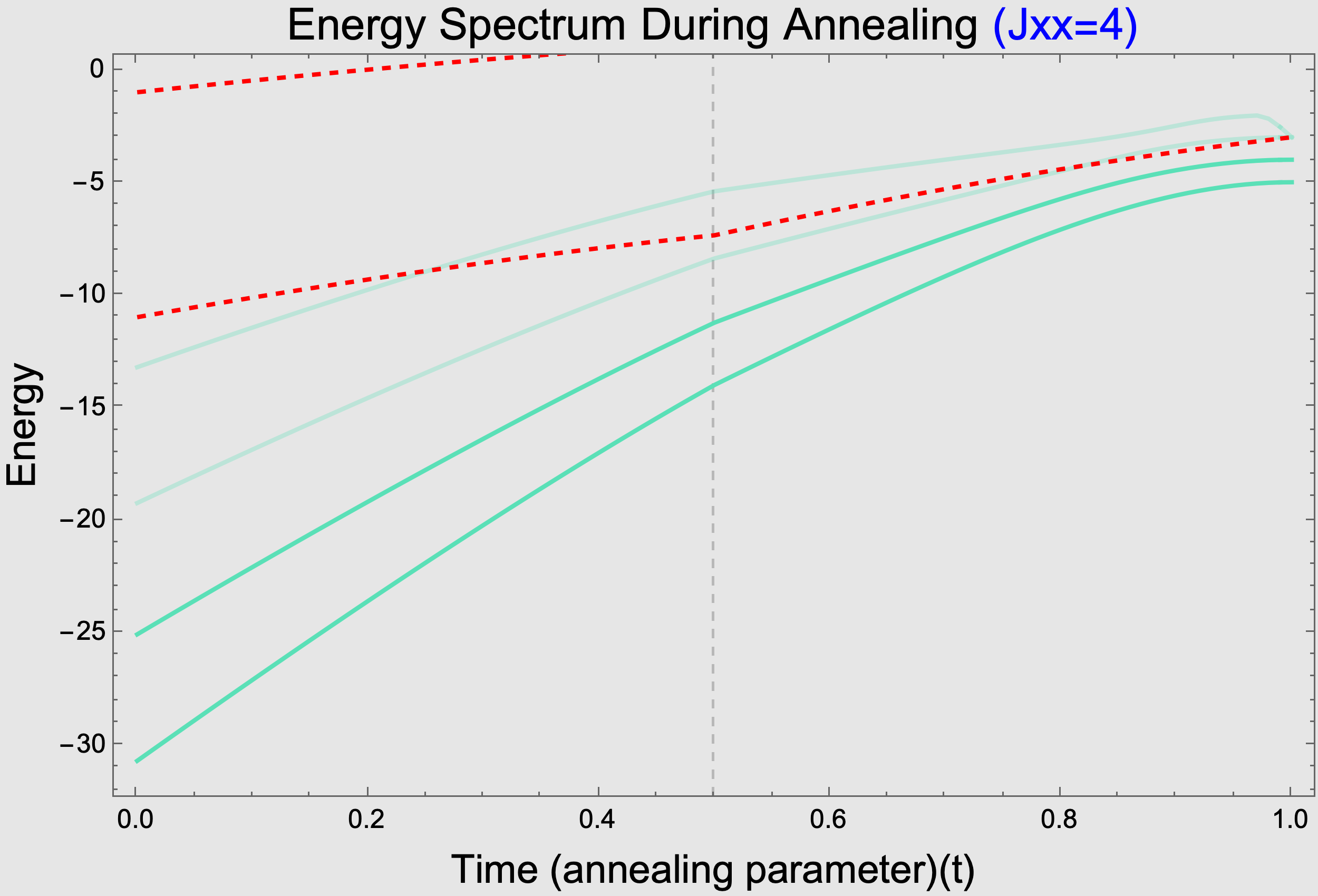}
    \caption{}
  \end{subfigure}

  \vspace{1em}

  \begin{subfigure}[b]{0.48\textwidth}
    \includegraphics[width=0.8\textwidth]{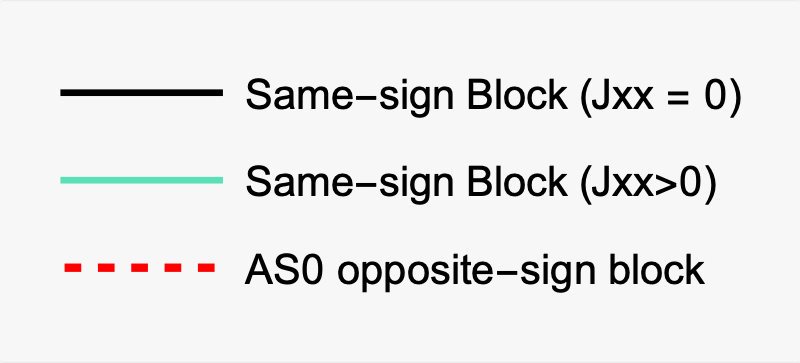}
    \caption{}
  \end{subfigure}
  \hfill
  \begin{subfigure}[b]{0.48\textwidth}
    \includegraphics[width=\textwidth]{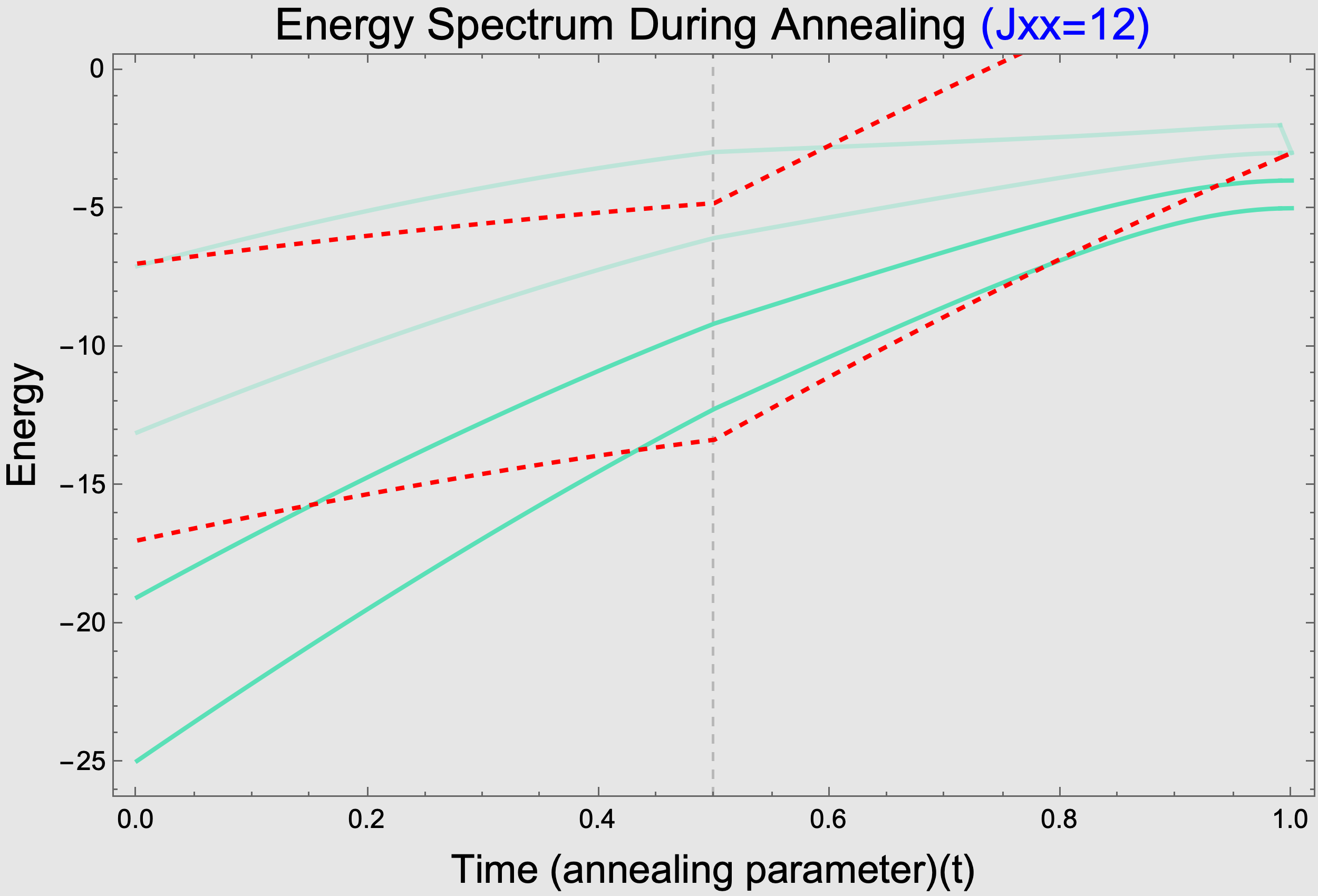}
    \caption{}
  \end{subfigure}
\caption{
Spectral decomposition for the disjoint-structure graph \(\Gdis\), with \( m = m_l = 3 \), \( m_g = 5 \), and clique size \( n_c = 9 \). \\
(a) Energy spectrum of the same-sign block under TFQA (\(\Jxx = 0\)), showing an anti-crossing near \( t \approx 0.85 \). \\
(b) Full spectrum under \DDD{} with \(\Jxx = 4\): the opposite-sign energies (dashed red) remain above the same-sign ground state (green), confirming successful evolution with no anti-crossing. \\
(c) Legend identifying curve types across panels. \\
(d) Large-\(\Jxx\) case (\(\Jxx = 12\)):
two true block-level crossings occur as the lowest opposite-sign energy drops below the same-sign ground state energy
near \( t \approx 0.4 \) and again near \( t \approx 0.7 \).
}
\label{fig:L1-DAC}
\end{figure}

\subsubsection*{Shared-Structure Case}
We now analyze the shared-structure case in detail.  
The goal is to show explicitly how the participation of opposite-sign blocks leads to both positive and negative amplitudes in the ground state.  
Our approach is twofold: first we establish the sign pattern mathematically, and then we interpret it physically in terms of \emph{sign-generating quantum interference}.
We begin with the mathematical analysis.

\subsubsection*{Emergence of Negative Amplitudes}
In the angular-momentum basis, Stage~2 involves the lowest $R$-supporting block together with the opposite-sign blocks that couple to it.
We consider the following restricted subspace in the computational basis (transforming back), which consists of two parts:
\begin{itemize}
  \item $\mc{M}$: the set of $\GM$-supporting states containing $R$,  
  \(
  \mc{M} \;=\; \{\, M \subset \GM : R \subset M \,\}.
  \)
  \item $\mc{D}$: the set of dependent-set states paired to $\mc{M}$ via $\XX$-couplers.  
For each $M \in \mc{M}$ and each $b \in M \setminus R$, there is a unique partner $a$ such that $(a,b)$ is an $\XX$-coupled pair.  
The corresponding dependent state $D_{M,b} \in \mc{D}$ is obtained by flipping $(a,b)$.  
Thus each $M$ has $|M\setminus R|$ associated dependent-set states, while each $D_{M,b}$ is uniquely coupled to its parent $M$.
\end{itemize}

Restricting the Hamiltonian to this subspace yields the block form
\[
\mb{H}_{\ms{eff}}^{(\mc{M}\mc{D})} \;=\;
\begin{pmatrix}
\mb{H}_{\mc{M}} & \mb{V} \\
\mb{V}^\top & \mb{H}_{\mc{D}}
\end{pmatrix},
\]
where $\mb{H}_{\mc{M}}$ and $\mb{H}_{\mc{D}}$ are both stoquastic (off-diagonals $\le 0$),
with $\mb{H}_{\mc{D}}$ has strictly positive diagonal entries,
and $\mb{V}$ consists of only nonnegative entries $\mt{jxx}$ (as each $D\in\mc{D}$ is coupled only to one $M\in\mc{M}$).

To identify the ground-state structure, we apply the Rayleigh variational principle to a trial vector 
\((u_M,u_D)\), 
with \(u_M \ge 0\) on \(\mc{M}\).  
The minimum is attained when
\(
u_D \;=\; -\,H_{\mc{D}}^{-1}\mb{V}^\top u_M.
\)
Since $\mb{V}^\top u_M\ge 0$ and $H_{\mc{D}}^{-1}\ge 0$ entrywise,
the Stage~2 ground state necessarily takes the form
$(u_M,\,u_D)$, with  $u_M \ge 0$ and  $u_D \le 0$.
In other words, the ground state carries positive amplitudes on the $\GM$-supporting sector and negative amplitudes on the dependent-set sector,
establishing the structural origin of negative components in the Stage~2 ground state.
This analytical prediction is confirmed numerically: Figure~\ref{fig:neg-amp} plots the total fraction of negative amplitudes as a function of time.

\begin{figure}[!htbp]
  \centering
  \includegraphics[width=0.5\textwidth]{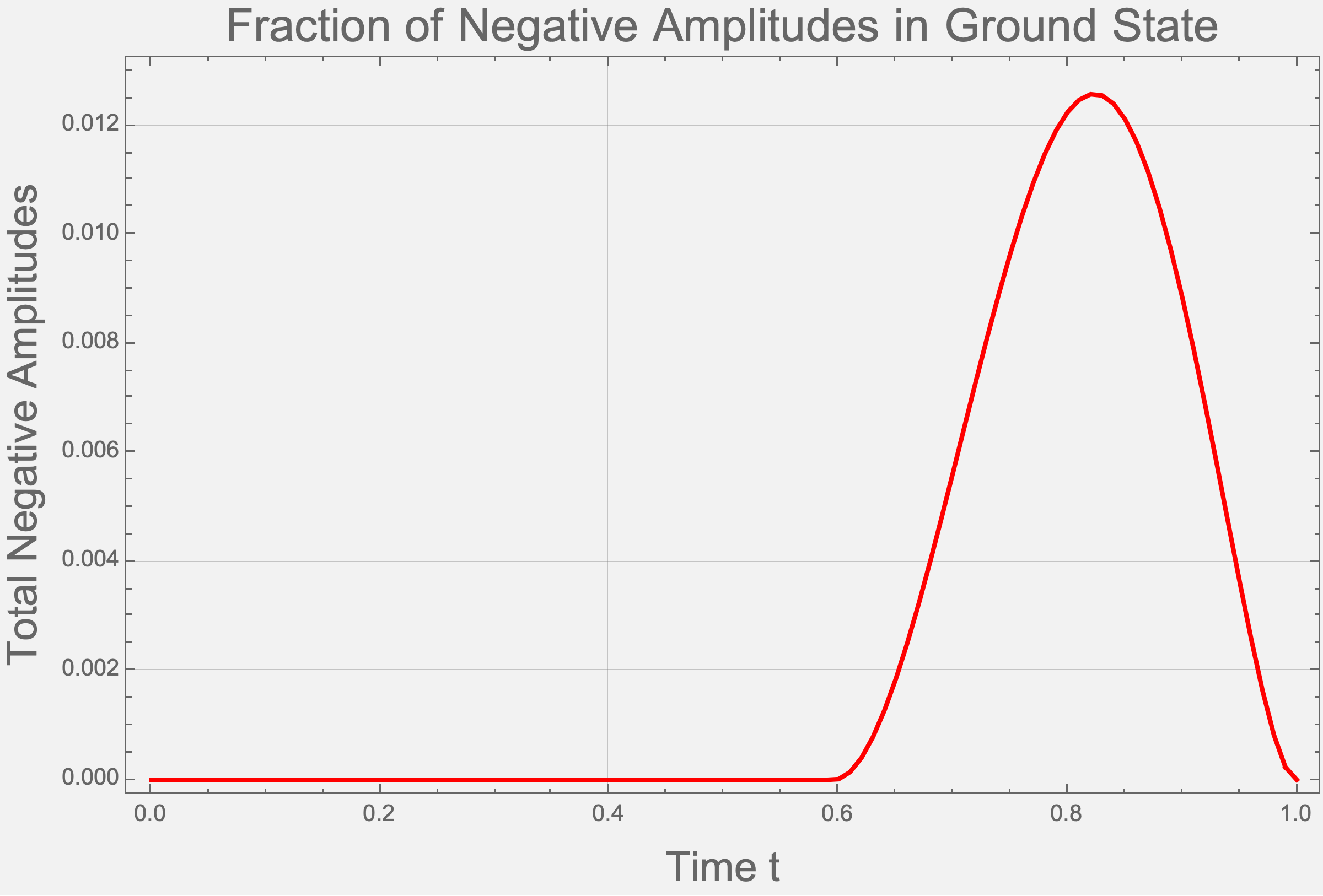}
  \caption{
    \textbf{Emergence of Negative Amplitudes during Stage~2.}
    The plot shows the total fraction of negative amplitudes in the ground state as a function of annealing time \( t \).  
    Throughout Stage~1, the amplitude distribution remains strictly non-negative, consistent with confinement to the same-sign block.  
    As Stage~2 begins (\( t = 0.5 \)), negative amplitudes emerge around \( t \approx 0.6 \).  
  }
  \label{fig:neg-amp}
\end{figure}

\paragraph{Interpretation as sign-generating interference.}  
The negative amplitudes identified above can be understood in terms of
constructive and destructive interference between same-sign and opposite-sign components.
This provides the physical mechanism we call \emph{sign-generating quantum interference}.  

We illustrate this mechanism in detail using the three-vertex conceptual model (V3) in Section~\ref{sec:V3},  
which may also serves as a minimal testbed for probing sign-generating interference experimentally.

\paragraph{Numerical analysis.}
Figure~\ref{fig:L2} compares the energy spectra for the shared-structure case under TFQA and \DDD{}.  
In TFQA $(\Jxx = 0)$, a small-gap anti-crossing is visible, while in \DDD{} with $\Jxx = 2(m{-}1)$ this anti-crossing is dissolved.  
The mechanism is clarified in Figure~\ref{fig:L2-analysis}, which decomposes the spectrum into same-sign and opposite-sign blocks:  
the same-sign component is lifted while the opposite-sign component is lowered, producing destructive interference that avoids any new anti-crossing.  
Finally, Figure~\ref{fig:L2-fail} illustrates the large-$\Jxx$ regime, where double anti-crossings appear and may cause failure.

In summary, Stage~2 succeeds through sign-generating interference along an opposite-sign path, whereby the tunneling-induced anti-crossing is dissolved without introducing new ones.

\begin{figure}[!htbp]
  \centering
  \begin{subfigure}[b]{0.48\textwidth}
    \includegraphics[width=\textwidth]{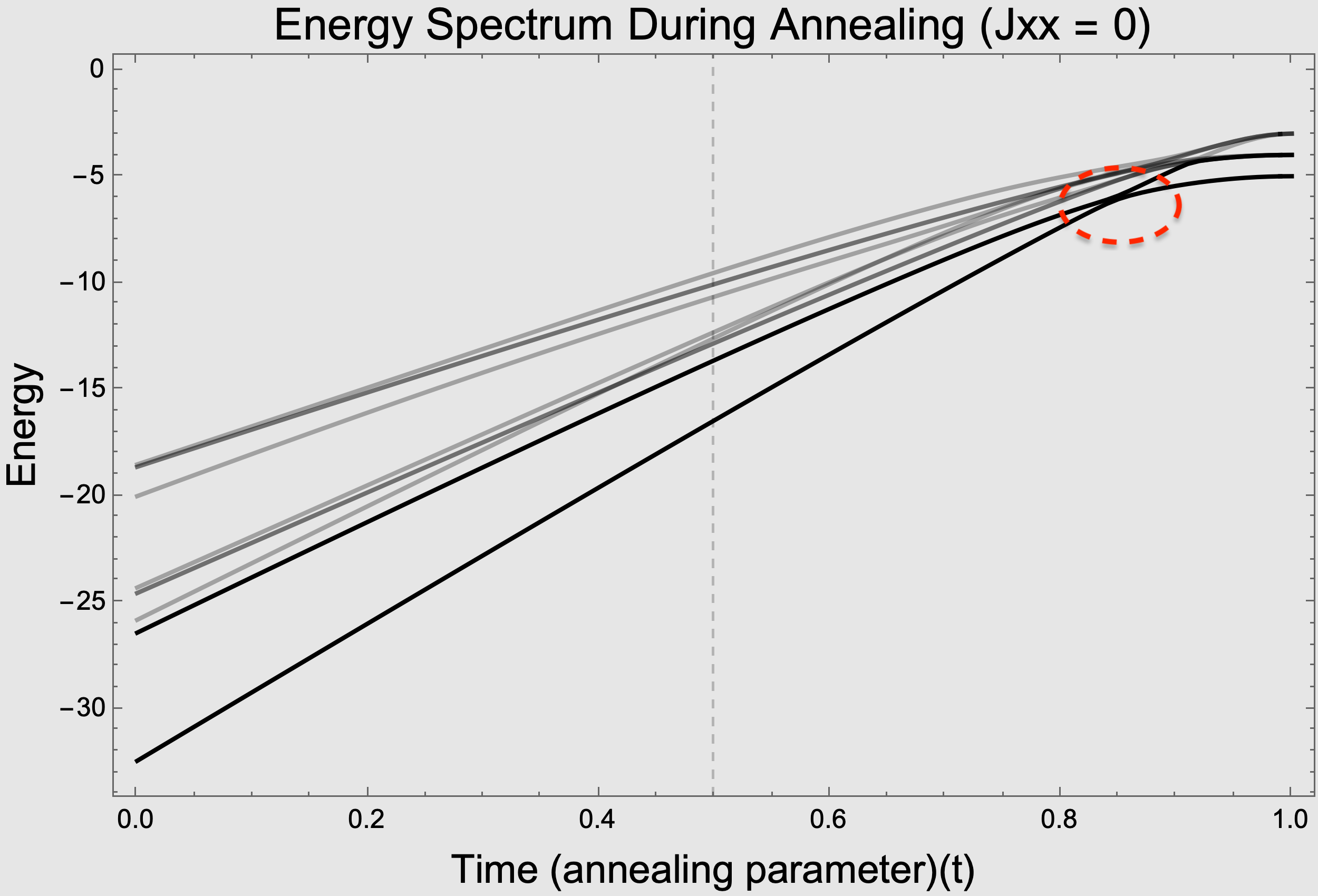}
    \caption{}
  \end{subfigure}
  \hfill
  \begin{subfigure}[b]{0.48\textwidth}
    \includegraphics[width=\textwidth]{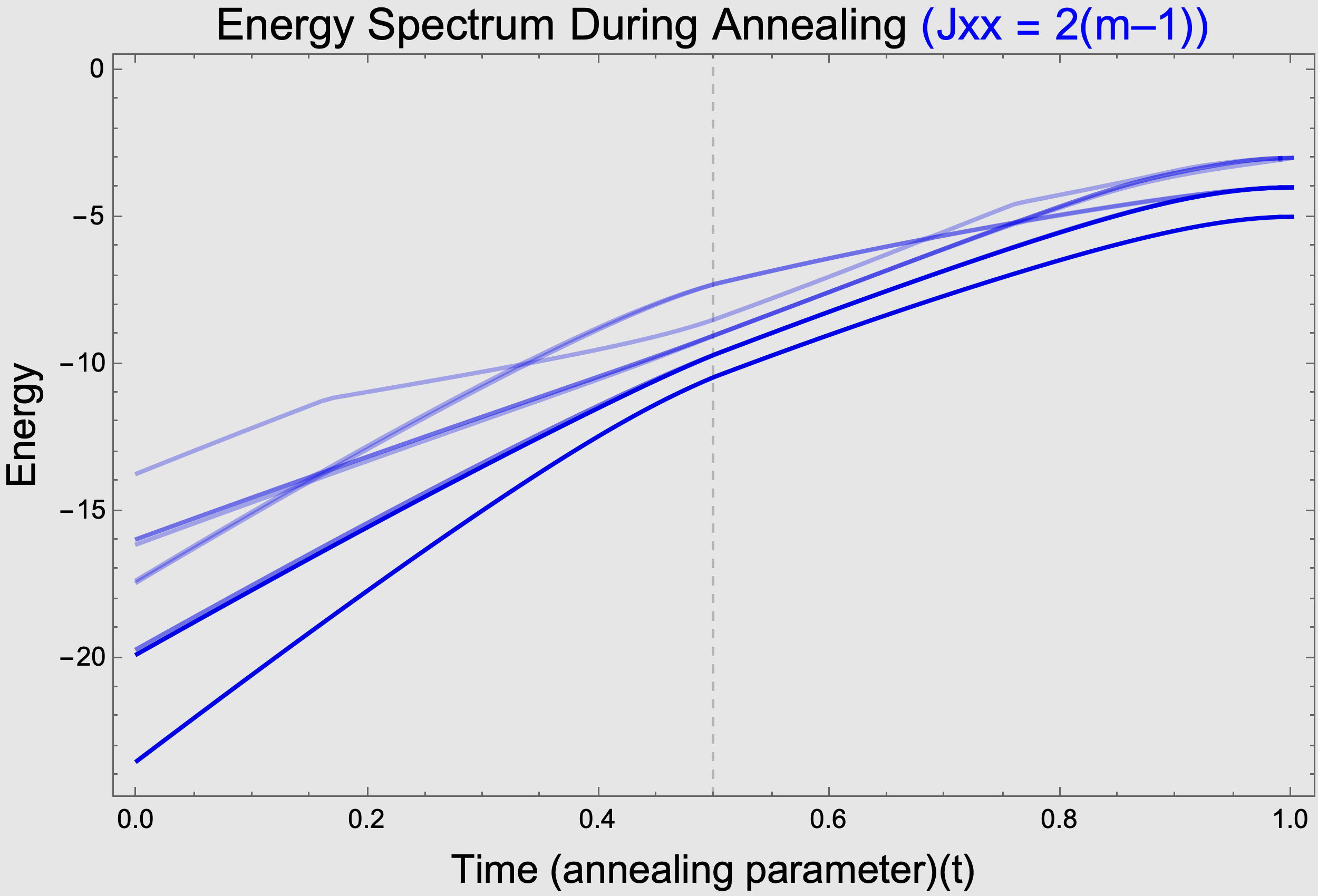}
    \caption{}
  \end{subfigure}

  \vspace{1em}

  \begin{subfigure}[b]{0.48\textwidth}
    \includegraphics[width=\textwidth]{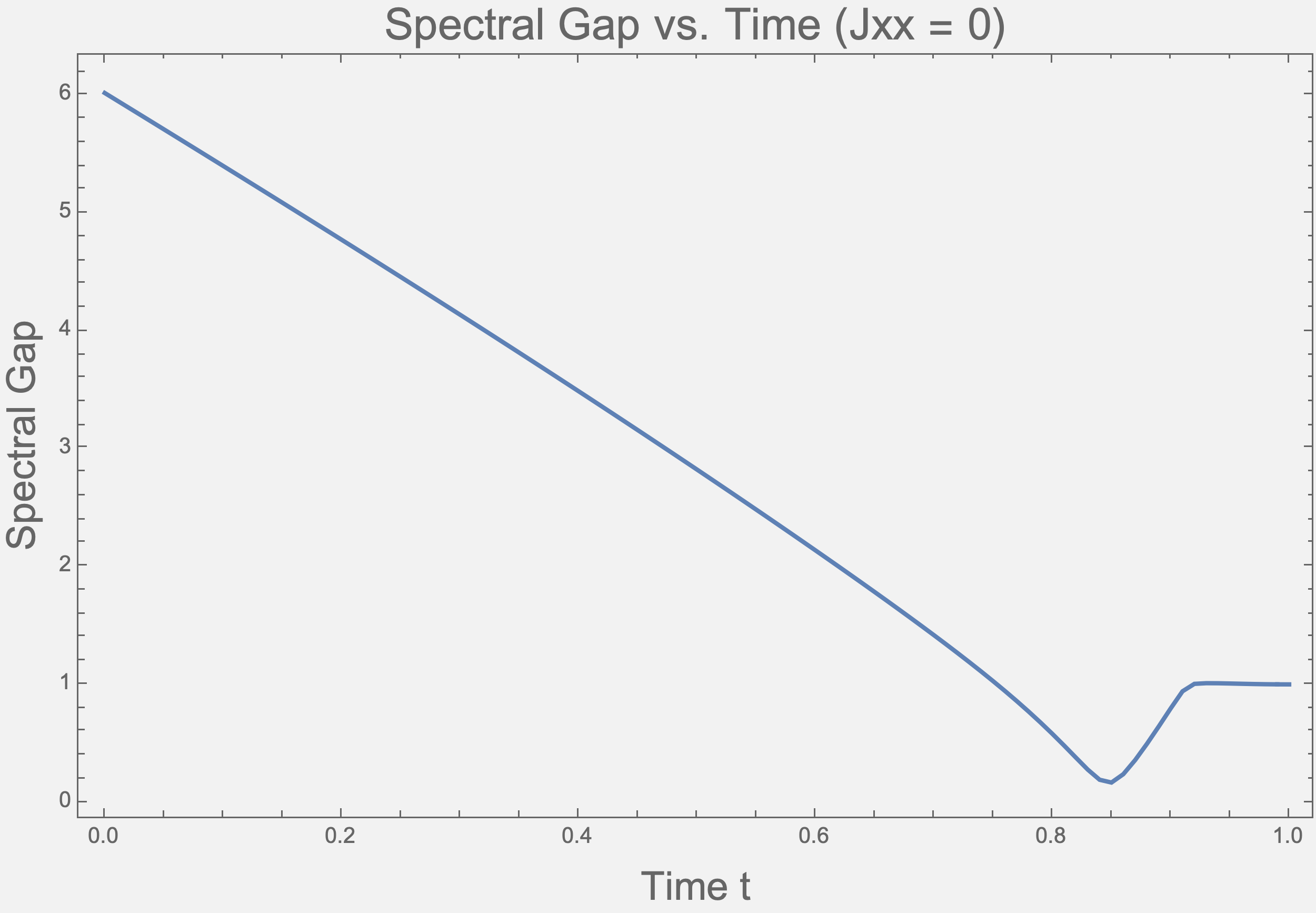}
    \caption{}
  \end{subfigure}
  \hfill
  \begin{subfigure}[b]{0.48\textwidth}
    \includegraphics[width=\textwidth]{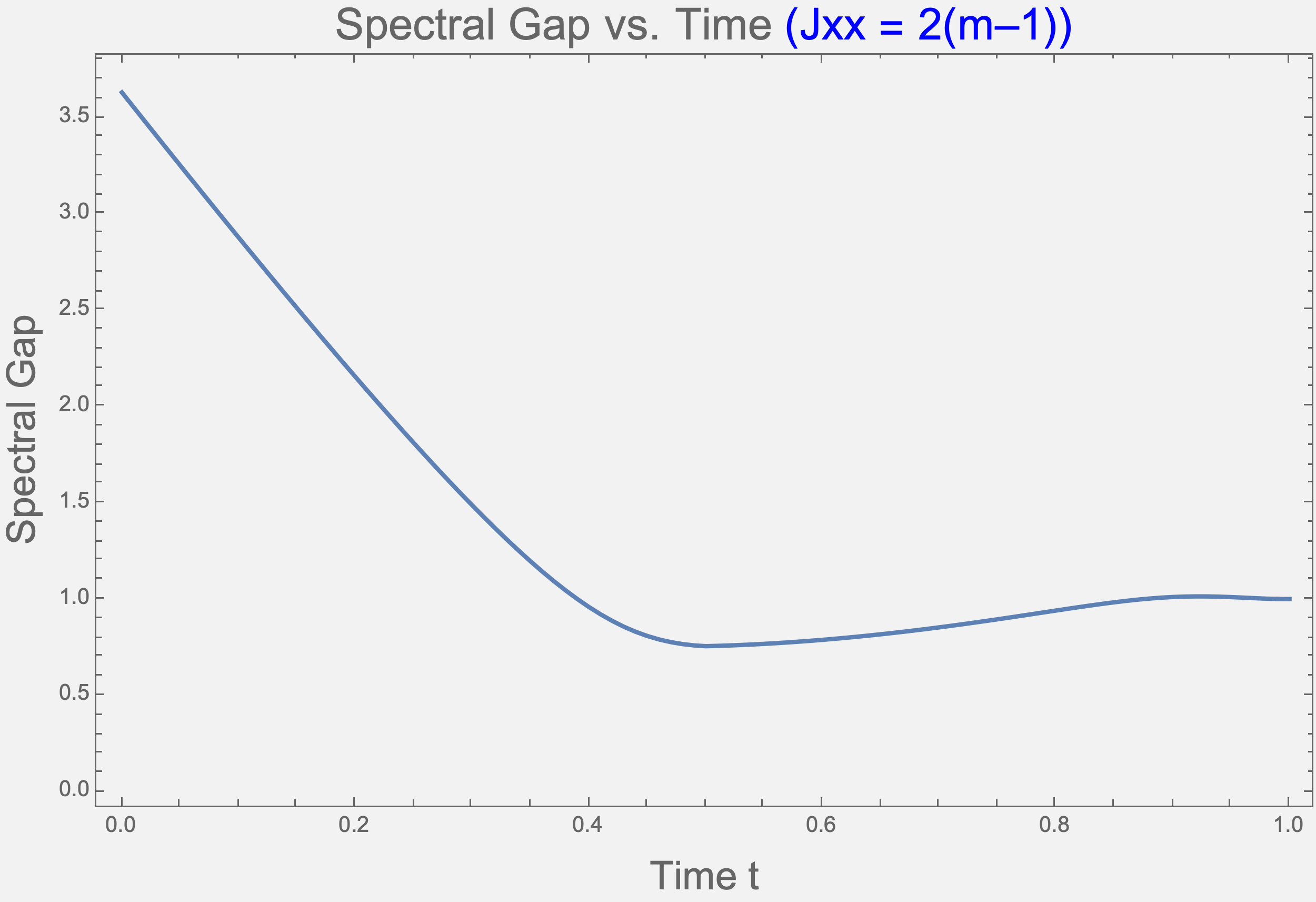}
    \caption{}
  \end{subfigure}
\caption{
Energy spectra for the shared-structure graph \(\Gshare\), with \( m = m_l = 3 \), \( m_r = 2 \), \( m_g = 5 \), and clique size \( n_c = 9 \).\\
(a) TFQA spectrum showing a small-gap anti-crossing; the corresponding gap profile is shown in (c).\\
(b) \DDD{} spectrum with \(\XX\)-driver at \( \Jxx = 2(m{-}1) \); the corresponding gap profile is shown in (d).
}
\label{fig:L2}
\end{figure}

\begin{figure}[!htbp]
  \centering
  \begin{subfigure}[b]{0.48\textwidth}
    \includegraphics[width=\textwidth]{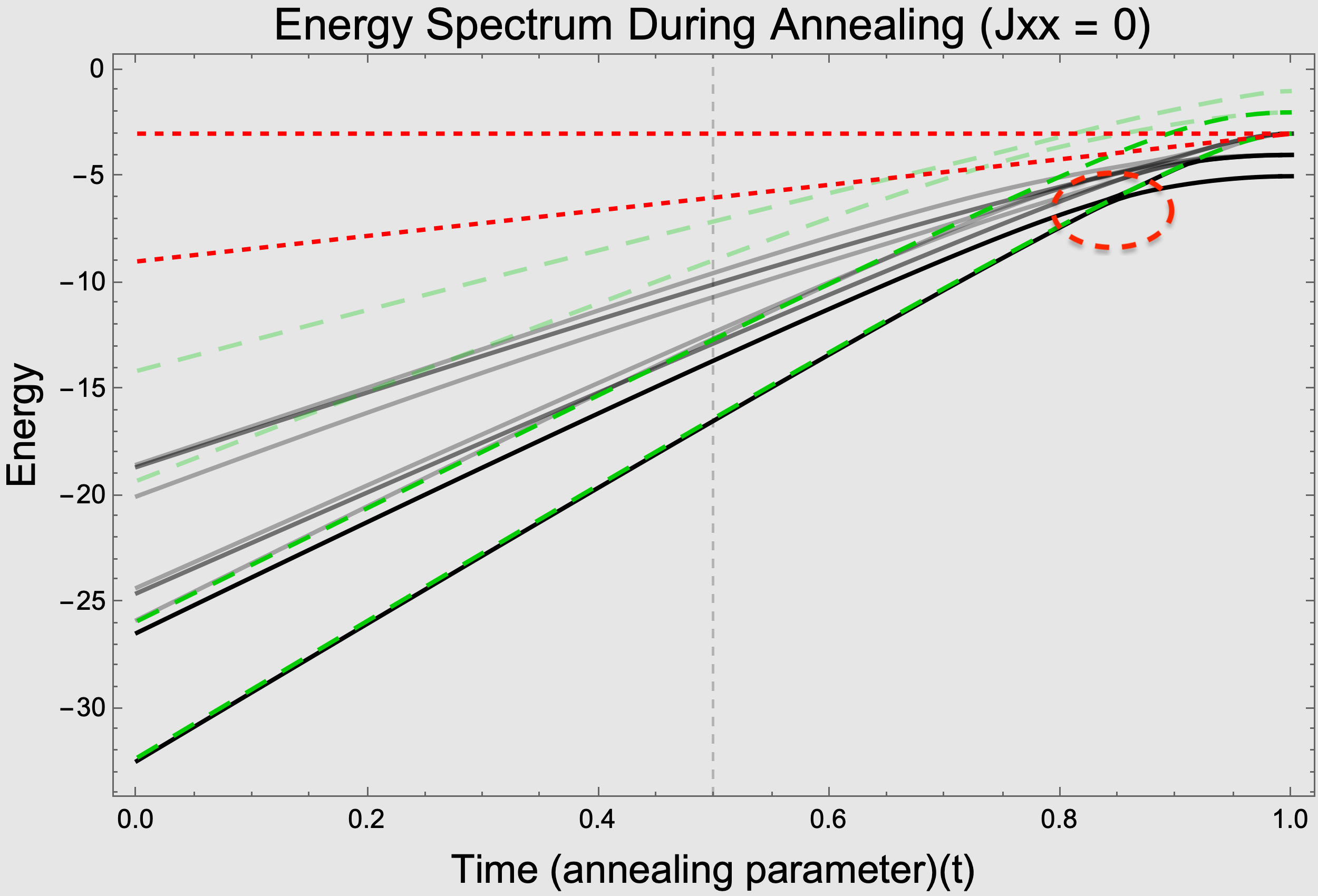}
    \caption{}
  \end{subfigure}
  \hfill
  \begin{subfigure}[b]{0.48\textwidth}
    \includegraphics[width=\textwidth]{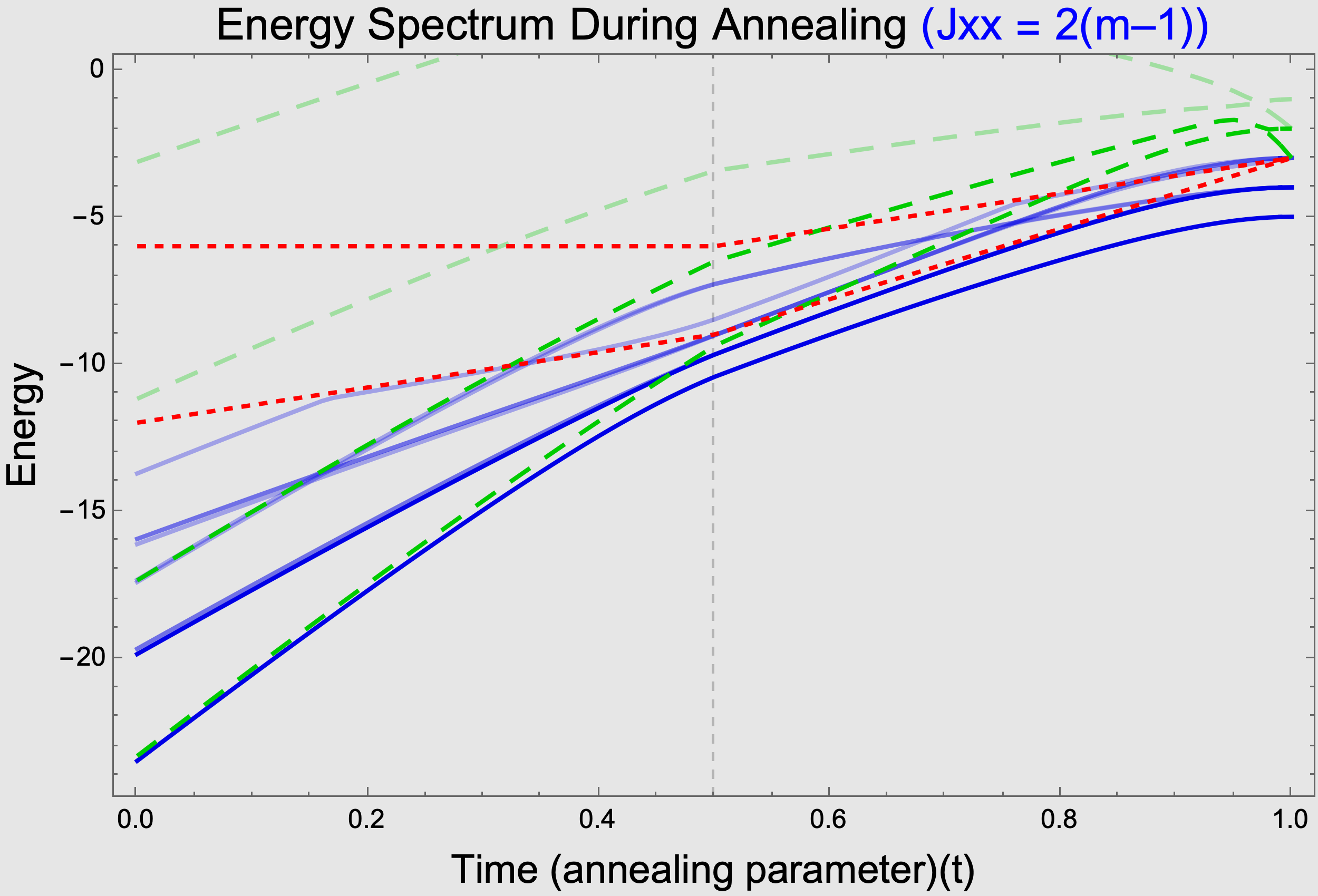}
    \caption{}
  \end{subfigure}

  \vspace{1em}

  \includegraphics[width=0.45\textwidth]{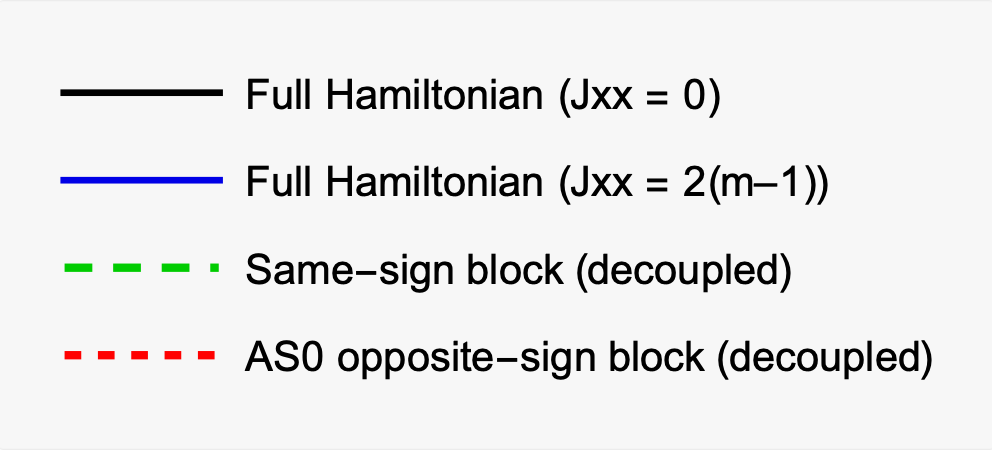}

\caption{
Spectral decomposition for the shared-structure graph \(\Gshare\), with \( m = m_l = 3 \), \( m_r = 2 \), \( m_g = 5 \), and clique size \( n_c = 9 \).  
In TFQA (a), the ground state resides entirely in the same-sign block (green dashed), producing a tunneling-induced anti-crossing.  
In \DDD{} (b), this anti-crossing is dissolved through a see-saw effect: the same-sign block (green dashed) is lifted while the opposite-sign block (red dashed) is lowered.
}
\label{fig:L2-analysis}
\end{figure}

\begin{figure}[!htbp]
  \centering
  \begin{subfigure}[b]{0.48\textwidth}
    \includegraphics[width=\textwidth]{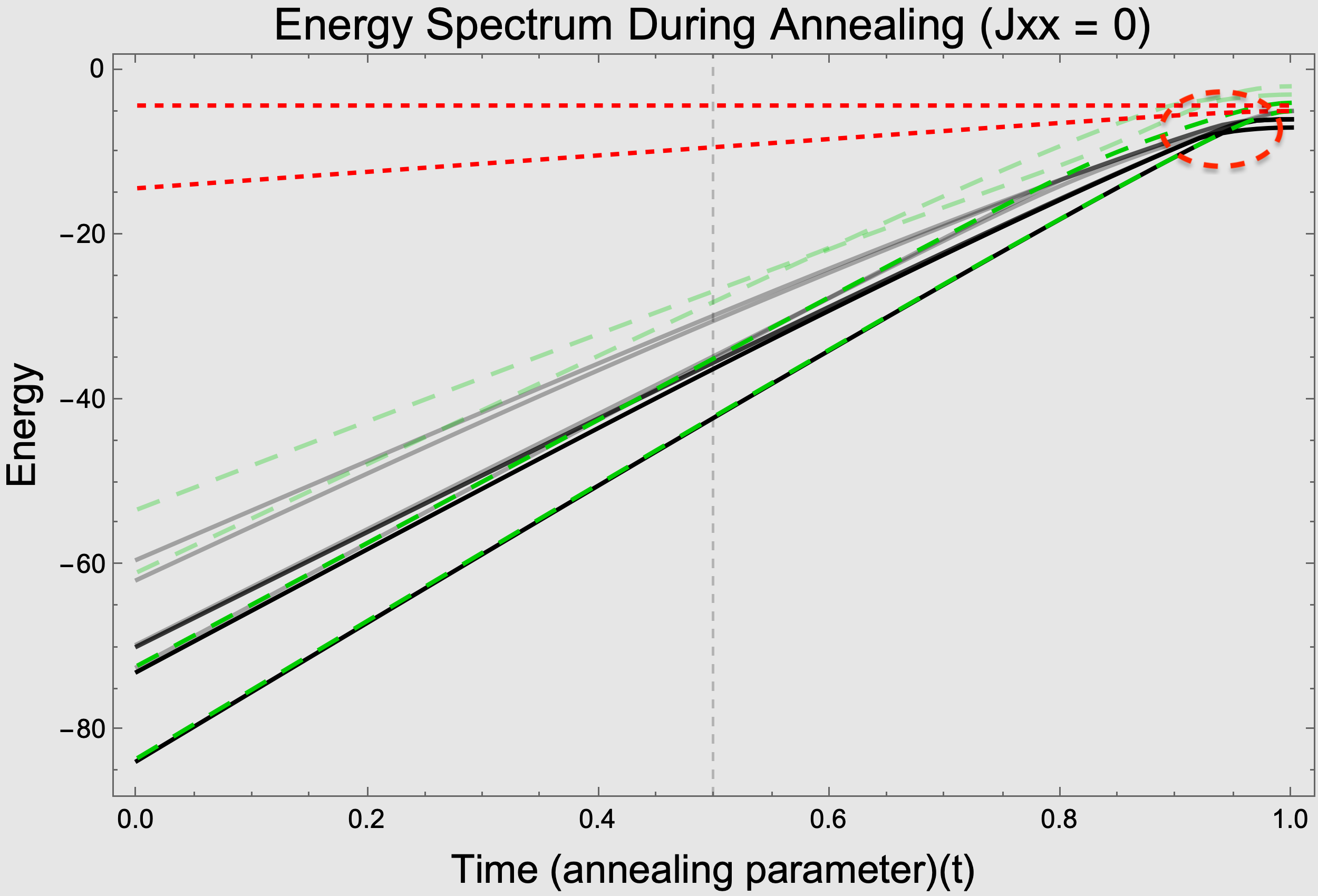}
    \caption{}
  \end{subfigure}
  \hfill
  \begin{subfigure}[b]{0.48\textwidth}
    \includegraphics[width=\textwidth]{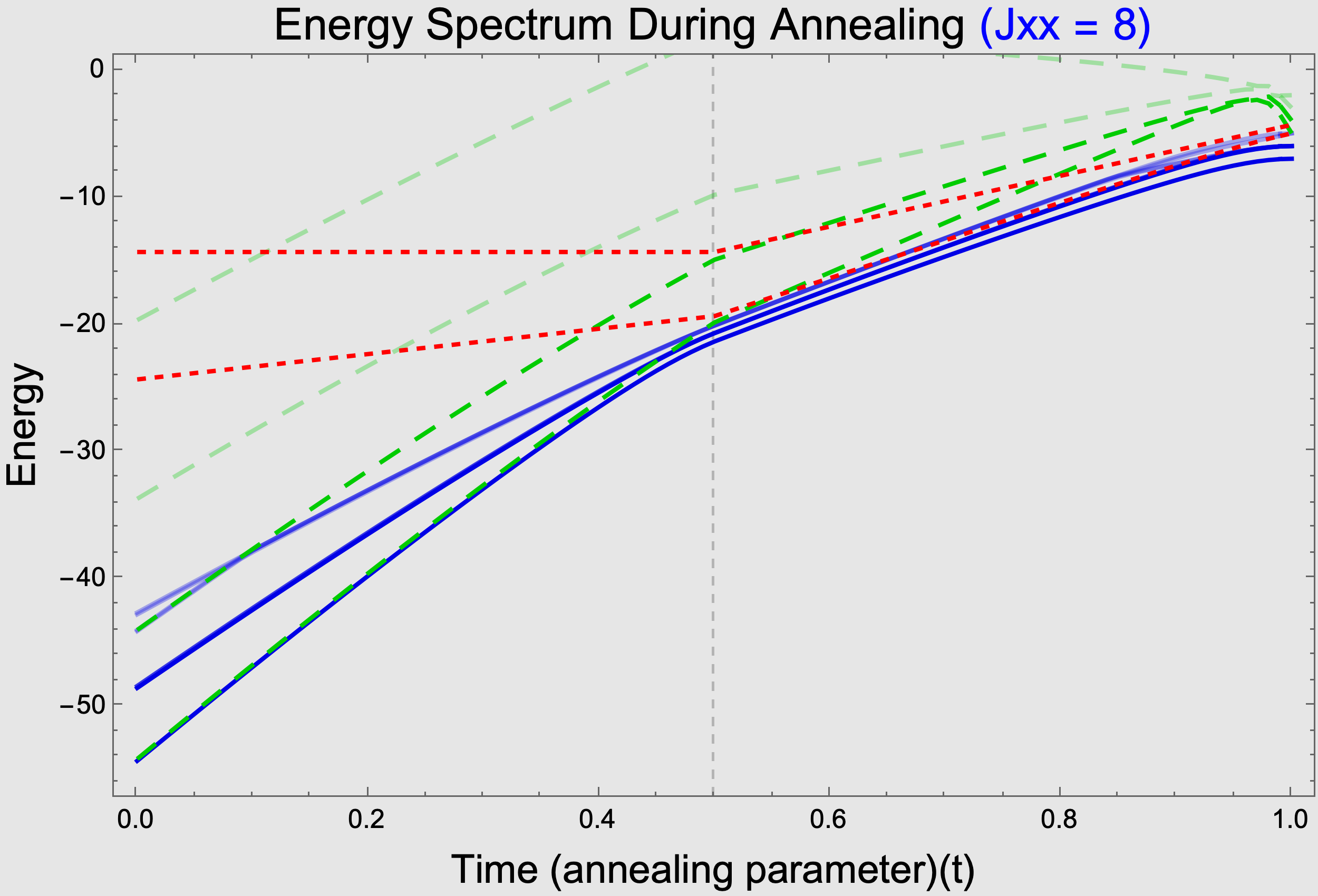}
    \caption{}
  \end{subfigure}

  \vspace{1em}

  \begin{subfigure}[b]{0.48\textwidth}
    \includegraphics[width=0.8\textwidth]{legendPanel.png}
    \caption{}
  \end{subfigure}
  \hfill
  \begin{subfigure}[b]{0.48\textwidth}
    \includegraphics[width=\textwidth]{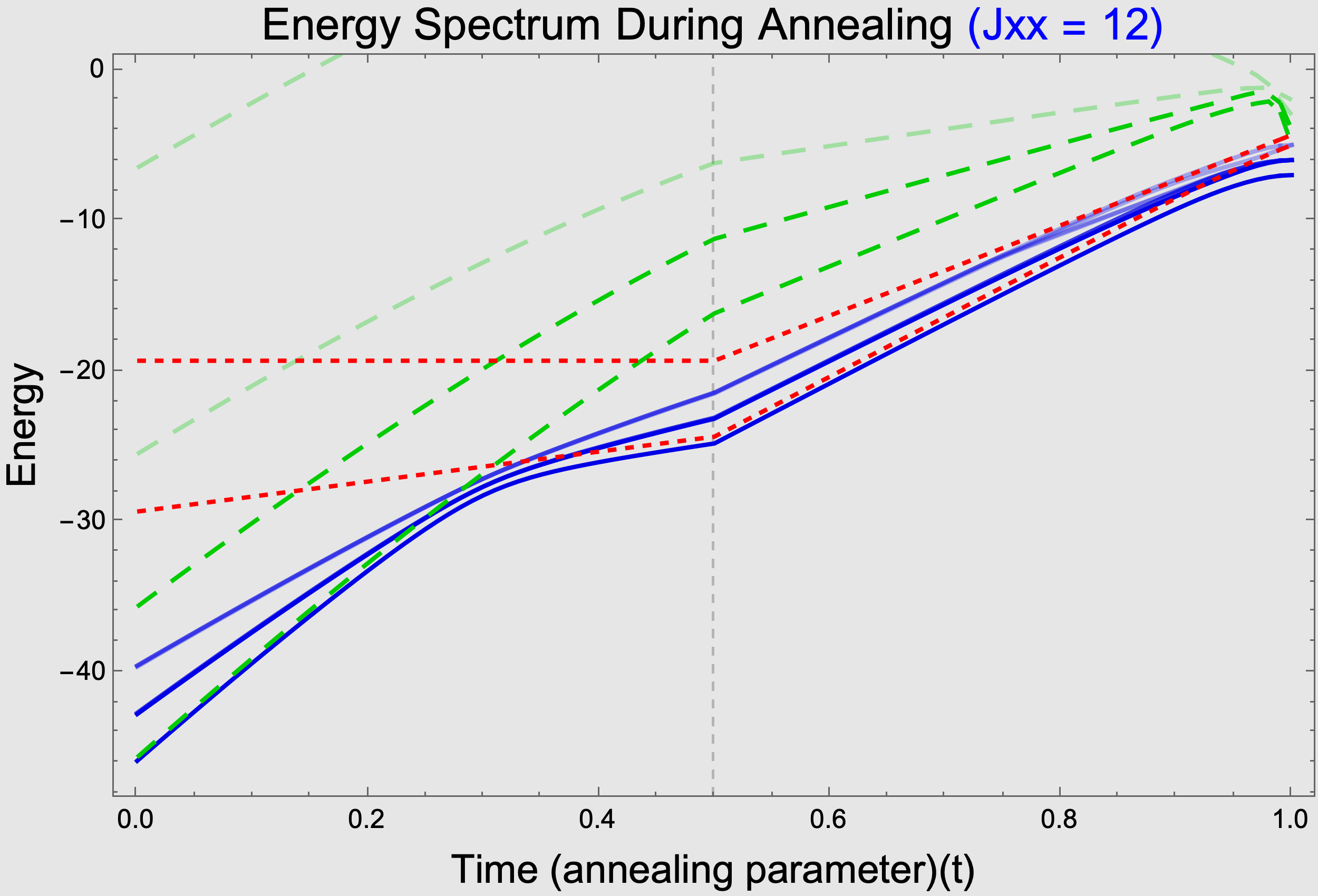}
    \caption{}
  \end{subfigure}
\caption{
Spectral decomposition for the shared-structure graph \(\Gshare\), with \( m = m_l = 5 \), \( m_g = 7 \), and clique size \( n_c = 9 \).\\
(a) Energy spectrum of the same-sign block under TFQA (\(\Jxx = 0\)), showing an anti-crossing near \( t \approx 0.9 \). \\
(b) Full spectrum under \DDD{} with \(\Jxx = 8\): the opposite-sign energies (dashed red) remain above the same-sign ground state (green), confirming successful evolution with no anti-crossing. \\
(c) Legend identifying curve types across panels. \\
(d) Large-\(\Jxx\) case (\(\Jxx = 12\)): the system exhibits two anti-crossings. A block-level anti-crossing appears near \( t \approx 0.3 \) (Stage~1), followed by an interference-involved block-level anti-crossing near \( t \approx 0.8 \) (Stage~2). Both gaps may not be small enough to allow the evolution to bypass them. While the first gap can be small, the evolution remains in the same-sign block and follows the first excited state instead of the true ground state, which is overtaken by the opposite-sign ground state. The second gap, however, can be large, preventing a transition back to the true ground state. As a result, the evolution may terminate in the first excited state.
}
\label{fig:L2-fail}
\end{figure}

\subsection{Three-Vertex Conceptual Model: Illustrating Quantum Interference}
\label{sec:V3}

We define \emph{quantum interference} as the superposition of components in a quantum system that project onto the same basis state and either interfere destructively---reducing amplitude through opposite-sign cancellation---or constructively---increasing amplitude through same-sign contributions.

To illustrate this mechanism, we introduce the three-vertex model (V3), a minimal example in which interference arises from basis rotation and occurs in both the stoquastic case (\( \Jxx = 0 \)) and the non-stoquastic case (\( \Jxx > 0 \)).
Within this framework, we emphasize the distinctive feature of \emph{sign-generating} interference: it produces negative amplitudes in the computational basis.

This section is organized as follows:
\begin{itemize}
\item \textbf{Model Description:} Formulation of the full Hamiltonian in both the computational and angular-momentum bases (Section~\ref{sec:model}).
\item \textbf{Quantum Interference via Basis Rotation:} How interference emerges from a basis change (Section~\ref{sec:interference-basis}).
\item \textbf{Numerical Illustration of Sign-Generating Interference:} Emergence of negative amplitudes in the non-stoquastic case (Section~\ref{sec:sign-generating}).
\end{itemize}

\subsubsection{Model Description}
\label{sec:model}
The model consists of three vertices \( V = \{a, b, r\} \), with associated weights \( w_a \), \( w_b \), and \( w_r \), respectively.  
The graph includes two edges, \( \{ab, ar\} \), each associated with a \( \ZZ \)-coupling, denoted \( \Jzz^{ab} \) and \( \Jzz^{ar} \).  
Additionally, there is an \( \XX \)-coupling on the edge \( ab \), denoted \( \Jxx \).

We bipartition \( V \) into \( L = \{a, b\} \) and \( R = \{r\} \).  
The set \( L \) corresponds to the local minima (\LM{}), while \( \GM = \{b, r\} \)---the maximum independent set---corresponds to the global minimum.  
See Figure~\ref{fig:V3} for a visualization of this setup.
\begin{figure}[H]
  \centering
  \includegraphics[width=0.8\textwidth]{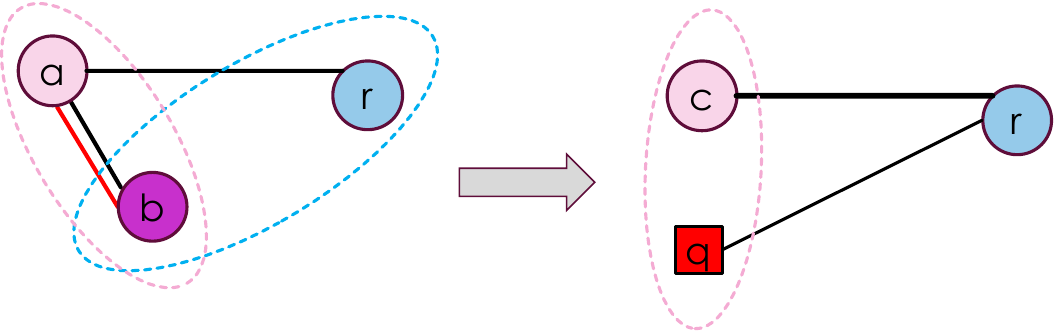}
  \caption{
    (Left) A weighted graph with three vertices $\{a, b, r\}$.
    The graph has two $\ZZ$-couplings (black edges) on $ab$ and $ar$, and one $\XX$-coupling (red edge) on $ab$.\\
    The vertex set is bipartitioned as $L = \{a, b\}$ and $R = \{r\}$.
    The set $L$ (pink dashed oval) corresponds to the local minima (\LM{}),
    while $\GM = \{b, r\}$ (blue dashed oval)---the maximum independent set---represents the global minimum.
    A local basis transformation is applied to the $L$-subsystem.\\
    (Right) The pair $\{a,b\}$ is transformed into $\{c,q\}$, where $c$ is a spin-$\tfrac{1}{2}$ subsystem,
    and $q$ is a spin-0 subsystem corresponding to an opposite-sign state \( \ket{\odot_q} \).
  }
  \label{fig:V3}
\end{figure}

\begin{remark}
  The V3 model is not meant to demonstrate structural steering in Stage~1. Since the right component contains only a single vertex (\( m_r = 1 \)), structrual steering fails, as it requires \( m_r \geq 2 \). The system therefore undergoes a tunneling-induced anti-crossing in Stage~1. At the end of Stage~1, we assume that the ground state has successfully tunneled and is now localized on the state \(\{r\}\), which is a subset of the global minimum configuration \(\GM = \{b, r\}\).
\end{remark}

\subsubsection*{Hamiltonian and Basis Construction}
Each vertex \( k \in \{a, b, r\} \) is modeled as a spin-\(\tfrac{1}{2}\) system with transverse field strength proportional to \( \sqrt{n_k} \). The local Hamiltonian at vertex \( k \), in the computational basis \( \{ \ket{1_k}, \ket{0_k} \} \), is given by
\[
\mathbb{B}_k = 
\begin{bmatrix}
 -w_k  & -\tfrac{\sqrt{n_k}}{2} \mathtt{x} \\[6pt]
 -\tfrac{\sqrt{n_k}}{2} \mathtt{x} & 0
\end{bmatrix}, \qquad w_k = w - \tfrac{n_k - 1}{4} \mt{jxx}.
\]

The \( L \)-subsystem consists of vertices \( a \) and \( b \), coupled via both ZZ- and XX-terms:
\begin{align*}
\mb{H}_L =
\mb{B}_a \otimes \mb{I}_2 + \mb{I}_2 \otimes \mb{B}_b + \Jzz^{ab} \shz{a}\shz{b} +  \tfrac{\sqrt{n_a n_b}}{4} \mt{jxx} \sigma_a^{x} \sigma_b^{x}.
\end{align*}
In the computational basis \( \{ \ket{1_a1_b}, \ket{1_a0_b}, \ket{0_a1_b}, \ket{0_a0_b} \} \), this becomes:
\[
\mb{H}_L =
\begin{blockarray}{ccccc}
& \ket{1_a1_b} & \ket{1_a0_b} & \ket{0_a1_b} & \ket{0_a0_b} \\
  \begin{block}{c(cccc)}
    \cline{2-5} 
\ket{1_a1_b} & \Jzz^{ab} - w_a - w_b & -\tfrac{\sqrt{n_b}}{2} \mt{x} & -\tfrac{\sqrt{n_a}}{2} \mt{x} & \tfrac{\sqrt{n_a n_b}}{4} \mt{jxx} \\
\ket{1_a0_b} & -\tfrac{\sqrt{n_b}}{2} \mt{x} & -w_a & \tfrac{\sqrt{n_a n_b}}{4} \mt{jxx} & -\tfrac{\sqrt{n_a}}{2} \mt{x} \\
\ket{0_a1_b} & -\tfrac{\sqrt{n_a}}{2} \mt{x} & \tfrac{\sqrt{n_a n_b}}{4} \mt{jxx} & -w_b & -\tfrac{\sqrt{n_b}}{2} \mt{x} \\
\ket{0_a0_b} & \tfrac{\sqrt{n_a n_b}}{4} \mt{jxx} & -\tfrac{\sqrt{n_a}}{2} \mt{x} & -\tfrac{\sqrt{n_b}}{2} \mt{x} & 0 \\
\end{block}
\end{blockarray}
\]

Assuming the coupling \( \Jzz^{ab} \) is large, we restrict to the low-energy subspace spanned by \( \ket{1_a0_b}, \ket{0_a1_b}, \ket{0_a0_b} \), yielding:
\[
\bar{\mb{H}}_L = \PLE \mb{H}_L \PLE =
\begin{blockarray}{cccc}
& \ket{1_a0_b} & \ket{0_a1_b} & \ket{0_a0_b} \\
  \begin{block}{c(ccc)}
    \cline{2-4}
\ket{1_a0_b} & -w_a & \tfrac{\sqrt{n_a n_b}}{4} \mt{jxx} & -\tfrac{\sqrt{n_a}}{2} \mt{x} \\
\ket{0_a1_b} & \tfrac{\sqrt{n_a n_b}}{4} \mt{jxx} & -w_b & -\tfrac{\sqrt{n_b}}{2} \mt{x} \\
\ket{0_a0_b} & -\tfrac{\sqrt{n_a}}{2} \mt{x} & -\tfrac{\sqrt{n_b}}{2} \mt{x} & 0 \\
\end{block}
\end{blockarray}
\]

This restriction removes the high-energy state \(\ket{1_a1_b} \), which is energetically penalized by large
\( \Jzz^{ab} \).

To uncover the effective angular momentum structure, we apply the basis transformation:
\[
\Ucombine =
\begin{blockarray}{cccc}
& \ket{1_c} & \ket{0_c} & \ket{\odot_q} \\
\begin{block}{c(ccc)}
  \cline{2-4}
\ket{1_a0_b} & \sqrt{\tfrac{n_a}{n_c}} & 0 & -\sqrt{\tfrac{n_b}{n_c}} \\
\ket{0_a1_b} & \sqrt{\tfrac{n_b}{n_c}} & 0 & \sqrt{\tfrac{n_a}{n_c}} \\
\ket{0_a0_b} & 0 & 1 & 0 \\
\end{block}
\end{blockarray}
\quad \text{where } n_c = n_a + n_b.
\]

The corresponding transformed basis states are:
\begin{align*}
  \begin{cases}
  \ket{0_c} &= \ket{0_a0_b},\\
\ket{1_c} &= \sqrt{\tfrac{n_a}{n_c}}\ket{1_a0_b} + \sqrt{\tfrac{n_b}{n_c}}\ket{0_a1_b}, \\
\ket{\odot_q} &= -\sqrt{\tfrac{n_b}{n_c}}\ket{1_a0_b} + \sqrt{\tfrac{n_a}{n_c}}\ket{0_a1_b}.
  \end{cases}
\end{align*}

In this basis, the restricted Hamiltonian becomes:
\[
\bar{\mb{H}}_L =
\begin{blockarray}{cccc}
& \ket{1_c} & \ket{0_c} & \ket{\odot_q} \\
\begin{block}{c(ccc)}
  \cline{2-4}
\ket{1_c} & -\left(w - \tfrac{n_c - 1}{4} \mt{jxx} \right) & -\tfrac{\sqrt{n_c}}{2} \mt{x} & 0 \\
\ket{0_c} & -\tfrac{\sqrt{n_c}}{2} \mt{x} & 0 & 0 \\
\ket{\odot_q} & 0 & 0 & -\left(w + \tfrac{1}{4} \mt{jxx}\right) \\
\end{block}
\end{blockarray}
\]


We specialize to the case \( n_b = 1 \), \( n_a = n_c - 1 \), we have $w_b=w_r=w$,
while $w_a=  w - \tfrac{n_c - 2}{4} \mt{jxx}$.
The corresponding full transformed basis states are:
\begin{align*}
\begin{array}{rcl@{\qquad}rcl}
  \ket{0_c0_r} &=& \ket{0_a0_b0_r}, &
  \ket{0_c1_r} &=& \ket{0_a0_b1_r}, \\
  \ket{1_c0_r} &=& \sqrt{\tfrac{n_c - 1}{n_c}} \ket{1_a0_b0_r}
                + \sqrt{\tfrac{1}{n_c}} \ket{0_a1_b0_r}, &
  \ket{1_c1_r} &=& \sqrt{\tfrac{n_c - 1}{n_c}} \ket{1_a0_b1_r}
                + \sqrt{\tfrac{1}{n_c}} \ket{0_a1_b1_r}, \\
  \ket{\odot_q0_r} &=& -\sqrt{\tfrac{1}{n_c}} \ket{1_a0_b0_r}
                     + \sqrt{\tfrac{n_c - 1}{n_c}} \ket{0_a1_b0_r}, &
  \ket{\odot_q1_r} &=& -\sqrt{\tfrac{1}{n_c}} \ket{1_a0_b1_r}
                     + \sqrt{\tfrac{n_c - 1}{n_c}} \ket{0_a1_b1_r}.
\end{array}
\end{align*}

We now explicitly write down the full Hamiltonian in the two bases:
\paragraph{Full Hamiltonian in the Computational Basis.}
\begin{center}
\scalebox{0.9}{$
  \mb{H}_{\ms{full}}^{\ms{comp}}=
\begin{blockarray}{ccccccc}
& \ket{1_a0_b1_r} & \ket{1_a0_b0_r} & \ket{0_a1_b1_r} & \ket{0_a1_b0_r} & \ket{0_a0_b1_r} & \ket{0_a0_b0_r} \\
  \begin{block}{c(cccccc)}
     \cline{2-7}
\ket{1_a0_b1_r} & - 2w +\tfrac{n_c - 2}{4} \mt{jxx} + \Jzz  & -\tfrac{1}{2}\mt{x} & \tfrac{\sqrt{n_c - 1}}{4} \mt{jxx} & 0 & -\tfrac{\sqrt{n_c - 1}}{2} \mt{x} & 0 \\
\ket{1_a0_b0_r} & -\tfrac{1}{2}\mt{x} & - w+\tfrac{n_c - 2}{4} \mt{jxx}  & 0 & \tfrac{\sqrt{n_c - 1}}{4} \mt{jxx} & 0 & -\tfrac{\sqrt{n_c - 1}}{2} \mt{x} \\
\ket{0_a1_b1_r} & \tfrac{\sqrt{n_c - 1}}{4} \mt{jxx} & 0 & -2w & -\tfrac{1}{2}\mt{x} & -\tfrac{1}{2}\mt{x} & 0 \\
\ket{0_a1_b0_r} & 0 & \tfrac{\sqrt{n_c - 1}}{4} \mt{jxx} & -\tfrac{1}{2}\mt{x} & -w & 0 & -\tfrac{1}{2}\mt{x} \\
\ket{0_a0_b1_r} & -\tfrac{\sqrt{n_c - 1}}{2} \mt{x} & 0 & -\tfrac{1}{2}\mt{x} & 0 & -w & -\tfrac{1}{2}\mt{x} \\
\ket{0_a0_b0_r} & 0 & -\tfrac{\sqrt{n_c - 1}}{2} \mt{x} & 0 & -\tfrac{1}{2}\mt{x} & -\tfrac{1}{2}\mt{x} & 0 \\
\end{block}
\end{blockarray}
$}
\end{center}

\paragraph{Full Hamiltonian in the Angular Momentum Basis.}
\begin{center}
\scalebox{0.85}{$
  \mb{H}_{\ms{full}}^{\ms{ang}}=
\begin{blockarray}{ccccccc}
& \ket{1_c1_r} & \ket{1_c0_r} & \ket{0_c1_r} & \ket{0_c0_r} & \ket{\odot_q1_r} & \ket{\odot_q0_r} \\
  \begin{block}{c(cccc|cc)}
       \cline{2-7}
\ket{1_c1_r} & - 2w + \tfrac{n_c-1}{4} \mt{jxx} + \tfrac{n_c-1}{n_c}\Jzz & -\tfrac{1}{2}\mt{x} & -\tfrac{\sqrt{n_c}}{2} \mt{x}& 0 & - \sqrt{\tfrac{n_c-1}{n_c^2}}\Jzz & 0 \\
\ket{1_c0_r} & -\tfrac{1}{2}\mt{x} & - w +\tfrac{n_c-1}{4} \mt{jxx} & 0 & -\tfrac{\sqrt{n_c}}{2} \mt{x}& 0 & 0 \\
\ket{0_c1_r} & -\tfrac{\sqrt{n_c}}{2} \mt{x} & 0 & -w & -\tfrac{1}{2} \mt{x}& 0 & 0 \\
\ket{0_c0_r} & 0 & -\tfrac{\sqrt{n_c}}{2}\mt{x} & -\tfrac{1}{2}\mt{x} & 0 & 0 & 0 \\
\cline{2-5}
 \ket{\odot_q1_r} & {-\sqrt{\tfrac{n_c-1}{n_c^2}}\Jzz}  & 0 & 0 & 0 & - 2w  -\tfrac{1}{4}\mt{jxx} + \tfrac{1}{n_c}\Jzz & -\tfrac{1}{2}\mt{x} \\
\ket{\odot_q0_r} & 0 & 0 & 0 & 0 & -\tfrac{1}{2}\mt{x} &  - w  -\tfrac{1}{4}\mt{jxx}\\
\end{block}
\end{blockarray}
$}
\end{center}
The first four basis states span the same-sign block; the last two belong to the opposite-sign block. Coupling between blocks arises via the off-diagonal term proportional to \( \Jzz=\Jzz^{\ms{ar}} \).

\subsubsection{Explaining Quantum Interference via Basis Rotation}
\label{sec:interference-basis}

At the start of Stage~2, the evolving ground state has been steered into the
\(R\)-localized region (i.e., \( \ket{0_c1_r} \)).
As the evolution progresses, the relevant subspace is projected to
\(\mb{I}-\ket{0_r}\!\bra{0_r}=\ket{1_r}\!\bra{1_r}\), which is spanned by
\[
\big\{ \ket{1_c1_r},\ \ket{0_c1_r},\ \ket{\odot_q1_r} \big\},
\]
with effective Hamiltonian
\begin{align}
\mb{H}_{\eff}^{\ms{ang}}=
\begin{blockarray}{cccc}
& \ket{1_c1_r} & \ket{0_c1_r} & \ket{\odot_q1_r} \\
\begin{block}{c(ccc)}
\ket{1_c1_r} & - 2w + \tfrac{n_c-1}{4}\mt{jxx} + \tfrac{n_c-1}{n_c}\Jzz & -\tfrac{\sqrt{n_c}}{2}\mt{x} & - \sqrt{\tfrac{n_c-1}{n_c^2}}\Jzz \\
\ket{0_c1_r} & -\tfrac{\sqrt{n_c}}{2}\mt{x} & -w & 0 \\
\ket{\odot_q1_r} & -\sqrt{\tfrac{n_c-1}{n_c^2}}\Jzz & 0 & - 2w - \tfrac{1}{4}\mt{jxx} + \tfrac{1}{n_c}\Jzz \\
\end{block}
\end{blockarray}
\label{eq:V3-eff}
\end{align}

The ground state evolves into a superposition of the form
\[
\ket{\psi(t)} = \psi_{cr}(t)\ket{1_c1_r} + \psi_{qr}(t)\ket{\odot_q1_r} + \psi_{\bar{c}r}(t)\ket{0_c1_r},
\]
where all coefficients are nonnegative throughout the evolution as the Hamiltonian is stoquastic (in the angular-momentum basis).
The interference mechanism arises from the angular-momentum basis rotation:
\begin{align*}
\ket{1_c1_r} &= \sqrt{\tfrac{n_c-1}{n_c}}\,\ket{1_a0_b1_r} + \sqrt{\tfrac{1}{n_c}}\,\ket{0_a1_b1_r}, \\
\ket{\odot_q1_r} &= -\sqrt{\tfrac{1}{n_c}}\,\ket{1_a0_b1_r} + \sqrt{\tfrac{n_c-1}{n_c}}\,\ket{0_a1_b1_r}.
\end{align*}

Restricting to the subspace spanned by these orthonormal components, we write
\[
\ket{\psi(t)} = \psi_{cr}(t)\ket{1_c1_r} + \psi_{qr}(t)\ket{\odot_q1_r}.
\]
Expressed in the computational basis, this becomes
\begin{equation}
\label{eq:psi-decomp}
\ket{\psi(t)} = \alpha(t)\ket{0_a1_b1_r} + \beta(t)\ket{1_a0_b1_r},
\end{equation}
where
\begin{equation}
  \label{eq:alpha-beta}
  \alpha(t) = \psi_{cr}(t)\sqrt{\tfrac{1}{n_c}} + \psi_{qr}(t)\sqrt{\tfrac{n_c-1}{n_c}}, 
  \qquad
  \beta(t)  = \psi_{cr}(t)\sqrt{\tfrac{n_c-1}{n_c}} - \psi_{qr}(t)\sqrt{\tfrac{1}{n_c}}.
\end{equation}
Here \( \psi_{cr}(t) \) and \( \psi_{qr}(t) \) are the amplitudes on the same-sign basis state \( \ket{1_c1_r} \) and the opposite-sign basis state \( \ket{\odot_q1_r} \), respectively.

\begin{center}
\fbox{\parbox{0.99\linewidth}{
\textbf{Interference picture.}
The orthogonal components \( \ket{1_c1_r} \) and \( \ket{\odot_q1_r} \) interfere \emph{constructively} on
\( \ket{0_a1_b1_r} \), increasing \( \alpha(t) \) (supporting the global minimum), 
and \emph{destructively} on \( \ket{1_a0_b1_r} \), decreasing \( \beta(t) \) (dependent-set cancellation).
}}
\end{center}
In the stoquastic case (\( \Jxx=0 \)), the interference produces only nonnegative amplitudes, and both \(\alpha(t)\) and \(\beta(t)\) remain nonnegative.  
In the non-stoquastic case (\( \Jxx>0 \)), however, destructive interference can drive \(\beta(t)\) negative, providing evidence of \emph{sign-generating quantum interference}.

To show that \(\beta(t)<0\), we express \(\mb{H}_{\eff}^{\ms{ang}}\) back in the computational basis on the reduced subspace:
\begin{align}
\mb{H}_{\eff}^{\ms{comp}}=
\begin{blockarray}{ccc}
& \ket{0_a1_b1_r} & \ket{1_a0_b1_r} \\
\begin{block}{c(cc)}
\ket{0_a1_b1_r} & -2w & \tfrac{\sqrt{n_c-1}}{4}\mt{jxx} \\
\ket{1_a0_b1_r} & \tfrac{\sqrt{n_c-1}}{4}\mt{jxx} &
-2w + \tfrac{n_c-2}{4}\mt{jxx} + \Jzz \\
\end{block}
\end{blockarray}
\label{eq:V3-eff-comp}
\end{align}
Consistent with the Rayleigh variational argument for $\mb{H}_{\ms{eff}}^{(\mc{M}\mc{D})}$ above,
the ground state of \(\mb{H}_{\eff}^{\ms{comp}}\) has positive amplitude on the \GM{} state $\ket{0_a1_b1_r}$, namely \(\alpha(t) > 0\), and negative amplitude on the dependent-set state $\ket{1_a0_b1_r}$, namely \(\beta(t) < 0\), throughout Stage~2 (when the effective Hamiltonian is valid).

\subsubsection{Numerical Illustration of Sign-Generating Interference}
\label{sec:sign-generating}
To visualize the mechanism described above, we numerically compute the instantaneous ground state of the V3 Hamiltonian over the course of the annealing evolution during Stage~2. 
We plot its amplitude components in both the angular momentum basis and the computational basis, showing how quantum interference arises in both the stoquastic case (\(\Jxx = 0\)) and the non-stoquastic case (\(\Jxx = 0.6\)), and how sign generation emerges only in the latter.

The comparison highlights the essential distinction between tunneling-based and interference-based evolution.  
In the stoquastic case, the system undergoes a tunneling-induced anti-crossing as it transitions from support on local minima to support on the global minimum.  
Although quantum interference is present, it remains sign-preserving, and the system evolves entirely within the non-negative cone.  
In contrast, when \(\Jxx = 0.6\), negative amplitudes appear---evidence of sign-generating quantum interference.
Figures~\ref{fig:energy-spectrum-comparison} and~\ref{fig:matrixplot-combined} show the corresponding energy spectra and signed probability evolutions.

\begin{figure}[!htbp]
\centering
\begin{tabular}{cc}
\includegraphics[width=0.48\textwidth]{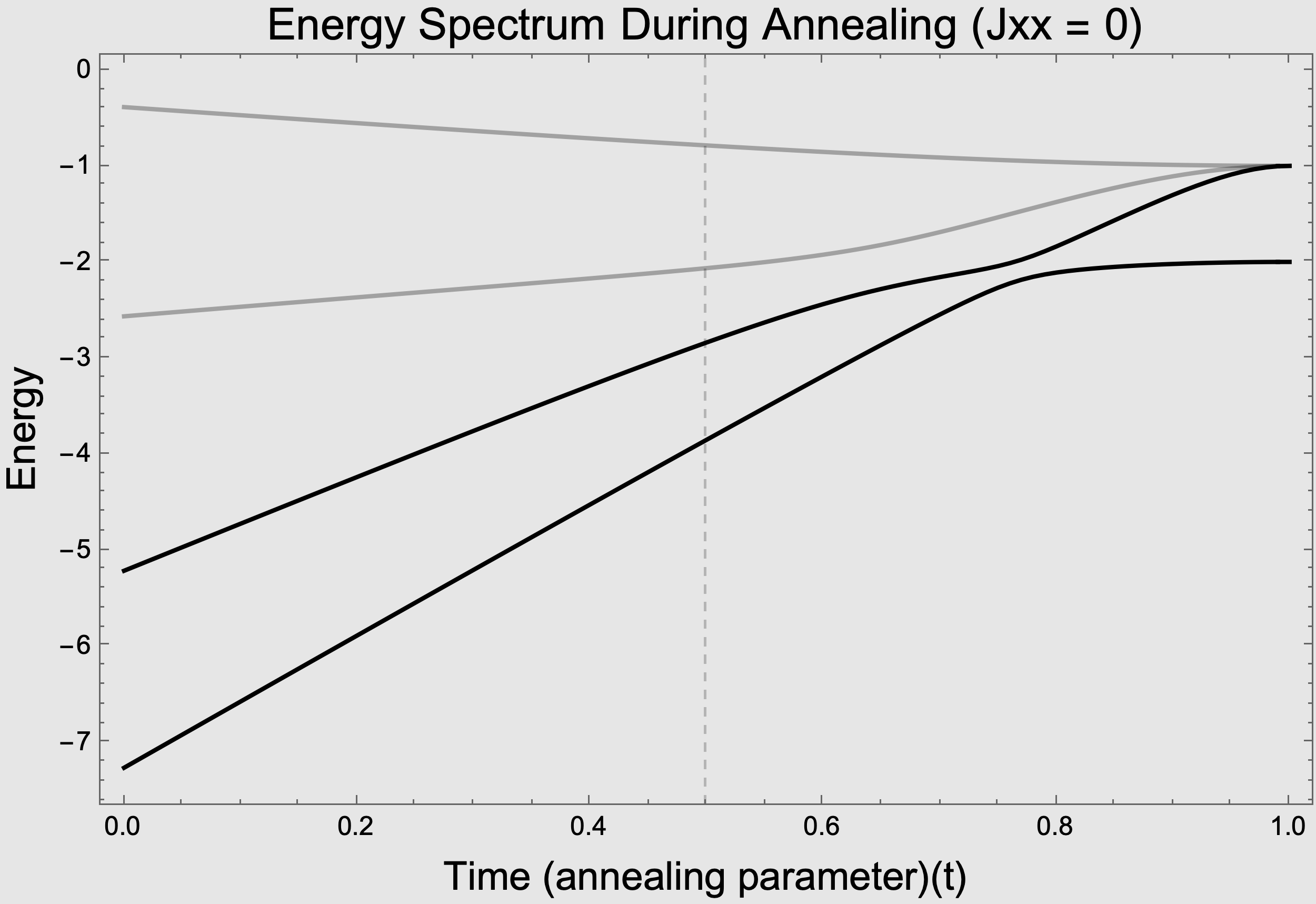} &
\includegraphics[width=0.48\textwidth]{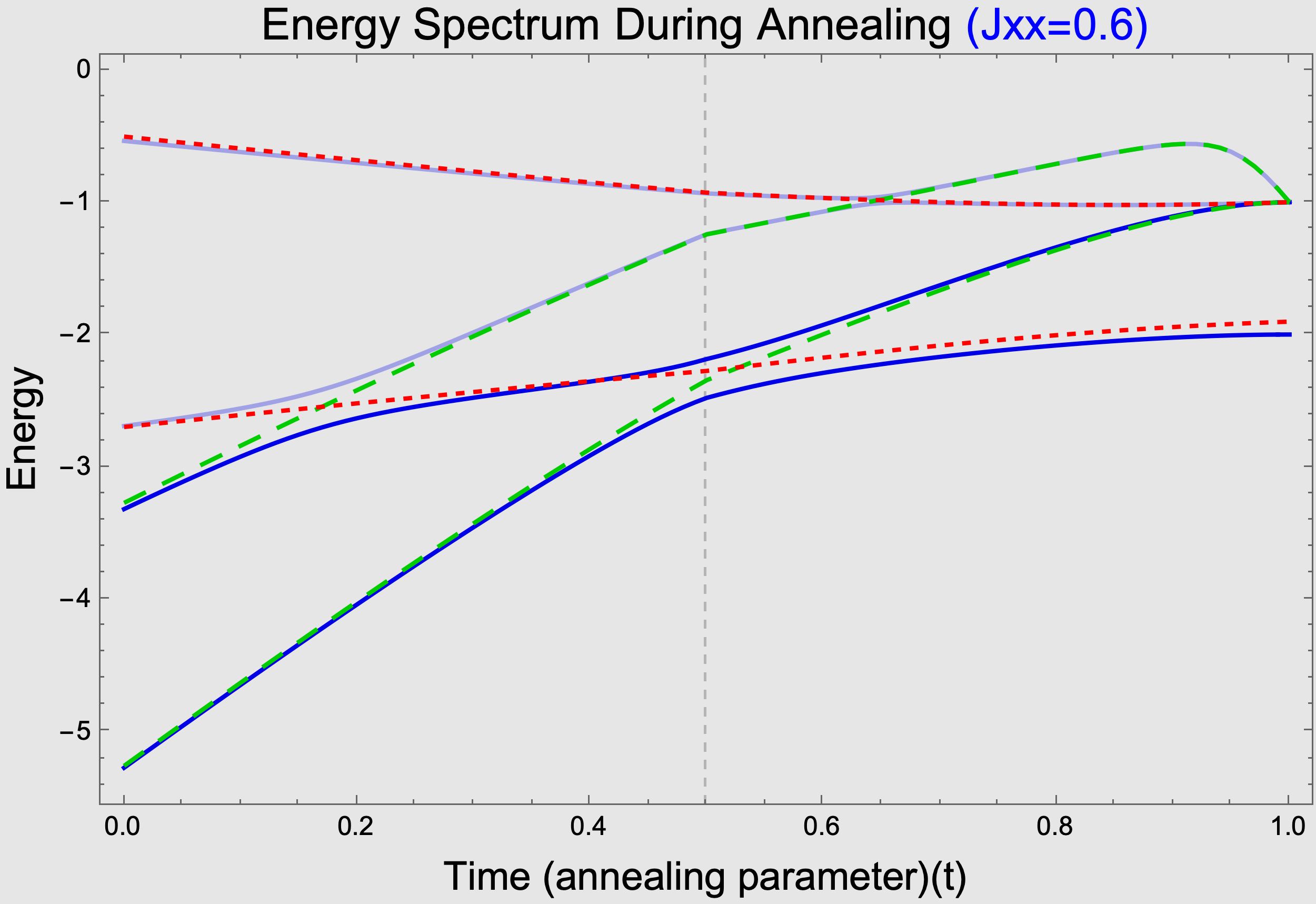} \\
\textbf{(a)} \(\Jxx = 0\) (stoquastic case) &
\textbf{(b)} \(\Jxx = 0.6\) (non-stoquastic case)
\end{tabular}
\caption{
\textbf{Energy spectrum during annealing for the V3 model at two values of \(\Jxx\)}.
(a) For \(\Jxx = 0\), the spectrum exhibits a tunneling-induced anti-crossing near \( t \approx 0.75 \), where the ground state transitions from support on local minima to support on the global minimum.
\\
(b) For \(\Jxx = 0.6\), no anti-crossing arises in Stage~2. The ground state energy of the opposite-sign block (\(\ESz\), red dotdashed) remains above the true ground state \(\EGtrue\).
}
\label{fig:energy-spectrum-comparison}
\end{figure}

\begin{figure}[!htbp]
\centering
\begin{tabular}{c}
  \textbf{(a)} $\Jxx=0$ \\
\includegraphics[width=0.8\textwidth]{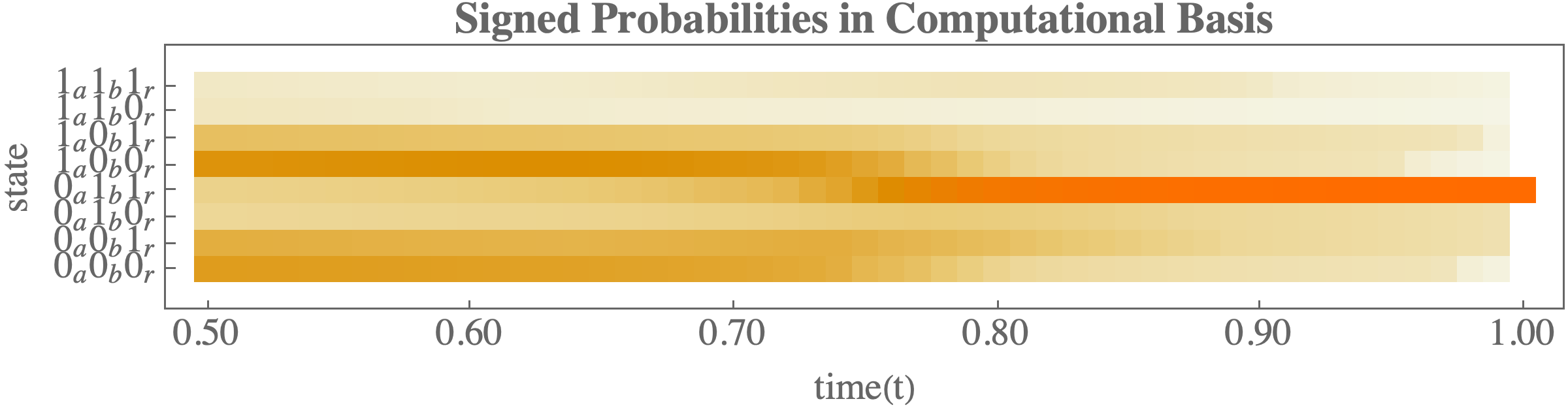} \\
\textbf{(b)} $\Jxx=0$ \\
\includegraphics[width=0.8\textwidth]{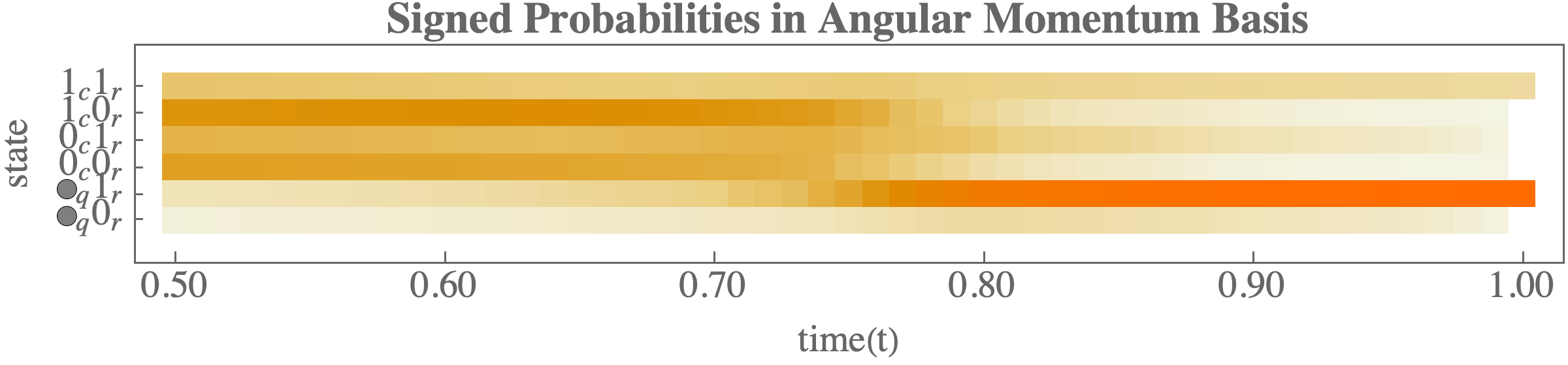} \\
\hline
\textbf{(c)} $\Jxx=0.6$ \\
\includegraphics[width=0.8\textwidth]{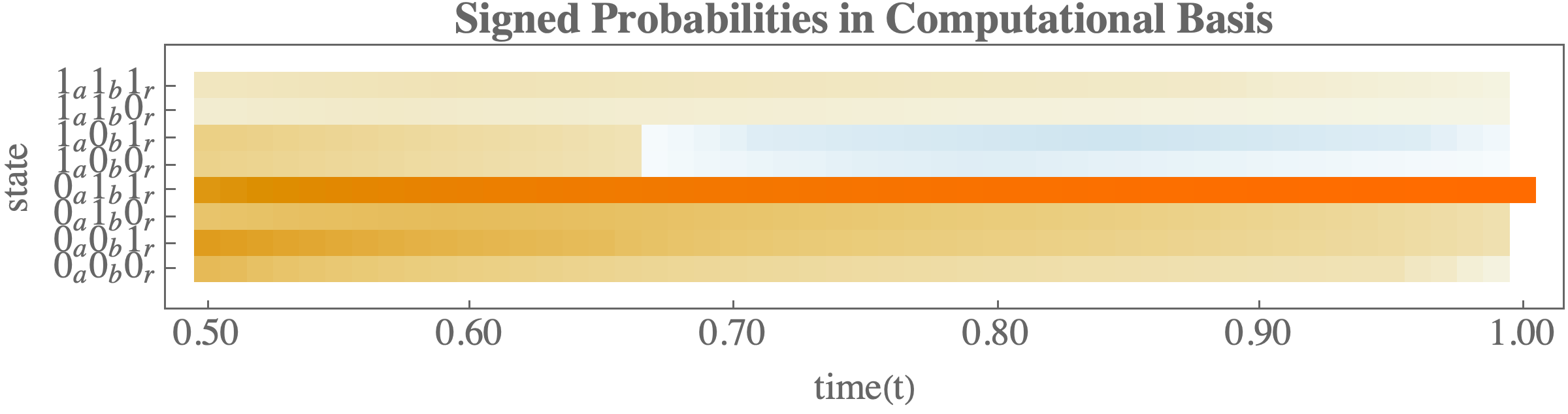} \\
\textbf{(d)} $\Jxx=0.6$ \\
\includegraphics[width=0.8\textwidth]{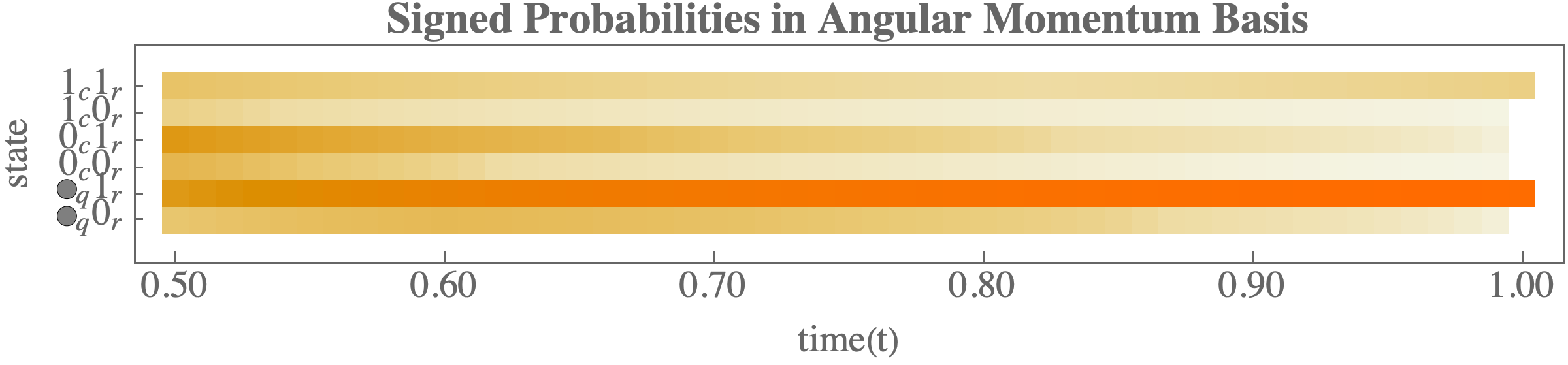} \\
\end{tabular}

\caption{
\textbf{Signed probability evolution of the instantaneous ground state in the V3 model, shown in two bases (computational and angular-momentum).}  
 Each entry shows \( \mathrm{sign}(\psi_i)\cdot|\psi_i|^2 \), where \( \psi_i \) is the amplitude on basis state \( i \). Color shading encodes signed magnitude: positive (orange), negative (blue), and near-zero (white). \\
 (a,b) Stoquastic case (\(\Jxx=0\)). In (a), all amplitudes remain non-negative, with no blue region. Around \(t \approx 0.75\),
a sharp transition occurs, indicative of a tunneling-induced anti-crossing. In (b), interference between \(\ket{1_c1_r}\) and the opposite-sign
component \(\ket{\odot_q1_r}\) is present but remains sign-preserving.
This shows that quantum interference is present even in the stoquastic case, but it remains sign-preserving.\\
(c,d) Non-stoquastic case (\(\Jxx=0.6\)). In (c), the transition from orange to blue on the dependent-set state \(\ket{1_a0_b1_r}\) reflects destructive interference and the onset of a negative amplitude, while constructive interference amplifies the global-minimum state \(\ket{0_a1_b1_r}\). In (d), the same evolution is shown in the angular-momentum basis: the first four states belong to the same-sign block, and the last two to the opposite-sign block.
}
\label{fig:matrixplot-combined}
\end{figure}

\section{Full-System Extension via Iterative Application}
\label{sec:full-system}

In this section, we extend the analysis to the full system by iteratively applying the two-phase procedure to successive bipartite substructures.
In each iteration, a driver subgraph \( G_{\ms{driver}}^{(k)} \)---and thus the parameters \( m_k \) and \( n_c^{(k)} \)---is identified.

The correctness of applying the algorithm iteratively---by incrementally building up the driver graph, i.e., \( \cup_{k=1}^{it} G_{\ms{driver}}^{(k)} \)---follows from the following three facts:
\begin{enumerate}
    \item The transverse field is global: \( \mt{x} = (1 - t)\Gamma_1 \).
    \item Each iteration has its own stage-separation parameter \( \Gamma_2^{(k)} \).
    \item We use the latest \( \Jzz^{(k)} \).
\end{enumerate}

\begin{remark}
The upper bound on \( \Jzz \), $\Jzzsteer$, which ensures that the ground state steers into the $\GM$-supporting region during Stage~1, needs to be enforced only in the \emph{last iteration}. In earlier iterations, the algorithm localizes into another $\LM$-supporting region (associated with the current critical local minima), so the \( \Jzzsteer \) (and thus $\Jxxsteer$) bound may be violated without affecting the algorithm's performance. Each iteration applies its own \XX-driver and chooses a \( \Jzz \) value appropriate to that iteration's structure. It is only in the final iteration---when the goal is to steer into the $\GM$-supporting region---that the upper bound on \( \Jzzsteer \) must be respected. This allows for adaptive, iteration-dependent tuning of \( \Jzz \), even when the clique sizes vary or structural conditions differ across iterations.
\end{remark}

We set \( \Gamma_2^{(k)} = m_k \), and define the corresponding transition point
\(
t_k := 1 - \tfrac{\Gamma_2^{(k)}}{\Gamma_1} \leq \tfrac{1}{2}.
\)
For each iteration \( k \), we define
\(
\alpha_k := 2 \cdot \tfrac{\Gamma_2^{(k)} - 1}{\Gamma_2^{(k)}}, \mbox{ and }\Jxx^{(k)} := \alpha_k \, \Gamma_2^{(k)}.
\)
To simplify implementation across multiple iterations, we fix
\(
\Gamma_1 := K \cdot \max_k \Gamma_2^{(k)},
\)
for some constant \( K \).
This choice guarantees that the transverse field schedule for the two-stage evolution,
\(
\x{t} = (1 - t)\Gamma_1,
\)
can be reused identically across all iterations. While each iteration may use a different structure-dependent value \( \Gamma_2^{(k)} \), they all share the same global annealing profile governed by \( \Gamma_1 \).

In particular, the system Hamiltonian builds up incrementally.
The Hamiltonian at iteration \( it \) for Stage~0 is given by
\[
\mb{H}_0(t) = \x{t} \mb{H}_{\ms{X}} + \sum_{k=1}^{it} \left( \jxx{t}^{(k)}
\sum_{(i,j) \in \edge(G_{\ms{driver}}^{(k)})} \sigma_i^x \sigma_j^x \right)
+ \mt{p}(t) \mb{H}_{\ms{problem}}^{(k)}, \quad t \in [0,1],
\]
with
\(
\x{t} = (1 - t)(\Gamma_0 - \Gamma_1) + \Gamma_1, 
\jxx{t}^{(k)} = t \Jxx^{(k)}, 
\mt{p}(t) = t.
\)
The transverse-field term is
\(
\mb{H}_{\ms{X}} = - \sum_{i} \sigma_i^x,
\)
and the problem Hamiltonian is
\[
\mb{H}_{\ms{problem}}^{(k)} = \sum_{i \in \ver(G)} (-w_i)\, \shz{i}
+ \sum_{k=1}^{p} \left(
\Jzz^{\ms{clique}} \sum_{(i,j) \in \edge(\cup_{k=1}^{it} G_{\ms{driver}}^{(k)})} \shz{i} \shz{j} \right)
+ \Jzz^{(k)} \sum_{(i,j) \in \edge(G) \setminus \edge(\cup_{k=1}^{it} G_{\ms{driver}}^{(k)})} \shz{i} \shz{j}.
\]
Here, we take
\(
\Jzz^{(k)} :=  \left(1 + \tfrac{\sqrt{n_c^{(k)}} + 1}{2} \right).
\)


The system Hamiltonian during the main two-stage evolution is
\[
\mb{H}_1(t) = \x{t} \mb{H}_{\ms{X}} + \sum_{k=1}^{it} \left( \jxx{t}^{(k)}
\sum_{(i,j) \in \edge(G_{\ms{driver}}^{(k)})} \sigma_i^x \sigma_j^x \right)
+ \mb{H}_{\ms{problem}}^{(k)}, \quad t \in [0,1].
\]
The transverse field is given by
\(
\x{t} = (1 - t)\Gamma_1.
\)

The time-dependent coupling \( \jxx{t}^{(k)} \) for iteration \( k \) is given by
\[
\jxx{t}^{(k)} =
\begin{cases}
{\Jxx}^{(k)}, & \text{for } t \in [0, t_k] \quad \Leftrightarrow \quad \x{t} \in [\Gamma_2^{(k)}, \Gamma_1], \\
\alpha_k \, \x{t}, & \text{for } t \in [t_k, 1] \quad \Leftrightarrow \quad \x{t} \in [0, \Gamma_2^{(k)}].
\end{cases}
\]
Figure~\ref{fig:two-iteration} illustrates how the algorithm progressively removes small-gap obstructions by lifting one critical local minimum per iteration.
\begin{figure}[p]
  \centering
  \includegraphics[width=0.53\textwidth]{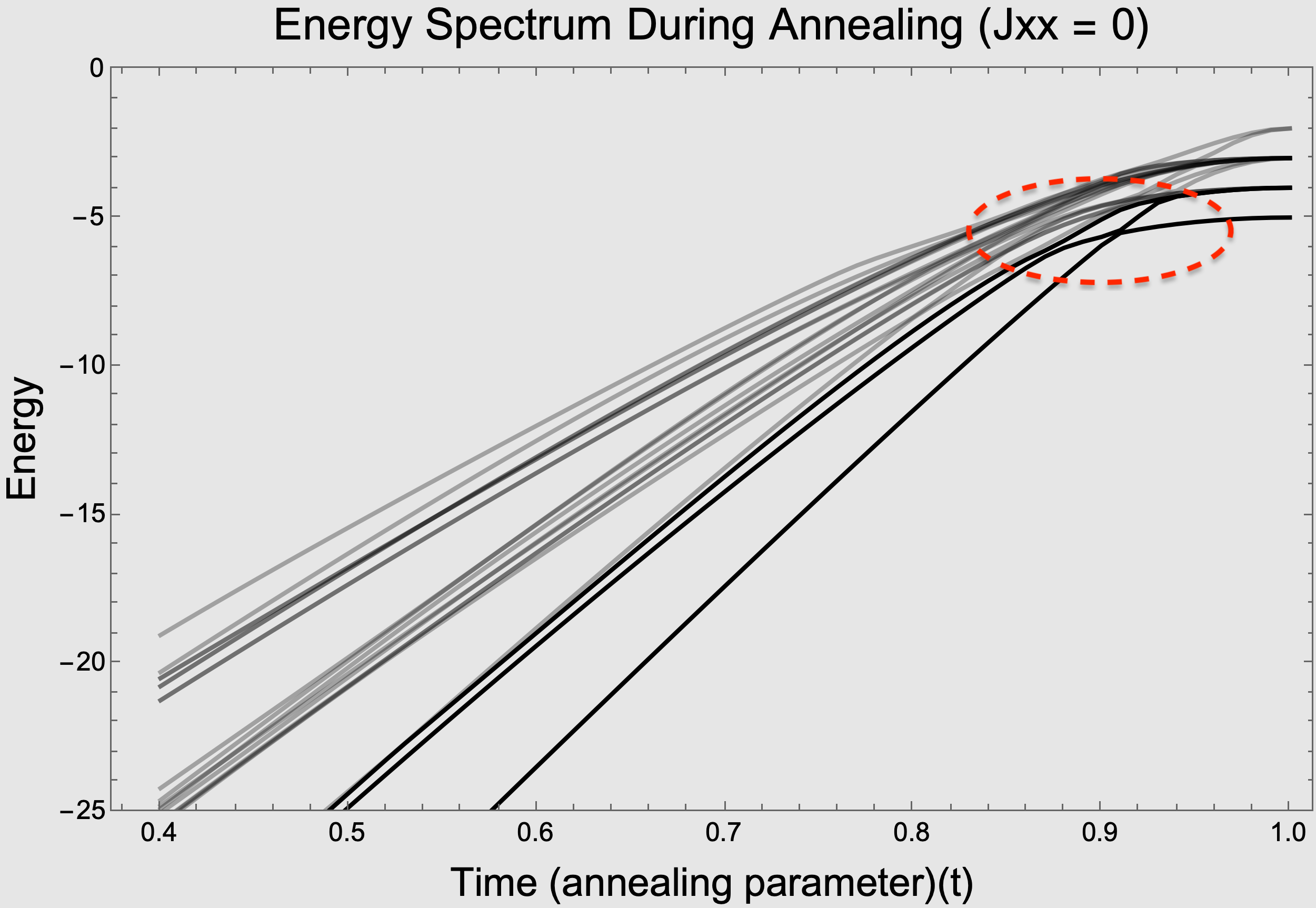} \\[1ex]
  \includegraphics[width=0.53\textwidth]{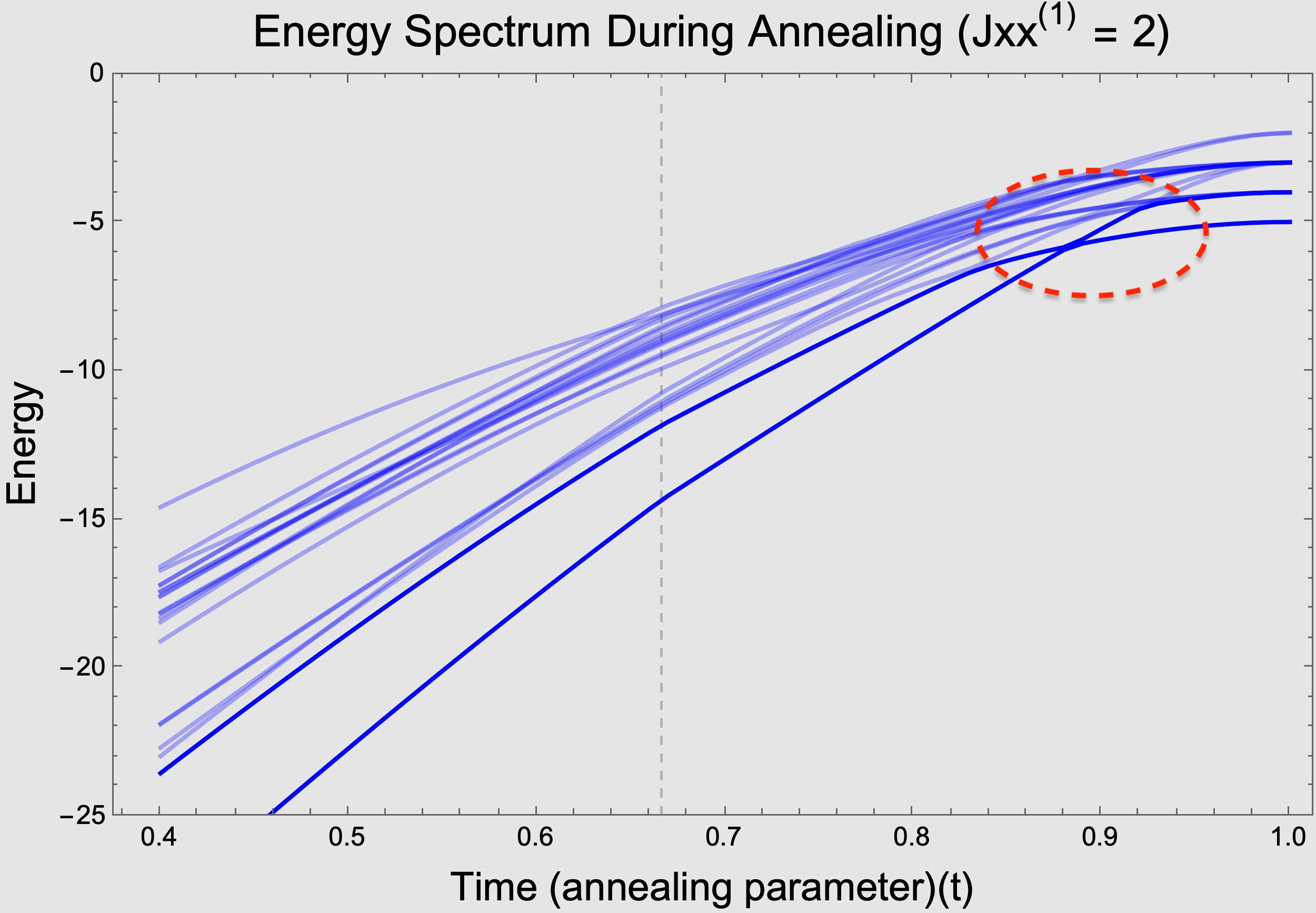} \\[1ex]
  \includegraphics[width=0.53\textwidth]{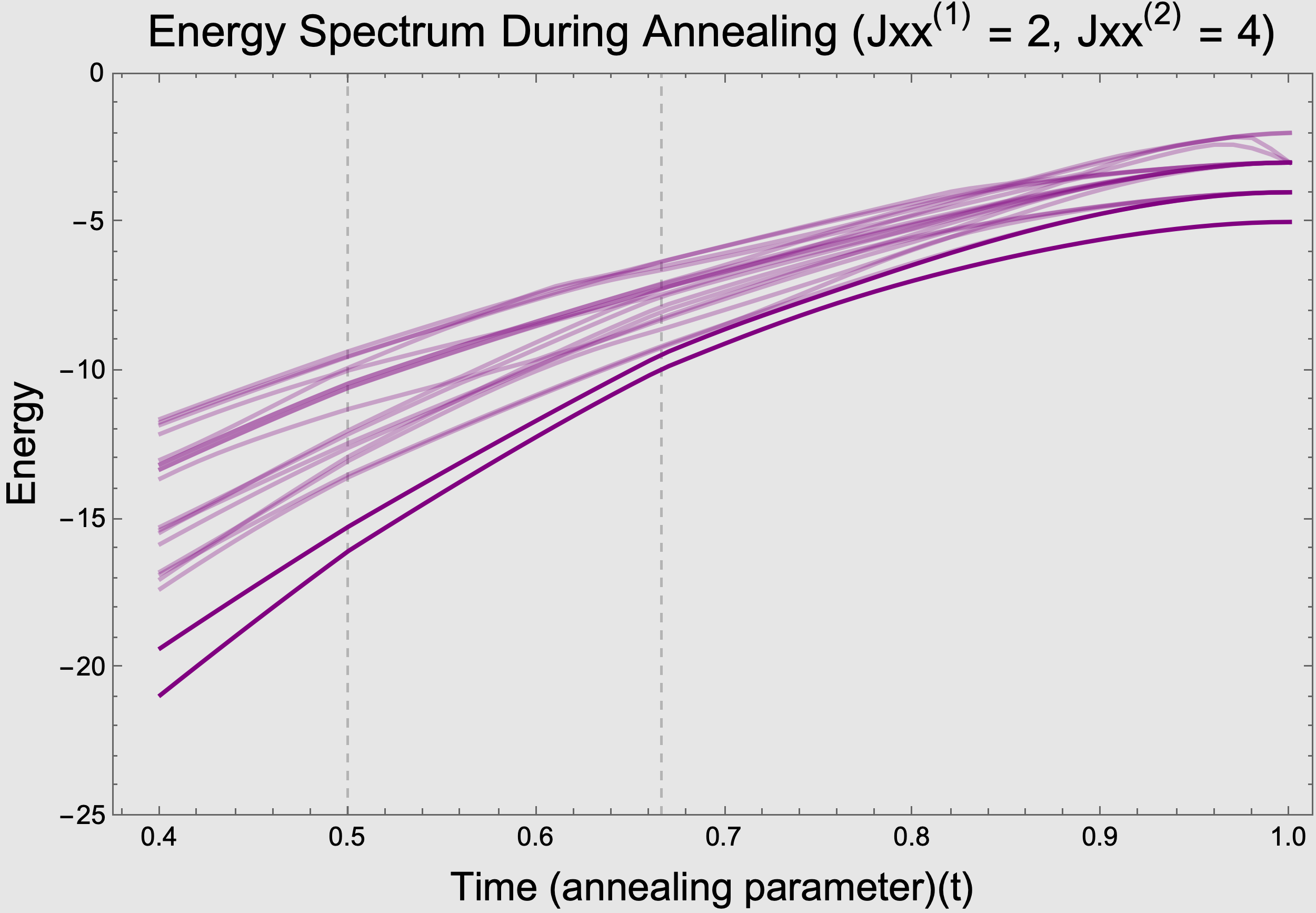}
  \caption{
  \textbf{Lifting Anti-Crossings in Two Iterations.}
  Top: \(\Jxx = 0\). The energy level associated with the global minimum (\(\GM\)) exhibits two anti-crossings with those of two local minima, highlighted by the red dotted oval. The first LM has \(m_l = 2\), \(n_c = 30\), and shares both vertices with \(\GM\); the second has \(m_1 = 3\), \(n_c = 10\), and shares one vertex with \(\GM\). The global minimum has \(m_g = 5\).
  Middle: \({\Jxx}^{(1)} = 2\); the first anti-crossing is lifted, but the second remains (dotted oval).
  Bottom: \({\Jxx}^{(2)} = 4\); both anti-crossings are lifted, and the system evolves smoothly to the global minimum.
  }
\label{fig:two-iteration}
\end{figure}


\section{Conclusion and Outlook}
\label{sec:discussion}

We have developed a systematic framework for analyzing a non-stoquastic
quantum annealing algorithm, \DDD{}, which achieves exponential speedup
on a structured family of Maximum Independent Set (MIS) instances. Our
approach departs from the traditional spectral-gap perspective and
instead measures the efficiency of the algorithm by the presence or absence of an
anti-crossing.
The key idea is to infer it
directly from the crossing 
behavior of bare energy levels of relavent subsystems,
without explicitly constructing
the effective two-level Hamiltonian.

The analytical foundation of our approach rests on three structural 
components, each supported by analytical derivations and numerical 
evidence. First, a block decomposition of the Hamiltonian into 
same-sign and opposite-sign components, derived from the angular momentum 
structure of the \MIC{} underlying the $\XX$-driver graph. Second, the 
identification of a see-saw energy effect induced by the non-stoquastic 
parameter $\Jxx$, which raises the energy associated with local minima in the 
same-sign block while lowering that of opposite-sign blocks. Third, 
the derivation of analytical bounds on $\Jxx$ (together with upper 
bounds on $\Jzz$) that support a two-stage annealing schedule ensuring 
smooth ground state evolution without anti-crossings.

Together, these structural components reveal two key mechanisms 
responsible for the speedup through a smooth evolution path:
\begin{itemize}
\item \textbf{Structural steering}: energy-guided localization within the same-sign 
block that steers the ground state smoothly into the $\GM$-supporting 
region, bypassing tunneling-induced anti-crossings.

  \item \textbf{Sign-generating quantum interference}: production of negative amplitudes that enables an opposite-sign path through destructive interference in the computational basis.
\end{itemize}

The analysis of these mechanisms is supported by both analytical derivations 
and numerical validation, and can, in principle, be made rigorous with 
further work.
We now summarize the key structural assumptions and aspects of the analysis that may be further formalized.

\subsection*{Foundational Assumptions and Future Refinement}
\begin{itemize}
  \item \textbf{Iterative correctness.} 
  We justify that the full-system evolution can be accurately approximated 
  by analyzing each bipartite substructure in isolation.

  \item \textbf{Worst-case structure.} 
  We argue that the shared-structure graph represents the worst-case 
  configuration within each bipartite subproblem. This allows us to 
  conservatively bound the relevant parameters.

  \item \textbf{Bare energy approximation.}
  Our key idea is to infer the presence or absence of anti-crossings 
  directly from the crossing behavior of bare energy levels of relevant 
  subsystems (\LM{} and \GM{}), without explicitly constructing the effective 
  two-level Hamiltonian. This reduction makes the problem tractable and 
  captures the essential structure. It can be partially justified by
  effective Hamiltonian arguments, but a fully rigorous treatment would require 
  bounding the error of this approximation.

  \item \textbf{Inter-block coupling.} 
  We argue that Stage~1 is confined to the same-sign block by assuming 
  the inter-block coupling is weak. Our numerical results confirm that 
  taking $\Jzz=\Jzzsteer$ suffices in practice, but deriving a sharp 
  analytical bound on $\Jzzinter$ remains an open direction for 
  future refinement.

  \item \textbf{Structural steering.} 
  We argue that Stage~1 achieves structural steering by ordering the 
  energies of the $R$-inner blocks so that the ground state is guided 
  into the $R$-localized region. This is supported by large-scale 
  numerical evidence, while a fully rigorous justification remains 
  future work.

  \item \textbf{Emergence of negative amplitudes.} 
  We demonstrate the emergence of negative amplitudes using the effective 
  Hamiltonian $\mb{H}_{\ms{eff}}^{(\mc{M}\mc{D})}$, which is sufficient 
  for our analysis. A fully rigorous treatment would require mathematically 
  justifying this construction as the correct effective reduction of the 
  full system.
\end{itemize}

\subsection*{Quantumness and Absence of Classical Analogues}
The emergence of negative amplitudes---produced by sign-generating interference due to the Proper Non-Stoquastic Hamiltonian---serves as a witness to the quantum nature of the speedup.
Classical simulation algorithms, such as quantum Monte Carlo methods~(see \cite{Hen2021} and references therein), rely on the non-negativity of wavefunctions in the computational basis and break down in regimes where interference induces sign structure. This places the algorithm beyond the reach of eventually stoquastic annealing and efficient classical simulation.

From the perspective of classical solvers, the effect of sign-generating interference may be hard to reproduce classically.  
Algorithmically speaking, the \(\XX\)-driver with appropriately tuned coupling strength \(\Jxx\) enables a form of collective suppression of local minima, induced by edge couplings that effectively reduce vertex weights---while selectively preserving the global minimum, even in the presence of partial overlap.
To the best of our understanding, no classical algorithm is able to replicate this effect, which appears to be a genuinely quantum capability.

Together, these points position \DDD{} as a candidate for demonstrating quantum advantage.

\subsection*{Experimental Validation of the Quantum Advantage Mechanism}

A by-product of our analysis is that a bipartite substructure graph $\Gshare$ with $m_l n_c + m_r$ vertices can be effectively reduced to $2 m_l + m_r$ vertices, with $n_c$ appearing only as a parameter, see Figure~\ref{fig:V8-effective}. This reduction enables the construction of \textit{small-scale models} amenable to \textit{experimental verification} of quantum advantage. In particular, a minimal $3$-vertex model (with $m_l = m_r = 1$) demonstrates the emergence of negative amplitudes through sign-generating interference, thereby providing a direct coherence test for the negative-amplitude phase structure in the computational basis, while an $8$-vertex model (with $m_l = 3$, $m_r = 2$, $m_g = m_l + m_r = 5$) as shown in  Figure~\ref{fig:V8-effective}(b), not only exhibits negative amplitudes (see Figures~\ref{fig:neg-amp}) but also shows a measurable speedup compared to the stoquastic case with $\Jxx = 0$ (see Figures~\ref{fig:L2} and \ref{fig:L2-analysis}). Both models are within reach of current gate-model quantum computers via simulation~\cite{Lloyd1996,Berry2007}.
\begin{figure}[!htbp]
  \centering
  \begin{subfigure}[b]{0.25\textwidth}
    \centering
    \includegraphics[width=\textwidth]{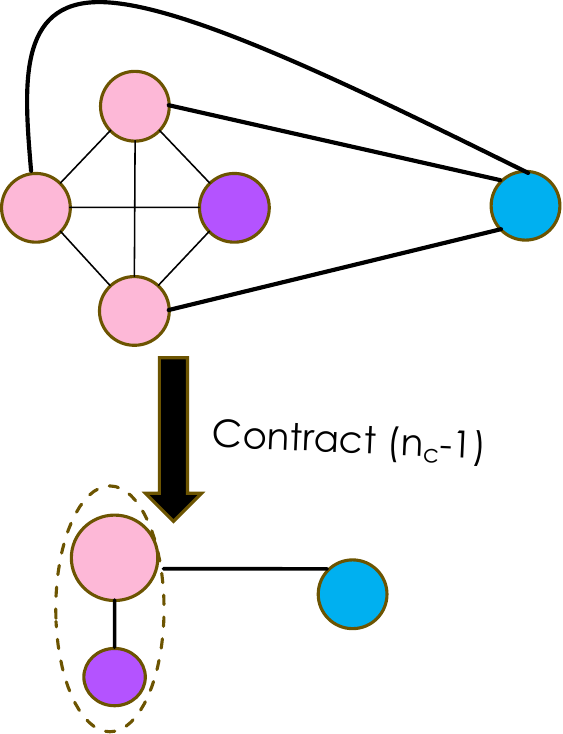}
    \caption{}
  \end{subfigure}
  \hspace{1.5cm} 
  \begin{subfigure}[b]{0.17\textwidth}
    \centering
    \includegraphics[width=\textwidth]{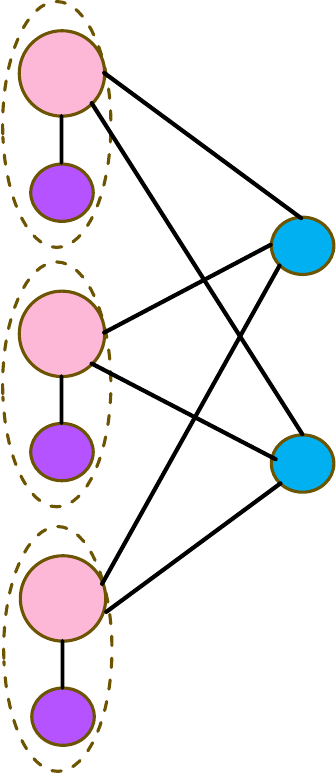}
    \caption{}
  \end{subfigure}

  \caption{
    (a) Schematic contraction. A clique consisting of $n_c-1$ pink vertices and one purple vertex. 
    The subclique of $n_c-1$ pink vertices is contracted into a single pink vertex. 
    Each pink vertex is $\ZZ$-coupled to both the purple and blue vertices, while the purple and blue vertices have no direct edge. 
    $\XX$-couplings occur between pink--pink and pink--purple pairs. 
    (b) Example of a bipartite substructure graph $\Gshare$ with $m_l n_c + m_r$ vertices reduced to $2 m_l + m_r$ vertices. 
    Shown is the case $m_l=3$, $m_r=2$, yielding an 8-vertex effective graph where each blue vertex is adjacent to each pink vertex. 
    The (global) maximum independent set has size $m_g = m_l + m_r$, consisting of all purple and blue vertices.
  }
  \label{fig:V8-effective}
\end{figure}

\subsection*{On the Relaxation of the Structured Input Assumption}
Our analysis is developed under a structured assumption on the problem graph of MIS, which we refer to as the \emph{\GIC{} assumption}: each critical degenerate local minima---corresponding to a set of maximal independent sets of fixed size---is formed by a set of \emph{independent cliques} (\MIC{}). This assumption underlies both the design of the \(\XX\)-driver and the block decomposition of the Hamiltonian. The independence of cliques is assumed primarily to allow efficient identification during Phase~I.
In principle, one can allow the cliques to be dependent (\MDC{}), meaning that some edges are permitted between cliques. In this case, the cliques may need to be identified heuristically rather than exactly. Under suitable conditions, the algorithm may remain robust even when applied to \MDC{} structures.
More generally, each critical degenerate local minima might consist of a set of disjoint \MDC{}s. A clique-based driver graph may still be constructed in such cases, but the generalization becomes more intricate and may require further structural insights.
Other future directions include the study of weighted MIS instances and adaptive strategies for selecting or optimizing \(\Jxx\) during the anneal.

In conclusion, the results presented here establish both a theoretical and numerical foundation for exponential quantum advantage via the \DDD{} algorithm.  
Beyond the algorithm itself, a key by-product of our analysis is the identification of \textit{effective small-scale models} that distill the essential dynamics into experimentally accessible settings.  
These models provide a concrete opportunity to verify the quantum advantage mechanism on currently available gate-model quantum devices through simulation.  
We hope this work not only offers a foundational framework for quantum optimization algorithm design and analysis, but also inspires the development of experimental devices equipped with the required $\XX$ non-stoquastic couplers, capable of directly implementing the presented algorithm.

\section*{Acknowledgment}
This work was written by the author with the help of ChatGPT (OpenAI),
which assisted in refining the presentation and in expressing the intended ideas with greater clarity and precision.  
The author thanks Jamie Kerman for introducing her to the angular-momentum basis 
and for early discussions that contributed to the companion work~\cite{Choi-Limitation}.
We also thank Siyuan Han for helpful comments.
We recognize that this manuscript may not cite all relevant literature.  
If you believe your work should be included in a future version, please contact the author with the appropriate references.

\appendix

\subsection*{Appendix A: Font Conventions for Notation}
We adopt the following conventions throughout:  
\begin{itemize}
\item Hilbert space / Subspace / Basis: calligraphic, e.g.\ \(\mc{V}, \mc{B}\).  
  \item Hamiltonian / Matrix: blackboard bold, e.g.\ \(\mb{H}, \mb{B}\).  
  \item Time-dependent quantity: typewriter, e.g.\ \(\mt{x} := \mt{x}(t), \mt{jxx}:=\mt{jxx}(t)\).  
  \item Named object / Abbreviation: capital typewriter, e.g.\ \LM{}, \GM{}, \MLIS{}.  
\end{itemize}


\begin{thebibliography}{000}

\bibitem{Choi2021}
V. Choi.
\newblock \textit{Essentiality of the Non-stoquastic Hamiltonians and Driver Graph Design in Quantum Optimization Annealing}.
\newblock arXiv:2105.02110v2 [quant-ph], 2021.

\bibitem{ChoiPatent}
V. Choi.
\newblock \textit{Constructing and Programming Driver Graphs in Quantum Hardware for Non-Stoquastic Quantum Optimization Annealing Processes}.
\newblock U.S. Patent No. 12,001,924~B2, issued June~4, 2024.

\bibitem{Farhi2000}
E. Farhi, J. Goldstone, S. Gutmann, and M. Sipser.  
\newblock \textit{Quantum computation by adiabatic evolution}.  
\newblock arXiv:quant-ph/0001106, 2000.  

\bibitem{Farhi2001}
E. Farhi, J. Goldstone, S. Gutmann, J. Lapan, A. Lundgren, and D. Preda.  
\newblock \textit{A quantum adiabatic evolution algorithm applied to random instances of an NP-complete problem}.  
\newblock {\em Science}, 292(5516):472--475, 2001.  

\bibitem{AQC-eq}
D. Aharonov, W. van Dam, J. Kempe, Z. Landau, S. Lloyd, and O. Regev.  
\newblock \textit{Adiabatic quantum computation is equivalent to standard quantum computation}.  
\newblock {\em SIAM Journal on Computing}, 37(1):166--194, 2007.  
\newblock Conference version in {\em Proceedings of the 45th Annual IEEE Symposium on Foundations of Computer Science (FOCS)}, pages 42--51, 2004.

\bibitem{AQC-Review}
T. Albash and D. A. Lidar.  
\newblock \textit{Adiabatic quantum computation}.  
\newblock {\em Rev. Mod. Phys.}, 90:015002, 2018.  

\bibitem{general-PF2}
A. Elhashash and D. B. Szyld.  
\newblock \textit{On general matrices having the Perron--Frobenius property}.  
\newblock {\em Electronic Journal of Linear Algebra}, 17:389--402, 2008.

\bibitem{Choi-Limitation}
V. Choi.  
\newblock \textit{Limitation of Stoquastic Quantum Annealing: A Structural Perspective}, 2025.

\bibitem{de-sign}
E. Crosson, T. Albash, I. Hen, and A. P. Young.  
\newblock \textit{De-signing Hamiltonians for quantum adiabatic optimization}.  
\newblock {\em Quantum}, 4:334, 2020.  

\bibitem{KKR2006}
J. Kempe, A. Kitaev, and O. Regev.  
\newblock \textit{The complexity of the local Hamiltonian problem}.  
\newblock {\em SIAM Journal on Computing}, 35(5):1070--1097, 2006.

\bibitem{SWT-2011}
S. Bravyi, D. DiVincenzo, and D. Loss.  
\newblock \textit{Schrieffer--Wolff transformation for quantum many-body systems}.  
\newblock {\em Annals of Physics}, 326(10):2793--2826, 2011.

\bibitem{Amin-Choi}
M. H. S. Amin and V. Choi.  
\newblock \textit{First-order phase transition in adiabatic quantum computation}.  
\newblock {\em Phys. Rev. A}, 80(6):062326, 2009.  
\newblock arXiv:0904.1387 [quant-ph].

\bibitem{AKR}
B. Altshuler, H. Krovi, and J. Roland.  
\newblock \textit{Anderson localization makes adiabatic quantum optimization fail}.  
\newblock {\em Proceedings of the National Academy of Sciences}, 107:12446--12450, 2010.  

\bibitem{minor1}
V. Choi.  
\newblock \textit{Minor-embedding in adiabatic quantum computation: I. The parameter setting problem}.  
\newblock {\em Quantum Information Processing}, 7:193--209, 2008.  

\bibitem{Choi2020}
V. Choi.
\newblock{The Effects of the Problem Hamiltonian Parameters on the
  Minimum Spectral Gap in Adiabatic Quantum Optimization.}
\newblock{Quantum Inf. Processing.}, 19:90, 2020.  
{\em arXiv:quant-ph/1910.02985}.   


\bibitem{GJ79}
M.~R. Garey and D.~S. Johnson,
\textit{Computers and Intractability: A Guide to the Theory of NP-Completeness},
W.~H. Freeman, San Francisco, 1979.


\bibitem{Tsukiyama1977}
S. Tsukiyama, M. Ide, H. Ariyoshi, and I. Shirakawa.  
\newblock \textit{A new algorithm for generating all maximal independent sets}.  
\newblock {\em SIAM J. Comput.}, \textbf{6} (1977), pp.~505--517.

\bibitem{Johnson1988}
D. S. Johnson, C. H. Papadimitriou, and M. Yannakakis.  
\newblock \textit{On generating all maximal independent sets}.  
\newblock {\em Information Processing Letters}, 27:119--123, 1988.

\bibitem{Grover1996}
L. K. Grover.  
\newblock \textit{A fast quantum mechanical algorithm for database search}.  
\newblock In {\em Proceedings of the 28th Annual ACM Symposium on the Theory of Computing (STOC)}, pages 212--219, 1996.

\bibitem{Roland2002}
J. Roland and N. J. Cerf.  
\newblock \textit{Quantum search by local adiabatic evolution}.  
\newblock {\em Phys. Rev. A}, 65:042308, 2002.  

\bibitem{Tarjan1977}
R. E. Tarjan and A. E. Trojanowski.  
\newblock \textit{Finding a maximum independent set}.  
\newblock {\em SIAM Journal on Computing}, 6(3):537--546, 1977.  


\bibitem{MWIS2019}
S. Lamm, C. Schulz, D. Strash, R. Williger, and H. Zhang.  
\newblock \textit{Exactly solving the maximum weight independent set problem on large real-world graphs}.  
\newblock In {\em Proceedings of the 21st Workshop on Algorithm Engineering and Experiments (ALENEX)}, pages 71--85, 2019.  


\bibitem{UnstructuredAQO2024}
A. Braida, S. Chakraborty, A. Chaudhuri, J. Cunningham, R. Menavlikar, L. Novo, and J. Roland.  
\newblock \textit{Unstructured Adiabatic Quantum Optimization: Optimality with Limitations}.  
\newblock {\em Quantum}, 9:1790, 2025.  
\newblock arXiv:2411.05736v3 [quant-ph].

\bibitem{Choi-NP-reduction}
V. Choi.  
\newblock \textit{Different adiabatic quantum optimization algorithms for the NP-complete exact cover and 3SAT problems}.  
\newblock {\em Quantum Info. Comput.}, 11(7--8):638--648, July 2011.  

\bibitem{Hen2021}
I. Hen.  
\newblock \textit{Determining quantum Monte Carlo simulability with geometric phases}.  
\newblock {\em Phys. Rev. Research}, 3(2):023080, 2021.  





\bibitem{Lloyd1996}
S. Lloyd.  
\newblock \textit{Universal quantum simulators}.  
\newblock {\em Science}, 273(5278):1073--1078, 1996.  

\bibitem{Berry2007}
D. W. Berry, G. Ahokas, R. Cleve, and B. C. Sanders.  
\newblock \textit{Efficient quantum algorithms for simulating sparse Hamiltonians}.  
\newblock {\em Communications in Mathematical Physics}, 270(2):359--371, 2007.  






\end{thebibliography}
\end{document}